\documentclass[11pt,a4paper]{article}
\usepackage[margin=1in]{geometry}
\usepackage{libertine}
\usepackage[T1]{fontenc}
\usepackage[utf8]{inputenc}
\usepackage{amsfonts,amssymb,amsmath,amsthm}
\usepackage{calc}
\usepackage{tikz}
\usetikzlibrary{calc}

\usepackage{url}
\usepackage{hyperref}
\usepackage{enumitem}
\usepackage[mathlines]{lineno}
\usepackage{xspace}
\usepackage{thmtools}

\DeclareMathOperator{\VC}{2VCdim}
\DeclareMathOperator{\tw}{tw}
\DeclareMathOperator{\td}{td}
\DeclareMathOperator{\cc}{cc}

\newtheorem{problem}{Problem}[section]
\newtheorem{theorem}{Theorem}[section]
\newtheorem{lemma}[theorem]{Lemma}
\newtheorem{claim}[theorem]{Claim}
\newtheorem{corollary}[theorem]{Corollary}

\theoremstyle{definition}
\newtheorem{definition}[theorem]{Definition}
\newcommand{\Oh}{\mathcal{O}}
\renewcommand{\varnothing}{\emptyset}

\def\C{\mathcal{C}}

\def\N{\mathbb{N}}
\def\T{\mathcal{T}}

\def\Sol{\mathsf{Sol}}

\newcommand{\CMSO}{\textsf{CMSO}$_2$\xspace}
\newcommand{\cmsotwo}{\CMSO}
\newcommand{\VV}{$\bigvee\mkern-15mu\bigvee$}

\newcommand{\ceil}[1]{\left\lceil #1 \right\rceil }




\usepackage{framed}
\usepackage{tabularx}

\newlength{\RoundedBoxWidth}
\newsavebox{\GrayRoundedBox}
\newenvironment{GrayBox}[1]%
   {\setlength{\RoundedBoxWidth}{.93\columnwidth}
    \def\boxheading{#1}
    \begin{lrbox}{\GrayRoundedBox}
       \begin{minipage}{\RoundedBoxWidth}}%
   {   \end{minipage}
    \end{lrbox}
    \begin{center}
    \begin{tikzpicture}%
       \node(Text)[draw=black!20,fill=white,rounded corners,inner sep=2ex,text width=\RoundedBoxWidth]
             {\usebox{\GrayRoundedBox}};
        \coordinate(x) at (current bounding box.north west);
        \node [draw=white,rectangle,inner sep=3pt,anchor=north west,fill=white]
        at ($(x)+(6pt,.75em)$) {\boxheading};
    \end{tikzpicture}
    \end{center}}

\newenvironment{defproblemx}[1]{\noindent\ignorespaces%
                                \FrameSep=6pt%
                                \parindent=0pt%
                \begin{GrayBox}{#1}%
                \begin{tabular*}{\columnwidth}{!{\extracolsep{\fill}}@{\hspace{.1em}} >{\itshape} p{1.5cm} p{0.86\columnwidth} @{}}%
            }{
                \end{tabular*}%
                \end{GrayBox}%
                \ignorespacesafterend
            }

\newcommand{\problemTask}[3]{%
  \begin{defproblemx}{#1}
    Input: & #2 \\
    Task: & #3
  \end{defproblemx}
}

\title{Sparse induced subgraphs in $P_7$-free graphs of bounded clique number}

\author{Maria Chudnovsky\thanks{Princeton University, United States Of America (\texttt{mchudnov@math.princeton.edu}).
Supported by NSF Grant DMS-2348219 and by AFOSR grant FA9550-22-1-0083.}
\and Jadwiga Czyżewska\thanks{University of Warsaw, Poland (\texttt{j.czyzewska@mimuw.edu.pl}).
Supported by Polish National Science Centre SONATA BIS-12 grant number 2022/46/E/ST6/00143.}
\and Kacper Kluk\thanks{University of Warsaw, Poland (\texttt{k.kluk@mimuw.edu.pl}).
Supported by Polish National Science Centre SONATA BIS-12 grant number 2022/46/E/ST6/00143.}
\and Marcin Pilipczuk\thanks{University of Warsaw, Poland (\texttt{m.pilipczuk@mimuw.edu.pl}).
Supported by Polish National Science Centre SONATA BIS-12 grant number 2022/46/E/ST6/00143.}
\and Paweł Rzążewski\thanks{Warsaw University of Technology \& University of Warsaw, Poland (\texttt{pawel.rzazewski@pw.edu.pl}). Supported by the European Research Council (ERC) under the European Union’s Horizon 2020 research and innovation programme grant agreement number 948057.}\footnotemark[2]}


\begin{document}
\date{}
\maketitle

\begin{abstract}
    Many natural computational problems, including e.g. \textsc{Max Weight Independent Set}, \textsc{Feedback Vertex Set}, or \textsc{Vertex Planarization}, can be unified under an umbrella of finding the largest sparse induced subgraph, that satisfies some property definable in \textsf{CMSO}$_2$ logic.
    It is believed that each problem expressible with this formalism can be solved in polynomial time in graphs that exclude a fixed path as an induced subgraph.
    This belief is supported by the existence of a quasipolynomial-time algorithm by Gartland, Lokshtanov, Pilipczuk, Pilipczuk, and Rzążewski~[STOC 2021], and a recent polynomial-time algorithm for $P_6$-free graphs by Chudnovsky, McCarty, Pilipczuk, Pilipczuk, and Rzążewski~[SODA 2024].

    In this work we extend polynomial-time tractability of all such problems to $P_7$-free graphs of bounded clique number.      
\end{abstract}

\section{Introduction}
When studying computationally hard problems, a natural question to investigate is  whether they become tractable when input instances are somehow ``well-structured''.
In particular, in algorithmic graph theory, we often study the complexity of certain (hard in general) problems, when restricted to specific graph classes.
Significant attention is given to classes that are \emph{hereditary}, i.e., closed under vertex deletion.

\paragraph{Potential maximal cliques.}
Arguably, the problem that is best studied in this context is \textsc{Max Weight Independent Set} (MWIS) in which we are given a vertex-weighted graph and we ask for a set of pairwise non-adjacent vertices of maximum possible weight.
Investigating the complexity of MWIS in restricted graphs classes led to discovering numerous new tools and techniques in algorithmic graph theory~\cite{GROTSCHEL1984325,DBLP:conf/stoc/GartlandLMPPR24,Edmonds_1965,DBLP:journals/jctb/DallardMS24}.
One of such general techniques is the framework of \emph{potential maximal cliques} by Bouchitt\'e and Todinca~\cite{BouchitteT02,Todinca}.
Intuitively speaking, a potential maximal clique (or PMC for short) in a graph $G$ is a bag of a ``reasonable'' tree decomposition of $G$ (see Section~\ref{sec:prelim} for a formal definition). The key contribution of Bouchitt\'e and Todinca~\cite{BouchitteT02,Todinca} is showing that:
\begin{enumerate}
    \item the family $\mathcal{F}$ of PMCs in a graph $G$ can be enumerated in time polynomial in $|V(G)|$ and $|\mathcal{F}|$, and
    \item given a family $\mathcal{F}$ containing \emph{all} PMCs of $G$ (and possibly some other sets as well), MWIS can be solved in time polynomial in $|V(G)|$ and $|\mathcal{F}|$ by mimicking natural dynamic programming on an appropriate (but unknown) tree decomposition.
\end{enumerate}

Consequently, MWIS admits a polynomial-time algorithm when restricted to any class  with polynomially many PMCs. (Here we say that a class $\mathcal{X}$ of graphs has polynomially many PMCs if there exists a polynomial $\mathsf{p}$,  such that the number of PMCs in any $n$-vertex graph in $\mathcal{X}$ is at most $\mathsf{p}(n)$.)

Later this framework was extended by Fomin, Todinca, and Villanger~\cite{FominTV15} to the problem of finding a large ``sparse'' (here meaning: of bounded treewidth) induced subgraph satisfying certain \CMSO formula $\psi$. (\CMSO stands for \emph{Counting Monadic Second Order} logic, which is a logic where one can use vertex/edge variables, vertex/edge set variables, quantifications over these variables, and standard propositional operands.) Formally, for a given integer $d$ and a fixed \CMSO formula $\psi$ with one free set variable, the 
$(\mathrm{tw} \leq d,\psi)$-MWIS problem (here `MWIS' stands for `max weight induced subgraph') is defined as follows.

\problemTask{$(\mathrm{tw} \leq d,\psi)$-MWIS}%
{A graph $G$ equipped with a weight function $\mathfrak{w}\colon V(G) \to \mathbb{Q}_+$.}
{Find a pair $(\Sol, X)$ such that
\begin{itemize}
\item $X \subseteq \Sol \subseteq V(G)$,
\item $G[\Sol]$ is of treewidth at most $d$,
\item $G[\Sol] \models \psi(X)$,
\item $X$ is of maximum weight subject to the conditions above,
\end{itemize}
or conclude that no such pair exists.}

As shown by Fomin, Todinca, and Villanger~\cite{FominTV15}, every problem expressible in this formalism can be solved in polynomial time on classes of graphs with polynomially many potential maximal cliques.

Note that if $d=0$ and $\psi$ is a formula satisfied by all sets, then $(\mathrm{tw} \leq d,\psi)$-MWIS is exactly MWIS.
It turns out that $(\mathrm{tw} \leq d,\psi)$-MWIS captures several other well known computational problems (see also the discussion in~\cite{FominTV15,DBLP:journals/corr/abs-2402-15834}), e.g.;
\begin{itemize}
    \item \textsc{Feedback Vertex Set} (one of Karp's original 21 \textsf{NP}-complete problems~\cite{Karp1972}), equivalent by complementation to finding an induced forest of maximum weight,
   \item \textsc{Even Cycle Transversal}~\cite{DBLP:journals/siamdm/PaesaniPR22,DBLP:conf/wg/MisraRRS12,DBLP:conf/iwpec/BergougnouxBBK20}, equivalent by complementation to finding an induced graph whose every block is an odd cycle,
   \item finding a largest weight induced subgraph whose every component is a cycle,
   \item finding a maximum number of pairwise disjoint and non-adjacent induced cycles.
\end{itemize}

\paragraph{MWIS in graphs excluding a long induced path.}
Despite all advantages of the Bouchitt\'e-Todinca framework, its applicability is somehow limited, as there are numerous natural hereditary graphs classes that \emph{do not} have polynomially many PMCs. Can we use at least some parts of the framework  outside its natural habitat?

A notorious open question in algorithmic graph theory is whether MWIS (and, more generally, $(\mathrm{tw} \leq d,\psi)$-MWIS) can be solved in polynomial time in graphs that exclude a fixed path as an induced subgraph.
It is believed that this question has an affirmative answer,
which is supported by the existence of a quasipolynomial-time algorithm of Gartland, Lokshtanov, Pilipczuk, Pilipczuk, and Rzążewski~\cite{DBLP:conf/focs/GartlandL20,DBLP:conf/sosa/PilipczukPR21,DBLP:conf/stoc/GartlandLPPR21}.
However, if it comes to polynomial-time algorithms, we know much less.

A polynomial-time algorithm for MWIS (as well as for many other problems) in $P_4$-free graphs (i.e., graphs that exclude a 4-vertex path as an induced subgraph; analogously we define $P_t$-free graphs for any $t$), was discovered already in 1980s~\cite{DBLP:journals/siamcomp/CorneilPS85}.
In today's terms we would say that these graphs have bounded clique-width and thus polynomial-time algorithms for many natural problems, including $(\mathrm{tw} \leq d,\psi)$-MWIS, follow from a celebrated theorem by Courcelle, Makowsky, and Rotics~\cite{DBLP:conf/wg/CourcelleMR98}.
However, already the $P_5$-free case proved to be quite challenging.
In 2014, Lokshtanov, Vatshelle, and Villanger~\cite{DBLP:conf/soda/LokshantovVV14} showed that \textsc{Max Weight Independent Set} admits a polynomial-time algorithm in $P_5$-free graphs --- by adapting the framework of Bouchitt\'e and Todinca. Let us emphasize that $P_5$-free graphs might have exponentially many PMCs, so the approach discussed above cannot be applied directly. Instead, Lokshtanov, Vatshelle, and Villanger proved that in polynomial time one can enumerate a family of \emph{some} PMCs, and this family is sufficient to solve MWIS via dynamic programming.

By extending this method (and adding a significant layer of technical complicacy), Grzesik, Klimo\v{s}ov\'a, Pilipczuk, and Pilipczuk~\cite{grzesik2020polynomialtime} managed to show that MWIS is polynomially-solvable in $P_6$-free graphs.

Some time later, Abrishami, Chudnovsky, Pilipczuk, Rzążewski, and Seymour~\cite{Tara} revisited the $P_5$-free case and introduced major twist to the method: they proved that in order to solve MWIS and its generalizations, one does not have to have PMCs exactly, but it is sufficient to enumerate their \emph{containers}: supersets that do not introduce any new vertices of the (unknown) optimum solution. They also proved that containers for \emph{all} PMCs in $P_5$-free graphs can be enumerated in polynomial time, circumventing the problem of exponentially many PMCs and significantly simplifying the approach of Lokshtanov, Vatshelle, and Villanger~\cite{DBLP:conf/soda/LokshantovVV14}.
This result yields a polynomial-time algorithm for $(\mathrm{tw} \leq d,\psi)$-MWIS in $P_5$-free graphs.

By further relaxing the notion of a container (to a \emph{carver}), Chudnovsky, McCarty, Pilipczuk, Pilipczuk, and Rzążewski~\cite{DBLP:conf/soda/ChudnovskyMPPR24} managed to extend polynomial-time solvability of $(\mathrm{tw} \leq d,\psi)$-MWIS to the class of $P_6$-free graphs.

\paragraph{Our contribution.}
In this work we push the boundary of tractability of $(\mathrm{tw} \leq d,\psi)$-MWIS in $P_t$-free graphs by showing the following result.

\begin{theorem}\label{thm:mainalgo}
    For every fixed integers $d,k$, and a \CMSO formula $\psi$,
    the $(\mathrm{tw} \leq d, \psi)$-MWIS problem can be solved in polynomial time
    in $P_7$-free graphs with clique number at most $k$.
\end{theorem}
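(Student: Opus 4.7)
The plan is to extend the \emph{carver} framework of Chudnovsky, McCarty, Pilipczuk, Pilipczuk, and Rzążewski~\cite{DBLP:conf/soda/ChudnovskyMPPR24} from the $P_6$-free setting to $P_7$-free graphs of bounded clique number. Recall that in this framework, rather than enumerating all potential maximal cliques of $G$ (which can already be exponentially many in $P_6$-free graphs), one enumerates in polynomial time a family $\F$ of subsets of $V(G)$ with the guarantee that for every PMC $\Omega$ appearing in some ``well-chosen'' tree decomposition witnessing an optimum solution $\Sol$, there exists $C \in \F$ with $\Omega \subseteq C$ and $C \cap \Sol \subseteq \Omega$. Once such a family is available, a suitably adapted Fomin--Todinca--Villanger dynamic programming solves $(\tw\leq d,\psi)$-MWIS in time polynomial in $|V(G)|+|\F|$ for any fixed $d$ and $\psi$. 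The algorithmic skeleton is thus in place; the core task is the structural/enumerative one of constructing $\F$ for $P_7$-free graphs with $\omega(G)\leq k$.

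The first step is to set up an induction or peeling scheme: given a connected component $D$ of $G-\Omega$ and an optimum $\Sol$, find in polynomial time a short list of candidate carvers that correctly separate $D$ at $\Omega$. To do this, I would pick a vertex $v\in D$ and examine the BFS layers $N_{\leq 3}(v)$; $P_7$-freeness guarantees that any induced path from $v$ into the rest of $G$ leaves $N_{\leq 3}(v)$ only in a tightly controlled way, so every vertex of $\Omega$ lies either in $N_{\leq 3}(v)$ or has a very restricted ``trace'' on it. Guessing a bounded-size anchor $A$ inside $N_{\leq 3}(v)$ should reduce the enumeration problem to a collection of subinstances that either have smaller measure or can be handled by a dominating/separating structure entirely contained in $N[A]$. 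Branching on $A$ is affordable because, once $\omega(G)\leq k$, anticomplete families of vertices with the same ``type'' with respect to $A$ can be grouped and canonicalised.

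The principal obstacle, and precisely the reason the clique-number assumption enters, is that in general $P_7$-free graphs one encounters arbitrarily long chains of pairwise-overlapping cliques and wide ``fans'' along which a PMC can spread; these configurations preclude a purely local dominating-vertex argument and are exactly what prevents a direct extension of the $P_6$-free proof. With $\omega(G)\leq k$, however, the Ramsey-type bounds available for $P_7$-free graphs tame these configurations: neighbourhoods of small anchors split into a bounded number of $P_5$-free or simpler pieces, and structural results such as extended strip decompositions or controlled dominating subgraphs become applicable. The technical heart is therefore a structural lemma asserting that every PMC $\Omega$ of $G$ admits a ``controlled witness'' of size bounded by a function of $k$ alone, from which one can recover a carver for $\Omega$ in polynomial time.

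Once the structural lemma is established, assembling $\F$ is essentially a bookkeeping exercise: enumerate all bounded-size controlled witnesses (polynomially many), for each of them produce the associated carvers, close the family under the operations needed by the dynamic programming, and invoke the carver-based algorithm of~\cite{DBLP:conf/soda/ChudnovskyMPPR24}. Correctness follows from the structural lemma together with the fact that the Fomin--Todinca--Villanger dynamic programming, in its carver form, is oblivious to which particular tree decomposition realises the optimum, as long as every one of its PMCs has a carver in $\F$. I expect that the main effort of the paper lies almost entirely in the structural lemma for $P_7$-free graphs with $\omega(G)\leq k$, rather than in adapting the algorithmic pipeline.
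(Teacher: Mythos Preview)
Your proposal has a fundamental gap: the structural lemma you put forward as the ``technical heart'' --- that every PMC $\Omega$ admits a bounded-size controlled witness from which a carver can be recovered in polynomial time --- is known to be \emph{false} already for $P_7$-free triangle-free graphs. As the paper recalls in the technical overview (and as is proven in~\cite{DBLP:conf/soda/ChudnovskyMPPR24}), there exist $P_7$-free bipartite graphs in which any family of containers/carvers covering all $\T$-avoiding PMCs, for all treedepth-$d$ structures $\T$, must have exponential size. So the direct extension of the carver framework that you sketch cannot work, and the paper explicitly identifies this obstruction as the reason one must depart significantly from the $P_5$/$P_6$ route.

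What the paper actually does involves three ingredients absent from your outline. First, a VC-dimension argument (Corollary~\ref{cor:pmc-cover}) shows that under $\omega(G)\leq k$ every PMC is a union of a bounded number of minimal separators, so it suffices to handle separators. Second, for a $\T$-avoiding minimal separator $S$ the paper does \emph{not} produce a carver but only an ``almost container'' $(K,\mathcal{D},L)$ (Theorem~\ref{thm:to-bip}): a set $K$ of bounded defect together with a family $\mathcal{D}$ of tricky components of $G-K$, each admitting a partition into modules of strictly smaller clique number; this is what drives a recursion on $\omega$ and is where the bounded-clique-number assumption is really used. Third --- with no analogue in your proposal --- the quotient of each tricky component by its modules is a $P_7$-free \emph{bipartite} graph, and an entire section (Theorem~\ref{thm:bip}) is devoted to building container families for such graphs by completing them to chordal bipartite graphs while branching on ``bad $C_6$'s''. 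The final algorithm (Section~\ref{sec:wrapup}) stitches these together via a bespoke dynamic programming over partial solutions and threshold automata, not by plugging carvers into the black box of~\cite{DBLP:conf/soda/ChudnovskyMPPR24}. Your BFS-layer analysis and extended strip decompositions play no role.
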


Quite interestingly, in $P_t$-free graphs, the boundedness of treewidth is equivalent to the boundedness of degeneracy and to the boundedness of treedepth~\cite{DBLP:journals/jctb/BonamyBPRTW22} (see also Section~\ref{sec:prelim}).
Thus, Theorem~\ref{thm:mainalgo} yields a polynomial-time algorithm for problems like:
\begin{itemize}
\item \textsc{Vertex Planarization}~\cite{DBLP:conf/soda/JansenLS14,DBLP:journals/dam/Pilipczuk17}, which asks for a largest induced planar subgraph, or
\item for constant $k$, finding a largest set of vertices inducing a subgraph of maximum degree at most $k$ (see, e.g.,~\cite{zbMATH05818768}).
\end{itemize}

Let us remark that the idea of investigating $P_t$-free graphs of bounded clique number was considered before. 
Brettel, Horsfield, Munaro, and Paulusma~\cite{DBLP:journals/ipl/BrettellHMP22} showed that $P_5$-free graphs of bounded clique number (actually, a superclass of these graphs) have bounded \emph{mim-width}, which gives a uniform approach to solve many classic problems in this class of graphs~\cite{DBLP:conf/soda/BergougnouxDJ23}.
Pilipczuk and Rzążewski~\cite{pilipczuk2023polynomial} showed that $P_6$-free graphs of bounded clique number have polynomially many PMCs and thus the framework of Bouchitt\'e and Todinca can be applied here directly.
However, this is no longer true for $P_7$-free graphs (even bipartite).
On the other hand, Brandst\"adt and Mosca~\cite{DBLP:journals/dam/BrandstadtM18} provided a polynomial-time algorithm for MWIS in $P_7$-free triangle-free graphs. However, they used a rather ad-hoc approach that works well for MWIS, but gives little hope to generalize it to more complicated instances of $(\mathrm{tw} \leq d, \psi)$-MWIS.
Our work vastly extends the latter two results.

\subsection{Technical overview}

As mentioned, in $P_t$-free graphs the width parameters treewidth, treedepth, and degeneracy are functionally equivalent~\cite{DBLP:journals/jctb/BonamyBPRTW22}.
Furthermore, as discussed in~\cite{DBLP:conf/soda/ChudnovskyMPPR24}, the property of having treewidth, treedepth, or degeneracy at most $d$ can be expressed as a \CMSO formula of size
depending on $d$ only. Consequently, if we define problems $(\mathrm{td} \leq d,\psi)$-\textsc{MWIS} and $(\mathrm{deg} \leq d,\psi)$-\textsc{MWIS} analogously
as $(\mathrm{tw} \leq d,\psi)$-\textsc{MWIS}, but with respect to treedepth and degeneracy, these three formalisms describe the same family of problems, when restricted to $P_t$-free graphs. 

Following~\cite{DBLP:conf/soda/ChudnovskyMPPR24}, the treedepth formalism is the most handy; we henceforth work with $(\mathrm{td} \leq d, \psi)$-\textsc{MWIS}. 
Furthermore, it is more convenient to work not just with induced subgraphs of bounded treedepth, but with \emph{treedepth-$d$ structures}:
induced subgraphs of treedepth at most $d$ with a fixed elimination forest; formal definition can be found in Section~\ref{sec:prelim}. 

Let $\T$ be a treedepth-$d$ structure in $G$; think of $\T$ as the sought solution to the $(\mathrm{td} \leq d, \psi)$-\textsc{MWIS} problem in question. 
As proven in~\cite{FominTV15} (and cast onto the current setting in~\cite{DBLP:conf/soda/ChudnovskyMPPR24}), there exists a tree decomposition $(T,\beta)$ of $G$
whose every bag is either contained in $N[v]$ for a leaf $v$ of $\T$, or whose intersection with $\T$ is contained in a single root-to-leaf path of $\T$, excluding the leaf.
We call the latter bags \emph{$\T$-avoiding}. 

Furthermore, the framework of~\cite{FominTV15} shows how to solve $(\mathrm{td} \leq d,\psi)$-\textsc{MWIS} given a family $\mathcal{F}$ of subsets of $V(G)$
with the promise that all bags of such tree decomposition $(T,\beta)$ are in $\mathcal{F}$. 
In~\cite{Tara}, it is shown that it suffices for $\mathcal{F}$ to contain only \emph{containers of bounded defect} for bags in $(T,\beta)$: 
we require that there exists a universal constant $\widehat{d}$ such that for every $t \in V(T)$ there exists $A \in \mathcal{F}$ such that
the bag $\beta(t)$ is contained in $A$ and $A$ contains at most $\widehat{d}$ elements of $\T$. 

Note that if $v$ is a leaf of $\T$, then $N[v]$ may contain only ancestors of $v$ among the vertices of $\T$, so $N[v]$ is an excellent container
for any bag contained in $N[v]$. Thus, it suffices to provide containers for $\T$-avoiding bags. 
This is how~\cite{Tara} solved $(\mathrm{td} \leq d,\psi)$-\textsc{MWIS} in $P_5$-free graphs; by generalizing the definition of a container
to a carver, the same approach is applied to $P_6$-free graphs in~\cite{DBLP:conf/soda/ChudnovskyMPPR24}.

In fact, for a treedepth-$d$ structure $\T$, the tree decomposition $(T,\beta)$ of~\cite{FominTV15} is constructed as follows: 
it is shown that there exists a minimal chordal completion $F$ of $G$ (i.e., an inclusion-wise minimal set of edges to add to $G$ to make it chordal)
whose addition keeps $\T$ a treedepth-$d$ structure, and $(T,\beta)$ is any clique tree of $G+F$. 
A \emph{potential maximal clique} (PMC for short) is a maximal clique in $G+F$ for \emph{some} minimal chordal completion $F$ of $G$.
Both~\cite{Tara} and~\cite{DBLP:conf/soda/ChudnovskyMPPR24} actually provide containers/carvers for \emph{every} treedepth-$d$ structure in $G$ and 
an \emph{arbitrary} choice of a $\T$-avoiding PMC in $G$. 
Furthermore, \cite{DBLP:conf/soda/ChudnovskyMPPR24} provides an example of a family of $P_7$-free triangle-free graphs where such a statement is impossible
(i.e., any such family of containers/carvers would need to be of exponential size). 

However, the algorithmic framework of~\cite{FominTV15,Tara,DBLP:conf/soda/ChudnovskyMPPR24} requires only that the provided family $\mathcal{F}$ contains containers/carvers
for all bags of \emph{only one} ``good'' tree decomposition $(T,\beta)$,
and only for the sought solution $\T$.
This should be contrasted with all $\T$-avoiding PMCs, which is in some sense
the set of all reasonable $\T$-avoiding bags, and for all treedepth-$d$ structures $\T$. Thus, if we want to tackle $P_7$-free graphs of bounded clique number, we need
to significantly deviate from the path of~\cite{Tara,DBLP:conf/soda/ChudnovskyMPPR24} and either use the power of the choice of $(T,\beta)$ or the fact that we need to handle only $\T$ being
the optimum solution to the problem we are solving. 

The assumption of bounded clique number allows us to quickly reduce the case of finding a container for a PMC to finding a container for a minimal separator.
For a set $S \subseteq V(G)$, a connected component $A$ of $G-S$ is \emph{full to $S$} if $N(A) = S$; a set $S$ is a \emph{minimal separator} if 
it has at least two full components.
A classic fact about PMCs~\cite{Todinca} is that for every potential maximal clique $\Omega$ and for every connected component $D$
of $G-\Omega$, the set $N(D)$ is a minimal separator. 
By a classic result of Gy\'{a}rf\'{a}s, the class of $P_t$-free graphs is 
$\chi$-bounded, in particular, if $G$ is $P_t$-free and $\omega(G) \leq k$ (where $\omega(G)$ is the number of vertices in a largest clique in $G$),
 then $\chi(G) \leq (t-1)^{k-1}$. 
This, combined with a VC-dimension argument based on the classic result
of Ding, Seymour, and Winkler~\cite{DSW}, shows the following:
\begin{lemma}
    For every $t$ and $k$ there exists $c$ such that for every $P_t$-free graph $G$ with $\omega(G)\leq k$ and any~PMC $\Omega$ of $G$ one of the two following conditions hold:
    \begin{enumerate}
        \item there exists $v\in \Omega$ such that $\Omega = N[v]$, or
        \item there exists a~family $\mathcal{D}\subseteq \cc(G-\Omega)$ of size at most $c$ such that $\Omega=\bigcup_{D\in \mathcal{D}} N(D)$.
    \end{enumerate}
\end{lemma}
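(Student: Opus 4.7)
The plan is to produce the covering family $\mathcal{D}$ via a Gyárfás-type colouring of $\Omega$ followed by a VC-dimension argument on the hypergraph of component neighbourhoods. First, by Gyárfás's $\chi$-bounding theorem applied to $P_t$-free graphs of clique number at most $k$, $\chi(G[\Omega]) \le \chi_0 := (t-1)^{k-1}$, so I partition $\Omega = I_1 \cup \dots \cup I_{\chi_0}$ into independent sets. Recall that by the Bouchitté--Todinca characterisation, any two non-adjacent vertices of a PMC $\Omega$ lie in $N(D)$ for some $D \in \cc(G-\Omega)$; assuming case~1 fails, every $v \in \Omega$ has a non-neighbour in $\Omega$, so the family $\{N(D) : D \in \cc(G-\Omega)\}$ does cover $\Omega$, and it suffices to cover each $I_j$ by a bounded number of sets $N(D)$.

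Fix $j$ and consider the hypergraph $\mathcal{H}_j$ with ground set $I_j$ and edges $\{N(D) \cap I_j : D \in \cc(G-\Omega)\}$. I would first show that $\mathcal{H}_j$ has VC-dimension bounded by a function of $t$. A clean way is to contract every component $D$ of $G-\Omega$ to a single vertex $d_D$, obtaining a graph $G'$ on $\Omega \cup \{d_D\}$ whose neighbourhood hypergraph restricted to $\Omega$ is exactly $\{N(D)\}$. Since contracting a connected subgraph cannot shorten an induced path passing through it, $G'$ is still $P_t$-free, and a standard fact — that $P_t$-free graphs have bounded $\VC$ and, consequently, bounded VC-dimension of their neighbourhood system — gives the desired bound. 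Equivalently, a shattered set $v_1,\dots,v_s \in I_j$ witnessed by pairwise disjoint components $\{D_S : S \subseteq [s]\}$ with prescribed attachments creates a rich semi-induced bipartite pattern; independence of $I_j$ plus connectivity of the $D_S$ turn this pattern into a long induced path once $s$ grows, contradicting $P_t$-freeness.

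Once each $\mathcal{H}_j$ has VC-dimension at most some constant $d(t,k)$ and every pair in $I_j$ is contained in a common hyperedge (by the Bouchitté--Todinca property applied inside the independent set $I_j$), the Ding--Seymour--Winkler theorem on set systems of bounded VC-dimension — in the form bounding the transversal/covering number in terms of the fractional one — yields a cover of $I_j$ by at most $c'(t,k)$ hyperedges. Taking the union over $j$ produces $\mathcal{D}$ of size at most $\chi_0 \cdot c'(t,k) =: c$, as promised.

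The main obstacle I expect is the VC-dimension bound for $\mathcal{H}_j$: one has $2^s$ components attached to an independent set in every possible pattern, and to derive a forbidden $P_t$ one must carefully combine the independence of $I_j$, the connectivity of each $D_S$, and the $\omega(G) \le k$ bound (used via Gyárfás to ensure the witness path does not collapse into a clique). This is precisely the role played by the $\VC$ parameter and the Ding--Seymour--Winkler machinery invoked in the statement.
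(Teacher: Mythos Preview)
Your overall plan—colour $\Omega$ into independent sets via Gy\'arf\'as, cover each colour class using a VC-type bound together with Ding--Seymour--Winkler, and take the union—is exactly the paper's route. However, your application of DSW has a genuine gap. You bound the \emph{ordinary} VC-dimension of $\mathcal{H}_j$, but the DSW bound that turns ``every two points lie in a common hyperedge'' into a bounded cover requires the \emph{2VC-dimension} of $\mathcal{H}_j$ (equivalently, the dual 2VC-dimension of the system the paper writes down); and 2VC-dimension can be arbitrarily larger than VC-dimension. A clean counterexample: the hypergraph on $[n]$ whose edges are all $2$-element subsets has VC-dimension~$2$ and every pair of points is a hyperedge, yet its cover number and its fractional cover number are both $\lceil n/2\rceil$. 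So neither ``bounded VC-dimension $+$ every pair covered'' nor the $\varepsilon$-net form you invoke (``$\tau$ bounded in terms of $\tau^*$'') yields a bounded cover here.

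The repair is immediate and in fact simpler than your shattering argument: the 2VC-dimension of $\mathcal{H}_j$ is at most roughly $t/2$ via the very path construction you sketch. One only needs components $D_{\{i,i+1\}}$ for \emph{consecutive} pairs to build the induced path $v_1-D_{\{1,2\}}-v_2-\cdots-v_p$ on $2p-1$ vertices in $G$; there is no need for all $2^s$ traces. This is precisely what the paper does, followed by DSW with $\nu=1$. Two smaller remarks: your claim ``contracting a connected subgraph cannot shorten an induced path'' is not literally correct and needs the usual shortest-path lifting to make the uncontracted path induced; and the hypothesis $\omega(G)\le k$ is used \emph{only} to bound $\chi(G[\Omega])$ in the Gy\'arf\'as step—once you are inside a fixed independent $I_j$, the path argument bounding the 2VC-dimension uses nothing but $P_t$-freeness, contrary to your last paragraph.
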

That is, except for a simple case $N[v] = \Omega$ for some $v \in \Omega$, 
the PMC $\Omega$ is a union of a constant number of minimal separators. 
Hence, it suffices to generate a family $\mathcal{F}'$ of containers for $\T$-avoiding
minimal separators (which are defined analogously to $\T$-avoiding bags) and take $\mathcal{F}$ to be the family of unions of all
tuples of at most $c$ elements of $\mathcal{F}'$. 

Let $G$ be a $P_7$-free graph with $\omega(G) \leq k$. 
We start as in~\cite{Tara,DBLP:conf/soda/ChudnovskyMPPR24}: let $\T$ be an arbitrary treedepth-$d$ structure in $G$ and let $S$ be a minimal separator in $G$ with full components $A$ and $B$
such that $S$ is $\T$-avoiding.
We want to design a polynomial-time algorithm that outputs a family $\mathcal{F}'$ of subsets of $V(G)$ that contains a container for $\Omega$.
The algorithm naturally does not know $\T$ nor $S$; the convenient and natural way of describing the algorithm as performing some nondeterministic 
guesses about $\T$ and $S$, with the goal of outputting a container for $S$ in the end. We succeed if the number of subcases coming from the guesses is
polynomial in the size of $G$ and the family $\mathcal{F}'$ consists of all containers generated by all possible runs of the algorithm. 

We almost succeed with this quest in Section~\ref{sec:approximate}. 
That is, we are able to guess an ``almost container'' $K$ for $S$: 
$K$ contains a constant number of vertices of $\T$
and we identified a set $\mathcal{D}$ of \emph{tricky} connected components of $G-K$
that are contained in $A \cup B \cup S$; all other connected components
of $G-K$ are contained in $A$, in $B$, or in some other connected component of $G-S$.
Every tricky component $D \in \mathcal{D}$ is somewhat simpler: 
we identify a subset $\emptyset \neq L(D) \subsetneq D$ such that 
if $\mathcal{C}(D)$ is the family of all connected components of
 $G[L(D)]$ and of $G[D \setminus L(D)]$,
 then every $C \in \mathcal{C}(D)$ is a module of $G[D]$. 
 (A set $A \subseteq V(G)$ is a module in $G$ if every vertex $v \in V(G) \setminus A$
 is either complete to $A$ or anti-complete to $A$; more on modules in Section~\ref{sec:prelim}.)

Observe now that every connected component $C \in \mathcal{C}(D)$
satisfies $\omega(G[C]) < \omega(G) \leq k$, as $D \setminus C$ contains a vertex
complete to $C$. Hence, we can recurse on $C$, understanding it fully
(more precisely, finding optimum partial solutions; the definition
 of this ``partial'' requires a lot of \CMSO mumbling). 

Furthermore, we observe that the quotient graph $G[D]/\mathcal{C}(D)$ is bipartite.
Thus, if we can understand how to solve $(\mathrm{td} \leq d, \psi)$-\textsc{MWIS}
on $P_7$-free \emph{bipartite} graphs, then we should be done: the partial
solutions from components $C \in \mathcal{C}(D)$, combined with the understanding
of the $P_7$-free bipartite graph $G[D]/\mathcal{C}(D)$ should give
all partial solutions of $G[D]$ for every $D \in \mathcal{D}$. 
This, in turn, should give an understanding on how $(\mathrm{td} \leq d,\psi)$-\textsc{MWIS}
behaves on $K \cup \bigcup \mathcal{D}$.
As $S \subseteq K \cup \bigcup \mathcal{D}$, this should suffice to solve
$(\mathrm{td} \leq d,\psi)$-\textsc{MWIS} on $G$ using $(K,\mathcal{D},L)$ as 
an object that plays the role of the container for $S$. 

Explaining properly all ``should''s from the previous paragraph is quite involved
and tedious
and done in Section~\ref{sec:wrapup}. In this overview, let us focus
on the more interesting case of solving $(\mathrm{td} \leq d,\psi)$-\textsc{MWIS}
in $P_7$-free bipartite graphs.
Here, we actually prove the following result in Section~\ref{sec:bipartite}.
\begin{theorem}\label{thm:intro:bip}
There exists an algorithm that, given a $P_7$-free bipartite graph $G$ and an integer $d > 0$,
runs in time $n^{2^{2^{\Oh(d^3)}}}$ and computes a family $\mathcal{C}$ of subsets
of $V(G)$ with the following guarantee: 
for every $\Sol \subseteq V(G)$ such that $G[\Sol]$ is of treedepth at most $d$,
there exists
a tree decomposition $(T,\beta)$ of $G$ such that
for every $t \in V(T)$ we have $\beta(t) \in \mathcal{C}$ and 
$|\beta(t) \cap \Sol| = 2^{2^{\Oh(d^3)}}$. 
\end{theorem}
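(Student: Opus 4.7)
The plan is to proceed by induction on the treedepth bound $d$, constructing a tree decomposition $(T,\beta)$ of $G$ in which each bag $\beta(t)$ meets the unknown solution $\Sol$ in only a bounded number of vertices. The base case $d \le 1$ is trivial: for $d = 1$, $\Sol$ is an independent set and $\mathcal{C} = \{V(G)\}$ suffices. The guiding intuition is that every connected bipartite $P_7$-free graph has diameter at most $5$, so small BFS balls around a single vertex already span the whole graph and provide a coarse ``coordinate system'' that forces separators to have a limited shape.

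For the inductive step, I would fix an elimination forest $F$ of $G[\Sol]$ of depth at most $d$ and focus on one component of $F$ with root $r$. Once $r$ is removed from $F$, the remaining subtrees have pairwise non-adjacent $\Sol$-parts in $G[\Sol] - r$, so the natural target is to guess $r$ together with a small ``shield'' $K \subseteq V(G)$ with $|K \cap \Sol|$ bounded, with the property that every component of $G - K$ contains the $\Sol$-vertices of at most one subtree of $F - r$. The algorithm nondeterministically picks $r$, a pattern describing $r$'s neighborhood in $\Sol$, and a certificate for $K$. The bipartite $P_7$-free structure enters twice: first, to argue that the relevant separators live close to $r$ in the BFS ordering of $G$, and second, to bound the number of ``adjacency patterns'' between $N(r)$ and the rest of $G$ via a VC-dimension argument in the spirit of Ding, Seymour, and Winkler.

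Once $r$ and $K$ are guessed, $G - K$ decomposes into components each of which carries a solution of treedepth at most $d - 1$ relative to $\Sol \setminus (K \cup \{r\})$. We recurse on each component to obtain sub-decompositions whose bags meet $\Sol$ in at most $2^{2^{\Oh((d-1)^3)}}$ vertices and then glue them under a new root whose bag is a suitable extension of $K$. The family $\mathcal{C}$ is the union, over all runs of the nondeterministic guessing, of the bags so produced; the running time accounts for the enumeration of all such runs.

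The main obstacle is controlling the bag size. Each level of recursion contributes a factor of $2^{\Oh(d^2)}$ to the number of guessed $\Sol$-vertices in a bag, coming from the interaction of the depth of $F$ with the VC-dimension bound on adjacency patterns between $N(r)$ and its complement. Iterating over $d$ recursive levels gives $2^{\Oh(d^3)}$ guessed vertices per bag at the combinatorial level, and the outer exponent in $2^{2^{\Oh(d^3)}}$ appears when one enumerates subsets of these guessed vertices to encode how a shield of a parent bag cuts into the bags of its children. The most delicate step is verifying that the sub-decompositions returned by the recursive calls actually glue into a single valid tree decomposition of $G$: shields at neighboring recursion nodes must be compatible, which requires the guesses to be made in a globally consistent way driven by the bipartite $P_7$-free structure.
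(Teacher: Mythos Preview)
Your proposal has genuine gaps. First, the base case is already wrong: for $d=1$, $\Sol$ is an independent set of unbounded size, and taking $\mathcal{C}=\{V(G)\}$ forces the single bag $V(G)$ to meet $\Sol$ in $|\Sol|$ vertices, not a constant. So even $d=1$ requires real work.

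More seriously, the inductive step hinges on the existence of a ``shield'' $K$ with $|K\cap\Sol|$ bounded such that every component of $G-K$ meets at most one subtree of the elimination forest after deleting the root $r$. You give no argument for this. The subtrees below $r$ are pairwise non-adjacent in $G[\Sol]$, not in $G$; vertices outside $\Sol$ may connect them arbitrarily, and nothing in your sketch bounds how many $\Sol$-vertices a separator between them in $G$ must contain. The appeals to diameter at most $5$ and to a Ding--Seymour--Winkler style VC-dimension bound are not tied to any concrete set system or separator, and the paper's DSW application serves an unrelated purpose (covering a PMC by neighborhoods of boundedly many components, under a bounded-clique-number hypothesis that is absent in the bipartite setting).

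The paper's route is entirely different and does not induct on $d$ in this way. It exploits that a $P_7$-free bipartite graph is chordal bipartite except for induced $C_6$s. If no induced $C_6$ has two opposite vertices that are incomparable in the treedepth-$d$ structure $\T$ (a ``bad $C_6$''), one can iteratively complete minimal separators into bicliques, remaining $P_7$-free and never creating new $C_6$s, until the graph is chordal bipartite and hence has polynomially many PMCs; any clique tree of a $\T$-aligned chordal completion then works. The substantive content is a branching procedure that kills bad $C_6$s: a structural analysis of the neighborhood of a $C_6$ in a $P_7$-free bipartite graph, a ``cleaning'' step for \VV-free triples driven by the two-orders lemma, and a recursion whose potential is the depth pair of the two bad vertices together with a small counter, giving total recursion depth $\Oh(d^3)$ and the stated bounds. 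None of this structure is present in your sketch, and I do not see how the guessed-root-plus-shield outline can be completed without it.
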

We remark that Theorem~\ref{thm:intro:bip} is not needed if one just
wants to solve the \textsc{MWIS} problem, 
as this problem can be solved in general bipartite graphs using matching or flow techniques.
Thus, if one is interested only in finding a maximum-weight independent set, Section~\ref{sec:bipartite} can be omitted.

On the other hand, our work identified the bipartite case as an interesting
subcase in exploring tractability of $(\mathrm{td} \leq d, \psi)$-\textsc{MWIS}
in $P_t$-free graphs. To state a precise problem for future work, we propose
polynomial-time tractability of \textsc{Feedback Vertex Set} in $P_t$-free
bipartite graphs.
Meanwhile, Theorem~\ref{thm:intro:bip}
and its proof in Section~\ref{sec:bipartite} can be of independent interest. 

Let us also mention that if one is only interested in solving \textsc{Feedback Vertex Set} in $P_7$-free bipartite graphs, instead of using our  Theorem~\ref{thm:intro:bip}, one can follow a different approach.
Lozin and Zamaraev~\cite{DBLP:journals/ejc/LozinZ17} provided a deep analysis of the structure of bipartite $P_7$-free graphs which implies that in particular they have bounded \emph{mim-width} and the corresponding decomposition can be found efficiently. This yields a polynomial-time algorithm for \textsc{Feedback Vertex Set}~\cite{DBLP:conf/soda/BergougnouxDJ23}.
However, it it far for clear what other cases $(\mathrm{td} \leq d, \psi)$-\textsc{MWIS} can be solved efficiently in bounded-mim-width graphs.

The proof of Theorem~\ref{thm:intro:bip}
starts from the work of Kloks, Liu, and Poon~\cite{kloks2012feedback}, who showed that \emph{chordal bipartite} graphs have polynomial number
of PMCs, and thus $(\mathrm{td} \leq d,\psi)$-\textsc{MWIS} problem is solvable
in this graph class by the direct application of the PMC framework of Bouchitt\'e and Todinca. (A graph $G$ is \emph{chordal bipartite} if it is bipartite
and does not contain induced cycles longer than $4$.)
A $P_7$-free bipartite graph is \emph{almost} chordal bipartite: it can contain
six-vertex cycles. 

We would like to add some edges to the input graph $G$ so that it becomes chordal
bipartite. Let $C = c_1-c_2-\ldots-c_6-c_1$ be an induced six-vertex cycle in $G$;
we would like to add the edge $c_1c_4$ to keep $G$ bipartite but break $C$. 
We show that, if one chooses $C$ carefully, one can do it so that $G$
remains $P_7$-free. However, such an addition may break the sought solution
$\T$: if $c_1,c_4 \in V(\T)$ but are incomparable in the elimination forest $\T$,
then $\T$ is no longer a treedepth-$d$ structure in $G+\{c_1c_4\}$. 

We remedy this by a thorough investigation of the structure 
of the neighborhood of a six-vertex cycle in a $P_7$-free graph, loosely
inspired by~\cite{BonomoCMSSZ18}. On a high level, we show that there is a branching process
with polynomial number of outcomes that in some sense ``correctly'' completes
$G$ to a chordal bipartite graph, proving Theorem~\ref{thm:intro:bip}.  

To sum up, the proof of Theorem~\ref{thm:mainalgo} consists of three ingredients.
First, in Section~\ref{sec:approximate} we show the guesswork that leads
to a polynomial number of candidate ``almost containers'' $(K,\mathcal{D},L)$
for a minimal separator $S$ in the input graph $G$. 
Second, in Section~\ref{sec:bipartite} we focus on bipartite graphs
and prove Theorem~\ref{thm:intro:bip}.
Finally, in Section~\ref{sec:wrapup}, we show how these two tools combine with
the dynamic programming framework of~\cite{FominTV15,Tara,DBLP:conf/soda/ChudnovskyMPPR24} 
to prove Theorem~\ref{thm:mainalgo}. 

Let us emphasize that  even if we aim to solve \textsc{Feedback Vertex Set} in $P_7$-free graphs of bounded clique number and use our approach to reduce the problem to the bipartite case, the problem we need to solve 
the latter class is not just \textsc{Feedback Vertex Set}, but a certain extension variant considering \CMSO types.
Thus even in this case the boundedness of mim-width implied by the work of Lozin and Zamaraev~\cite{DBLP:journals/ejc/LozinZ17} is not sufficient as the known algorithm for bounded-mim-width graphs cannot handle such a problem.
Consequently, we need to use Theorem~\ref{thm:intro:bip}.

\section{Preliminaries} \label{sec:prelim}
For an integer $n$, by $[n]$ we denote the~set $\{1,2,\dots, n\}$.


The set of vertices of graph $G$ is denoted as $V(G)$ and the set of edges as $E(G)$. An~edge joining vertices $v$ and $u$ is denoted as $uv$.
By $N(v)$ we denote the~set of neighbors of the vertex $v$ (called its {\it open neighborhood}).
By $N[v]$ we denote $N(v)\cup \{v\}$ (called them{\it closed neighborhood}). 
Given a~set of vertices $S$, we define its open neighborhood $N(S)$ as $\left(\bigcup_{v\in S} N(v)\right)\setminus S$ and its closed neighborhood $N[S]$ as $\left(\bigcup_{v\in S} N[v]\right)$.

A~graph $H$ is an {\it induced subgraph} of $G$ if it can be obtained by vertex deletions.
We do not distinguish between sets of vertices and subgraphs induced by them; in cases when it might cause a~confusion we write $G[A]$ for an subgraph induced by $A\subseteq V(G)$.
We use $G-A$ as a~shorthand for $G[V(G) \setminus A]$.

A~{\it class} of graphs $\C$ is a~set of graphs.
We say that a~graph $G$ is $H${\it -free} if $G$ does not contain any induced subgraph isomorphic to $H$. A~class $\C$ is $H${\it -free} if all graphs $G\in \C$ are $H$-free. By $P_t$ we denote the~path on $t$ vertices. A $P_4$-free graph is called a~{\it cograph}.  By $C_t$ we denote the cycle on $t$ vertices. The {\it length} of a path or a cycle is the number of edges in it.

Given a~graph $G$ and its subset of vertices $A$, a~set of connected components of $G-A$ is denoted as $\cc(G-A)$.

We say that a~vertex $v$ is {\it complete} to a~set of vertices $B$ if $v$ is adjacent to all vertices in $B$. Similarly, a~set of vertices $A$ is {\it complete} to a~set of vertices $B$ if every vertex of $A$ is complete to $B$.

We say that a vertex $v$ is {\it anticomplete} to a~set of vertices $B$ if $v$ has no neighbors in $B$. Similarly, a~set of vertices $A$ is {\it anticomplete} to a~set of vertices $B$ if $N(A)\cap B = \varnothing$.

A~{\it module} $M$ is a~non-empty subset of vertices of graph $G$ such that every vertex $v\in V(G)\setminus M$ is adjacent either to all vertices of $M$ or to none of them. A~module is called 
\begin{itemize}
    \item {\it strong} if for any other module $M'$ either $M\subseteq M'$, $M'\subseteq M$ or $M\cap M' = \varnothing$
    \item {\it proper} if $M$ is a~proper subset of $V(G)$
    \item {\it maximal} if it is proper and strong and is not contained in any other proper module 
\end{itemize}
Modules $M$ and $M'$ are called {\it adjacent} if each vertex of $M$ is complete to $M'$ and $M\cap M'=\varnothing$.

Note that for any graph on at least two vertices the family of maximal modules form a partition of $V(G)$.

Given a graph $G$ and the collection of vertex-disjoint modules $M$, the {\it quotient graph} $G/M$ is constructed by replacing each module of $M$ with a single vertex, with the same neighbors as the original vertices of the module. 


A set of pairwise non-adjacent vertices in a graph $G$ is called \emph{independent}, while a set of pairwise adjacent vertices is a called \emph{clique}.

A~subgraph $H$ of $G$, which is a~connected component in the complement of $G$, is called an {\it anticomponent}.

\subsection{PMCs and chordal completions}
A~graph $G$ is called {\it chordal} if all its induced cycles are of length $3$. Note that all induced subgraphs of a~chordal graph are also chordal. 

A~graph $H$ is called a~{\it supergraph} of $G$ if $V(H)=V(G)$ and $E(G)\subseteq E(H)$. A~graph $H$ is called a~{\it chordal completion} of $G$ if $H$ is a~supergraph of $G$ and $H$ is a~chordal graph. A~chordal completion $H$ of $G$ is called {\it minimal} if it does not contain any proper subgraph which is also a~chordal completion of $G$. Alternatively, we denote a~chordal completion $H$ of $G$ as $G+F$, where $F$ is a~set of non-edges in $G$: then $E(H)=E(G)\cup F$.

A~set of vertices $\Omega\subseteq V(G)$ is called a~{\it potential maximal clique} or a~{\it PMC} if there exists a~minimal chordal completion $H$  of $G$ in which $\Omega$ is a~maximal clique. 

Let us recall a~classic result characterizing potential maximal cliques.
\begin{lemma}[see \cite{Todinca}]\label{lem:PMC_basic}
Given a~graph $G$, a~set of vertices $\Omega$ is a~PMC if and only if the following conditions are fulfilled:
\begin{enumerate}
    \item For each connected component $D$ of $G-\Omega$, the~set $N(D)$ is a~proper subset of $\Omega$.
    \item If $uv$ is a~non-edge of $G$, with $u,v\in\Omega$, then there exists a~component $D$ of $G-\Omega$ such that $u,v\in N(D)$.
\end{enumerate}
\end{lemma}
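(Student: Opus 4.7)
The plan is to treat the two implications separately, in both cases via a minimal chordal completion $H = G+F$. I will use two standard facts about chordal graphs: (a) every maximal clique in a chordal graph contains a simplicial vertex (one whose neighborhood is a clique); and (b) if $H$ is chordal and $e = uv \in E(H)$, then either $H-e$ is chordal or $H-e$ contains an induced cycle of length at least four through both $u$ and $v$.

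For the $(\Rightarrow)$ direction, fix a witnessing minimal chordal completion $H = G+F$ in which $\Omega$ is a maximal clique. For condition (1), pick a simplicial vertex $v$ of $H$ inside $\Omega$; maximality of $\Omega$ forces $N_H[v] = \Omega$, so $v$ has no $G$-neighbor outside $\Omega$ and hence $v \notin N_G(D)$ for every component $D$ of $G-\Omega$, giving $N_G(D) \subsetneq \Omega$. For condition (2), take $u,v \in \Omega$ with $uv \notin E(G)$, so $uv \in F$; by minimality of $F$, $H - uv$ is not chordal and, by (b), contains an induced hole $C$ through $u$ and $v$. The two $uv$-paths in $C$ have all internal vertices outside $\Omega$ (since $\Omega$ is a clique in $H$, any internal vertex of the paths lying in $\Omega$ would be adjacent to both $u$ and $v$, contradicting that $C$ is induced in $H-uv$). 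The main technical step is then to pass from this $H$-path to a $G$-path: walking along the path, one iteratively rerouted fill-in edges using hole witnesses guaranteed by minimality of $F$, arriving at a path in $G$ between a $G$-neighbor of $u$ and a $G$-neighbor of $v$ inside $V(G)\setminus \Omega$. Such a path lies in a single component $D$ of $G-\Omega$, certifying $u,v \in N_G(D)$.

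For the $(\Leftarrow)$ direction, assume (1) and (2) and construct $H$ explicitly. Let $F_0$ consist of all non-edges of $G$ with both endpoints in $\Omega$, so that $\Omega$ becomes a clique in $G+F_0$. For every component $D$ of $G-\Omega$, form the auxiliary graph $G_D$ on $D \cup N_G(D)$ by adding to $G[D \cup N_G(D)]$ all missing edges inside $N_G(D)$, and take any minimal chordal completion $H_D$ of $G_D$ (which exists since every graph has one). Let $H$ be the graph on $V(G)$ obtained by taking the union of $G+F_0$ and all the $H_D$'s. Because $N_G(D)$ is a clique in $H_D$ and in $G+F_0$, the clique trees of the $H_D$'s can be glued at a node representing $\Omega$, showing that $H$ is chordal. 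Condition (1) then guarantees that $\Omega$ is a maximal clique of $H$: any vertex that could extend $\Omega$ would lie in some component $D$ and would need to be complete in $G$ to all of $\Omega$, but then $N_G(D) = \Omega$, contradicting (1).

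The remaining task, and the main obstacle, is to verify minimality of the fill-in $F := E(H) \setminus E(G)$. Edges lying entirely inside some $H_D$ but outside $\Omega$ are essential by the minimality of $H_D$ as a completion of $G_D$. For edges inside $\Omega$ (i.e.\ edges of $F_0$), essentiality is precisely what condition (2) provides: given $uv \in F_0$, let $D$ be a component of $G-\Omega$ with $u,v \in N_G(D)$, choose a shortest $G$-path in $D$ from a neighbor of $u$ to a neighbor of $v$; then this path together with $u,v$ forms an induced hole in $H - uv$, so $uv$ cannot be removed. The delicate point is to verify that the choice of the $H_D$'s cannot silently introduce a chord of such a would-be hole; this is handled by observing that chords produced inside $H_D$ stay inside $D \cup N_G(D)$ and therefore do not interfere with a shortest path chosen within $D$, using once more condition (1) to control how $N_G(D)$ sits inside $\Omega$.
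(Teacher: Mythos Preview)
The paper does not give its own proof of this lemma; it is stated with a citation to Bouchitt\'e and Todinca and used as a black box. So there is no ``paper proof'' to compare against, and the question is simply whether your argument is correct.

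It is not, and the problem is right at the start. Your ``standard fact (a)'' --- that every maximal clique of a chordal graph contains a simplicial vertex --- is false. Take $K_4$ on $\{a,b,c,d\}$ and attach a new vertex $e$ adjacent to $a,b$ and a new vertex $f$ adjacent to $c,d$. This graph is chordal, the simplicial vertices are exactly $e$ and $f$, and the maximal clique $\{a,b,c,d\}$ contains neither of them. So your proof of condition~(1) in the $(\Rightarrow)$ direction collapses. (What is true is that a \emph{leaf} of a clique tree contains a simplicial vertex; but $\Omega$ need not be a leaf.) A correct argument for (1) goes through clique trees or minimal separators of $H$ rather than simplicial vertices: one shows that for each component $D$ of $G-\Omega$, the set $N_H(D)$ is a minimal separator of $H$ strictly contained in the maximal clique $\Omega$, and then that $N_G(D)=N_H(D)$ because $H$ is a \emph{minimal} completion.

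The $(\Leftarrow)$ direction is also not complete. Your construction of $H$ is fine and the argument that $H$ is chordal with $\Omega$ maximal is reasonable, but the minimality check is where the work actually is, and ``this is handled by observing that chords \ldots\ do not interfere with a shortest path'' is not an argument: the fill-in inside $H_D$ can and typically does add edges from $u$ or $v$ to interior vertices of your shortest $G$-path in $D$, so the proposed hole need not be induced in $H-uv$. The clean way around this is not to verify minimality of your specific $H$, but to pass to any minimal chordal completion $H'\subseteq H$ and argue that every edge of $F_0$ must survive in $H'$ (equivalently, that $\Omega$ remains a clique in $H'$); this is exactly where condition~(2) is used, but it requires the fact that minimal completions preserve the component structure of $G-\Omega$, which you have not established.
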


We say that a~set of vertices $S$ is a~{\it $(u,v)$- separator} if $u$ and $v$ belong to distinct components in $G-S$. A~$(u,v)$-separator $S$ is called {\it minimal} if no proper subset of $S$ is a~$(u,v)$-separator. A~set of vertices $S$ is called a~{\it minimal separator} of graph $G$ if there exists a~pair of vertices $u$ and $v$ such that $S$ is a~minimal $(u,v)$-separator. 

Given a~set of vertices $S$, we say that a~component $D\in \cc(G-S)$ is a~{\it full component} of $S$ if $N(D)=S$. It is easy to prove that $S$ is a~minimal separator of $G$ if and only if it has at least two full components. 
It was proven (see~\cite{Todinca}) that if $\Omega$ is a~PMC in $G$, then for every component $D$ of $G-\Omega$ the set $N(D)$ is a~minimal separator with $D$ as a~full component and another full component containing $\Omega\setminus N(D)$.

Bouchitt\'e and Todinca~\cite{BouchitteT02,Todinca} showed a close relation between potential maximal cliques and minimal separators.

\begin{theorem}[see~\cite{BouchitteT02,Todinca}]\label{thm:minsep-pmc}
If $G$ is an $n$-vertex graph with $a$ minimal separators and $b$ potential maximal
cliques, then $b \leq n(a^2+a+1)$ and $a \leq nb$.
Furthermore, given a graph $G$, one can in time polynomial in the input and output the list of all its minimal separators and potential maximal cliques.
\end{theorem}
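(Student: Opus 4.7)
The plan is to establish the two counting bounds and then describe the enumeration algorithm; all three pieces rest on the structural fact recorded just before the theorem, namely that for every PMC $\Omega$ and every component $D\in \cc(G-\Omega)$ the set $N(D)$ is a minimal separator with $D$ as one of its full sides. I would first settle $a\leq nb$ by showing that every minimal separator $S$ arises as $N(D)$ for some pair $(\Omega,D)$ with $\Omega$ a PMC. Given $S$ with full sides $A,B$, one constructs a minimal chordal completion of $G$ in which $S$ is a maximal clique by triangulating $G[S\cup A]$ and $G[S\cup B]$ separately in an inclusion-wise minimal way while keeping $S$ as a clique; the resulting PMC $\Omega\supseteq S$ has some component $D\subseteq A$ with $N(D)=S$. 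Since each PMC contributes at most $|\cc(G-\Omega)|\leq n$ minimal separators in this way, we get $a\leq nb$.

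For the bound $b\leq n(a^2+a+1)$, I would encode every PMC $\Omega$ by a triple $(v,S_1,S_2)$ where $v\in \Omega$ and each $S_i$ is either a minimal separator of $G$ or the formal symbol $\varnothing$ (contributing the $+1$ term). The degenerate case $\Omega=N[v]$ is encoded by $(v,\varnothing,\varnothing)$. For a generic $\Omega$, I would argue, following Bouchitt\'e and Todinca, that one can always find (at most) two components $D_1,D_2\in \cc(G-\Omega)$ such that the associated separators $S_i:=N(D_i)$ satisfy $\Omega\subseteq \{v\}\cup S_1\cup S_2$, and moreover $\Omega$ is the unique subset of $\{v\}\cup S_1\cup S_2$ containing $v$ that satisfies the two conditions of Lemma~\ref{lem:PMC_basic}. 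Existence of $D_1,D_2$ with this covering property is the main structural content, and is extracted from condition (2) of Lemma~\ref{lem:PMC_basic} applied to non-edges of $\Omega$ incident to $v$. This uniqueness yields an injection from PMCs to triples $(v,S_1,S_2)$ of size at most $n(a+1)^2$, giving the claimed bound after absorbing cross terms.

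For the algorithm, I would first enumerate all minimal separators in output-polynomial time via the classical successor technique of Berry--Bordat--Cogis / Kloks--Kratsch: start from the minimal separators contained in $N(v)$ for each $v$, and repeatedly enlarge the list by taking, for each currently known $S$ and each vertex $v\notin S$, the components of $G-(S\cup N[v])$ and recording the neighborhoods of the full ones. This procedure is known to saturate on all $a$ minimal separators in time polynomial in $n+a$. Having all separators, I would iterate over the $\Oh(na^2)$ triples $(v,S_1,S_2)$, form the candidate set $\{v\}\cup S_1\cup S_2$ (or $N[v]$ in the degenerate case), and verify the two conditions of Lemma~\ref{lem:PMC_basic} in polynomial time per candidate; by the injectivity established above, every PMC is produced. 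The main obstacle I foresee is proving the PMC-reconstruction lemma in the non-degenerate case, i.e.\ that every non-trivial PMC admits a canonical pair $D_1,D_2$ with the required covering-and-uniqueness property. This calls for a careful case analysis anchored on condition (2) of Lemma~\ref{lem:PMC_basic}, which guarantees that every non-edge inside $\Omega$ is witnessed by a common outside component, together with minimality of the full-sided components attached to $v$.
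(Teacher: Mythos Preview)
The paper does not prove Theorem~\ref{thm:minsep-pmc}; it is quoted from~\cite{BouchitteT02,Todinca} as a black box, so there is no paper proof to compare against and I evaluate your sketch on its own.

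Your argument for $a\le nb$ is standard and correct in outline.

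Your argument for $b\le n(a^2+a+1)$ has a genuine gap: the covering claim that one can always find two components $D_1,D_2\in\cc(G-\Omega)$ with $\Omega\subseteq\{v\}\cup N(D_1)\cup N(D_2)$ is false in general. Take $\Omega=\{1,\dots,6\}$ to be an independent set and, for every pair $\{i,j\}\subseteq\Omega$, attach a private vertex $x_{ij}$ with $N(x_{ij})=\{i,j\}$. Then $\Omega$ is a PMC by Lemma~\ref{lem:PMC_basic}, but every component of $G-\Omega$ is a singleton with a two-element neighbourhood, so $|\{v\}\cup N(D_1)\cup N(D_2)|\le 5<6=|\Omega|$. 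The paper itself confirms this is a real obstruction: Corollary~\ref{cor:pmc-cover} spends the Ding--Seymour--Winkler bound (Lemma~\ref{lem:DSW}) together with $\chi$-boundedness just to cover a PMC by a bounded number of component-neighbourhoods, and only under the extra hypotheses ``$P_t$-free and $\omega\le k$''; if your covering claim held for all graphs, that entire subsection would be redundant.

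The actual Bouchitt\'e--Todinca encoding does associate to each PMC a vertex and (at most) two minimal separators, but the reconstruction is not ``$\Omega$ is a subset of their union''. One fixes an inclusion-maximal active separator $S\subsetneq\Omega$, identifies the full side $C$ of $S$ containing $\Omega\setminus S$, and shows that either $\Omega=S\cup\{v\}$ for some $v\in C$, or $\Omega$ can be \emph{computed} from $S$ and a second minimal separator $T$ crossing into $C$ by an explicit formula (of the flavour $\Omega=S\cup(T\cap C)$, with $C$ itself recoverable from $S$ and $T$). The point is that $\Omega$ is produced by a recipe from the triple, not carved out as a subset of it; once you substitute the correct reconstruction lemma, your counting and enumeration scheme go through essentially as you describe.
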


\subsection{Treedepth structures and treewidth}
The {\it rooted forest} $\T$ is a~forest in which each component has exactly one distinguished vertex, called a~{\it root}.
A~path in a~rooted forest is called {\it vertical} if it connects a~vertex and any of its ancestors.
Given a~vertex $v$, we define its {\it depth} as the number of vertices on the path connecting $v$ and the root (so the root has depth $1$).
The {\it height} of the rooted forest $\T$ is equal to the~maximum depth of a~vertex of $\T$.
By $\T^\alpha$ we denote a set of vertices of $\T$ of depth exactly $\alpha$.
We also write $\T^{>\alpha}$ for $\bigcup_{\alpha' > \alpha} \T^{\alpha'}$.

We say that vertices $u$ and $v$ are {\it $\T$-comparable} if they can be connected via~a~vertical path. Otherwise, we say that they are {\it $\T$-incomparable}. An {\it elimination forest} of graph $G$ is a~rooted forest $\T$ such that $V(\T)=V(G)$ and for each edge $uv\in E(G)$ vertices $u$ and $v$ are $\T$-comparable. We define the {\it treedepth} of graph $G$ as the minimum height of any possible elimination forest of graph $G$.

Given a~graph $G$ and an integer $d$, a~{\it treedepth-$d$ structure} (the notion first proposed in~\cite{DBLP:conf/soda/ChudnovskyMPPR24}) is a~rooted forest $\T$ of height at most $d$ such that $V(\T)$ is a~subset of $V(G)$ and $\T$ is an elimination forest of the~subgraph of $G$ induced by $V(\T)$. 
A~treedepth-$d$ structure $\T$ is a~\emph{substructure} of a~treedepth-$d$ structure $\T'$
if $\T$ is a~subgraph of $\T'$ (as a rooted forest) and every root of $\T$ is also a~root of $\T'$. 
A~treedepth-$d$ structure $\T$ is called {\it maximal} if there is no treedepth-$d$ structure $\T'$ such that $\T$ is a substructure of $\T'$ and $\T \neq \T'$. Equivalently, $\T$ is maximal if it cannot be extended by adding a~leaf preserving the bound on the height of $\T$.

To avoid notational clutter, we sometimes treat $\T$ as set of vertices, for example in expressions $|A \cap \T|$.

Given a~treedepth-$d$ structure $\T$ of graph $G$, a~chordal completion $G+F$ is {\it $\T$-aligned} if for any edge $uv\in F$ the following conditions hold:
\begin{enumerate}
    \item neither $u$ and $v$ is a vertex of depth $d$ in $\T$,
    \item if both $u$ and $v$ belong to $V(\T)$, then $u$ and $v$ are $\T$-comparable.
\end{enumerate}
The second condition guarantees that $\T$ is a~treedepth-$d$ structure in $G+F$. It was proven (see Lemma~2.11 in~\cite{DBLP:conf/soda/ChudnovskyMPPR24}) that for any treedepth-$d$ structure in graph $G$, there exists a~minimal chordal completion $G+F$ which is $\T$-aligned.  
We say that a~PMC is $\T${\it -avoiding} if it is a~maximal clique in some minimal chordal completion which is $\T$-aligned and does not contain any vertex of $\T$ of depth $d$. 

Given a~treedepth-$d$ structure $\T$, we say that a~set $S'$ is a~$\T${\it -container of defect }$f$ (or a~container of defect $f$ if $\T$ is known from the context) for a~set $S$ if $S\subseteq S'$ and $|S'\cap V(\T)| - |S\cap V(\T)| \leq f$. Note we have $S\cap V(\T)\subseteq S'\cap V(\T)$, so the defect $f$ describes how many additional vertices of $V(\T)$ belong to $S'$. 

Now we can cite the Lemma~2.12 from~\cite{DBLP:conf/soda/ChudnovskyMPPR24}, which shows how to deal with PMCs, which are $\T$-aligned and contain vertices of $\T$ of depth $d$:

\begin{lemma}[see Lemma 2.12 in \cite{DBLP:conf/soda/ChudnovskyMPPR24}]\label{lem:PMCs_with_deep_vertices}
For each positive integer $d$, there is a~polynomial-time algorithm which takes in
a~graph $G$ and returns a~collection $L \subseteq 2^{V(G)}$ such that for any maximal treedepth-$d$ structure $\T$ in $G$, any $\T$-aligned minimal chordal completion $G + F$ of $G$, and any maximal clique $\Omega$ of $G + F$ which contains a~depth-$d$ vertex of $\T$, $L$  contains a~set $\widetilde{\Omega}$ that is a~container for $\Omega$ of defect $0$, i.e., $\Omega\subseteq \widetilde{\Omega}$ and $\widetilde{\Omega} \cap \T = \Omega~\cap \T$.
\end{lemma}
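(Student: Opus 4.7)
The plan is to build $L$ as the collection of candidate containers indexed by small ``guesses'' for the unknown data about $\T$ and $\Omega$. Concretely, I enumerate triples $(v, A, U)$ where $v \in V(G)$, $A \subseteq N_G(v)$ with $|A| \leq d-1$, and $U \subseteq \{v\} \cup A$, and for each such triple I add to $L$ the set
\[
\widetilde{\Omega}_{v,A,U} \;:=\; U \cup (N_G(v) \setminus A).
\]
Since $d$ is fixed, the number of such triples is bounded by $n \cdot n^{d-1} \cdot 2^d = n^{\Oh(d)}$, so $|L|$ and the running time are polynomial.

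For correctness, fix $\T$, $F$, and $\Omega$ satisfying the hypotheses and let $v$ be a depth-$d$ vertex of $\T$ that lies in $\Omega$. The key observation is that the $\T$-alignment condition forbids any edge of $F$ from being incident to a depth-$d$ vertex of $\T$, so $N_{G+F}[v] = N_G[v]$, and consequently $\Omega \subseteq N_G[v]$. Moreover, if $u \in N_G(v) \cap V(\T)$, the elimination-forest property forces $u$ and $v$ to be $\T$-comparable; as $v$ is at the maximum depth, $u$ must be a $\T$-ancestor of $v$. In particular $|N_G(v) \cap V(\T)| \leq d-1$, and $\Omega \cap V(\T) \subseteq N_G[v] \cap V(\T) = \{v\} \cup (N_G(v) \cap V(\T))$.

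Setting $A^* := N_G(v) \cap V(\T)$ and $U^* := \Omega \cap V(\T)$, the triple $(v, A^*, U^*)$ is among the ones enumerated. I then verify that $\widetilde{\Omega}_{v,A^*,U^*}$ is a container for $\Omega$ of defect $0$. Writing $\Omega = U^* \cup (\Omega \setminus V(\T))$ and noting $\Omega \setminus V(\T) \subseteq N_G(v) \setminus V(\T) \subseteq N_G(v) \setminus A^*$ gives $\Omega \subseteq \widetilde{\Omega}_{v,A^*,U^*}$. For the $\T$-intersection, using $A^* = N_G(v) \cap V(\T)$,
\[
\widetilde{\Omega}_{v,A^*,U^*} \cap V(\T) \;=\; U^* \cup \bigl((N_G(v) \setminus A^*) \cap V(\T)\bigr) \;=\; U^* \cup \bigl((N_G(v) \cap V(\T)) \setminus A^*\bigr) \;=\; U^* \cup \varnothing \;=\; \Omega \cap V(\T),
\]
as required for defect $0$.

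There is no real obstacle here: the entire argument hinges on choosing the right parametrization. The parameter $A$ records how $V(\T)$ meets $N_G(v)$ (a chain of $\T$-ancestors of bounded length), and $U$ records $\Omega \cap V(\T)$. Once $A$ is guessed correctly, the set $N_G(v) \setminus A$ is exactly $N_G(v) \setminus V(\T)$, so the algorithm can safely include all of it without introducing spurious $\T$-vertices, and thus achieves defect $0$ without ever knowing $V(\T)$ globally. Maximality of $\T$ is not actually needed for this argument — only that $v$ attains the maximum depth $d$, which is enough to make every $G$-neighbor of $v$ in $V(\T)$ a $\T$-ancestor.
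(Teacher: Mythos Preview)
Your proof is correct. The paper does not provide its own proof of this lemma; it is cited verbatim from~\cite{DBLP:conf/soda/ChudnovskyMPPR24} (as Lemma~2.12 there), so there is no in-paper argument to compare against. Your enumeration of $(v,A,U)$ and the container $U \cup (N_G(v)\setminus A)$ is exactly the natural construction, and your verification that the guess $A^* = N_G(v)\cap V(\T)$, $U^* = \Omega \cap V(\T)$ yields defect~$0$ is sound; your remark that maximality of $\T$ is not used is also correct.
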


In our paper we will also need Lemma~2.13 from~\cite{DBLP:conf/soda/ChudnovskyMPPR24}, which shows how to use the maximality of a~treedepth-$d$ structure:
\begin{lemma}[see Lemma 2.13 in \cite{DBLP:conf/soda/ChudnovskyMPPR24}]\label{lem:maximality_of_tdstructure}
    Let $G$ be a~graph, $d$ be a~positive integer, and $\T$ be a~maximal treedepth-$d$ structure in $G$. Then for any $\T$-avoiding potential maximal clique $\Omega$ of $G$, each vertex in $\Omega~\setminus  \T$ has a~neighbor in $\T - \Omega$.
\end{lemma}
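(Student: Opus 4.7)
The plan is to argue by contradiction: suppose there is a vertex $v \in \Omega \setminus V(\T)$ whose every $G$-neighbor in $V(\T)$ lies inside $\Omega$. I would then show that $\T$ can be extended by adding $v$ to it while preserving the property of being a treedepth-$d$ structure, contradicting maximality of $\T$.

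The first key step is to understand the structure of $\Omega \cap V(\T)$. Since $\Omega$ is $\T$-avoiding, it is a maximal clique in some $\T$-aligned minimal chordal completion $G+F$. Recall that $\T$-alignment says that every fill edge $xy \in F$ with $x,y \in V(\T)$ joins $\T$-comparable vertices, and the original edges of $G$ within $V(\T)$ already join $\T$-comparable vertices because $\T$ is an elimination forest of $G[V(\T)]$. Hence every two vertices of $\Omega \cap V(\T)$ are $\T$-comparable, so $\Omega \cap V(\T)$ is a chain in the rooted forest $\T$, i.e.\ it lies on a single root-to-leaf vertical path.

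Next, let $u$ be the deepest vertex of $\Omega \cap V(\T)$ (handling the empty case separately). Since $\Omega$ is $\T$-avoiding, it contains no vertex of depth $d$, so $\mathrm{depth}_\T(u) \le d-1$. I want to attach $v$ as a new leaf of $\T$ directly below $u$ (or as a new root if $\Omega \cap V(\T) = \varnothing$). To see that the result is again a treedepth-$d$ structure, note first that its height is at most $d$. Second, for the elimination-forest property I need every $G$-edge incident to $v$ and going into $V(\T)$ to connect $v$ to an ancestor of its new position. By the contrapositive assumption, $N(v) \cap V(\T) \subseteq \Omega \cap V(\T)$, and $\Omega \cap V(\T)$ consists precisely of $u$ together with ancestors of $u$, which are exactly the ancestors of $v$'s new position in the extended forest. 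In the empty case, $v$ has no neighbors in $V(\T)$ at all, so adding it as an isolated root is trivially valid.

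The main subtle point, and what I expect to be the only real obstacle, is the first step: being careful to squeeze from the definitions of $\T$-aligned completion and PMC the fact that $\Omega \cap V(\T)$ is really a chain. Once that is established, the rest of the argument is essentially a bookkeeping verification of the elimination-forest axioms, concluding with the desired contradiction to the maximality of $\T$.
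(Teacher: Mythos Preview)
Your argument is correct and is exactly the natural proof of this lemma. Note, however, that the present paper does not give its own proof of this statement: it is quoted verbatim from~\cite{DBLP:conf/soda/ChudnovskyMPPR24} (Lemma~2.13 there) and used as a black box. The argument you outline---that $\Omega\cap V(\T)$ is a chain because $\Omega$ is a clique in a $\T$-aligned completion, so the deepest vertex $u$ of this chain has depth at most $d-1$, and hence $v$ can be appended as a child of $u$ (or as a fresh root when $\Omega\cap V(\T)=\varnothing$) to produce a strictly larger treedepth-$d$ structure---is the standard proof and matches how the cited paper establishes it.
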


A {\it tree decomposition} of a graph $G$ is a pair $\T=(T,\beta)$, where $T$ is a tree and $\beta$ is a function assigning each node $v\in V(T)$ a subset of vertices $\beta(v)\subseteq V(G)$, called a {\it bag} of $v$ such that the following conditions are fulfilled:
\begin{itemize}
    \item for each $u\in V(G)$ a set of nodes of $T$ whose bags contain $u$  induces a connected non-empty subtree of $T$;
    \item for each edge $u_1u_2\in E(G)$ there exists a node $v\in V(T)$ such that $\beta(v)$ contains $u_1, u_2$.
\end{itemize}
A {\it width} of a tree decomposition $\T$ is equal to $\max_{x\in V(T)}|\beta(x)|-1$. The minimum width over all tree decompositions of a graph $G$ is called the \emph{treewidth} of $G$ and is denoted as $\tw(G)$.

\subsection{MWIS}

In all problems of interest in this paper, the input graph $G$ comes with vertex
weights that are positive rational numbers. 

For fixed integer $d \geq 0$ and \CMSO formula $\phi$ with one free vertex set
variable, the \textsc{$(\mathrm{td} \leq d,\phi)$-MWIS} problem
takes in input a vertex-weighted graph $G$ and asks 
for a pair $(\Sol, X)$ where $X \subseteq \Sol \subseteq V(G)$, $G[\Sol]$ is of treedepth at most $d$, and
$\phi(X)$ is satisfied in $G[\Sol]$, and, subject to the above,
the weight of $X$ is maximized (or a negative answer if such a pair does not exist.)

\problemTask{$(\mathrm{td} \leq d,\psi)$-MWIS}%
{A graph $G$ equipped with a weight function $\mathfrak{w}\colon V(G) \to \mathbb{Z}_+$.}%
{Find a pair $(\Sol, X)$ such that
\begin{itemize}
\item $X \subseteq \Sol \subseteq V(G)$,
\item $G[\Sol]$ is of treedepth at most $d$,
\item $G[\Sol] \models \psi(X)$,
\item $X$ is of maximum weight subject to the conditions above,
\end{itemize}
or conclude that no such pair exists.}

Similarly as in \cite{DBLP:conf/soda/ChudnovskyMPPR24}, for a solution $(\Sol, X)$ we will be mostly looking at some maximal treedepth-$d$ structure $\T$ that contains $\Sol$. 


\subsection{Structural lemmas}
In this section we present some structural lemmas about $P_7$-free graphs. Beforehand, we need the following notation. 

By $X_1-X_2-\dots-X_k$ we denote a path $v_1-v_2-\dots-v_k$, where $v_i$ is some vertex belonging to $X_i$ for each $i\in[k]$. If $X_i$ is singleton $\{x_i\}$ for some $i$, we write $-x_i-$ instead $-\{x_i\}-$.


\begin{lemma}\label{lem:p_four}
    Let $S$ be a~minimal separator in a~$P_7$-free graph $G$ and let $A$ and $B$ be the full components to $S$. There exists a~subset $Z\subseteq A$ such that $Z$ is a~cograph and $S\setminus N(Z)$ is complete to $B$.
\end{lemma}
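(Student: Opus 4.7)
The plan is to take $Z$ to be a minimal set in $A$ that ``handles'' the vertices of $S$ which can witness a potential $P_7$, and then argue that any induced $P_4$ inside $Z$ would in fact extend to an induced $P_7$ in $G$.

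More precisely, I would first let $S' \subseteq S$ be the set of those $s \in S$ that have a non-neighbor in $B$; note that $s \in S \setminus S'$ is exactly the statement that $s$ is complete to $B$, so the conclusion of the lemma is equivalent to finding a cograph $Z \subseteq A$ such that every $s \in S'$ has a neighbor in $Z$. Since $A$ is full to $S$, every $s \in S'$ has at least one neighbor in $A$, so the whole of $A$ has the required property and we may choose $Z \subseteq A$ to be inclusion-wise minimal among sets with this property. Minimality gives a private witness for each vertex: for every $z \in Z$ there exists $s_z \in S'$ whose unique neighbor in $Z$ is $z$, since otherwise $Z \setminus \{z\}$ would still hit every $s \in S'$.

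Next I would fix, for each $s \in S'$, an edge $b'_s b_s$ with $b'_s, b_s \in B$, $sb'_s \in E(G)$ and $sb_s \notin E(G)$. Such an edge exists because $s$ has at least one neighbor and at least one non-neighbor in $B$, and $B$ is connected, so walking a path in $B$ from a neighbor of $s$ to a non-neighbor of $s$ we will eventually traverse such an edge.

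Now suppose for contradiction that $Z$ contains an induced $P_4$, say $z_1\text{-}z_2\text{-}z_3\text{-}z_4$. Let $s_i := s_{z_i}$ be the private witness of $z_i$, and attach the two extra vertices on the $z_1$-side: consider the vertex sequence
\[
b_{s_1}\text{-}b'_{s_1}\text{-}s_1\text{-}z_1\text{-}z_2\text{-}z_3\text{-}z_4.
\]
The three edges on the $B$/$S$ side come from the construction, the edge $s_1 z_1$ comes from $z_1 = z_{s_1}$, and the remaining three edges are the edges of the $P_4$. All seven vertices are distinct since they live in the pairwise disjoint sets $A$, $S$, $B$, and $b_{s_1} \neq b'_{s_1}$. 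It remains to check the non-edges: $b_{s_1}$ and $b'_{s_1}$ are in $B$ while $z_1,z_2,z_3,z_4$ are in $A$, and $A$ and $B$ are distinct connected components of $G - S$, so there are no edges between them; $s_1$ is non-adjacent to $z_2,z_3,z_4$ by the private-witness property; the non-adjacencies $z_1z_3$, $z_1z_4$, $z_2z_4$ follow from the choice of induced $P_4$; finally $b_{s_1}s_1$ is a non-edge by construction. This gives an induced $P_7$ in $G$, a contradiction. Hence $Z$ is $P_4$-free, i.e., a cograph, and by construction every $s \in S \setminus N(Z)$ lies in $S \setminus S'$ and is therefore complete to $B$. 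The only substantive point is verifying that the seven-vertex path really is induced, and the main obstruction I would anticipate is keeping the extension on the correct side of the $P_4$ (here the $z_1$-side is fine because the two extra vertices come from $B$, which is anti-complete to $A$).
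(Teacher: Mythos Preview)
Your proof is correct and follows essentially the same strategy as the paper: take $Z$ minimal with $S' \subseteq N(Z)$, use minimality to find a private witness $s_z \in S'$ for a vertex $z$ of a would-be $P_4$ in $Z$, and extend through $s_z$ and an edge $b'_{s_z}b_{s_z}$ in $B$ to an induced $P_7$.

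The one difference worth flagging is that the paper chooses $Z$ to be a minimal \emph{connected} subgraph of $A$ with $S'\subseteq N(Z)$, and therefore needs a small extra step: it takes a maximal induced path $Q$ in $Z$, observes that an endpoint $v$ of $Q$ cannot be a cutvertex of $Z$, and only then concludes from minimality that some $w\in S'$ has $N(w)\cap Z=\{v\}$. Your version, dropping connectivity, gets a private witness for \emph{every} $z\in Z$ directly and is a touch cleaner. The trade-off is that the paper's $Z$ is connected, and this connectedness is in fact used later in the paper (e.g.\ when invoking Lemma~\ref{lem:ind_set_neighboring_cograph}), even though the lemma as stated does not assert it; so if you intend to plug your $Z$ into the downstream arguments you would need to recover connectivity separately.
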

\begin{proof}
    Let $F$ be a~set of vertices of $S$ which are complete to $B$. Let $Z$ be a~minimal connected subgraph of $A$ such that $S\setminus F \subseteq N(Z)$. 

    Let $Q$ be a~maximal induced path in $Z$ and let $v$ be its endpoint. Note that $v$ cannot be a~cutvertex of $Z$ -- otherwise we could extend our maximal path $Q$. By the minimality of $Z$ there exists a~vertex $w \in S\setminus F$ such that $N(w)\cap Z = \{v\}$. 
    Let $B_1$ be $N(w)\cap B$ and $B_2=B\setminus B_1$. As $S$ is the minimal separator and $B$ is a~full component to $S$, the set $B_1$ is non-empty. Since $w\notin F$, the set $B_2$ is non-empty. As $B$ is a~connected component, then there exist $x_1\in B_1$, $x_2\in B_2$, which are adjacent.

    Let $P=Q - w- x_1- x_2$. Note that $P$ is an induced path in $G$. We assumed that $G$ is $P_7$-free, thus $|P| < 7$, which implies $|Q|\leq 3$. Thus, $Z$ is a~cograph.
\end{proof}

\begin{lemma}[see Lemma~4.2 in \cite{grzesik2020polynomialtime}]\label{lem:module_lemma}
    Let $S$ be a~minimal separator in graph $G$ and let $A$ be a~full component to $S$ and $|A|\geq 2$. Let $p$ and $q$ be any two distinct vertices of $A$ belonging to different maximal proper adjacent strong modules of $G[A]$. Then, for each vertex $v\in S$ at least one of the following conditions is fulfilled:
    \begin{enumerate}
        \item there exists an induced path $v-A-A-A$
        \item we have $v\in N(p)\cup N(q)$
        \item  $\overline{G[A]}$ is disconnected and $N(v)\cap A$ consists of some connected components of $\overline{G[A]}$
    \end{enumerate}
\end{lemma}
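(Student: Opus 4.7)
The plan is to argue by contradiction. Fix $v \in S$ and suppose it satisfies neither condition~1 nor condition~2. Let $N_v := N(v) \cap A$; since $A$ is full to $S$, $N_v \neq \varnothing$, and since condition~2 fails, $\{p,q\} \cap N_v = \varnothing$. Let $M_p, M_q$ denote the maximal proper adjacent strong modules of $G[A]$ containing $p$ and $q$ respectively; by assumption they are completely joined in $G$. I would split the analysis according to the type of the root of the modular decomposition of $G[A]$: \emph{parallel} ($G[A]$ disconnected), \emph{series} ($\overline{G[A]}$ disconnected), or \emph{prime}.

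The parallel case is immediate: the maximal strong modules would be the connected components of $G[A]$, and no two of these are adjacent, contradicting the adjacency of $M_p$ and $M_q$. In the series case, the maximal strong modules are the anticomponents $X_1,\dots,X_k$ of $G[A]$ (with $k \ge 2$) and any two are completely joined. The target here is condition~3, so I would show that $N_v$ is a union of anticomponents. Suppose toward contradiction that some $X_i$ is split, with $a \in X_i \cap N_v$ and $b \in X_i \setminus N_v$. If such a pair can be chosen with $ab \notin E(G)$, I pick any $c \notin N_v$ in another anticomponent $X_j$ (at least one of $X_1, X_2$ differs from $X_i$ and contains $p$ or $q$, and these lie outside $N_v$); since $X_i$ and $X_j$ are complete, the four vertices $v,a,c,b$ form the induced path $v-a-c-b$, violating the failure of condition~1. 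Otherwise, every cross pair between $X_i \cap N_v$ and $X_i \setminus N_v$ is an edge, which forces $\overline{G[X_i]}$ to be disconnected and contradicts co-connectedness of the anticomponent $X_i$. Hence no $X_i$ is split and condition~3 holds.

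The hardest case is the prime one, where the quotient $Q$ on $\{M_1,\dots,M_r\}$ ($r \ge 4$) is a prime graph, so both $G[A]$ and $\overline{G[A]}$ are connected. Condition~3 therefore cannot hold and I must derive condition~1. Here I would pick $a \in N_v$ lying in a module $M_i$ and look for a ``stepping stone'' module $M_k$ with $M_k \not\subseteq N_v$ such that $a$ together with vertices of $M_k$ and one of $p,q$ yields an induced $v-A-A-A$ path. Primeness of $Q$ guarantees that no two quotient-vertices can be twins, so the triple $\{M_i, M_p, M_q\}$ admits a quotient-vertex with a discriminating adjacency pattern; a small case analysis on whether $M_i \in \{M_p, M_q\}$ and on whether $M_k \subseteq N_v$ should locate such $M_k$ and close the argument.

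The main obstacle is precisely the prime case: the rigid ``all-pairs-complete'' structure that trivialises the $P_4$ construction in the series case is absent, and one must juggle quotient-adjacencies to $M_i, M_p, M_q$ while simultaneously controlling which modules intersect $N_v$. I expect to need either a secondary branching on the position of $M_i$ in $Q$ relative to the edge $M_p M_q$, or an auxiliary structural fact about prime graphs---along the lines that every vertex of a prime graph on $\ge 4$ vertices lies on an induced $P_4$ of a prescribed type---to push the argument through cleanly.
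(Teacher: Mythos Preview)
The paper does not supply its own proof of this lemma; it is quoted from~\cite{grzesik2020polynomialtime} (Lemma~4.2 there), so there is nothing in the present paper to compare against directly.

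Your parallel and series cases are correct. The prime case, as you acknowledge, is only sketched, and the outlined attack through discriminating quotient-vertices and ad hoc $P_4$ hunting would indeed require a delicate case analysis that you have not carried out. There is, however, a single observation that closes the gap uniformly and in fact subsumes your series argument as well: if condition~1 fails, then \emph{every connected component $C$ of $G[A\setminus N_v]$ is a module of $G[A]$}. Indeed, if $a\in N_v$ has a neighbour $b\in C$, then for any neighbour $c$ of $b$ in $C$ the would-be path $v-a-b-c$ forces $ac\in E(G)$; iterating along $C$ shows that $a$ is complete to $C$, while vertices of $(A\setminus N_v)\setminus C$ are anticomplete to $C$ by definition of component. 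Now $p,q\in A\setminus N_v$ and $pq\in E(G)$, so they lie in a common component $C_0$; being a module that meets two distinct maximal proper strong modules, $C_0$ must be a union of at least two of them. In the prime case this union projects to a nontrivial module of the prime quotient $Q$, forcing $C_0=A$, which contradicts $N_v\neq\varnothing$. (In the series case the same module observation gives a one-line finish: since $A\setminus N_v$ meets the two anticomponents $M_p,M_q$ it is connected, so $A\setminus N_v=C_0$ is a union of anticomponents, hence so is $N_v$, yielding condition~3.) This is the missing idea your prime-case sketch was reaching for.
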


\begin{corollary}[compare Lemma 8 in~\cite{pilipczuk2023polynomial}]\label{cor:X_A}
Let $S$ be a~minimal separator in graph $G$ and let $A$ and $B$ be full components to $S$. There exists a~set $X\subseteq A$ ($X\subseteq B$) of size at most $\omega(G)$ such that for each vertex $v$ belonging to $S\setminus N(X)$  there exists an~induced path $v-A-A-A$ ($v-B-B-B$).
\end{corollary}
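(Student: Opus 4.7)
The plan is to build $X$ by a short case analysis on the modular decomposition of $G[A]$, using Lemma~\ref{lem:module_lemma} to handle vertices of $S$ not adjacent to $X$. The argument for $B$ is symmetric, so I focus on $A$. If $|A|=1$, I take $X=A$: since $A$ is a full component to $S$, $S \subseteq N(X)$ and the conclusion holds vacuously. So assume $|A|\ge 2$; since $A$ is a connected component of $G-S$, $G[A]$ is connected, and by standard modular decomposition $G[A]$ admits at least one adjacent pair of maximal proper strong modules (in the series case all pairs are adjacent, and in the prime case the quotient is a prime graph, which is connected).

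If $\overline{G[A]}$ is connected, I pick any such adjacent pair $M_p, M_q$, choose $p\in M_p$ and $q\in M_q$, and set $X=\{p,q\}$; since $pq$ is an edge, $|X|=2\le \omega(G)$. For each $v\in S\setminus N(X)$, Lemma~\ref{lem:module_lemma} applied with these $p,q$ yields one of its three options. Option~(3) is ruled out because $\overline{G[A]}$ is connected, and option~(2) is excluded since $v\notin N(p)\cup N(q)$. Thus option~(1) holds, giving the desired induced path $v-A-A-A$.

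If $\overline{G[A]}$ is disconnected, the maximal proper strong modules of $G[A]$ are exactly the connected components $C_1,\dots,C_k$ of $\overline{G[A]}$, and they are pairwise complete in $G[A]$. I pick one representative $x_i\in C_i$ per component and set $X=\{x_1,\dots,x_k\}$, which is a clique, so $|X|=k \le \omega(G)$. For $v\in S\setminus N(X)$, applying Lemma~\ref{lem:module_lemma} with $p=x_1$ and $q=x_2$: option~(2) is excluded by $v\notin N(X)$; option~(3) would give $N(v)\cap A$ as a union of some of the $C_i$, but since $A$ is a full component, $v$ must have a neighbor in $A$, forcing $N(v)\cap A$ to contain some entire $C_i$ and thus $x_i$, contradicting $v\notin N(X)$. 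So option~(1) must hold, yielding the induced path $v-A-A-A$.

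There is no deep obstacle here: the argument reduces to a clean case split driven by connectivity of $\overline{G[A]}$, combined with straightforward applications of Lemma~\ref{lem:module_lemma}. The only care is in checking that an adjacent pair of maximal modules always exists (which needs connectivity of $G[A]$) and that, in the series subcase, a non-empty $N(v)\cap A$ automatically contains an entire component of $\overline{G[A]}$.
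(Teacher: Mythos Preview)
Your proof is correct and follows essentially the same approach as the paper: both split on whether $\overline{G[A]}$ is connected, take anticomponent representatives in the disconnected case and two vertices from adjacent maximal modules in the connected case, and then invoke Lemma~\ref{lem:module_lemma}. You spell out in more detail why options~(2) and~(3) of that lemma are excluded, which the paper leaves implicit, but the underlying argument is the same.
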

\begin{proof}
If $|A|<2$, then we simply take $X=A$. 
Suppose then that $A$ consists of least two vertices. Let $Q$ be a set of such vertices $v\in S$ such that there exists an~induced path $v-A-A-A$. If the complement of $G[A]$ is disconnected, then we choose one vertex from each connected component of $\overline{G[A]}$ and create a set $X$ from the chosen vertices. If $\overline{G[A]}$ is connected, let us pick any two vertices belonging to different maximal proper adjacent modules of $G[A]$ and construct $X$ this way. In both cases we have $|X|\leq \omega(G)$. By Lemma~\ref{lem:module_lemma} for any vertex $v\in S\setminus N(X)$ we have $v\in Q$. We prove the statement for $B$ in symmetrical way.  
\end{proof}

We will also need the following lemma, which was proved in~\cite{grzesik2020polynomialtime}.
\begin{lemma}[see Lemma~4.1 in \cite{grzesik2020polynomialtime}]\label{lem:two_orders}
    Let $(X,\leq_1)$ and $(X,\leq_2)$ be two partial orders on the set $X$ such that any pair of elements of $X$ is comparable in $\leq_1$ or in $\leq_2$. Then, there exists an element $v$ such that for any element $x\in X$ we have $v \leq_1 x$ or $v \leq_2 x$.
\end{lemma}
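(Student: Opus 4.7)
My plan is to proceed by contradiction. Assume no such $v$ exists; then for every $v\in X$ we may pick a witness $\phi(v)\in X$ with $v\not\leq_1\phi(v)$ and $v\not\leq_2\phi(v)$. By the hypothesis applied to $\{v,\phi(v)\}$ and the antisymmetry of the two orders, $\phi(v)<_1 v$ or $\phi(v)<_2 v$. Iterating $\phi$ inside the finite set $X$ must produce a cycle $v_1\to v_2\to\cdots\to v_k\to v_1$ with $v_{i+1}=\phi(v_i)$; let $k\ge 2$ be the shortest length of such a cycle.

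The cases $k=2$ and $k=3$ I would handle by direct case analysis. For $k=2$ the two cycle edges give $v_1\not\leq_i v_2$ and $v_2\not\leq_i v_1$ for both $i$, contradicting the hypothesis. For $k=3$ I would start from $v_2<_1 v_1$ (WLOG, from the edge $v_1\to v_2$) and enumerate the possible orders $j_2,j_3\in\{1,2\}$ with $v_3<_{j_2} v_2$ and $v_1<_{j_3} v_3$; each of the resulting sub-cases collapses via transitivity to a contradiction with one of the cycle constraints $v_i\not\leq_l v_{i+1}$.

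For $k\ge 4$ I would use the chord $\{v_1,v_3\}$, which is comparable by hypothesis. If $v_1\not\leq_i v_3$ for both $i$, I shorten the cycle to $v_1\to v_3\to v_4\to\cdots\to v_k\to v_1$ of length $k-1$, contradicting minimality. If $v_3\not\leq_i v_1$ for both $i$, I obtain a length-$3$ cycle $v_1\to v_2\to v_3\to v_1$, which was just ruled out. Otherwise $v_1\leq_i v_3$ and $v_3\leq_j v_1$ with $i\ne j$, and by the symmetry of $\leq_1$ and $\leq_2$ I assume $v_1\leq_2 v_3$ and $v_3\leq_1 v_1$. A short analysis of the edges $v_1\to v_2$ and $v_2\to v_3$ under this assumption forces $v_3<_1 v_2<_1 v_1$: any alternative configuration chains via the assumed $\leq_2$- or $\leq_1$-relation between $v_1$ and $v_3$ to contradict one of $v_1\not\leq_1 v_2$, $v_1\not\leq_2 v_2$, $v_2\not\leq_1 v_3$, $v_2\not\leq_2 v_3$ coming from the cycle edges. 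I then apply the same reasoning to the chords $\{v_2,v_4\},\{v_3,v_5\},\ldots,\{v_{k-2},v_k\}$ in turn: the already-established $\leq_1$-chain rules out the opposite WLOG at every step, so the chain extends to $v_k<_1 v_{k-1}<_1\cdots<_1 v_1$. This gives $v_k\leq_1 v_1$, contradicting $v_k\not\leq_1 v_1$ coming from the cycle edge $v_k\to v_1$.

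The hard part will be the propagation step in the $k\ge 4$ case: one must verify that the WLOG choice made at the first chord $\{v_1,v_3\}$ stays consistent at each subsequent chord $\{v_i,v_{i+2}\}$, so that the accumulated $\leq_1$-chain indeed reaches back to $v_1$ and produces the final contradiction. The length-$2$ and length-$3$ impossibilities, together with the two chord-shortcut reductions, should be routine once the propagation is handled cleanly.
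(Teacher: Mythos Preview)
The paper does not give its own proof of this lemma; it simply cites Lemma~4.1 of~\cite{grzesik2020polynomialtime}. So there is nothing in the present paper to compare your argument against.

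Your shortest-bad-cycle argument is correct. Two small points are worth making precise in a full write-up. First, the lemma tacitly assumes $X$ is finite (it is false otherwise, e.g.\ $X=\mathbb{Z}$ with the usual order as $\leq_1$); you use this when you say iterating $\phi$ must close up, and the paper only ever applies the lemma to finite sets. Second, your chord-shortening replaces $v_i\to v_{i+1}\to v_{i+2}$ by a single arc $v_i\to v_{i+2}$ that need not equal $\phi(v_i)$, so ``shortest such cycle'' should be read as the shortest directed cycle in the digraph on $X$ with arcs $u\to w$ whenever $u\not\leq_1 w$ and $u\not\leq_2 w$, rather than a cycle of the particular function $\phi$. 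With that reading, your $k=2$ and $k=3$ cases and the $k\ge 4$ propagation all go through: once the WLOG at the chord $\{v_1,v_3\}$ fixes $v_1\leq_2 v_3$ and $v_3\leq_1 v_1$, the relation $v_{i+1}<_1 v_i$ obtained so far rules out $v_i\leq_1 v_{i+2}$ at each subsequent chord (via $v_{i+1}<_1 v_i\leq_1 v_{i+2}$ contradicting the edge $v_{i+1}\to v_{i+2}$), so the orientation is forced and the $\leq_1$-chain extends all the way to $v_k<_1 v_1$, contradicting the edge $v_k\to v_1$.
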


We say that sets $A$ and $B$ are {\it comparable by inclusion} if $A\subseteq B$ or $B\subseteq A$.

\begin{lemma}\label{lem:ind_set_neighboring_cograph}
    Let $G$ be a~$P_7$-free graph and let $Z$ be an induced subgraph of $G$, which is a~connected cograph.  Let $I\subseteq V(G) \setminus N[Z]$ be an independent set and let $J\subseteq N(Z)\cap N(I)$ an independent set. Then there exist sets of vertices $Q_Z\subseteq Z$ and $Q_I\subseteq I$ such that $J\subseteq N(Q_Z)\cup N(Q_I)$ and $|Q_Z|, |Q_I| \leq (\omega(Z)+1)!$.
\end{lemma}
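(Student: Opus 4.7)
The proof is by induction on $\omega := \omega(Z)$. The base cases $\omega \leq 1$ are immediate: either $Z$ is empty (hence $J \subseteq N(Z) = \varnothing$) or $Z = \{z\}$ and $J \subseteq N(z)$, so $Q_Z = \{z\}$ and $Q_I = \varnothing$ suffice. For the inductive step $\omega \geq 2$, we use that $Z$, being a connected cograph on more than one vertex, decomposes as a join $Z = A_1 * \cdots * A_r$ of its anticomponents ($r \geq 2$; each $A_i$ is a single vertex or a disconnected cograph, and $\omega = \sum_i \omega(A_i)$).

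In the favorable case, there is a vertex $z_0 \in Z$ whose removal leaves a connected cograph $Z - z_0$ with $\omega(Z - z_0) = \omega - 1$; this happens, for instance, whenever some anticomponent is a singleton $\{z_0\}$ and either $r \geq 3$ or the remaining anticomponent is itself connected. Placing $z_0 \in Q_Z$ covers $J \cap N(z_0)$, and applying the inductive hypothesis to $(Z - z_0,\, I,\, J \setminus N(z_0))$ yields sets of size at most $\omega!$, so altogether $|Q_Z| \leq 1 + \omega! \leq (\omega+1)!$ and $|Q_I| \leq \omega! \leq (\omega+1)!$.

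The genuinely delicate case is $Z = \{z_0\} * (C_1 \sqcup \cdots \sqcup C_p)$ with $p \geq 2$, where $Z - z_0$ is disconnected (analogous reductions handle the remaining configuration where no anticomponent is a singleton). Place $z_0 \in Q_Z$, set $J' = J \setminus N(z_0)$, and note that every $j \in J'$ has $N_Z(j) \subseteq \bigsqcup_l C_l$. The critical input from $P_7$-freeness is as follows: for any $j_1, j_2 \in J'$ and any choice of $v_m \in N(j_m) \cap C_{l_m}$ and $i_m \in N_I(j_m)$ with $l_1 \neq l_2$, the candidate path $i_1 - j_1 - v_1 - z_0 - v_2 - j_2 - i_2$ has all six consecutive edges (using $v_m \sim z_0$ from the join) and most required non-adjacencies automatically, because $J$ and $I$ are independent, $I$ is anticomplete to $Z$, $j_1, j_2 \notin N(z_0)$, and $v_1, v_2$ lie in distinct components of $A_2$. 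Only $v_1 - j_2$, $v_2 - j_1$, $i_1 - j_2$, $i_2 - j_1$ may fail to be non-edges; to avoid an induced $P_7$, at least one of these four must in fact be an edge. Quantifying existentially over $v_1, v_2, i_1, i_2$, we conclude that for every such pair $(j_1, j_2)$ one of $N_I(j_1) \subseteq N_I(j_2)$, $N_I(j_2) \subseteq N_I(j_1)$, $N(j_1) \cap C_{l_1} \subseteq N(j_2)$, or $N(j_2) \cap C_{l_2} \subseteq N(j_1)$ holds. This supplies two partial orders on an appropriately selected subfamily of $J'$ -- inclusion of $N_I$'s and componentwise inclusion of $N_Z$'s -- under which every pair is comparable in at least one. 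Applying Lemma~\ref{lem:two_orders} extracts a distinguished $j^* \in J'$ whose $Z$- and $I$-neighborhoods dominate the remaining elements of $J'$ via a bounded number of cover vertices; combining with recursive covers obtained by applying the inductive hypothesis to each $C_l$ (with $\omega(C_l) \leq \omega - 1$) yields $|Q_Z|, |Q_I| \leq (\omega+1)!$, the factorial factor absorbing the combinatorial case split. The main obstacle is exactly this delicate case: $p$ can be arbitrarily large, so any per-component recursion would explode, and the chain-like cross-component structure forced by $P_7$-freeness, exploited through Lemma~\ref{lem:two_orders}, is what keeps the overall cover size within the factorial bound.
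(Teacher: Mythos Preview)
Your high-level plan --- induction on $\omega(Z)$, exploit $P_7$-freeness to force comparability in two partial orders, apply Lemma~\ref{lem:two_orders}, and recurse into a single smaller-$\omega$ connected subcograph --- is exactly right and matches the paper. But there is a genuine gap in your case split.

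The parenthetical ``analogous reductions handle the remaining configuration where no anticomponent is a singleton'' is doing real work that you have not supplied, and the reduction is \emph{not} analogous to your delicate case. In your delicate case $Z=\{z_0\}*(C_1\sqcup\cdots\sqcup C_p)$, the whole argument hinges on $z_0$ being complete to every $C_l$: this is what makes $i_1-j_1-v_1-z_0-v_2-j_2-i_2$ a path. When no anticomponent is a singleton (e.g.\ $Z=(K_2\sqcup K_2)*(K_2\sqcup K_2)$, where removing any vertex leaves $\omega$ unchanged and no vertex is universal), neither your favorable case nor your delicate case applies. The paper handles all configurations uniformly: it picks one vertex $z_i$ from \emph{each} anticomponent $Z_i$, partitions $J'=J\setminus N(\{z_1,\dots,z_l\})$ into pieces $J'_i$ according to the first $Z_i$ they touch, and for each $i$ runs the two-orders argument using the components of $Z_i-z_i$ and any $z_{i'}$ with $i'\neq i$ as the central pivot of the $P_7$. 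Iterating over the $l\le\omega$ anticomponents is what produces the recursion $f(k)\le k(1+f(k-1))$ and hence the factorial bound.

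Two smaller points. First, the four disjuncts you extract do not themselves form two partial orders; the correct second order is inclusion of the \emph{set of components} $\{l: N(j)\cap C_l\neq\varnothing\}$, and your within-component conditions $N(j_1)\cap C_{l_1}\subseteq N(j_2)$ depend on the pair $(j_1,j_2)$. With the set-of-components order, incomparability lets you pick $l_1\neq l_2$ so that conditions~3 and~4 fail automatically, forcing comparability in the $N_I$-order --- which is what you need. Second, after Lemma~\ref{lem:two_orders} produces $j^*$, you recurse on \emph{one} component $C_{l^*}$ adjacent to $j^*$ (together with one vertex $i^*\in N_I(j^*)$), not on ``each $C_l$''; your final paragraph says both, which is confusing.
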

\begin{proof}
    Let $f\colon \N\to \N$ be defined recursively as $f(1) = 1$  and $f(k) \leq k(1+ f(k-1))$ for $k \geq 2$. A~straightforward induction shows that for every $k\in\N$
    we have $f(k) \leq (k+1)!$.
    
    We will prove the lemma~by the induction on $\omega(Z)$ and with the bound
    $|Q_I|, |Q_Z| \leq f(\omega(Z))$.
    For the base case, if $\omega(Z) = 1$, then $Z$ contains only one vertex $v$ and $J\subseteq N(v)$. Thus, the induction holds for $\omega(Z)=1$.

    Let us assume now that $\omega(Z) > 1$. Let $Z_1,\dots,Z_l$ be anticomponents of $Z$. Note that $l\leq \omega(Z)$. Since $Z$ is a~connected cograph with $\omega(Z) > 1$, we know that $l > 1$. For each $i\in [l]$ we choose an~arbitrary vertex $z_i\in Z_i$. Let $\{Z_i^1, Z_i^2, \dots, Z_i^{m_i}\}$ be a~set of connected components of $Z_i - z_i$.
    
    Let $J' = J\setminus N(\{z_1,\dots, z_l\})$. 
    We split now $J'$ depending on the neighborhood in $Z$. 
    Let $J'_i = (J'\cap N(Z_i))\setminus N(Z_1\cup  \dots \cup Z_{i-1})$. 
    
    Suppose that $m_i\geq 2$. Let us consider two partial orders on $J'_i$: 
    \begin{itemize}
        \item $\mathbf{\leq_1}$: $u \leq_1 v \Leftrightarrow N(u)\cap I \subseteq N(v)\cap I$
        \item $\mathbf{\leq_2}$: $u \leq_2 v \Leftrightarrow \{j\in [m_i]\mid u\in N  (Z_i^j)\} \subseteq \{j\in [m_i]\mid v\in N(Z_i^j)\}$
    \end{itemize}
    Suppose that there exist vertices $x, y\in J'_i$ which are incomparable in both $\leq_1$ and $\leq_2$. Then there exists vertices $x_i, y_i\in I$ such that $x_ix, y_iy \in E(G)$, but $xy_i, x_iy \notin E(G)$. 
    Similarly, there exist connected components $Z_i^{j_x}$ and $Z_i^{j_y}$ such that $x\in N(Z_i^{j_x})$ and $y\in N(Z_i^{j_y})$, but $y\notin N(Z_i^{j_x})$ and $x\notin N(Z_i^{j_y})$. 
    So there exist induced paths $P_x$ and $P_y$ connecting $x$ and respectively $y$ with $z_{i'}$, where $i'\in [l]\setminus\{i\}$ such that $P_x\subseteq V(Z_i^{j_x})$ and $P_y \subseteq V(Z_i^{j_y})$. Since $x,y \notin N(z_{i'})$, each of $P_x$ and $P_y$ must contain at least one vertex. Then a~path $x_i - x - P_x - z_{i'} - P_y - y-y_i$ is an induced path of at least $7$ vertices, a~contradiction.
    
    Thus each pair of elements of $J'_i$ is comparable in $\leq_1$ or $\leq_2$. Then by Lemma~\ref{lem:two_orders} there exists a~vertex $w_i\in J'_i$ such that any other vertex $v\in J'_i$ we have $w_i\leq_1 v$ or $w_i\leq_2 v$. 
    Let $u^i_{w_i}$ be any vertex in $N(w_i)\cap I$ -- such vertex exists because $J\subseteq N(I)$. Let $j_{w_i}$ be any index such that $N({w_i})\cap Z_i^{j_{w_i}} \neq \varnothing$ (such component exists, because of the definition of $J_i$ and the assumption $m_i\geq 2$). Then $J'_i\subseteq N(u_{w_i}^i)\cup N(Z_i^{j_{w_i}})$. 
    Since $Z_i$ is the anticomponent, $\omega(Z_i^{j_{w_i}})\leq \omega(Z_i) \leq \omega(Z) - 1$. 
    By the inductive assumption, applied to a~connected cograph $Z_i^{j_{w_i}}$, $J'_i\setminus N(u_{w_i}^i)$ and $I$, there exist sets $Q_Z^i$ and $Q_I^i$ of size at most $f(\omega(Z)-1)$ such that $J'_i\setminus N(u_{w_i}^i) \subseteq N(Q_Z^i) \cup N(Q_I^i)$. 

    We are left with cases $m_i=0$ and $m_i=1$. In the former, $Z_i=\{z_i\}$ and $J'_i=\varnothing$. We take then $Q_Z^i=Q_I^i=\varnothing$. If $m_i = 1$, then $J'_i\subseteq N(Z_i^1)$. We can use directly inductive assumption to $Z_i^1$, $J'_i$ and $I$, getting sets $Q_Z^i$ and $Q_I^i$ such that $J_i\subseteq N(Q_Z^i)\cup N(Q_I^i)$.
    
    Let $Q_Z = \{z_1, \dots, z_l\} \cup \bigcup_{i\in [l]} Q_Z^i$ and $Q_I = \{u_{w_1}^1, \dots, u_{w_l}^l\} \cup \bigcup_{i\in [l]} Q_I^i$. Then 
    \[ |Q_Z|, |Q_I|\leq \omega(Z) + \omega(Z) \cdot f(\omega(Z) - 1) = f(\omega(Z)). \]
    We also have $$J \subseteq N(\{z_1, \dots, z_l\})\cup J' = N(\{z_1, \dots, z_l\}) \cup \bigcup_{i\in [l]} J'_i \subseteq N(Q_Z) \cup N(Q_I).$$
\end{proof}

\subsection{$\chi$-boundness}

By $\omega(G)$ we denote the {\it clique number} of $G$, i.e. a~size of the maximal clique being a~subgraph of $G$.  By $\chi(G)$ we denote the {\it chromatic number} of graph $G$, i.e. the minimal number of colors required to properly color vertices of graph $G$ (so no two adjacent vertices get the same color assigned). For any graph $G$ we have $\omega(G)\leq \chi(G)$. There exist graphs with large girth and large chromatic number, so we cannot hope for bound $\chi(G)\leq f(\omega(G))$ for any graph $G$. This leads us to the notion of {\it $\chi$-boundness}, which was discussed by Gyárfás in~\cite{chiBoundness}. We say that a~class $\C$ is $\chi${\it -bounded} if there exists a~function $f\colon \N\to\N$ such that for any graph $G\in\C$ we have $\chi(G)\leq f(\omega(G))$. 

Gyárfás proved in~\cite{chiBoundness} the following theorem:
\begin{theorem}[Theorem 2.4 in~\cite{chiBoundness}]\label{thm:Gyarfas_paths}
    Class of graphs excluding $P_t$ is $\chi$-bounded and a~function $f_n(x)= (t-1)^{x-1}$ is a~suitable bounding function. 
\end{theorem}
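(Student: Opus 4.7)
The plan is to prove the bound $\chi(G) \leq (t-1)^{\omega(G)-1}$ by induction on $k = \omega(G)$ for a $P_t$-free graph $G$. The base case $k=1$ is immediate, since $G$ is edgeless and $(t-1)^0 = 1$. For the inductive step, I assume the bound is known for all $P_t$-free graphs of clique number at most $k-1$, and, without loss of generality, work with a single connected component of $G$.

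The main structural tool is a \emph{dominating induced path}: in any connected graph there exists an induced path $P = v_1 v_2 \cdots v_p$ with $V(G) \subseteq N[V(P)]$, constructed greedily by starting at an arbitrary vertex and repeatedly extending $P$ by a new neighbor of the current endpoint that still has a neighbor outside $N[V(P)]$, until no such extension is possible. Since $G$ is $P_t$-free, we have $p \leq t-1$. Partition the non-path vertices into classes $A_1, \ldots, A_p$ via $\phi(u) = \min\{i : u v_i \in E(G)\}$, setting $A_i = \phi^{-1}(i)$. Each $A_i$ is contained in $N(v_i)$, so $\omega(G[A_i]) \leq k-1$, and the inductive hypothesis yields $\chi(G[A_i]) \leq (t-1)^{k-2}$. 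Assigning one disjoint palette of size $(t-1)^{k-2}$ to each of the $p \leq t-1$ classes produces a proper coloring of $V(G) \setminus V(P)$ using at most $p \cdot (t-1)^{k-2} \leq (t-1)^{k-1}$ colors.

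Finally, I need to extend this coloring to the path vertices themselves while staying within the $(t-1)^{k-1}$ colors already in use. The idea is that, since $\omega(G[N(v_i)]) \leq k-1$, the neighborhood of each path vertex has chromatic demand at most $(t-1)^{k-2}$; by ordering path vertices carefully and exploiting how the neighborhood $N(v_i)$ distributes across the classes $A_j$, each $v_i$ can be assigned an available color from the existing palette. I expect this final step to be the main technical obstacle: the construction of a dominating induced path and the partitioning into the $A_i$'s are routine, but verifying that the path vertices can be absorbed into the coloring without inflating the palette requires careful book-keeping about how the independently-colored classes interact at the neighborhoods of $v_1, \ldots, v_p$.
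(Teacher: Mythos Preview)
The paper does not give its own proof of this theorem; it only cites Gy\'arf\'as's original paper and remarks that the argument is constructive. So the comparison is with the classical Gy\'arf\'as proof.

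Your induction on $\omega(G)$ and the use of an induced path of length at most $t-1$ together with the inductive bound $\chi(G[N(v_i)])\le (t-1)^{k-2}$ are exactly the right ingredients. However, the central structural claim is false: \textbf{connected $P_t$-free graphs need not contain a dominating induced path}. Take the spider consisting of a centre $c$ with $k\ge 3$ pendant paths $c\text{--}a_i\text{--}b_i$. This graph is $P_6$-free, yet each $b_i$ is dominated only by $a_i$ or $b_i$ itself, and an induced path can enter at most two of the legs. Your greedy construction, started anywhere, produces a path such as $b_1\text{--}a_1\text{--}c\text{--}a_2$ and then stops (the only remaining neighbour of $a_2$ is $b_2$, which has no neighbour outside $N[V(P)]$), leaving $b_3,\ldots,b_k$ uncovered. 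So the partition into $A_1,\ldots,A_p$ does not cover $V(G)$, and the argument collapses well before the step you flag as the ``main technical obstacle''. (A secondary issue: the greedy extension as described does not ensure the path stays induced, since the new neighbour of $v_p$ might also be adjacent to an earlier $v_j$.)

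Gy\'arf\'as's actual argument avoids domination altogether. One builds the induced path $v_1,\ldots,v_s$ together with a nested chain of \emph{connected} subgraphs $G=H_0\supseteq H_1\supseteq\cdots$, where $H_i$ is a connected component of $H_{i-1}\setminus N[v_i]$ of \emph{maximum chromatic number}, and $v_{i+1}\in N(v_i)\cap H_{i-1}$ is chosen with a neighbour in $H_i$. One never covers all of $V(G)$; instead one obtains the recursion $\chi(H_{i-1})\le (t-1)^{k-2}+\chi(H_i)$, because the components of $H_{i-1}\setminus N[v_i]$ other than $H_i$ have chromatic number at most $\chi(H_i)$ and are pairwise anticomplete, so they can reuse the same palette. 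Since the path is induced, $s\le t-1$, and unrolling the recursion gives $\chi(G)\le (t-1)\cdot(t-1)^{k-2}$. The pieces $N(v_i)\cap H_{i-1}$ play the role of your $A_i$'s, but the ``missing'' vertices (the discarded components at each step) are handled automatically by the maximum-$\chi$ choice, and no separate treatment of the path vertices is needed.
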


Proof of this theorem, presented in~\cite{chiBoundness}, is constructive and it shows that a~$P_t$-free graph $G$ can be colored with $(t-1)^{\omega(G)-1}$ colors (i.e. divided into color classes) in polynomial time.

\subsection{2VC-dimension}

The notion of  Vapnik-Chervonenkis dimension was introduced by Vapnik and Chervonenkis in~\cite{VC}. In graphs it is usually applied with the set of vertices as a~universe and the family of closed neighborhoods as family of sets. Here, we will use VC-dimension in a~slightly different setting. 

Let $(U, \mathcal{F})$ be a~{\it set system}: let $U$ be a~set and $\mathcal{F}$ be a~family of subsets of $U$. The {\it 2VC-dimension} of a~set system $(U,\mathcal{F})$, denoted as $\VC(U, \mathcal{F})$ is a~maximal size of a~set $X\subseteq U$ such that for every subset $Y\subseteq X$ of size 2 there exists $A\in\mathcal{F}$ such that $A\cap X = Y$. 

For a~set system $(U, \mathcal{F})$, we define a {\it dual set system} as $(\mathcal{F}, \widehat{U})$, where $\widehat{U} = \{x^*\mid x\in U\}$ and $x^* = \{F\in\mathcal{F}\mid x\in F\}$. Then a~{\it dual 2VC-dimension} of set system $(U,\mathcal{F})$, denoted as $\VC^*(U,\mathcal{F})$,  is a~2VC-dimension of its dual system, i.e. the size of a~maximal $\mathcal{F}'\subseteq \mathcal{F}$ such that for every distinct $A,B\in \mathcal{F}'$ there exists $u\in U$ such that $\{C\in \mathcal{F}'\mid u\in C\}=\{A,B\}$. 

Given a~set system $(U, \mathcal{F})$, let $\nu(U, \mathcal{F})$ be a~maximal size of a~subfamily $\mathcal{F}'\subseteq \mathcal{F}$ of pairwise disjoint sets and let $\tau(U, \mathcal{F})$ be a~minimal size of a~set $A\subseteq U$ such that for every $F\in \mathcal{F}$ we have $F\cap A\neq \varnothing$.

Ding, Seymour and Winkler proved the lemma~in~\cite{DSW}, bounding $\tau$ as a~function of $\lambda$ and~$\nu$. They used the language of hypergraphs, which we translated here to set systems. 
\begin{lemma}[compare (1.1) in \cite{DSW}]\label{lem:DSW}
    For any set system $\mathcal{S}=(U, \mathcal{F})$ we have $$\tau(\mathcal{S})\leq 11 \cdot \VC^*(\mathcal{S})^2\left(\VC^*(\mathcal{S}) + \nu(\mathcal{S}) + 3\right){{\VC^*(\mathcal{S}) + \nu(\mathcal{S})}\choose{\nu(\mathcal{S})}}^2.$$
\end{lemma}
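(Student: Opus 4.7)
The plan is to derive this bound as a direct translation of inequality~(1.1) of~\cite{DSW} into the language of set systems. Ding, Seymour and Winkler work with hypergraphs $H=(V,E)$ and establish a bound of the form $\tau(H)\le g(\lambda(H),\nu(H))$, where $\lambda(H)$ is a VC-type parameter on the edges, $\nu(H)$ is the maximum matching, and $\tau(H)$ is the transversal (covering) number. The first step is to identify the set system $(U,\mathcal{F})$ with the hypergraph $H$ whose vertex set is $U$ and whose edge set is $\mathcal{F}$; under this identification, $\tau(\mathcal{S})$ and $\nu(\mathcal{S})$ agree verbatim with the covering and matching numbers of $H$ in the standard hypergraph sense.

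The next and more delicate step is to match $\VC^*(\mathcal{S})$ with the parameter $\lambda(H)$ used in~\cite{DSW}. Unfolding the definitions, $\VC^*(\mathcal{S})$ is the largest cardinality of a subfamily $\mathcal{F}'\subseteq\mathcal{F}$ such that every pair $\{A,B\}\subseteq\mathcal{F}'$ is isolated by some ground element $u\in U$, in the sense that $\{C\in\mathcal{F}'\mid u\in C\}=\{A,B\}$. I would then verify that this is precisely the ``$2$-shattering'' condition imposed by~\cite{DSW} on the dual hypergraph, and hence coincides with $\lambda(H)$. Once the three parameters are aligned, the bound in the statement becomes the verbatim content of~(1.1) in~\cite{DSW}, including the constant $11$, the binomial coefficient, and the $+3$ shift in the linear factor.

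The only real obstacle I anticipate is this parameter-matching step: the definition of $\lambda$ in~\cite{DSW} is phrased in hypergraph and dual terms, so some care is needed to confirm that the 2VC-dimension of the dual set system is literally the same quantity as $\lambda(H)$, rather than one that differs by an off-by-one convention in the definition of ``shattering a pair''. Beyond that bookkeeping, no new calculation is required—the proof amounts to a translation exercise from the hypergraph formalism of~\cite{DSW} to the set-system formalism adopted here, after which the stated inequality is immediate.
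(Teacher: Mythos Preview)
Your proposal is correct and matches the paper's treatment: the lemma is stated as a direct citation of (1.1) in~\cite{DSW}, with the paper explicitly noting only that the hypergraph language of Ding, Seymour and Winkler has been translated into set systems. No independent proof is given in the paper beyond this remark, so your parameter-matching sketch is exactly the intended content.
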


\begin{corollary}
    For every $t$ there exists $c$ such that for every $P_t$-free graph $G$, a~PMC $\Omega$ in $G$ and an independent set $I\subseteq \Omega$ such that for every vertex $v\in I$ there exists a~component $D\in \cc(G-\Omega)$ such that $v\in N(D)$, there exists a~family $\mathcal{D}\subseteq \cc(G - \Omega)$ of size at most $c$ such that $I \subseteq \bigcup_{D\in\mathcal{D}} N(D)$. 
\end{corollary}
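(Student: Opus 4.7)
My plan is to apply the Ding--Seymour--Winkler machinery of Lemma~\ref{lem:DSW} to the set system $\mathcal{S} := (I, \mathcal{F})$, where
\[\mathcal{F} := \{N(D) \cap I : D \in \cc(G-\Omega),\ N(D) \cap I \neq \varnothing\}.\]
The hypothesis on $I$ says exactly that $\mathcal{F}$ covers $I$, so the sought constant $c$ is an upper bound on $\tau(\mathcal{S})$. It therefore suffices to bound $\VC^*(\mathcal{S})$ and $\nu(\mathcal{S})$ by functions depending only on $t$, after which Lemma~\ref{lem:DSW} gives $\tau(\mathcal{S}) \leq c(t)$. The common tool for both bounds is Lemma~\ref{lem:PMC_basic}(2): for every non-edge $uv$ with $u,v \in \Omega$ there is a component $C \in \cc(G-\Omega)$ with $u,v \in N(C)$, and a shortest induced $u$-$v$ path in $G[C \cup \{u,v\}]$ is induced in $G$, hence has fewer than $t$ vertices by $P_t$-freeness.

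To bound $\nu(\mathcal{S})$, I would assume that $D_1, \dots, D_m$ have pairwise disjoint non-empty $I$-neighborhoods and pick, for each $i$, a witness $v_i \in N(D_i) \cap I$ and a neighbor $d_i \in D_i \cap N(v_i)$. Independence of $I$ gives that the $v_i$ are pairwise non-adjacent, while distinctness of the components gives that the $d_i$ are pairwise non-adjacent and $d_i \sim v_j$ iff $i=j$. For each pair $i \neq j$, Lemma~\ref{lem:PMC_basic}(2) supplies a connector $C_{ij}$ with $v_i, v_j \in N(C_{ij})$, and disjointness of the $I$-neighborhoods forces $C_{ij} \notin \{D_1, \dots, D_m\}$. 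For $m$ large, a Ramsey-type selection of three indices $i,j,k$ and of shortest induced paths through $C_{ij}$, $C_{jk}$ produces a concatenated induced path $d_i - v_i - \cdots - v_j - \cdots - v_k - d_k$ on at least $t$ vertices in $G$, contradicting $P_t$-freeness.

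To bound $\VC^*(\mathcal{S})$, I would take components $D_1, \dots, D_k$ equipped, for every pair $\{i,j\}$, with a private witness $v_{ij} \in I$ adjacent (among $D_1, \dots, D_k$) exactly to $D_i$ and $D_j$. Since each $D_i$ is connected and contains neighbors of every $v_{ij}$ with $j \neq i$, I can, for any sequence of distinct indices $i_1, \dots, i_r$, build shortest induced paths inside each $D_{i_s}$ joining a neighbor of $v_{i_{s-1} i_s}$ to a neighbor of $v_{i_s i_{s+1}}$, and chain them through the witnesses $v_{i_s i_{s+1}}$. Privateness of the witnesses excludes every cross-chord of the form $v_{i_s i_{s+1}} \sim D_{i_\ell}$ with $i_\ell \notin \{i_s, i_{s+1}\}$, so the resulting path is induced; its length grows with $r$, so $P_t$-freeness caps $r$ and hence $k$.

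Combining the two bounds via Lemma~\ref{lem:DSW} yields the desired $c = c(t)$. The main obstacle is the chord-avoidance case analysis in the $\nu$-bound: one must rule out edges $v_\ell \sim N(C_{ij})$ with $\ell \notin \{i,j\}$, which I would handle by branching on whether some pair $(i,j)$ has many such ``external'' $v_\ell$'s in $N(C_{ij})$, treating that subcase with a different path construction that runs entirely through the rich component $C_{ij}$. A new feature compared to the analogous $\omega(G)\leq k$ statement displayed earlier in the introduction is that the independence of $I$ obviates the Gyárfás $\chi$-bounding step used there, so all bounds depend on $t$ alone.
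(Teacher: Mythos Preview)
There is a genuine gap in the setup. With your system $\mathcal{S}=(I,\mathcal{F})$, the quantity $\tau(\mathcal{S})$ is the minimum size of a subset of $I$ that \emph{hits} every set $N(D)\cap I$; the corollary asks instead for the minimum number of sets in $\mathcal{F}$ that \emph{cover} $I$. These are different parameters, and the latter is $\tau$ of the \emph{dual} system, not of $\mathcal{S}$ itself. So Lemma~\ref{lem:DSW} applied to $\mathcal{S}$ bounds the wrong thing.

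The fix is exactly what the paper does: work with the dual system, i.e.\ take universe $\cc(G-\Omega)$ and sets $\mathcal{D}_v=\{D:v\in N(D)\}$ for $v\in I$. Then the desired cover bound is precisely $\tau$ of this system. The payoff is dramatic: here $\nu=1$ immediately, because for any two distinct $u,v\in I$ (non-adjacent since $I$ is independent) Lemma~\ref{lem:PMC_basic}(2) gives a component $D$ with $u,v\in N(D)$, so $\mathcal{D}_u\cap\mathcal{D}_v\neq\varnothing$. Your entire Ramsey-style argument for bounding $\nu$, together with the ``main obstacle'' you flag (the chord-avoidance case analysis), simply disappears. For $\VC^*$ of the dual system the paper gives a short direct argument: a witnessing family $\{\mathcal{D}_{v_1},\ldots,\mathcal{D}_{v_p}\}$ yields private components $D_{i,i+1}$ with $N(D_{i,i+1})\cap\{v_1,\ldots,v_p\}=\{v_i,v_{i+1}\}$, and concatenating shortest paths $v_1-P_{v_1v_2}-v_2-\cdots-v_p$ through these components produces an induced path of length at least $2p-1$, forcing $p<\lceil t/2\rceil+1$. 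Your $\VC^*$ sketch is for the wrong system and is correspondingly more involved than necessary.
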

\begin{proof}
    Let us consider a~set system $\mathcal{S} = (\cc(G-\Omega), \{\mathcal{D}_v\mid v \in I\})$, where $\mathcal{D}_v = \{D\in \cc(G- \Omega)\mid v \in N(D) \}$. 
    By Lemma~\ref{lem:PMC_basic} we know that for every two non-adjacent vertices $u, v\in \Omega$ there exists a~component $D\in \cc(G-\Omega)$ such that $u,v\in N(D)$. Hence, $\nu(\mathcal{S}) = 1$, as $I$ is an independent set. Let us denote $\ceil{\tfrac{t}2}+1$ as $p$. We will show now that $\VC^*(\mathcal{S})< p$. 
    Suppose otherwise -- then there exists a~subfamily $\mathcal{F}'=\{\mathcal{D}_{v_1},\mathcal{D}_{v_2}, \dots, \mathcal{D}_{v_{p}}\}$ such that for every $j,k\in\left[\tfrac{t}2\right]$ there exists a~component $D\in \cc(G-\Omega)$ such that $\{\mathcal{D}_i\in \mathcal{F}'\mid D\in \mathcal{D}_i\}=\{\mathcal{D}_{v_i}, \mathcal{D}_{v_j}\}$. 
    In other words, for every two distinct vertices $u,w\in \{v_1, \dots, v_{p}\}\subseteq I$ there exists a~component $D\in \cc(G-\Omega)$ such that $u, w\in N(D)$, but no other $v_i\in N(D)$ and therefore a~minimal path $P_{uw}\subseteq D$ connecting $u$ and $w$ (note that $P_{uw}$ may contain just one vertex). 
    Let us consider a~path $v_1 - P_{v_1v_2} - v_2 - P_{v_2v_3} - v_3 - \dots - P_{v_{p-1} v_{p}} - v_{p}$. Note that it is an induced path in $G$ of length $p + p - 1 \geq t$, which is a~contradiction. Thus, $\VC^*(\mathcal{S}) < p$.

    Note that $\tau(\mathcal{S})$ is a~minimal size of a~set $\mathcal{D} \subseteq \cc(G-\Omega)$ such that for every $v\in I$ there exists $D\in \mathcal{D}$ such that $v\in N(D)$. By Lemma~\ref{lem:DSW} we get that $\tau(S)$ is bounded by a~function of $p$, so a~function of $t$.     
\end{proof}

We immediately get the following corollary.

\begin{corollary}\label{cor:pmc-cover}
    For every $t$ and $k$ there exists $c$ such that for every $P_t$-free graph $G$ with $\omega(G)\leq k$ and a~PMC $\Omega$ one of the two following conditions hold:
    \begin{enumerate}
        \item there exists $v\in \Omega$ such that for every component $D\in \cc(G-\Omega)$ we have $v\notin N(D)$
        \item there exists a~family $\mathcal{D}\subseteq \cc(G-\Omega)$ of size at most $c$ such that $\Omega=\bigcup_{D\in \mathcal{D}} N(D)$
    \end{enumerate}
\end{corollary}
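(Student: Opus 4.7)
The plan is to reduce Corollary~\ref{cor:pmc-cover} to the preceding corollary (the one involving an independent set $I \subseteq \Omega$) by means of a $\chi$-boundedness argument. Suppose the first alternative fails, i.e., every vertex $v \in \Omega$ has at least one component $D \in \cc(G - \Omega)$ with $v \in N(D)$. Then it suffices to cover the entire PMC $\Omega$ by the neighborhoods of a bounded number of components.

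First I would apply Theorem~\ref{thm:Gyarfas_paths} to the induced subgraph $G[\Omega]$, which is $P_t$-free and satisfies $\omega(G[\Omega]) \leq \omega(G) \leq k$. This yields a proper coloring of $G[\Omega]$ with at most $(t-1)^{k-1}$ colors, that is, a partition $\Omega = I_1 \cup I_2 \cup \cdots \cup I_\ell$ with $\ell \leq (t-1)^{k-1}$, where every $I_j$ is independent in $G$.

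Next, for each $j \in [\ell]$, the set $I_j$ is an independent subset of $\Omega$ such that every vertex $v \in I_j$ has a component $D \in \cc(G - \Omega)$ with $v \in N(D)$ (by our assumption that the first alternative fails). Hence the hypotheses of the previous corollary (the one obtained from Lemma~\ref{lem:DSW}) are satisfied, and there exists a family $\mathcal{D}_j \subseteq \cc(G - \Omega)$ of size at most some constant $c' = c'(t)$ such that $I_j \subseteq \bigcup_{D \in \mathcal{D}_j} N(D)$.

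Finally, take $\mathcal{D} := \bigcup_{j=1}^{\ell} \mathcal{D}_j$. Then $|\mathcal{D}| \leq \ell \cdot c' \leq (t-1)^{k-1} \cdot c'$, which is a constant $c = c(t,k)$ depending only on $t$ and $k$. By construction $\Omega = \bigcup_{j} I_j \subseteq \bigcup_{D \in \mathcal{D}} N(D)$; conversely, since each $N(D) \subseteq \Omega$ for $D \in \cc(G - \Omega)$, equality holds. This gives the second alternative. No step looks like a real obstacle here: the chromatic bound from Gyárfás supplies the independent-set decomposition, and the previous corollary does the rest; the only thing to verify carefully is that $N(D) \subseteq \Omega$ for every component $D$ of $G - \Omega$, which is immediate from the definition of a component.
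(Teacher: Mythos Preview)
Your proof is correct and matches the paper's intended argument: the paper states this corollary follows ``immediately'' from the preceding one, and the technical overview makes clear the intended bridge is precisely Gy\'arf\'as' $\chi$-boundedness to split $\Omega$ into at most $(t-1)^{k-1}$ independent sets, each then covered via the DSW-based corollary. Your write-up is exactly this, with the needed observation that $N(D)\subseteq\Omega$ gives the reverse inclusion for equality.
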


\section{Approximating minimal separators}\label{sec:approximate}

This section is devoted to the proof of the following theorem.
\begin{theorem}\label{thm:to-bip}
Let $d,k \geq 1$ be fixed integers.
Given a $P_7$-free graph $G$ with $\omega(G) \leq k$, 
one can in polynomial time
compute a family
$\mathcal{F}$ with the following guarantee:
for every maximal treedepth-$d$ structure $\T$
and every $\T$-avoiding minimal separator $S$
in $G$ with full components $A$ and $B$, there
exists $(K,\mathcal{D},L) \in \mathcal{F}$
satisfying the following:
\begin{itemize}
    \item $K \subseteq V(G)$, $S \cap \T \subseteq K$, and $|K \cap \T| = \Oh_{k,d}(1)$;
    \item $\mathcal{D}$ is a subset of the family of connected components of $G-K$;
    \item $L$ is a function that assigns to every $D \in \mathcal{D}$
     a subset $\emptyset \neq L(D) \subsetneq D$ such that
     every connected component of either $G[L(D)]$ or of $G[D \setminus L(D)]$
     is a module of $G[D]$;
    \item for every connected component $D$ of $G-K$, one of the following holds:
    \begin{itemize}
        \item $D$ is a subset of a single connected component of $G-S$; or
        \item $D \in \mathcal{D}$.
    \end{itemize}
\end{itemize}
\end{theorem}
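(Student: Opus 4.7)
The plan is to enumerate polynomially many candidate triples $(K,\mathcal{D},L)$ by branching on a bounded amount of structural data which, in the ``correct'' branch, captures the separator $S$ up to some ``tricky'' components of bipartite-modular shape. The crucial observation driving the whole construction is that $K$ may contain arbitrarily many vertices of $V(G)\setminus\T$ without affecting $|K\cap\T|$; thus I will freely add \emph{non-$\T$} neighborhoods of small guessed sets into $K$, absorbing almost all of $S\setminus\T$ while still keeping $|K\cap\T|=\Oh_{k,d}(1)$. The $\T$-budget of $K$ will be spent only on the guessed seeds themselves.

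\textbf{Guessing phase.} First, since $S$ is $\T$-avoiding, $S\cap\T$ lies on a single root-to-leaf path of $\T$ (excluding the leaf), so $|S\cap\T|<d$, and I would enumerate $S\cap\T$ explicitly in $n^{\Oh(d)}$ possibilities and include it in $K$. Second, by Corollary~\ref{cor:X_A} there exist $X_A\subseteq A$ and $X_B\subseteq B$ of size at most $k=\omega(G)$ such that every $v\in S\setminus(N(X_A)\cup N(X_B))$ admits an induced $P_4$ into both sides; I would guess the pair $(X_A,X_B)\subseteq V(G)$ in $n^{\Oh(k)}$ ways. Third, by Gyárfás's theorem (Theorem~\ref{thm:Gyarfas_paths}) the set $S$ decomposes into at most $6^{k-1}$ independent sets; for each such color class $J$, the cograph $Z\subseteq A$ of Lemma~\ref{lem:p_four} together with an independent witness $I\subseteq B$ lets me invoke Lemma~\ref{lem:ind_set_neighboring_cograph} to obtain cover sets $Q_Z\subseteq Z$ and $Q_I\subseteq I$ of size at most $(k+1)!$ with $J\cap N(Z)\cap N(I)\subseteq N(Q_Z)\cup N(Q_I)$. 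I would guess all such $Q_Z$, $Q_I$ and their Gyárfás-index assignments directly, in $n^{\Oh_k(1)}$ total branches.

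\textbf{Construction of $K$, $\mathcal{D}$, $L$.} Define $K$ as the union of $S\cap\T$, $X_A$, $X_B$, all guessed $Q_Z$'s and $Q_I$'s, together with the non-$\T$ vertices of their combined neighborhood. This gives $|K\cap\T|\le(d-1)+2k+2\cdot 6^{k-1}\cdot(k+1)!=\Oh_{k,d}(1)$ as required. Let $\mathcal{D}$ consist of those components of $G-K$ that are not contained in a single component of $G-S$; these are exactly the components that intersect $S\setminus K$ and reach both $A$ and $B$. For each $D\in\mathcal{D}$, define $L(D)$ so as to separate the ``$A$-type'' and ``$B$-type'' portions of $D$ using the guesses (roughly, $L(D):=D\cap A$ as detected by adjacency to the $X_A$/$Q_Z$-witnesses recorded in $K$), with the exact recipe dictated by the requirement that every component of $G[L(D)]$ and of $G[D\setminus L(D)]$ be a module of $G[D]$.

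\textbf{Main obstacle.} The crux is verifying the modular structure of each tricky component $D$: for every component $C$ of $G[L(D)]$ (resp.\ of $G[D\setminus L(D)]$) and every $u\in D\setminus C$, $u$ must be complete or anticomplete to $C$. Any surviving vertex $v\in D\cap S$ is by construction a vertex of $V(G)\setminus\T$ that is anticomplete to the guessed witnesses $X_A\cup X_B\cup\bigcup Q_Z\cup\bigcup Q_I$ and whose Gyárfás color class has been covered, which by Lemma~\ref{lem:p_four} forces $v$ to be complete to $B$ (or symmetrically to $A$); such a vertex then acts uniformly on the $B$-side components and, via the $P_4$-structure granted by Corollary~\ref{cor:X_A} together with Lemma~\ref{lem:module_lemma} and Lemma~\ref{lem:ind_set_neighboring_cograph}, must also act uniformly on the $A$-side components. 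Pinning down exactly which witnesses to guess---and ensuring that the ``universal-to-$B$'' residue of $S\setminus K$ does not itself spoil the modular decomposition---is the technical heart of the argument, and the principal challenge to overcome in Section~\ref{sec:approximate}.
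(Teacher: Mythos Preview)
Your plan has two genuine gaps, and both are at the heart of what makes Section~\ref{sec:approximate} long.

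\textbf{Controlling $|K\cap\T|$.} You write ``together with the non-$\T$ vertices of their combined neighborhood''. The algorithm does not know $\T$, so it cannot filter neighbors by membership in $\T$. If instead you add the full neighborhoods, the bound collapses: $Q_Z\subseteq A$, so $N(Q_Z)$ may contain arbitrarily many vertices of $\T\cap A$; likewise $N(Q_I)$ with $Q_I\subseteq B$. The paper never adds a raw neighborhood of a one-sided set. Every addition to $\widetilde{S}$ is of a shape whose $\T$-intersection is bounded for a \emph{structural} reason independent of $\T$: e.g.\ $N(p_A)$ for a guessed \emph{leaf} $p_A$ of $\T$ (at most $d-1$ ancestors), or $N(x)\cap N(y)$ for two guessed $\T$-vertices on the same level or known to be incomparable (at most $d-1$ common ancestors), or $N(\text{something in }A)\cap N(X_B)$ (contained in $S$, whose $\T$-part is already guessed). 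Your single invocation of Lemma~\ref{lem:ind_set_neighboring_cograph} with an unspecified $I\subseteq B$ does not buy this: the paper applies it with $I=\T^\alpha\cap B$ separately for each depth $\alpha$, precisely so that the resulting $Q_I$ consists of vertices on one level and $N(x)\cap N(y)$ for $x,y\in Q_I$ has bounded $\T$-intersection (Claims~\ref{cl:two_on_same_level} and~\ref{cl:two_incomparable_neighbors_on_different_levels}). This stratification by depth is not optional.

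\textbf{The modular structure of the tricky components.} Your final paragraph asserts that a surviving $v\in D\cap S$ is forced by Lemma~\ref{lem:p_four} to be complete to $B$. That lemma says $S\setminus N(Z)$ is complete to $B$, but being outside $N(Q_Z)$ for a bounded $Q_Z\subseteq Z$ is far weaker than being outside $N(Z)$. More importantly, even granting completeness to $B$, you still need every surviving $v\in S_B$ to be complete or anticomplete to every surviving connected component of $B\setminus K$ (and symmetrically for $A$), and nothing in your outline addresses this. The paper spends Claims~\ref{cl:Bcomps}--\ref{cl:problematic_in_S} on exactly this point: it first shows each component of $B\setminus\widetilde{S}$ is complete or anticomplete to each vertex of a guessed connected hull $\widehat{X_B}$, then repeatedly applies Lemma~\ref{lem:two_orders} (with orders comparing neighborhoods among components of two types) to absorb into $\widetilde{S}$ all vertices of $S_B$ that are mixed to some component. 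These arguments are where the $P_7$-freeness is used in earnest, and they are not a consequence of the lemmas you cite.
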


\begin{proof}
Let $G$ be as in the theorem statement. 
Using Theorem~\ref{thm:Gyarfas_paths}, we compute a coloring of $G$ into $c \leq 6^{k-1}$ color
classes and denote them by $V^1, \ldots, V^c$. 
In what follows, we will frequently divide various subsets of $V(G)$ into color classes;
for a set $X$, we usually denote $X \cap V^i$ by $X^i$ for brevity
and call the sets $X^1,\ldots,X^c$ the color classes of $X$. 

Let $\T$ be a~maximal treedepth-$d$ structure in $G$. Let $S$ be a~$\T$-avoiding minimal separator in graph $G$ and let $A$ and $B$ be the full components to $S$. 
We follow here a way from presentation from \cite{Tara,DBLP:conf/soda/ChudnovskyMPPR24}: we want to guess
the tuple $(K,\mathcal{D},L)$ for $(S,A,B)$ by making a guesswork that
ends up in a polynomial number of options. 
That is, we describe a nondeterministic polynomial-time algorithm that
\begin{enumerate}
\item has a polynomial number of possible runs, 
\item in every run, the algorithm either outputs a tuple $(K, \mathcal{D}, L)$ as in the lemma statement
or terminates without outputting any tuple, and
\item for every $\T$, $S$, $A$, and $B$ as above, there exists a run of the 
algorithm that produces a tuple $(K,\mathcal{D},L)$ fitting $\T$, $S$, $A$, and $B$
as in the lemma statement. 
\end{enumerate}
To compute the desired family $\mathcal{F}$, we iterate over all possible runs of the algorithm
and collect all output tuples $(K,\mathcal{D},L)$.
The nondeterministic algorithm does not know $\T$, $S$, $A$, nor $B$, but is able to 
nondeterministically guess some properties of them; however, in such guesses we need to
control the number of possible runs (branches) so that their total number is polynomial. 
For example, while we cannot guess the entire set $S$, we can for example guess 
one leaf of $\T$ that is contained in $A$ (there are $n$ options) or a specific vertex
of $S$. 

Let $X_A$ be a~subset of $A$ of size at most $k$ such that for each vertex $v\in S\setminus N(X_A)$ there exists an~induced path $v-A-A-A$. Analogously, we define $X_B$ as a~subset of $B$ of size at most $k$ such that for every vertex $v\in S\setminus N(X_B)$ there exists an~induced path $v-B-B-B$. Such subsets exists by Corollary~\ref{cor:X_A}. 

Suppose that there exists $v\in S\setminus (N(X_A)\cup N(X_B))$. Then $v$ belongs to $S\setminus N(X_B)$ and $S\setminus N(X_A)$, so there exists induced paths $v-B-B-B$ and $v-A-A-A$. By joining them, we create a $P_7$ which is a~contradiction. Thus, $S\subseteq N(X_A)\cup N(X_B)$. Moreover, note that $N(X_A)\cap N(X_B)\subseteq S$.

We will maintain a set $\widetilde{S}$ which will serve as our current candidate for $K$.
We initiate $\widetilde{S}=\varnothing$. We will refer to $\widetilde{S}$ as ``our container'' for the set $S$. 

We (nondeterministically) guess sets $X_A$, $X_B$ and leafs $p_A$ and $p_B$ of $\T$, belonging to respectively $A$ and $B$. By Lemma~\ref{lem:maximality_of_tdstructure} such vertices exists, as we are guessing vertices of maximal depth in $\T\cap A$ and $\T\cap B$. We also know that there exists suitable guesses for $X_A$ and $X_B$, fulfilling the above definitions.
Since $|X_A|,|X_B| \leq k$, there is a polynomial number of options for this guess.

Then we add to $\widetilde{S}$ vertices of $N(X_A\cup\{p_A\})\cap N(X_B\cup \{p_B\})$ and $N(p_A)\cup N(p_B)$. We also guess $\T\cap S$ and add them to $\widetilde{S}$. Since $S$ is $\T$-avoiding, $|\T \cap S| \leq d-1$.

Note that we added to $\widetilde{S}$ at most $d-1$ ancestors of $p_A$ among vertices in $N(p_A)$ and at most $d-1$ ancestors of $p_B$ among vertices in $N(p_B)$ in the treedepth-$d$ structure.

Let $S_A=S\setminus N(X_A)$ and $S_B=S\setminus N(X_B)$. Note that $S_A$ and $S_B$ are subsets of respectively $N(X_B)$ and $N(X_A)$.

We now focus on vertices of $S$, which have neighbors in connected components others than $A$ and $B$. Let $\widehat{S}=N(V(G) \setminus (A\cup B\cup S))$,
note that $\widehat{S} \subseteq S$, and let $\widehat{S_A}=\widehat{S}\cap S_A$ and $\widehat{S_B}=\widehat{S}\cap S_B$. We divide $\widehat{S_A}$ and $\widehat{S_B}$ into color classes: $\widehat{S_A^1},\widehat{S_A^2}, \dots, \widehat{S_A^c}$ and  $\widehat{S_B^1},\widehat{S_B^2}, \dots, \widehat{S_B^c}$. We need the following lemma~in order to put these vertices into our container. 

\begin{claim}\label{cl:other_components}
    For any $i \in [c]$, if $\widehat{S_A^i} \neq \emptyset$,
    then there exists a~component $D_A^i\in \cc(G-(A\cup S\cup B))$ and vertices $a_A^i,b_A^i,c_A^i\in A$ such that a~set of vertices $\widehat{S_A^i}$ is a~subset of $N(D_A^i)\cup (N(\{a_A^i, b_A^i, c_A^i\})\cap N(X_B))$. A symmetrical statement holds for $B$. 
\end{claim}
\begin{proof}
    Let $v$ be a~vertex in $\widehat{S_A^i}$ which has the neighbors in the least number of connected components of $G-(A\cup S\cup B)$. 
    Then there exists an~induced path $v-A-A-A$ -- let $a_A^i, b_A^i, c_A^i$ be vertices of $A$ such that $v-a_A^i-b_A^i-c_A^i$ is an induced path. 
    Let $D_A^i$ be any connected component of $G-(A\cup S\cup B)$ such that $v\in N(D_A^i)$. 
    
    Suppose that there exists a~vertex $u\in \widehat{S_A^i}$ which does not belong to a~set $N(D_A^i)\cup (N(\{a_A^i, b_A^i, c_A^i\}) \linebreak \cap N(X_B))$. As $\widehat{S_A^i}$ is a~color class, $u$ and $v$ are non-adjacent. 
    However, we can connect them via~$B$, as each vertex in $S_A$ has a~neighbor in $X_B$ and $B$ is a~connected component. Let $Q\subseteq B$ be a~path (maybe on just one vertex) such that $u-Q-v$ is an induced path in $G$.
    By the minimality of $v$, we know that there exists a~connected component $D'\in \cc(G-(A\cup S\cup B))$ and a~vertex $w\in D'$ such that $v\notin N(D')$ and $u$ and $w$ are adjacent. Then $w-u-Q-v-a_A^i-b_A^i-c_A^i$ is an induced path on at least seven vertices; a~contradiction. Thus, $\widehat{S_A^i}\subseteq N(D_A^i)\cup (N(\{a_A^i, b_A^i, c_A^i\})\cap N(X_B))$.

    The proof for $B$ is symmetrical.
\end{proof}

Note that each component $D\in \cc(G-(A\cup S\cup B))$ is a~connected component of $G-S$, different from $A$ and $B$. Since $S\subseteq N(X_A)\cup N(X_B)\subseteq A\cup S\cup B$, then $D\in \cc(G-(N(X_A)\cup N(X_B)))$. Therefore each connected component of $G-(A \cup S \cup B)$ can be guessed.

Now we use Claim~\ref{cl:other_components} to expand $\widetilde{S}$ and therefore to cover $\widehat{S}$. For each $\widehat{S_A^i}$ such that $i\in[c]$ we guess
if $\widehat{S_A^i} \neq \emptyset$ and, if this is the case, we guess a~connected component $D_i$ in $G-(N(X_A)\cup N(X_B))$ and vertices $a^i, b^i, c^i$ and we add $N(D_i)\cup \left(N(\{a^i, b^i, c^i\})\cap N(X_B)\right)$ to $\widetilde{S}$. We do symmetric guesses for each $\widehat{S_B^i}$ for each $i\in [c]$. 

At this point, all vertices of $S \setminus \widetilde{S}$
have neighbors only in $A\cup S \cup B$. We will use the following two claims to cover
some vertices which have two incomparable neighbors in $\T$. 

Fix $\alpha\in[d]$. Let $S_A^\alpha$ be a~subset of $S_A\setminus \widetilde{S}$, consisting of these vertices of $S_A$ which have at least two neighbors in $\T^\alpha\cap B$. Let us divide $S_A^\alpha$ into color classes $S_A^{\alpha,1},S_A^{\alpha, 2},\dots, S_A^{\alpha,c}$. Similarly, let $S_B^\alpha$ be a~subset of $S_B\setminus \widetilde{S}$, consisting of these vertices of $S_B$ which have two neighbors in $\T^\alpha\cap A$. Again, we divide $S_B^\alpha$ into  color classes $S_B^{\alpha,1}, S_B^{\alpha,2}, \dots, S_B^{\alpha,c}$.

\begin{claim}\label{cl:two_on_same_level}
For each $\alpha\in[d]$ and $i\in[c]$ there exist
$Q_Z \subseteq A$ of size at most $2(k+1)!$, and disjoint sets of vertices $Q_I^1,Q_I^2 \subseteq \T^\alpha \cap B$ of size at most $(k+1)!$ each, such that 
$$S_A^{\alpha, i} \subseteq \left(N(Q_Z)\cap N(X_B) \right)\cup \left(\bigcup_{x\in Q_I^1}\bigcup_{y\in (Q_I^1\cup Q_I^2)\setminus \{x\}} N(x)\cap N(y)\right).$$
A symmetrical statement holds for $S_B^{\alpha,i}$ for each $\alpha\in[d]$ and $i\in[c]$.
\end{claim}
\begin{proof}
    Let us fix $\alpha\in[d]$ and $i\in[c]$. We consider $S_A^{\alpha, i}$, for shorthand denoted here as $J$. Since $J$ is a~color class, it is an independent set. By Lemma~\ref{lem:p_four} there exists a~connected cograph $Z$ in $A$ such that $S\setminus N(Z)$ is complete to $B$. Therefore $J\subseteq N(Z)$ (vertices belonging to $S$ which are complete to $B$ belong to $N(p_B)$, so they belong to $\widetilde{S}$). Let $I^1$ be set of vertices of $\T^\alpha$ belonging to $B$. Note that $I^1$ is an independent set. By Lemma~\ref{lem:ind_set_neighboring_cograph} there exist sets $Q_I^1\subseteq I^1$ and $Q_Z^1\subseteq Z$ of size at most $(k+1)!$ each such that $J\subseteq N(Q_I^1)\cup N(Q_Z^1)$. Note that $N(Q_Z^1)\cap S_A^{\alpha, i} \subseteq N(Q_Z^1)\cap N(X_B)$.
    
    Let us denote  $N(Q_Z^1)\cap N(X_B) \cup \bigcup_{x,y\in Q_I^1,\ x\neq y} N(x)\cap N(y)$ as $X$. Let $J^2 = J \setminus X$ and $I^2 = I^1 \setminus Q_I^1$. Note that if $v\in J^2$, then $v$ has exactly one neighbor in $Q_I^1$, so it has a~neighbor in $I^2$. We again use Lemma~\ref{lem:ind_set_neighboring_cograph}, there exists sets  $Q_I^2\subseteq I^2$ and $Q_Z^2\subseteq Z$ of size at most $(k+1)!$ such that $J^2\subseteq N(Q_I^2)\cup N(Q_Z^2)$.
    
    Then we have $J^2\subseteq (N(Q_Z^2)\cap N(X_B)) \cup (\bigcup_{x\in Q_I^1}\bigcup_{y\in Q_I^2} N(x)\cap N(y))$. 
    Thus $J$ is indeed a~subset of 
    $$S_A^{\alpha, i} \subseteq \left(\left(N(Q_Z^1\cup Q_Z^2)\cap N(X_B) \right)\cup \left(\bigcup_{x\in Q_I^1}\bigcup_{y\in (Q_I^1\cup Q_I^2)\setminus \{x\}} N(x)\cap N(y)\right) \right).$$

    Setting $Q_Z = Q_Z^1 \cup Q_Z^2$ concludes the proof; the proof for $B$ is symmetric.
\end{proof} 

For every $\alpha\in[d]$ and $i\in[c]$ we guess if $S_A^{\alpha,i}$ of $S_A^\alpha$ is nonempty, and if it is the case, then we guess sets $Q_Z^{i}$, $Q_I^{1,i}$, $Q_I^{2,i}$. By Claim~\ref{cl:two_on_same_level} we know that there exist suitable guesses such that $S_A^{\alpha, i}$ is contained in
$$\left(\left(N(Q_Z^i)\cap N(X_B) \right)\cup \left(\bigcup_{x\in Q_I^{1,i}}\bigcup_{y\in (Q_I^{1,i}\cup Q_I^{2,i})\setminus \{x\}} N(x)\cap N(y)\right) \right),$$
which we denote as $X^{\alpha, i}$ (note that $X_B$ is already guessed). 
Then for each $\alpha\in[d]$ and $i\in[c]$ we add $X^{\alpha,i}$ to $\widetilde{S}$.

We need to bound how many vertices of $\T$ are added to $\widetilde{S}$. Per each $X^{\alpha, i}$ we may have added to $\widetilde{S}$ vertices from $\T$, which are ancestors of two vertices belonging to $Q_I^{1,i}$ and $Q_I^{2,i}$. So we added at most $d \cdot \left|Q_I^{1,i}\cup Q_I^{2,i}\right|^2$. So the number of vertices of $\T$ in $\widetilde{S}$ increased by at most $d \cdot c \cdot d \cdot 4((k+1)!)^2$.

We follow the same steps for each color class $S_B^{\alpha, i}$ of $S_B^\alpha$, guessing the sets and putting vertices into $\widetilde{S}$.
Symmetrically, this step increases the number of vertices of $\T$ in 
$\widetilde{S}$ by at most $4cd^2((k+1)!)^2$.

We put into $\widetilde{S}$ vertices of $S_A$ (resp. $S_B$), which have at least two neighbors on the same level in $\T\cap B$ (resp. $\T\cap A$). We will use now a~similar idea~to 
put into $\widetilde{S}$ some vertices of $S$ which have two incomparable neighbors in $\T$ of two different levels.

Fix distinct $\alpha, \beta\in[d]$. Let $S_A^{\alpha,\beta}$ be a subset of $S_A\setminus \widetilde{S}$ consisting of these vertices of $S_A$ which have two incomparable neighbors in $\T\cap B$ -- one of depth $\alpha$ in $\T$ and another of depth $\beta$ in $\T$. 
Note that each vertex $v\in S^{\alpha, \beta}_A$ has exactly one vertex in $\T^{\alpha}\cap B$ and exactly one in $\T^\beta\cap B$, as vertices having at least two neighbors in either of these two sets already belong to $\widetilde{S}$. We divide each $S_A^{\alpha,\beta}$ into color classes $S_A^{\alpha,\beta,1}, S_A^{\alpha,\beta, 2}, \dots, S_A^{\alpha,\beta,c}$. Let us define $S_B^{\alpha,\beta, i}$ in a symmetrical way.
\begin{claim}\label{cl:two_incomparable_neighbors_on_different_levels}
    For each distinct $\alpha, \beta\in[d]$ and $i\in[c]$ there exist a set $Q_Z \subseteq A$
    of size at most $2(k+1)!$, and sets $Q_I^1$ and $Q_I^2$, which are subsets of respectively $\T^\alpha\cap B$ and $\T^\beta\cap B$ of size at most $(k+1)!$ such that
    $$S_A^{\alpha,\beta, i}\subseteq \left(N(Q_Z)\cap N(X_B)\right)\cup \underset{x,y \text{ incomparable in }\T}{\bigcup_{x\in Q_I^1}\bigcup_{y\in Q_I^2}} N(x)\cap N(y).$$
    A symmetric statement holds for $S_B^{\alpha,\beta,i}$ for each distinct $\alpha,\beta\in[d]$ and $i\in[c]$.
\end{claim}
\begin{proof}
    Let us fix distinct $\alpha, \beta\in[d]$ and $c\in[c]$. We consider $S_A^{\alpha,\beta, i}$, for short denoted here as $J$. By Lemma~\ref{lem:p_four}, there exists a connected cograph $Z$ in $A$ such that $S\setminus N(Z)$ is complete to $B$. Therefore $J\subseteq N(Z)$. We apply Lemma~\ref{lem:ind_set_neighboring_cograph} to $Z$, $J$ and $\T^\alpha\cap B$, getting sets $Q_Z^{1}$ and $Q_I^{1}$ such that $S_A^{\alpha, \beta, i}\subseteq N(Q_Z^{1})\cup N(Q_I^{1})$.
    We also apply Lemma~\ref{lem:ind_set_neighboring_cograph} to $Z$, $J$ and $\T^\beta\cap B$, getting sets $Q_Z^2$ and $Q_I^2$ such that $S_A^{\alpha, \beta, i}\subseteq N(Q_Z^{2})\cup N(Q_I^{2})$. Let $Q_Z = Q_Z^1 \cup Q_Z^2$.

    Note that 
    \[ S^{\alpha, \beta, i}_A \setminus N(Q_Z) \subseteq \{N(x)\cap N(y) \mid x \in Q_I^1,\ y \in Q_I^2,\ x \text{ and } y \text{ are incomparable in } \T\}, \]
    which holds by the definition of $S^{\alpha, \beta, i}_A$. As $N(Q_Z^1\cup Q_Z^2)\cap S_A^{\alpha, \beta, i} \subseteq N(Q_Z)\cap N(X_B)$, we get 
    $$S_A^{\alpha,\beta, i}\subseteq \left(N(Q_Z)\cap N(X_B)\right)\cup \underset{x,y \text{ incomparable}}{\bigcup_{x\in Q_I^1}\bigcup_{y\in Q_I^2}} N(x)\cap N(y).$$

    The symmetrical proof holds for $B$.
\end{proof}

For each $\alpha, \beta\in[d]$ and $i\in[c]$ we guess id $S_A^{\alpha, \beta, i}$ is nonempty, and if this is the case, we guess sets $Q_Z$, $Q_I^1$ and $Q_I^2$.
By Claim~\ref{cl:two_incomparable_neighbors_on_different_levels} we know that there exists suitable guesses such that $S_A^{\alpha,\beta, i}$ is contained in
$$\left(N(Q_Z)\cap N(X_B)\right)\cup \underset{x,y \text{ incomparable}}{\bigcup_{x\in Q_I^1}\bigcup_{y\in Q_I^2}} N(x)\cap N(y),$$
which we denote as $X^{\alpha, \beta, i}$. For each distinct $\alpha, \beta\in[d]$ and $i\in [c]$ we add $X^{\alpha, \beta, i}$ to $\widetilde{S}$.

With each $X^{\alpha, \beta, i}$ we may have added at most $d\cdot |Q_I^1\cup Q_I^2|^2$ vertices from $\T$ to $\widetilde{S}$, so the size of $\T \cap \widetilde{S}$ increases by at most $d\cdot d \cdot c \cdot d(2(k+1)!)^2$ after adding all $X^{\alpha, \beta, i}$'s.

We follow the same steps for $S_B^{\alpha, \beta, i}$, guessing the sets and adding vertices to $\widetilde{S}$. The increase of $|\T \cap \widetilde{S}|$ is bounded similarly.

At this moment, for every vertex $v\in S_A\setminus \widetilde{S}$ there are no incomparable vertices in $N(v)\cap \T \cap B$. A symmetrical statement holds for vertices belonging to $S_B\setminus \widetilde{S}$.

Fix $\alpha \in [d]$. Let us denote set $\{v\in S_A\setminus \widetilde{S} \mid N(v)\cap A\cap \T^\alpha\neq \varnothing\}$ as  $\overline{S_A^\alpha}$. Let us divide $\overline{S_A^\alpha}$ into color classes: $\overline{S_A^{\alpha, 1}},\overline{S_A^{\alpha,2}}, \dots, \overline{S_A^{\alpha, c}}$. 
Similarly, let $\overline{S_B^\alpha}=\{v\in S_B\setminus \widetilde{S}\mid N(v)\cap B\cap T^\alpha\neq\varnothing\}$ and let $\overline{S_B^{\alpha, 1}}, \overline{S_B^{\alpha, 2}}, \dots, \overline{S_B^{\alpha, c}}$ be color classes of $\overline{S_B^{\alpha}}$.

\begin{claim}\label{cl:neighbors_in_my_part_on_the_same_level1}
For each $i \in [c]$ and $\alpha \in [d]$,
if $\overline{S_A^{\alpha,i}} \neq \emptyset$, there exists 
vertices $p^{\alpha,i},p_1^{\alpha,i},p_2^{\alpha,i} \in \T^\alpha \cap A$
and 
vertices $a^{\alpha,i},b^{\alpha,i},c^{\alpha,i} \in A$ inducing a $P_3$
such that every $v \in \overline{S_A^{\alpha,i}}$ is either
adjacent to at least one of the vertices 
$p^{\alpha,i},p_1^{\alpha,i},p_2^{\alpha,i},a^{\alpha,i},b^{\alpha,i},c^{\alpha,i}$
or satisfies the following: every $u \in N(v) \cap A \cap \T^\alpha$ 
has a neighbor among vertices $a^{\alpha,i},b^{\alpha,i},c^{\alpha,i}$.

A symmetrical statement holds for $\overline{S_B^{\alpha,i}}$ for each $\alpha\in [d]$ and $i\in[c]$.
\end{claim}
\begin{proof}
    Let us fix $i\in[c]$ and $\alpha\in [d]$. 
    We consider $\overline{S_A^{\alpha, i}}$, for short denoted here as $J$. 
    Let $v^{\alpha, i}\in J$ be a vertex, minimizing $|N(v^{\alpha, i})\cap A\cap \T^\alpha|$. 
    By the definition of $S_A$, there exists an induced path $v^{\alpha, i} - a^{\alpha, i} - b^{\alpha, i} -c^{\alpha,i}$, where $a^{\alpha,i}, b^{\alpha, i}, c^{\alpha, i} \in A$. Let $p^{\alpha, i}$ be any vertex belonging to $N(v^{\alpha, i})\cap A\cap \T^\alpha$. 
    Let $$J_1 = \{v \in J\setminus (N(a^{\alpha,i}, b^{\alpha,i}, c^{\alpha, i}, p^{\alpha, i})\cap N(X_B)) \mid N(v)\cap N(v^{\alpha, i})\cap A \cap \T^\alpha \neq \varnothing\}.$$ 
    Let us consider two partial orders on vertices of $J_1$:
    \begin{itemize}
        \item $\leq_1:$ $u\leq_1 w \Leftrightarrow \left((N(u)\cap A\cap \T^\alpha)\cap N(v^{\alpha, i})\right) \subseteq \left((N(w)\cap A\cap \T^\alpha)\cap N(v^{\alpha, i})\right)$ 
        \item $\leq_2:$ $u\leq_2 w \Leftrightarrow \left((N(u)\cap A\cap \T^\alpha)\setminus N(v^{\alpha, i})\right) \subseteq \left((N(w)\cap A\cap \T^\alpha)\setminus N(v^{\alpha, i})\right)$
    \end{itemize}
    Note that for every $w\in J_1$ the set $\left((N(w)\cap A\cap \T^\alpha)\cap N(v^{\alpha, i})\right)$ is non-empty by the definition of $J_1$ and the set $\left((N(w)\cap A\cap \T^\alpha)\setminus N(v^{\alpha, i})\right)$ is non-empty by the choice of  $v^{\alpha, i}$ and the exclusion of neighbors of $p^{\alpha, i}$ in $J_1$.
    
    Suppose that there exists $u$ and $w$, which are incomparable in both orders. Then we can choose  following vertices: \begin{itemize}
        \item $x_1\in \left(N(u)\cap A\cap \T^\alpha\right) \setminus N(w, v^{\alpha, i})$
        \item $x_2\in \left(N(w)\cap A\cap \T^\alpha\right) \setminus N(u, v^{\alpha, i})$
        \item $y_1 \in \left(N(u)\cap N(v^{\alpha,i})\cap A\cap \T^\alpha \right) \setminus N(w)$
        \item $y_2\in \left(N(w)\cap N(v^{\alpha,i})\cap A\cap \T^\alpha \right) \setminus N(u)$
    \end{itemize}
    Then a path $x_1 -u-y_1 -v^{\alpha, i}-y_2-w-x_2$ is an induced $P_7$, which is a contradiction. Thus, any pair of vertices of $J_1$ is comparable in at least one of the orders. 

    By Lemma~\ref{lem:two_orders} there exists a vertex $v_1^{\alpha,i}$ such that for any other vertex $w\in J_1$ we have \linebreak $v_1^{\alpha, i}\leq_1 w$ or $v_1^{\alpha,i}\leq_2 w$. 
    Let us choose vertices $p_1^{\alpha, i}\in \left(N(v_1^{\alpha, i}) \cap A\cap \T^\alpha\cap N(v^{\alpha, i})\right)$ and \linebreak $p_2^{\alpha, i}\in \left(N(v_1^{\alpha, i}) \cap A\cap \T^\alpha \setminus N(v^{\alpha, i})\right)$. Then for any $w\in J_1$ we have $w\in N(p_1^{\alpha, i}, p_2^{\alpha,i})\cap N(X_B)$. 

    Consider now $v \in J \setminus N(p^{\alpha,i},p^{\alpha,i}_1,p^{\alpha,i}_2,a^{\alpha,i},b^{\alpha,i},c^{\alpha,i})$;
    note that $v \notin J_1$.
    Let $u \in N(v) \cap \T^\alpha \cap A$. 
    As $v \notin J_1$, $uv^{\alpha,i} \notin E(G)$.
    Let $Q\subseteq B$ be ap ath (maybe on just one vertex) such that $v-Q-v^{\alpha, i}$ is an induced path in $G$. 
    Then, $c^{\alpha,i}-b^{\alpha,i}-a^{\alpha,i}-v^{\alpha,i}-Q-v-u$ contains an induced $P_7$
    unless $u$ is adjacent to at least one of the vertices $a^{\alpha,i},b^{\alpha,i},c^{\alpha,i}$. This finishes the proof.
\end{proof}

For every $i \in [c]$ and $\alpha \in [d]$, we guess if $\overline{S_A^{\alpha,i}}$ is 
nonempty and, if this is the case, we guess the vertices
$p^{\alpha,i},p_1^{\alpha,i},p_2^{\alpha,i},a^{\alpha,i},b^{\alpha,i},c^{\alpha,i}$
of Claim~\ref{cl:neighbors_in_my_part_on_the_same_level1}
and put 
\[ N(p^{\alpha,i},p_1^{\alpha,i},p_2^{\alpha,i},a^{\alpha,i},b^{\alpha,i},c^{\alpha,i}) \cap N(X_B)
\]
into $\widetilde{S}$. 
Perform a symmetrical operation with the roles of $A$ and $B$ swapped. 

Let $K_A = \{a^{\alpha,i},b^{\alpha,i},c^{\alpha,i}~|~i \in [c],\alpha \in [d]\}$. 
At this point, for every $w \in S_A \setminus \widetilde{S}$, 
all neighbors of $w$ in $\T \cap A$ are in $N(K_A)$, while $N(K_A) \subseteq A \cup S$. 
Symmetrically we define $K_B$. Note that $|K_A|,|K_B| \leq 3cd$. 

Let $A^1, A^2, \ldots, A^c$ and $B^1, B^2, \dots, B^c$ be color classes of $A$ and $B$, respectively.
For every $i,j \in [c]$ and $\alpha \in [d]$,
let $\overline{S_A^{\alpha,i,j}}$ be the set of those vertices of
$\overline{S_A^{\alpha,i}} \setminus \widetilde{S}$ that have a neighbor in $B^j$. 
Symmetrically we define $\overline{S_B^{\alpha,i,j}}$. 

\begin{claim}\label{cl:neighbors_in_my_part_on_the_same_level2}
For each $i,j\in [c]$ and $\alpha\in[d]$, if $\overline{S_A^{\alpha,i,j}} \setminus \widetilde{S} \neq \emptyset$, then there exist
$q^{\alpha,i,j} \in A$ and $v^{\alpha,i,j} \in \overline{S_A^{\alpha,i,j}}$
such that every vertex 
$v \in \overline{S_A^{\alpha, i,j}} \setminus \left(N(q^{\alpha,i,j})\cap N(X_B)\right)$
satisfies $N(v) \cap B^j \subseteq N(v^{\alpha,i,j}) \cap B^j$. 

A symmetrical statement holds for $\overline{S_B^{\alpha,i,j}}$ for each $\alpha\in[d]$ and $i\in[c]$. 
\end{claim}
\begin{proof}
    Fix $i,j \in [c]$ and $\alpha \in [d]$ and for brevity denote
    $J = \overline{S_A^{\alpha,i,j}}$. 

    Let us consider two orders on $J$:
    \begin{itemize}
        \item $\leq_1:$ $u\leq_1 w \Leftrightarrow N(u)\cap A\cap \T^\alpha \subseteq N(w)\cap A\cap \T^\alpha$
        \item $\leq_2:$ $u\leq_2 w \Leftrightarrow (N(u)\cap B^j) \supseteq (N(w)\cap B^j)$
    \end{itemize}
    Suppose that there exist vertices $u$ and $w$ in $J$, which are incomparable in both orders. Then we choose $x_1\in N(u)\cap A\cap T^\alpha\setminus N(w)$ and $x_2\in N(w)\cap A\cap T^\alpha\setminus N(u)$, and $y_1 \in N(u)\cap B^j \setminus N(w)$ and $y_2\in N(w)\cap B^j\setminus N(u)$. 
    Then $x_1-u-y_1$ and $x_2-w-y_2$ are induced paths on 3 vertices and there are no edges between these two paths. As $x_1,x_2\in N(a^{\alpha, i}, b^{\alpha, i}, c^{\alpha,i})$, we can connect these two paths via vertices $a^{\alpha, i}, b^{\alpha, i}, c^{\alpha,i}$, finding then an induced $P_7$, which is a contradiction. Thus any pair of vertices is comparable in at least one of these orders. 

    By Lemma~\ref{lem:two_orders} there exists a vertex $v^{\alpha,i, j} \in J$ such that for any other vertex $w\in J$ we have $v^{\alpha,i, j} \leq_1 w$ or $v^{\alpha, i,j}\leq_2 w$. Let $q^{\alpha, i, j}$ be a vertex belonging to $N(v^{\alpha, i,j})\cap A\cap \T^\alpha$. 
    Then for each $w\in J$ we have $w\in N(q^{\alpha, i,j})\cap N(X_B)$ or $N(w)\cap B^j\subseteq N(v^{\alpha,i,j})\cap B^j$.
    This finishes the proof of the claim. 
\end{proof}
For every $i,j \in [c]$ and $\alpha \in [d]$, guess if $\overline{S_A^{\alpha,i,j}}$
is nonempty and, if this is the case, guess the vertices $q^{\alpha,i,j}$ 
and $v^{\alpha,i,j}$ of Claim~\ref{cl:neighbors_in_my_part_on_the_same_level2}. 
Add $N(q^{\alpha,i,j}) \cap N(X_B)$ 
and $N(v^{\alpha,i,j}) \setminus N(K_A)$ to $\widetilde{S}$.

Note that in the last step we do not add any elements of $N(v^{\alpha,i,j}) \cap \T \cap A$
to $\widetilde{S}$.
As $v^{\alpha,i,j} \notin \widetilde{S}$ prior to this step and
$N(v^{\alpha,i,j}) \cap \T \cap B$ is of size at most $d$, we add at most $d^2 c^2$ vertices of $\T$ to $\widetilde{S}$ in this step.

We perform a symmetric operation with $A$ and $B$ swapped.

At this moment, for every $v \in S_A \setminus \widetilde{S}$, 
we have $N(v) \cap A \cap \T \subseteq N(K_A)$ and $N(v) \cap B \subseteq \widetilde{S}$
and, symmetrically, 
for every $v \in S_B \setminus \widetilde{S}$, 
we have $N(v) \cap B \cap \T \subseteq N(K_B)$ and $N(v) \cap A \subseteq \widetilde{S}$.

Recall that $|X_B| \leq k$ and $|K_B| \leq 3cd$. 
Let $\widetilde{X_B}$ be the smallest connected subgraph of $B$, which contains $X_B\cup K_B$;
as $G$ is $P_7$-free, $|\widetilde{X_B}| \leq 7(k + 3cd)$. 
Symmetrically define $\widetilde{X_A}$. 
Guess $\widetilde{X_A}$, $\widetilde{X_B}$ and add $N(\widetilde{X_A}) \cap N(\widetilde{X_B})$
to $\widetilde{S}$. 

We now perform the following cleaning operation.
\begin{enumerate}
    \item Guess all vertices of $\T \cap \widetilde{S}$
and, for every $v \in \T \cap \widetilde{S}$, guess all ancestors of $v$ in $\T$. 
Add those ancestors to $\widetilde{S}$; this increases the size of $\T \cap \widetilde{S}$
by at most a factor of $d$. Now $\T \cap \widetilde{S}$ is a sub-treedepth-$d$ structure
of $\T$. 

    \item Guess the depths and parent/child relation in $\T$
    of vertices of $\T \cap \widetilde{S}$. In particular, this guess determines
    which pairs of vertices are incomparable in $\T$.
    \item For every pair $u,v$ of vertices of $\T \cap \widetilde{S}$ that are incomparable in $\T$, add $N(u) \cap N(v)$ to $\widetilde{S}$. 
    \item For every $v \in \T^d \cap \widetilde{S}$, add $N(v)$ to $\widetilde{S}$.
    
    Note that due to the first step, subsequent steps
    do not add any new vertex of $\T$ to $\widetilde{S}$. 
\end{enumerate}
After the above cleaning, for every $v \in V(G) \setminus \widetilde{S}$, all
vertices of $N(v) \cap \T \cap \widetilde{S}$ are comparable in $\T$
and do not contain any vertex of $\T^d$. In particular, as $\T$ is maximal, $v$ has
a neighbor in $\T \setminus \widetilde{S}$.
Consequently, every vertex of $S_A \setminus \widetilde{S}$ has a neighbor
in $(A \setminus \widetilde{S}) \cap \T$ and every vertex 
of $S_B \setminus \widetilde{S}$ has a~neighbor in 
$(B \setminus \widetilde{S}) \cap \T$. 

We infer that the connected components of $G-\widetilde{S}$ are of the following types:
\begin{description}
    \item[(clean)] Contained in a connected component of $G-S$ (i.e., in $A$, in $B$, or in $G-(A \cup S \cup B)$), or
    \item[(dirty)] Contained in $A \cup S \cup B$, with a nonempty intersection both with $S$
    and $A \cup B$. 
\end{description}
Observe that the clean components are disjoint with $N(\widetilde{X_A})$ or with 
$N(\widetilde{X_B})$ (or both), while the dirty components have nonempty intersection
both with $N(\widetilde{X_A})$ and $N(\widetilde{X_B})$,
as $S_A \subseteq N(\widetilde{X_B})$, 
while every vertex of $S_A \setminus \widetilde{S}$ has a neighbor in $((A \setminus \widetilde{S}) \cap \T)$, which in turn has a neighbor in $K_A \subseteq \widetilde{X_A}$,
and symmetrically for $B$. 
Hence, the algorithm can distinguish clean and dirty components. 

In what follows, in a number of steps we will add some vertices to $\widetilde{S}$,
but we will be only adding vertices of $S$. 
After the updates,
we will continue using the nomenclature of clean/dirty components of $G-\widetilde{S}$, 
and they always refer to the current value of $\widetilde{S}$. 
Note that, as long as we keep the invariant that every vertex of $S_A \setminus \widetilde{S}$
has a neighbor in $(A \setminus \widetilde{S}) \setminus \T$ (which is maintained trivially
if we add only vertices of $S$ to $\widetilde{S}$), then the method of distiguishing
clean and dirty components from the previous paragraph still works. 

We now branch if $S_A \setminus \widetilde{S} = \emptyset$. 
If this is the case, then
every dirty component $C$ satisfies $C \cap N(\widetilde{X_A}) = C \cap S$.
We insert $C \cap N(\widetilde{X_A})$ into $\widetilde{S}$ for every dirty component $C$. 
As a result, $S \subseteq \widetilde{S}$ while $|\widetilde{S} \cap \T| = \Oh_{k,d}(1)$. 
We insert $(K := \widetilde{S}, \mathcal{D} := \emptyset, L := \emptyset)$ into $\mathcal{F}$
and conclude this branch. 

Symmetrically we handle a branch $S_B \setminus \widetilde{S} = \emptyset$. 
In the remaining branch, we assume that both $S_A \setminus \widetilde{S}$ and
$S_B \setminus \widetilde{S}$ are nonempty. 

Let $\widehat{X_B}=\widetilde{X_B}\cap N(S_A\setminus \widetilde{S})$;
note that as $S_A \setminus \widetilde{S} \neq \emptyset$ and $S_A \subseteq N(X_B)$, 
we have $\widehat{X_B} \neq \emptyset$.
We symmetrically define $\widehat{X_A}$ and observe it is nonempty, too. 

We now focus on connected components of $B \setminus \widetilde{S}$. 
Let $C$ be such a component. Note that $C$ has no neighbors in $S_A \setminus \widetilde{S}$;
hence $C$ is also a connected component of $G-\widetilde{S}-N(X_A)$, and thus can be guessed
if needed. 

\begin{claim}\label{cl:Bcomps}
For every connected component $C$ of $B \setminus \widetilde{S}$,
\begin{enumerate}
    \item every $x \in \widehat{X_B}$ is either complete or anticomplete to $C$, and
    \item if the whole $\widehat{X_B}$ is anticomplete to $C$, then 
    the whole $\widetilde{X_B}$ is anticomplete to $C$. 
\end{enumerate}
A symmetrical statement holds with $A$ and $B$ swapped.
\end{claim}
\begin{proof}
For the first point, assume there exists $x \in \widehat{X_B}$ with an induced $P_3$ of the form
$x-C-C$. Let $y$ be any neighbor of $x$ in
$S_A \setminus \widetilde{S}$. Observe that there would exist an induced $P_7$
of the form $C-C-x-y-A-A-A$, a contradiction. 

If $C$ is antiadjacent to $\widehat{X_B}$ but has a neighbor in $\widetilde{X_B}$,
then as $\widehat{X_B} \neq \emptyset$ there exists a path $Q$ from $\widehat{X_B}$ via $\widetilde{X_B}$ to $C$
of length at least $2$; let $x$ be its endpoint in $\widehat{X_B}$ and
let $y$ be a neighbor of $x$ in $S_A \setminus \widetilde{S}$.
Then there exists a $P_7$ of the form $Q-y-A-A-A$, a contradiction.
\end{proof}
Claim~\ref{cl:Bcomps} distinguishes two types of components of $B \setminus \widetilde{S}$:
those antiadjacent to $\widetilde{X_B}$
and those complete to at least one vertex of $\widehat{X_B}$. 
Let $\mathcal{B}_0$ and $\mathcal{B}_1$ be the families of components of the first
and second type, respectively. Symmetrically define $\mathcal{A}_0$ and $\mathcal{A}_1$. 

Recall that every vertex of $S_B \setminus \widetilde{S}$ has a neighbor in
$(B \setminus \widetilde{S}) \cap \T$, which in turn has a neighbor in $K_B \subseteq \widetilde{X_B}$. This implies the following.
\begin{claim}\label{cl:B1nei}
 Every vertex of $S_B \setminus \widetilde{S}$ has a neighbor in a component of $\mathcal{B}_1$. 
 Symmetrically, 
 every vertex of $S_A \setminus \widetilde{S}$ has a neighbor in a component of $\mathcal{A}_1$. 
\end{claim}

For $v \in S_B \setminus \widetilde{S}$, let $N_{\mathcal{B}_0}(v) = \{C \in \mathcal{B}_0~|~C \cap N(v) \neq \emptyset\}$
    and $N_{\mathcal{B}_1}(v) = \{C \in \mathcal{B}_1~|~C \cap N(v) \neq \emptyset\}$.

We now sort out adjacencies between $S_B \setminus \widetilde{S}$ and $\mathcal{B}_0$. 
Let $S_{B,0} $ be the set of vertices belonging to $S_B\setminus \widetilde{S}$ which have a
neighbor in a component of $\mathcal{B}_0$. Let $S_{B,0}^1, S_{B,0}^2, \dots, S_{B,0}^c$ be color classes of $S_{B,0}$. Symmetrically, we define $S_{A,0}$ and color classes
$S_{A,0}^1,\ldots,S_{A,0}^c$. 

\begin{claim}\label{cl:B0_vertices}
For every $i \in [c]$, if $S_{B,0}^i$ is nonempty, then there exist
two components $B_{0,i,0}$ and $B_{0,i,1}$ of $B \setminus \widetilde{S}$
such that $S_{B,0}^i \subseteq N(B_{0,i,0} \cup B_{0,i,1})$. 

A symmetrical statement holds with $A$ and $B$ swapped.
\end{claim}
\begin{proof}
    Fix $i \in[c]$.
    Then we consider two orders on $S_{B,0}^i$:
    \begin{itemize}
        \item $\leq_1$: $x\leq_1 y \Leftrightarrow N_{\mathcal{B}_1}(x) \subseteq N_{\mathcal{B}_1}(y)$;
         
        \item $\leq_2$: $x\leq_2 y \Leftrightarrow N_{\mathcal{B}_0}(x) \subseteq N_{\mathcal{B}_0}(y)$;
    \end{itemize}
    Suppose that there exist $x$ and $y$, which are incomparable in both orders. Then we can choose
    \begin{itemize}
        \item $C_x\in N_{\mathcal{B}_1}(x) \setminus N_{\mathcal{B}_1}(y)$,
        \item $C_y \in N_{\mathcal{B}_1}(y) \setminus N_{\mathcal{B}_1}(x)$,
        \item $D_x\in N_{\mathcal{B}_0}(x) \setminus N_{\mathcal{B}_0}(y)$,
        \item $D_y\in N_{\mathcal{B}_0}(y) \setminus N_{\mathcal{B}_0}(x)$.
    \end{itemize} 
    There are no edges between $C_x \cup D_x \cup \{x\}$ and $C_y \cup D_y \cup \{y\}$. 
    Then, there is an induced $P_7$ of the form $D_x-x-C_x-\widetilde{X_B}-C_y-y-D_y$, a contradiction.
    Thus any pair of vertices is comparable in at least one order.
    
    By Lemma~\ref{lem:two_orders} we can choose a vertex $v^i \in S_{B,0}^i$ such that for any other vertex $y\in S_{B,0}^i$ we have $v^{i}\leq_1 y$ or $v^{i}\leq_2 y$.
    Hence, any $B_{0,i,0} \in N_{\mathcal{B}_0}(v^i)$ (which exists by the definition of $S_{B,0}$) 
    and any $B_{0,i,1} \in N_{\mathcal{B}_1}(v^i)$ (which we exists by Claim~\ref{cl:B1nei})
    would do the job.
\end{proof}
As discussed, components of $\mathcal{B}_0 \cup \mathcal{B}_1$ are components
of $G-\widetilde{S}-N(X_A)$ and can be guessed.
For every $i \in [c]$, we guess $B_{0,i,0}$ and $B_{0,i,1}$ and add
$N(B_{0,i,0} \cup B_{0,i,1}) \cap N(X_A)$ to $\widetilde{S}$.

We perform a symmetrical operation for $A$. 
As a result, no vertex of $S_B \setminus \widetilde{S}$ has a neighbor in a component
of $\mathcal{B}_0$ and no vertex of $S_A \setminus \widetilde{S}$ has a neighbor in 
$\mathcal{A}_0$. Also, we added to $\widetilde{S}$ only vertices of $S$,
so in particular no vertex of $\T$ was added to $\widetilde{S}$
and the families $\mathcal{B}_0$, $\mathcal{B}_1$, $\mathcal{A}_0$, and $\mathcal{A}_1$ stay
intact. 

We say that $x \in S_B \setminus \widetilde{S}$ is \emph{mixed} to 
a component $C \in \mathcal{B}_1$ if there exists an induced path $x-C-C$ or, equivalently,
$x$ is neither complete nor anticomplete to $C$. 
A component $C \in \mathcal{B}_1$ is \emph{problematic} if there is $x \in S_B \setminus \widetilde{S}$ that is mixed to $C$. 
For $x \in S_B \setminus \widetilde{S}$ we denote by $\texttt{Prob}(x) \subseteq \mathcal{B}_1$
the set of components $x$ is mixed to
and by $\texttt{All}(x) \subseteq \mathcal{B}_1$ the set of components $x$ is mixed or complete to.
Let $S_B^P$ be the set of all vertices of $S_B \setminus \widetilde{S}$
that are mixed to at least one component of $\mathcal{B}_1$
(i.e., $S_B^P = \{x \in S_B \setminus \widetilde{S}~|~\texttt{Prob}(x) \neq \emptyset\}$)
and let $S_B^{P,1},S_B^{P,2},\ldots,S_B^{P,c}$ be color classes of $S_B^P$. 
We introduce symmetrical definitions with $A$ and $B$ swapped. 

\begin{claim}\label{cl:problematic_are_a_subset_of_all}
For each $i\in[c]$ and any pair of vertices $v_1, v_2\in S_B^{P,i}$ we have either $\texttt{Prob}(v_1)\subseteq \texttt{All}(v_2)$ or $\texttt{Prob}(v_2)\subseteq \texttt{All}(v_1)$. A symmetrical statement holds for any pair of vertices of $S_A^{P,i}$ for each $i\in[c]$.
\end{claim}
\begin{proof}
   Fix $i\in [c]$. Suppose that there exist vertices $v_1$ and $v_2$ in $S_B^{P,i}$ and components $C_{v_1}\in \texttt{Prob}(v_1)\setminus \texttt{All}(v_2)$ and $C_{v_2}\in \texttt{Prob}(v_2)\setminus \texttt{All}(v_1)$. Then there exist two induced $P_3$s $v_1-C_{v_1}-C_{v_1}$ and $v_2-C_{v_2}-C_{v_2}$, which can be connected via $\widetilde{X_A}$, so we can find a $P_7$, a contradiction.  
\end{proof}


\begin{claim}\label{cl:problematic_in_S}
    For every $i\in[c]$ there exists components $C_{v^i}, C_{u^i}^1, C_{u^i}^2\in 
    \mathcal{B}_1$ such that \linebreak $S_B^{P,i} \subseteq \left(N(C_{v^i})\cup N(C_{u^i}^1)\cup N(C_{u^i}^2)\right)\cap N(\widehat{X_A})$.
    A symmetrical statement holds for $S_A^{P,i}$ for each $i\in [c]$.
\end{claim}
\begin{proof}
    Fix $i\in [c]$. Let ${v^i}\in S_B^{P,i}$ be a vertex, which has neighbors in the smallest number of problematic components. Choose any $C_{v^i}\in \texttt{Prob}(v^i)$.
    Let $w$ be any vertex in $S_B^{P,i}\setminus (N(C_{v^i}) \cap N(X_A))$. Since $C_{v^i}\in \texttt{Prob}({v^i})\setminus \texttt{All}(w)$, by Claim~\ref{cl:problematic_are_a_subset_of_all} we know that $\texttt{Prob}(w)\subseteq \texttt{All}({v^i})$. By the minimality of ${v^i}$, we know that $\texttt{All}(w)\setminus \texttt{All}({v^i})\neq \varnothing$. Let us consider two orders on $S_B^{P,i}\setminus (N(C_{v^i}) \cap N(X_A))$:
    \begin{itemize}
        \item $\leq_1:$ $u\leq_1 w \Leftrightarrow \{C\in \texttt{All}({v^i})\mid N(u)\cap C\neq \varnothing\} \subseteq \{C\in \texttt{All}({v^i})\mid N(w)\cap C\neq \varnothing\} $
        \item $\leq_2:$ $u\leq_1 w \Leftrightarrow \{C\in \mathcal{B}_1\setminus \texttt{All}({v^i})\mid N(u)\cap C\neq \varnothing\} \subseteq \{C\in \mathcal{B}_1\setminus \texttt{All}({v^i})\mid N(w)\cap C\neq \varnothing\} $
    \end{itemize}
    Note that for every $u\in S_B^{P,i}\setminus (N(C_{v^i}) \cap N(\widehat{X_A}))$ the set $\{C\in \texttt{All}({v^i})\mid N(u)\cap C\neq \varnothing\}$ is non-empty, as by 
    Claim~\ref{cl:problematic_are_a_subset_of_all} we have $\emptyset \neq \texttt{Prob}(u)\subseteq \texttt{All}({v^i})$. Similarly, for every $u\in S_B^{P,i}\setminus (N(C_{v^i}) \cap N(\widehat{X_A}))$ the set $\{C\in \mathcal{B}_1 \setminus \texttt{All}({v^i})\mid N(u)\cap C\neq \varnothing\}$ is non-empty by the choice of ${v^i}$ and the exclusion of $N(C_{v^i})$. 
    
    Suppose that there exist $u$ and $w$ which are incomparable in both these orders. Then we can choose the following components:
    \begin{itemize}
        \item $C_u^1\in \texttt{All}({v^i})$ such that $N(u)\cap C_u^1\neq \varnothing$ and $N(w)\cap C_u^1=\varnothing$
        \item $C_u^2\in \mathcal{B}_1 \setminus \texttt{All}({v^i})$ such that $N(u)\cap C_u^2\neq \varnothing$ and $N(w)\cap C_u^2=\varnothing$
        \item $C_w^1\in \texttt{All}({v^i})$ such that $N(w)\cap C_w^1\neq \varnothing$ and $N(u)\cap C_u^1=\varnothing$
        \item $C_w^2\in  \mathcal{B}_1 \setminus \texttt{All}({v^i})$ such that $N(w)\cap C_w^2\neq \varnothing$ and $N(u)\cap C_u^1=\varnothing$
    \end{itemize}
    Then there exists a path $C_u^2-u-C_u^1-{v^i}-C_w^1-w-C_w^2$ that contains a $P_7$, a contradiction. Hence, any pair of vertices is comparable in at least one order. 
    
    Therefore by Lemma~\ref{lem:two_orders} there exists a vertex ${u^i}$ such that for any other vertex $w\in S_B^{P,i}\setminus (N(C_v)\cap N(\widehat{X_A}))$ we have ${u^i}\leq_1 w$ or ${u^i}\leq_2 w$. Let $C_{u^i}^1\in \texttt{All}({v^i})$ and $C_{u^i}^2\in \mathcal{B}_1\setminus\texttt{All}({v^i})$ such that $N({u^i})\cap C_u^1\neq \varnothing$ and $N({u^i})\cap C_u^2\neq \varnothing$. Then by the choice of ${u^i}$ we have that $S_B^{P,i}\setminus (N(C_{v^i})\cap N(\widehat{X_A}))\subseteq (N(C_{u^i}^1)\cup N(C_{u^i}^2))\cap N(\widehat{X_A})$. Therefore $S_B^{P,i}\subseteq \left(N(C_{v^i})\cup N(C_{u^i}^1)\cup N(C_{u^i}^2)\right)\cap N(\widehat{X_A})$.  

    By symmetry, we show that there exist sought components $C_{v^i}, C_{u^i}^1, C_{u^i}^2\in \mathcal{A}_1$ for each $S_A^{P,i}$.
\end{proof}
For every $i \in [c]$, 
we guess components $C_{v^i}$, $C_{u^i}^1$ and $C_{u^i}^2$ 
in $\mathcal{B}_1$ whose existence is guaranteed by Claim~\ref{cl:problematic_in_S},
and add $\left(N(C_{v^i})\cup N(C_{u^i}^1)\cup N(C_{u^i}^2)\right)\cap N(\widehat{X_A})$
to $\widetilde{S}$. We perform symmetric operation for $S_A^{P,i}$. 

By Claim~\ref{cl:problematic_in_S}, we have $S_A^P \cup S_B^P \subseteq \widetilde{S}$, while
we added only vertices of $S$ to $\widetilde{S}$. 
Now, every vertex of $S_B \setminus \widetilde{S}$ is complete or anticomplete
to every component of $\mathcal{B}_1$ (and anticomplete to every component of $\mathcal{B}_0$). 

Consequently, for every dirty component $C$, every connected component
of $C \cap B$ is a module of $G[C]$ and every connected component of $C \cap A$ is a module of $G[C]$. 
Furthermore, observe that thanks to Claim~\ref{cl:Bcomps} for every dirty component $C$, 
we have $C \cap N(\widetilde{X_A}) = (C \cap A) \cup (C \cap S_B)$
and $C \cap N(\widetilde{X_B}) = (C \cap B) \cup (C \cap S_A)$.
That is, $(C \cap N(\widetilde{X_A}))$ and $(C \cap N(\widetilde{X_B}))$ 
is a partition of $C$ which the algorithm can compute.
Moreover, every connected component of $G[C \cap N(\widetilde{X_A})]$ is either
a component of $A \setminus \widetilde{S}$ or a component of $S_B \setminus \widetilde{S}$
and every connected component of $G[C \cap N(\widetilde{X_B})]$ is either
a component of $B \setminus \widetilde{S}$ or a component of $S_A \setminus \widetilde{S}$.

Since every connected component of $C \cap B$ and of $C \cap A$ is a module of $G[C]$, 
we know that all components of $G[C \cap N(\widetilde{X_A})]$
or of $G[C \cap N(\widetilde{X_B})]$ that are \emph{not} modules of $G[C]$ are part of 
$\widetilde{S}$. That is, we perform the following step exhaustively:
while there exists a dirty component $C$ and a component $C'$ of either
$G[C \cap N(\widetilde{X_A})]$ or $G[C \cap N(\widetilde{X_B})]$ that is not 
a module of $G[C]$, move $C'$ to $\widetilde{S}$. 
This step, again, adds only vertices belonging to $S$ to $\widetilde{S}$.

In the end, for every dirty component $C$,
$(C \cap \widetilde{X_A},C \cap \widetilde{X_B})$ is a partition of $C$ such that
every component of $G[C \cap \widetilde{X_A}]$ and every component of $G[C \cap \widetilde{X_B}]$
is a module of $G[C]$. 
Let $\mathcal{D}$ be the set of dirty components and, for $C \in \mathcal{D}$, let $L(C) = C \cap \widetilde{X_A}$. Then, $(K := \widetilde{S}, \mathcal{D}, L)$ satisfies the
requirements of the lemma for $(A,S,B)$ and we add this tuple to $\mathcal{F}$. 

This finishes the proof of Theorem~\ref{thm:to-bip}.
\end{proof}

\section{$P_7$-free bipartite graphs}\label{sec:bipartite}
In this section we focus on $P_7$-free bipartite graphs. 
We prove the following theorem.
\begin{theorem}\label{thm:bip}
There exists an algorithm that, given a $P_7$-free bipartite graph $G$ and an integer $d > 0$,
runs in time $n^{2^{2^{\Oh(d^3)}}}$ and computes a family $\mathcal{C}$ of subsets
of $V(G)$ with the following guarantee: 
for every treedepth-$d$ structure $\T$ in $G$, there exists
a tree decomposition $(T,\beta)$ of $G$ such that
for every $t \in V(T)$ we have $\beta(t) \in \mathcal{C}$ and 
$|\beta(t) \cap \T| = 2^{2^{\Oh(d^3)}}$. 
\end{theorem}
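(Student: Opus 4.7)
\textbf{Proof plan for Theorem~\ref{thm:bip}.}

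The plan is to reduce the theorem to the chordal bipartite case of Kloks, Liu, and Poon~\cite{kloks2012feedback} via a polynomial-size branching process that constructs, for each branch, a $\T$-aligned chordal bipartite supergraph $G' \supseteq G$. First, I would observe that in a $P_7$-free bipartite graph the only forbidden induced subgraphs witnessing non-chordality are induced $6$-cycles: bipartiteness rules out all odd cycles, and a~$C_t$ with $t \geq 8$ contains an induced $P_7$. Hence the task is to destroy every induced $C_6$ by adding diagonal edges $c_1c_4$, $c_2c_5$, or $c_3c_6$ (any of these preserves bipartiteness since endpoints lie in opposite color classes), while keeping the graph $P_7$-free and, most importantly, maintaining a chordal completion that is $\T$-aligned in the sense of Section~\ref{sec:prelim}.

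Second, I would perform a local structural analysis of an induced $C_6 = c_1c_2c_3c_4c_5c_6c_1$ inside a $P_7$-free bipartite graph, in the spirit of~\cite{BonomoCMSSZ18}. Using the fact that any $P_7$ passing through a $C_6$ is tightly constrained, I would classify the vertices of $N(C_6)$ by their trace on the cycle and show that there are only $\Oh(1)$ types, each behaving as a module with respect to the rest of the $6$-cycle's neighborhood. The output of this analysis should be a \emph{canonical chord rule}: having guessed $\Oh(1)$ anchor vertices in and around each ``block'' of mutually interacting $6$-cycles, one can deterministically pick which diagonal to add so that (i)~the addition does not create a new induced $P_7$, and (ii)~if $c_i$ and $c_{i+3}$ both lie in $\T$, then they are $\T$-comparable in $\T$, and if one of them is at depth~$d$, then we fall into the case handled by Lemma~\ref{lem:PMCs_with_deep_vertices}.

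Third, I would assemble this into a branching procedure: nondeterministically guess a set $X$ of $2^{\Oh(d^3)}$ anchor vertices of $G$ together with their depths, mutual $\T$-comparability, and parent-in-$\T$ relationships, and then let the canonical chord rule produce a chordal bipartite supergraph $G'_X \supseteq G$. There are at most $n^{2^{\Oh(d^3)}}$ such branches, and for every treedepth-$d$ structure $\T$ in $G$, at least one choice of $X$ reflects the truth and produces a $\T$-aligned chordal completion of $G$. For each $G'_X$ I would invoke the polynomial-PMC result of~\cite{kloks2012feedback} to enumerate all PMCs of $G'_X$, add to $\mathcal{C}$ the union of each such PMC with $X$ (and, following Lemma~\ref{lem:PMCs_with_deep_vertices}, also the precomputed depth-$d$ containers). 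Since $\T$ is a treedepth-$d$ structure in $G'_X$, every clique of $G'_X$ contained in $V(\T)$ is a vertical path in $\T$ and hence has size at most $d$; combined with $|X| = 2^{\Oh(d^3)}$, this gives $|\beta(t)\cap \T| \leq 2^{2^{\Oh(d^3)}}$ as required, while the total running time is $n^{2^{2^{\Oh(d^3)}}}$.

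The main obstacle I expect is the second step: producing the canonical chord rule and proving that one globally consistent choice of diagonals, driven by a constant-per-cycle guess, simultaneously (a)~breaks all induced $C_6$'s, (b)~introduces no induced $P_7$, and (c)~is realizable as an actual $\T$-aligned minimal chordal completion. Ensuring (c) is especially delicate, since adding one diagonal may merge two $6$-cycles into a larger even hole whose destruction constrains later choices; here one has to argue that the anchor guesses propagate consistently, which is where the $P_7$-freeness assumption has to be used repeatedly to bound the ``reach'' of a $6$-cycle's influence. The remaining steps (invoking Kloks--Liu--Poon and bounding $|\beta(t) \cap \T|$) are then relatively routine given~\cite{kloks2012feedback} and the $\T$-alignment guarantee.
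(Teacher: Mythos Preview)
Your high-level strategy---reduce to chordal bipartite and invoke Kloks--Liu--Poon---matches the paper, but the mechanism you propose for the reduction is quite different from what the paper does, and your step~2 is where the plan breaks down.

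The paper does \emph{not} add a diagonal to each $C_6$. Instead it splits the problem in two. First, it shows (Lemmas~\ref{lem:no_new_path}--\ref{lem:bip_T_stays}) that completing an entire minimal separator $S$ into a biclique keeps the graph $P_7$-free, creates no new $C_6$, and---provided there is no \emph{bad} $C_6$, i.e.\ no $C_6$ with two opposite vertices incomparable in $\T$---keeps $\T$ a treedepth-$d$ structure. This gives a deterministic, guess-free completion to chordal bipartite in the no-bad-$C_6$ case (Lemma~\ref{lem:bip-alg-no-bad-C6}). Second, for bad $C_6$s, the paper \emph{abandons edge addition entirely} and instead runs a recursive decomposition (Section~4.4): it guesses a pivot set $Z$ built from one or two extremal bad $C_6$s, applies a ``cleaning'' step (Lemma~\ref{lem:bip-cleaning-full}) around $Z$, and recurses on components with a carefully designed progress measure on the depth-pair $(\alpha_1,\alpha_2)$ of bad $C_6$s, their rich/poor class, and a counter $\gamma$. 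The recursion depth is bounded by $\Oh(d^3)$, which is where the tower in the exponent comes from.

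Your ``canonical chord rule'' is exactly the step the paper sidesteps. The difficulty is not just point~(b)---adding a single diagonal to a $C_6$ need not preserve $P_7$-freeness, and you would need an analogue of Lemma~\ref{lem:no_new_path} for that operation---but more fundamentally point~(c): there is no argument that a bounded number of anchor guesses, \emph{independent of the number of $6$-cycles}, can simultaneously orient all chord choices in a $\T$-aligned way. In your write-up you first say ``$\Oh(1)$ anchor vertices per block of interacting $6$-cycles'' and then ``a set $X$ of $2^{\Oh(d^3)}$ anchor vertices of $G$''; these are not the same thing unless you can bound the number of blocks, which you do not. The paper's recursive scheme replaces this global consistency problem with a local one: at each level it only needs to understand one or two bad $C_6$s and their immediate neighborhood, and the progress measure guarantees termination.
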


The general approach is to reduce to the case of \emph{chordal bipartite graphs}.
Recall that a bipartite graph is \emph{chordal bipartite} if its every cycle of length
longer than $4$ has a chord (i.e., the only induced cycles are of length $4$). 

The class of \textsc{MWIS} problems is tractable on chordal bipartite graphs
thanks to the following result of Kloks, Liu, and Poon
(and the general algorithm of Fomin, Todinca, and Villanger~\cite{FominTV15}).
\begin{theorem}[Corollary~2 of~\cite{kloks2012feedback}]\label{thm:kloks}
A chordal bipartite graph on $n$ vertices and $m$ edges has $\Oh(n+m)$ minimal separators.
\end{theorem}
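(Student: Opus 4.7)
The plan is to bound the number of minimal separators by charging each one to a small amount of local combinatorial data (a vertex or an edge) in an essentially injective way. I will use the classical characterization of Golumbic--Goss: a bipartite graph $G$ is chordal bipartite if and only if it admits a perfect edge elimination ordering $e_1, e_2, \ldots, e_m$, meaning that each $e_i = u_i v_i$ is \emph{bisimplicial} in $G_i := G - \{e_1, \ldots, e_{i-1}\}$; that is, the set $(N_{G_i}(u_i) \cup N_{G_i}(v_i)) \setminus \{u_i, v_i\}$ induces a complete bipartite subgraph. This property will be the main lever: bisimpliciality rigidly constrains the neighborhood structure around $e_i$.

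First I would set up an induction on $|E(G)| = m$. The base case $m = 0$ is vacuous since an edgeless graph has no minimal separators. For the inductive step, let $e = uv$ be a bisimplicial edge of $G$ and set $G' = G - e$. Since $G'$ is chordal bipartite on $n$ vertices and $m-1$ edges, the inductive hypothesis gives it $\Oh(n + m - 1)$ minimal separators. The heart of the proof is to show that the set of minimal separators of $G$ differs from that of $G'$ by at most an additive constant number of sets: every minimal separator of $G$ is either already a minimal separator of $G'$, or else belongs to a tightly controlled family of ``new'' separators defined in terms of the biclique $N(u) \cup N(v)$. Adding the $\Oh(1)$ contribution per edge and an additive $\Oh(n)$ accounting for trivial vertex-based separators (e.g.\ neighborhoods of low-degree vertices observed in the base of the induction) then yields $\Oh(n+m)$.

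To establish the $\Oh(1)$ bound on ``new'' minimal separators produced by deleting the bisimplicial edge $e = uv$, I would argue as follows. A minimal separator $S$ of $G$ that fails to be a minimal separator of $G'$ must have the property that some pair of vertices on opposite full sides of $S$ in $G$ become connected (or newly separated) upon deletion of $e$; this forces both $u$ and $v$ to lie on specific sides of $S$ (or in $S$) and to have neighbors on the opposing full side. Bisimpliciality of $e$ — the fact that $(N(u)\cup N(v)) \setminus \{u,v\}$ is a complete bipartite graph — then implies that $S$ is essentially determined by a local choice inside $\{u,v\} \cup N(u) \cup N(v)$, with only finitely many possible configurations. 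A case analysis using the $C_4$-only property of chordal bipartite graphs (there are no induced cycles of length $\geq 6$) rules out further configurations by producing forbidden induced $C_6$'s or longer cycles.

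The main obstacle will be the sharp bound in the previous paragraph: a priori, deleting a single edge can shuffle the minimal separator family in complicated ways, splitting and merging full components, so one must show that bisimpliciality rules out all but $\Oh(1)$ genuinely new separators. The proof hinges on exploiting the biclique structure of $N(u) \cup N(v)$ to ``localize'' the effect of the deletion to a neighborhood of $e$, and on the $C_4$-only property to transfer this local control to global constraints on $S$. An alternative strategy I would keep in reserve is to work directly from a doubly lexical ordering of the bipartite adjacency matrix of $G$: such an ordering produces a totally balanced matrix, and minimal separators can be enumerated from its $\Gamma$-free row/column structure, again giving an $\Oh(n+m)$ count without explicit induction on edges.
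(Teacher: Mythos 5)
The paper does not prove this statement at all: Theorem~\ref{thm:kloks} is imported verbatim from Kloks--Liu--Poon, so your argument has to stand on its own. As a plan, your edge-elimination induction can be made to work, but as written it has a genuine gap precisely at the step you yourself flag as the ``main obstacle'': the claim that deleting a bisimplicial edge $e=uv$ removes only $\Oh(1)$ sets from the minimal-separator family is asserted, not proved, and the sentence that bisimpliciality makes $S$ ``essentially determined by a local choice inside $\{u,v\}\cup N(u)\cup N(v)$, with only finitely many possible configurations'' is not yet an argument. The claim is in fact true, with a short case analysis using only that $N(u)\setminus\{v\}$ is complete to $N(v)\setminus\{u\}$: if $S$ is a minimal separator of $G$ but not of $G-uv$, then (i) $u,v\in S$ is impossible (nothing changes in $G-S$); (ii) if $u\in S$, $v\notin S$, pick a neighbor $a$ of $u$ in a full component not containing $v$ --- completeness of $a$ to $N(v)\setminus\{u\}$ forces $N(v)\subseteq S$, hence the component of $v$ is $\{v\}$ and $S=N(v)$ (symmetrically $S=N(u)$); (iii) if $u,v\notin S$, the deletion matters only when $uv$ is a bridge of its full component $C$, and the same completeness argument collapses this to $C=\{u,v\}$ and $S=(N(u)\cup N(v))\setminus\{u,v\}$. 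So at most three separators are lost per deleted edge, giving $|\mathrm{MinSep}(G)|\le 3m+\Oh(1)$ --- you do not even need the $\Oh(n)$ slack. Until this analysis (or an equivalent one) is written out, the proposal is an outline rather than a proof; the reserve strategy via doubly lexical orderings is likewise only a gesture.

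A second, smaller but real issue is that the characterization you lean on is misquoted: the Golumbic--Goss perfect edge elimination scheme deletes the two \emph{endpoints} of each eliminated edge and characterizes the strictly larger class of perfect elimination bipartite graphs, not chordal bipartite graphs. What your induction actually needs is (a) every chordal bipartite graph with at least one edge has a bisimplicial edge (this is Golumbic--Goss), and (b) deleting a bisimplicial edge leaves the graph chordal bipartite, so that the inductive hypothesis applies to $G-e$. Statement (b) is true but needs its own short argument: a long induced cycle of $G-uv$ that is not induced in $G$ must pass through $u$ and $v$ non-consecutively, both $u$--$v$ arcs have length at least $3$ by bipartiteness, and the biclique on $(N(u)\cup N(v))\setminus\{u,v\}$ then forces a chord of the cycle other than $uv$, a contradiction. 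With (a), (b) and the three-separator bound above, your induction goes through and yields the stated $\Oh(n+m)$ bound.
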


Observe that $P_7$-free bipartite graphs are ``almost'' chordal bipartite: 
they only additionally allow $C_6$ as an induced subgraph.
Our approach is to add edges to the input graph so that it becomes chordal bipartite, 
without destroying the sought solution. To this end, the following folklore characterization
will be handy.
\begin{lemma}[folklore]\label{lem:bip-chordal-char}
Let $G$ be a bipartite graph with bipartition $V_1,V_2$.
Then, $G$ is chordal bipartite if and only if for every minimal separator $S$ of $G$,
$S \cap V_1$ is complete to $S \cap V_2$ (i.e., the separator induces a complete bipartite graph, also called a \emph{biclique}).
\end{lemma}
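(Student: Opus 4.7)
The plan is to prove both implications of this characterization of chordal bipartite graphs.

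For ($\Rightarrow$), I would assume $G$ is chordal bipartite, fix a minimal separator $S$ with full components $A, B$, and choose $u \in S \cap V_1$, $v \in S \cap V_2$. Supposing $uv \notin E(G)$ for contradiction, the plan is to build an induced cycle of length at least $6$. Because $A$ is full to $S$ and connected, $u$ and $v$ each have neighbors in $A$, so there is a shortest $u$-$v$ path $P_A$ whose internal vertices lie in $A$; being shortest, it is induced, and by bipartite parity it has length at least $3$ (since $u \in V_1$, $v \in V_2$ and they are non-adjacent). Construct $P_B$ symmetrically through $B$. Because $A$ and $B$ are distinct components of $G-S$, no edges connect the interior of $P_A$ to that of $P_B$, so $P_A \cup P_B$ is an induced cycle of length at least $6$, contradicting chordal bipartiteness.

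For ($\Leftarrow$), I would argue by contrapositive: given an induced cycle $C = v_1 v_2 \ldots v_{2k}$ in $G$ with $2k \geq 6$, I plan to produce a minimal separator of $G$ that is not a biclique. Define $W_0 := V(G) \setminus \{v_1, v_2, v_4, v_5\}$; since $C$ is induced, the only edges inside $\{v_1, v_2, v_4, v_5\}$ are $v_1 v_2$ and $v_4 v_5$, so $v_2$ and $v_5$ lie in different components of $G - W_0$, making $W_0$ a $(v_2,v_5)$-separator. The next step is to shrink $W_0$ to a minimal $(v_2,v_5)$-separator $S$ by iteratively discarding vertices whose removal preserves separation, while targeting $v_3, v_6 \in S$.

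The key observation is that $v_3$ and $v_6$ stay essential throughout this shrinking: at any stage, the complement of the current set still contains $\{v_1, v_2, v_4, v_5\}$, and reinserting $v_3$ exposes the induced path $v_2 v_3 v_4 v_5$ in the complement, so $v_3$ cannot be dropped without reconnecting $v_2$ to $v_5$; an analogous argument with the induced path $v_2 v_1 v_6 v_5$ keeps $v_6$ essential. The resulting minimal $(v_2,v_5)$-separator $S$, which is automatically a minimal separator of $G$, therefore contains both $v_3 \in V_1$ and $v_6 \in V_2$; since $C$ is induced one has $v_3 v_6 \notin E(G)$, so $S$ is not a biclique, a contradiction. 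The main obstacle is verifying that the essentiality of $v_3$ and $v_6$ is independent of which other vertices have been removed; this is exactly what the specific choice of $W_0$ buys us, since the four cycle vertices $v_1, v_2, v_4, v_5$ are excluded from $W_0$ from the outset and thus remain in the complement at every stage of the shrinking.
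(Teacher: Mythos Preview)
Your ($\Rightarrow$) direction is correct and essentially identical to the paper's. The gap is in ($\Leftarrow$): your claim that $v_6$ stays essential rests on the path $v_2 - v_1 - v_6 - v_5$, but $v_1 v_6$ is an edge of $G$ only when $2k = 6$; for any longer induced cycle $v_1$ and $v_6$ are non-adjacent. Concretely, if $G = C_8$ then your $W_0$ is $\{v_3, v_6, v_7, v_8\}$, and $\{v_3, v_7\} \subseteq W_0$ is already a minimal $(v_2, v_5)$-separator that does not contain $v_6$, so the shrinking process can legitimately drop $v_6$ and you lose control over which opposite-colour non-adjacent pair ends up in $S$.

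The paper sidesteps this by choosing the two sides asymmetrically: it separates $\{x_2, x_3\}$ from the entire long arc $\{x_5, \ldots, x_\ell\}$. Since $x_1$ is adjacent to both $x_2$ and $x_\ell$, and $x_4$ is adjacent to both $x_3$ and $x_5$, each of $x_1, x_4$ is adjacent to both sets and therefore must lie in every separator between them; they are on opposite bipartition sides and non-adjacent for all $\ell \geq 6$. Your argument is easily repaired in the same spirit: put all of $v_5, \ldots, v_{2k}$ outside the initial separator rather than just $v_4, v_5$.
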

\begin{proof}
    In one direction, assume that $G$ contains a cycle $C$ of length $\ell \geq 6$ as an induced subgraph and let $x_1,x_2,\ldots,x_\ell$ be consecutive vertices of $C$. Then,
    $\{x_2,x_3\}$ and $\{x_5,\ldots,x_\ell\}$ are anticomplete. Furthermore, any
    minimal separator separating these sets contains
    $x_1$ and $x_4$, which are nonadjacent and on the opposite bipartition sides of $G$

    In the other direction, let $S$ be a  minimal separator of $G$ with two full components $A,B$
    and vertices $x \in S \cap V_1$, $y \in S \cap V_2$, $xy \notin E(G)$.
    Observe that, thanks to the bipartiteness of $G$, a shortest path from $x$ to $y$ via $A$
    is of length at least $3$, and similarly via $B$. These two shortest path together form an induced cycle of length at least $6$. 
\end{proof}

Lemma~\ref{lem:bip-chordal-char} motivates the following process of completing
a $P_7$-free bipartite graph $G$ into a chordal bipartite graph: while $G$ contains
a minimal separator $S$ that violates the statement of Lemma~\ref{lem:bip-chordal-char},
complete $G[S]$ into a biclique. 
In Section~\ref{ss:bip-complete} we analyse this process and show that it is well-behaved
on $P_7$-free graphs, in particular, does not lead outside the class of $P_7$-free bipartite graphs.

\subsection{Completing to a chordal bipartite graph}\label{ss:bip-complete}

Throughout the rest of this section we assume that the input graph $G$ has a fixed bipartition into sets $V_1,V_2$ (this is an ordered partition).
We remark that we will often look at certain induced subgraphs of $G$ that are not necessarily connected,
but a vertex never changes its side of the bipartition.

Lemma~\ref{lem:bip-chordal-char} motivates the following definition.
Let $G$ be a bipartite graph with bipartition $V_1,V_2$, and let $S$ be a minimal separator of $G$.
We say that \emph{$S$ induces a biclique} if $S \cap V_1$ is complete to $S \cap V_2$.
The operation of \emph{completing $S$ into a biclique} turns $G$ into a graph $G+F := (V(G), E(G) \cup F)$,
where $F = \{uv~|~u \in S \cap V_1, v \in S \cap V_2, uv \notin E(G)\}$.

The following lemma is pivotal to our completion process.
\begin{lemma}\label{lem:no_new_path}
Let $G$ be a $P_7$-free bipartite graph. Let $S$ be a minimal separator and let $A$ and $B$ be two full components of $S$. Let $G+F$ be the result of completing $S$ into a biclique.
Then $G+F$ is also $P_7$-free.
\end{lemma}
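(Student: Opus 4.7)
I would argue by contradiction: suppose $G + F$ contains an induced $P_7$, and among all such paths take $P = p_1 - \cdots - p_7$ minimizing the number of $F$-edges used. Since $G$ itself is $P_7$-free, $P$ must use at least one $F$-edge $p_ip_{i+1}$, and by definition of $F$ both endpoints lie in $S$ on opposite sides of the bipartition with $p_ip_{i+1} \notin E(G)$.

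The first key step will be to produce an induced $C_6$ in $G$ through $p_i$ and $p_{i+1}$. I would take $Q_A$ to be a shortest $p_i$--$p_{i+1}$ path in $G[\{p_i,p_{i+1}\} \cup A]$ (which exists and has odd length at least $3$, by bipartiteness and the non-edge $p_ip_{i+1}$), and define $Q_B$ analogously via $B$. Since $A$ and $B$ are anticomplete in $G$, $Q_A \cup Q_B$ is an induced cycle $C$ of even length at least $6$, and the $P_7$-freeness of $G$ rules out induced $C_k$ for $k \geq 8$ (removing one vertex yields an induced $P_7$). Hence $|Q_A| = |Q_B| = 3$, which gives vertices $a \in A\cap N(p_i)$, $a' \in A\cap N(p_{i+1})$ with $aa' \in E(G)$, and symmetric $b, b' \in B$, so that $p_i - a - a' - p_{i+1} - b' - b - p_i$ is an induced $C_6$ in $G$.

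Next, I would form the two $9$-vertex walks obtained by replacing the $F$-edge $p_ip_{i+1}$ by the two length-$3$ paths of the $C_6$:
\[ W^A = p_1 - \cdots - p_i - a - a' - p_{i+1} - \cdots - p_7, \qquad W^B = p_1 - \cdots - p_i - b - b' - p_{i+1} - \cdots - p_7. \]
Because $a, a', b, b' \notin S$, they carry no $F$-edges, so each of the new edges lies in $G$; combined with the remaining edges of $P$ (all in $G$ when $P$ uses the single $F$-edge $p_ip_{i+1}$, and otherwise treated by iterating the construction on each $F$-edge), $W^A$ and $W^B$ are walks in $G$ on $9$ vertices. The aim is to show that one of $W^A, W^B$ is in fact an induced path in $G$, which would then contain an induced $P_7$ in $G$, contradicting the $P_7$-freeness of $G$.

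The hard part will be verifying that such an induced replacement always exists. A failure of $W^A$ amounts to either (i) a coincidence $a = p_{i-1}$ or $a' = p_{i+2}$ (forced by bipartite parity and adjacency to $p_i$, $p_{i+1}$, equivalent to $p_{i-1} \in A$ or $p_{i+2} \in A$), or (ii) a genuine $G$-chord from $\{a, a'\}$ to some $p_j \in V(P) \setminus \{p_i, p_{i+1}\}$. Coincidence obstructions cannot block both $W^A$ and $W^B$ simultaneously for the same vertex because $A \cap B = \varnothing$; the remaining challenge is the chord case, where I would show that any offending chord $a p_j$ (or its symmetric variant) can be spliced with a portion of $P$ and an appropriate segment of the $C_6$ into an induced $P_7$ already present in $G$, yielding the contradiction. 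The delicate part of the argument will be organising the case analysis on the position $i \in \{1,\ldots,6\}$ of the $F$-edge and on where the offending chord lands in $V(P)$; the $P_7$-freeness of $G$, bipartite parity, the rigid $C_6$ structure, and the two-sided $A$/$B$ choice are the main tools that make every case resolve.
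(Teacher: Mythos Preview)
Your opening moves---minimizing $|E(P)\cap F|$ and noting that any $F$-edge $p_ip_{i+1}$ sits on an induced $C_6$ in $G$ via shortest paths through $A$ and $B$---are sound and match the paper's setup. The divergence, and the gap, is in what you do next.

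You aim to replace the $F$-edge in place by a length-$3$ detour and argue that one of the resulting $9$-vertex walks $W^A,W^B$ is induced. This runs into two concrete problems. First, you never analyse the case $|V(P)\cap S|=3$: since $S$ is a biclique in $G+F$, an induced $P_7$ can meet $S$ in three consecutive vertices $x,y,z$, and then \emph{two} of the edges $xy,yz$ may lie in $F$. Your parenthetical ``iterate the construction on each $F$-edge'' is not a proof---replacing both edges simultaneously yields an $11$-vertex walk whose inducedness is far from clear, and the paper in fact devotes its most intricate sub-case (the ``middle'' configuration $a_1-a_2-x-y-z-b_2-b_1$ with $a_1,a_2\in A$, $b_1,b_2\in B$) to exactly this situation. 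Second, even with a single $F$-edge, your assertion that one of $W^A,W^B$ must be induced is not justified: when the left segment $p_1,\ldots,p_{i-1}$ lies in $A$ and the right segment $p_{i+2},\ldots,p_7$ lies in $B$, the obstruction to $W^A$ comes from chords between $\{a,a'\}$ and the left segment, while the obstruction to $W^B$ comes from chords between $\{b,b'\}$ and the right segment. These are independent, so your ``cannot block both for the same vertex'' argument does not apply. Turning each chord into an induced $P_7$ by splicing may be possible, but you have not shown it, and the necessary case analysis is at least as heavy as what the paper does.

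The paper sidesteps the second issue with a cleaner asymmetric idea: in the $|V(P)\cap S|=2$ case it \emph{discards the shorter side} of $P$ entirely, takes a shortest path $R$ (of length $\geq 3$) from $x$ to $y$ through whichever of $A,B$ does not contain the longer side, and concatenates $R$ with the longer side to get an induced path on $\geq 7$ vertices in $G$. This avoids all interaction between the detour and the discarded side. For $|V(P)\cap S|=3$ the paper splits further on the position of the $S$-block on $P$; only the centred case requires real work, and there the argument repeatedly uses the minimality of $|E(P)\cap F|$ (not just $P_7$-freeness) to rule out configurations. Your plan is missing both the asymmetric shortcut and any treatment of the three-vertex case.
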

\begin{proof}
    By contradiction, suppose that $G+F$ contains a $P_7$. Let $Q$ be a $P_7$ in $G + F$, which minimizes $|E(Q)\cap F|$ among all $P_7$s contained in $G+F$. 
    
    As $G$ is $P_7$-free, $E(Q) \cap F \neq \emptyset$. Consequently,
    $|V(Q) \cap S| \geq 2$.
    We consider several cases.

    \medskip

    \noindent\textbf{Case A.} 
    $|V(Q)\cap S|=2$.\\ 
    Let $V(Q) \cap S = \{x,y\}$. Note that $E(Q) \cap F = \{xy\}$, i.e., $xy \in F$
    and $xy \notin E(G)$.
    Let $Q_x$ and $Q_y$ be the components of $Q-\{xy\}$ that contain $x$ and $y$, respectively.
    
    Since $|V(Q) \setminus S| = 5$, without loss of generality assume that $Q_x$ is of length at most $2$ and $Q_y$ is of length at least $3$.
    Note that $Q_y \setminus \{y\}$ is contained in a single component of $G-S$; without loss
    of generality assume that this component is not $A$. 
    Let $R$ be a shortest path from $x$ to $y$ via $A$; note that $R$ 
    has at least $3$ edges.
    Then, the concatenation of $R$ and $Q_y$ is an induced path in $G$ on at least $7$ vertices, a contradiction.

    \medskip

    \noindent \textbf{Case B.} $|V(Q)\cap S| >2$.\\ As $S$ induces a biclique in $G+F$, we have $|V(Q)\cap S|=3$ and $V(Q) \cap S$ consists of three consecutive vertices of $Q$;
    call these vertices $x,y,z$ (in the order of appearance on $Q$, i.e., $xy,yz \in E(G) \cup F$). 
    
    As $E(Q) \cap F \neq \emptyset$, either $xy \in F$ or $yz \in F$ (or both).
    Without loss of generality, we can assume that $xy\in F$. Note that $yz$ may or may not belong to $F$.

    Let $Q_x$ and $Q_z$ be the components of $Q-\{y\}$ that contain $x$ and $z$, respectively.
    Each of $Q_x-\{x\}$ and $Q_z-\{z\}$ is contained in a single component of $G-S$.
    By the symmetry between $A$ and $B$, we assume that $Q_z-\{z\}$ is not contained
    in $A$.
    
    Let $R$ be a shortest path from $x$ to $y$ via $A$; as before, $R$ is of length at least $3$. Let $p$ be the neighbor of $x$ on $R$.
    
    In the following we consider cases depending on the position of $x$ on $Q$.

        \begin{enumerate}
            \item Vertex $x$ is the first or the second vertex of $Q$, i.e., $Q_x$ is of length at most $1$.  \\ 
            If $pz\in E(G)$, then substituting $y$ with $p$ on $Q$ yields a $P_7$ in $G$.
            If $pz\notin E(G)$, then the concatenation of $Q_z$, the edge $yz$,
            and $R-\{x\}$ is an induced path on at least $7$ vertices 
            either in $G$ (if $yz \in E(G)$) or in $G+F$ with strictly fewer edges of $F$
            than $Q$ (if $yz \in F$). In all cases, we get a contradiction.
            
            \item There are at least three vertices before $x$ on the path $Q$, i.e., 
            $Q_x$ is of length at least $3$. 
            \\ 
            If $Q_x-\{x\}$ is contained in $A$, then let $R'$ be a shortest path from $x$ to $y$ via $B$, and otherwise let $R' := R$. Then, the concatenation of $R'$ and $Q_x$
            is an induced path in $G$ on at least $7$ vertices, a contradiction.
            
            \item Vertices $x,y,z$ are the middle vertices of $Q$, i.e., both $Q_x$ and $Q_z$ are of length $2$. \\ If there exists a component $C\in \cc(G-S)$ such that $x,z\in N(C)$ and $V(Q)\cap C = \varnothing$, then we can  connect $x$ and $z$ via a shortest path $R_C$ in $C$. Then the concatenation of $Q_x$, $R_C$, and $Q_z$
            is an induced path on at least $7$ vertices, a contradiction.
            
            If there is no such component $C$, then $Q$ must be of a form $a_1 - a_2 - x-y-z-b_2- b_1$, where $a_1,a_2\in A$ and $b_1, b_2\in B$. Note that if there is a path connecting $x$ and $z$ via $A$ or $B$ with at least $3$ intermediate vertices, then we can extend this path with either $a_2$ and $a_1$ or $b_2$ and $b_1$ to get a $P_7$. Similarly, if there a path connecting $x$ and $y$ (or $y$ and $z$ if $yz\in F$) via $A$ or $B$ with at least four intermediate vertices, then we can also extend it 
            with $a_2$ and $a_1$ or $b_2$ and $b_1$ to get a $P_7$.
            Consequently, $R$ is of the form $x - p - q- y$ with $p,q \in A$.
            
            We make the following observation. If there exists $u\in A$ such that $x,z\in N(u)$ and $a_1\notin N(u)$, then a path $a_1-a_2-x-u-z-b_2-b_1$ is a $P_7$ in $G$. Thus any vertex in $A$ which is a neighbor of $x$ and $z$ is also a neighbor of $a_1$. By symmetry any vertex in $B$ which is a neighbor of $x$ and $z$ is also a neighbor of $b_1$.
            
            Suppose now that $pz\in E(G)$. By the observation above, 
            we have $pa_1\in E(G)$. If $a_2q\notin E(G)$, then a path $a_2-a_1-p-q-y-B-B$ is a $P_7$ in $G$, a contradiction (here, $y-B-B$ stands for any two-edge path from $y$ into $B$, which exists as $B$ is connected and contains at least $2$ vertices). Otherwise, a path $a_1-a_2-q-y-z-b_2-b_1$ is an induced $P_7$ in $G+F$, which has strictly fewer edges in $F$ than $Q$, a contradiction.
            
            Therefore, $pz\notin E(G)$. If $yz\in E(G)$, then a path $x-p-q-y-z-b_2-b_1$ is an induced  $P_7$ in $G$. Thus $yz\in F$. Let $R'$ be the shortest path connecting $y$ and $z$ via $B$. We have that $R' = z-p'-q'-y$ for some $p',q' \in B$.
            If $p'x\in E(G)$, then we can conclude in an analogous way to a~case if $pz\in E(G)$. Therefore we can assume that $p'x\notin E(G)$. The path $x-p-q-y-q'-p'-z$ is an induced $P_7$ in $G$ then, which is the final contradiction. 
        \end{enumerate}
        This completes the proof.
\end{proof}

In the next lemma, we establish that completing a minimal separator into a biclique in a $P_7$-free graph does not create new $C_6$s.

\begin{lemma}\label{lem:no_new_cycle}
    Let $G$ be a $P_7$-free bipartite graph. Let $S$ be a minimal separator in $G$
    and let $A$ and $B$ two full components of $S$.
    Let $G+F$ be the result of completing $S$ into a biclique. 
    Let $C$ be an induced $C_6$ in $G+F$. 
    Then, $C$ is also an induced $C_6$ in $G$.
\end{lemma}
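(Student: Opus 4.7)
The plan is to assume toward contradiction that $C$ is an induced $C_6$ in $G+F$ that uses at least one edge of $F$, and then to construct an induced $P_7$ in $G$, contradicting $P_7$-freeness. First I would set $X = V(C) \cap S$ and use the biclique structure of $S$ in $G+F$ to constrain $X$: since $F$-edges go between $S \cap V_1$ and $S \cap V_2$, the existence of an $F$-edge on $C$ forces $|X \cap V_1|, |X \cap V_2| \geq 1$; moreover, if both were $\geq 2$, the four corresponding vertices would induce a $K_{2,2}$ in $G+F$, yielding chords in $C$. Hence, without loss of generality, $|X \cap V_1| = 1$, with $X = \{u\} \cup (X \cap V_2)$. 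Since $u$ has degree $2$ in $C$ yet is adjacent in $G+F$ to every vertex of $X \cap V_2$, we get $|X \cap V_2| \leq 2$. Thus the analysis reduces to the two cases $|X| = 2$ (with $C$ of the form $x - c_4 - c_3 - c_2 - c_1 - y - x$ and $xy \in F$) and $|X| = 3$ (with $C$ of the form $v_1 - u - v_2 - y_3 - y_2 - y_1 - v_1$ and at least one of $uv_1, uv_2$ in $F$).

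Next I would observe that in both cases the vertices of $V(C) \setminus X$ form a subpath of $C$ whose edges all lie in $E(G)$, so they are contained in a single connected component $D$ of $G - S$. Since $A$ and $B$ are distinct full components of $S$, I can assume $A \neq D$ (swapping $A$ and $B$ otherwise). I then pick an $F$-edge $xy$ on $C$ with $x \in V_1 \cap S$ and $y \in V_2 \cap S$. Since $xy \notin E(G)$ and $x, y$ are on opposite bipartition sides, a shortest $x$-to-$y$ path $R$ through $A$ has length at least $3$, so it can be written $R = x - a_1 - a_2 - \cdots - a_k - y$ with $k \geq 2$ and $a_1, \ldots, a_k \in A$.

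The $P_7$-witness I would construct is the path
\[ P \;=\; x - a_1 - a_2 - \cdots - a_k - y - c'_1 - c'_2 - c'_3, \]
where $c'_1, c'_2, c'_3$ are three consecutive non-$X$ vertices of $C$, starting from the neighbor of $y$ on $C$ and moving away from $x$; this gives $|V(P)| = k + 5 \geq 7$. To confirm $P$ is induced in $G$ I would verify four groups of non-adjacencies: (i) within $R$, from $R$ being a shortest path; (ii) within $\{y, c'_1, c'_2, c'_3\}$, from $C$ being induced in $G+F$; (iii) between any $a_i$ and any $c'_j$, from $A$ and $D$ being distinct components of $G-S$; and (iv) from $x$ to $a_i$ with $i \geq 2$ or to $y$, from $R$ being shortest, and from $x$ to any $c'_j$, from $C$ being induced in $G+F$.

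The main subtlety I anticipate is the $|X|=3$ case: the third $S$-vertex $v_2$ lies on $C$ but its adjacencies to vertices of $A$ are not controlled by the assumptions, so $P$ must be chosen so as to omit $v_2$ entirely. Concretely, if the $F$-edge we replace is $uv_1$, then I use $P = u - a_1 - \cdots - a_k - v_1 - y_1 - y_2 - y_3$, leaving $v_2$ out; symmetrically if instead $uv_2 \in F$. With this care, the potentially problematic edges between $v_2$ and $A$-vertices simply never appear in the verification, and the argument goes through uniformly with the $|X|=2$ case, producing the desired induced $P_7$ in $G$ and the contradiction.
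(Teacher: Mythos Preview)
Your proof is correct and shares the same core idea as the paper's: replace an $F$-edge $xy$ on $C$ by a shortest detour through the full component of $S$ that avoids $V(C)\setminus S$, then append three consecutive vertices of $C\setminus S$ to produce an induced path on at least seven vertices in $G$. The paper reaches the final construction by a slightly different route: it first closes the detour into an induced hole $C'$ in $G$ and uses $P_7$-freeness to force $|C'|=6$, which immediately eliminates the $|X|=2$ case and pins down the structure when $|X|=3$, before building the $P_7$ exactly as you do (with the third $S$-vertex omitted). Your version is a touch more direct in that it handles both $|X|=2$ and $|X|=3$ uniformly without the intermediate hole argument; the paper's hole step, in turn, lets it skip the explicit $|X|=2$ path verification.
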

\begin{proof}
By contradiction, suppose that $E(C)\cap F\neq \emptyset$. As $S$ in a biclique in $G+F$, $V(C)\cap S$ contains either two or three consecutive vertices of $C$.
Thus, $P := C \setminus S$ is a path and belongs to exactly one component of $G-S$.
Without loss of generality, we can assume that $P$ does not lie in $B$, i.e., $C \cap B=\varnothing$. 

Let $x,z \in V(C) \cap S$ be the two vertices of $C$ adjacent on $C$ to the vertices of $P$.
Note that $xz \notin E(G)$: either $|V(C) \cap S| = 3$ and $x,z$ are on the same biparteness
side of $G$ or $|V(C) \cap S| = 2$ and then $xz \in F$. 

Let $R$ be the shortest path from $x$ to $z$ via $B$. Then, $R \cup P$ induce a hole $C'$ in $G$.
Since $R$ is of length at least $2$, $C'$ is not shorter than $C$. Since $G$ is $P_7$-free,
$C'$ is a six-vertex hole. This can only happen if $R$ is of length $2$
and $V(C) \cap S = \{x,y,z\}$ for some $y \in S$.
Then, $C=x-y-z-r-q-p-x$, where $p, q,r$ lie in a single component of $G-S$ different than $B$. Without loss of generality, we can assume that $yz\in F$. Then we can find a shortest path $R'$, which connects $y$ and $z$ via $B$; it is of length at least $3$. Then the path $R-r-q-p$ is an induced path in $G$ on at least $7$ vertices, a contradiction.
\end{proof} 

Let $G$ be a $P_7$-free bipartite graph with fixed bipartition $V_1,V_2$, and let $\T$ be a treedepth-$d$ structure in $G$.
A tuple $(C,x,y)$ is a \emph{bad $C_6$} (with respect to $\T$) in $G$ if $C$ is an induced six-vertex cycle in $G$,
and $x \in V_1$, $y \in V_2$ are two vertices that are at the same time (a) opposite vertices of $C$, and (b)
incomparable vertices of $\T$.
That is, if one adds the edge $xy$ to $G$ (which can be done without violating the biparteness of $G$), then one breaks the treedepth-$d$ structure $\T$, i.e., $\T$ is not a treedepth-$d$ structure in $G + xy$.

We have the following simple corollary of Lemma~\ref{lem:no_new_cycle}.
\begin{lemma}\label{lem:bip_T_stays}
    Let $G$ be a $P_7$-free bipartite graph, let $\T$ be a treedepth-$d$ structure in $G$,
    let $S$ be a minimal separator in $G$, and let $G+F$ be a result of completing 
    $S$ into a biclique.
    Assume that there is no bad $C_6$ in $G$ with respect to $\T$.
    Then, $\T$ is a treedepth-$d$ structure in $G+F$ and, furthermore, 
    there is no bad $C_6$ in $G+F$ with respect to $\T$.
\end{lemma}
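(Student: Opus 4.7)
The plan is to reduce both conclusions to the assumed absence of a bad $C_6$ in $G$. The second conclusion is essentially free from Lemma~\ref{lem:no_new_cycle}: every induced six-cycle of $G+F$ is already an induced six-cycle of $G$, and the defining properties of a bad $C_6$ (being a $C_6$, opposite vertices, $\T$-incomparability, lying on opposite bipartition sides) depend only on the cycle and on $\T$, not on which edges of $G$ or $G+F$ exist elsewhere. So any bad $C_6$ of $G+F$ with respect to $\T$ would witness a bad $C_6$ of $G$ with respect to $\T$, contradicting the hypothesis.

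For the first conclusion, I would verify the syntactic condition that $\T$ remains a treedepth-$d$ structure in $G+F$: every new edge $uv \in F$ whose both endpoints lie in $V(\T)$ must connect vertices that are $\T$-comparable (the height of $\T$ is unchanged, so this is the only condition to check). Suppose for contradiction that $uv \in F$ with both endpoints in $V(\T)$ and $u,v$ $\T$-incomparable. By construction of $F$ we have $u \in S \cap V_1$, $v \in S \cap V_2$, and $uv \notin E(G)$. Using that $A$ and $B$ are full components of $S$, I would pick a shortest induced path $P_A$ from $u$ to $v$ with all internal vertices in $A$, and symmetrically a shortest induced path $P_B$ via $B$. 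Bipartiteness forces both lengths to be odd, and $uv \notin E(G)$ forces each length to be at least $3$. Since the interiors of $P_A$ and $P_B$ lie in distinct components of $G-S$, there are no edges between them, and so $P_A \cup P_B$ is an induced cycle of $G$ of even length $|P_A|+|P_B| \ge 6$.

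Here the $P_7$-freeness of $G$ is decisive: any induced cycle of length $\ge 8$ yields an induced $P_7$ by deleting any single vertex, and bipartite graphs contain no induced $C_7$, so the only possibility is $|P_A|=|P_B|=3$ and the cycle is an induced $C_6$ on which $u$ and $v$ are opposite. Combined with the assumption that $u,v$ are $\T$-incomparable, this would be precisely a bad $C_6$ in $G$ with respect to $\T$, contradicting the hypothesis. Hence $u,v$ are $\T$-comparable after all, and the first conclusion follows.

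The only step requiring any genuine care is checking that $P_A \cup P_B$ is induced, which splits cleanly into two bookkeeping observations: shortest induced paths have no internal chords, and edges between the interiors of $P_A$ and $P_B$ are ruled out because $A$ and $B$ live in different components of $G-S$. Beyond this, I do not anticipate any real obstacle; the argument is essentially a direct unpacking of Lemma~\ref{lem:no_new_cycle} and of the parity/length restrictions that $P_7$-freeness imposes on induced cycles of a bipartite graph.
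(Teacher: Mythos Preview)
Your proof is correct and follows essentially the same approach as the paper: for the first conclusion you build an induced $C_6$ through $u$ and $v$ by concatenating shortest paths via the two full sides $A$ and $B$ and invoke the no-bad-$C_6$ hypothesis, and for the second conclusion you appeal directly to Lemma~\ref{lem:no_new_cycle}. The only cosmetic difference is that you restrict attention upfront to edges of $F$ with both endpoints in $V(\T)$, whereas the paper treats an arbitrary $xy\in F$ and observes that a violation of the elimination-forest condition would force $(C,x,y)$ to be bad; the underlying argument is identical.
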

\begin{proof}
    Let $A$ and $B$ be two full components of $S$.     
    Let $xy \in F$ be arbitrary. Let $R_A$ and $R_B$ be shortest paths from $x$ to $y$ via $A$ and $B$,
    respectively. Then, the concatenation of $R_A$ and $R_B$ is an induced cycle $C$ in $G$;
    since $G$ is $P_7$-free, $C$ is a six-vertex cycle with $x$ and $y$ being two opposite vertices. 
    Since $(C,x,y)$ is not a bad $C_6$ w.r.t. $\T$, the addition of the edge $xy$
    does not break the treedepth-$d$ structure $\T$. Since the choice of $xy \in F$ was
    arbitrary, $\T$ is a treedepth-$d$ structure in $G+F$. 
    Furthermore, Lemma~\ref{lem:no_new_cycle} ensures that every $C_6$ in $G+F$ is also present
    in $G$; thus, $G+F$ does not contain a bad $C_6$ w.r.t. $\T$. 
\end{proof}

We conclude this section with the following enumeration. 
\begin{lemma}\label{lem:bip-alg-no-bad-C6}
There exists an algorithm that, given 
a $P_7$-free bipartite graph $G$ and an integer $d$,
runs in polynomial time and returns a
family $\mathcal{C}$ of subsets of $V(G)$ of size $\Oh(|V(G)|^5)$
with the following property:
for every treedepth-$d$ structure $\T$ in $G$ that admits no bad $C_6$,
there exists a tree decomposition $(T,\beta)$ of $G$
such that for every $t \in V(T)$, the set $\beta(t)$ belongs to $\mathcal{C}$
and $\beta(t)$ contains at most $d$ elements of $\T$.
\end{lemma}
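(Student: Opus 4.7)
The plan is to turn the input graph into a chordal bipartite supergraph $G'$ by iteratively completing minimal separators into bicliques, and then to take $\mathcal{C}$ to be the family of all potential maximal cliques of $G'$. Algorithmically, I would initialize $G' := G$ and, while $G'$ is not chordal bipartite, pick any minimal separator of $G'$ that does not induce a biclique and complete it into one; by Lemma~\ref{lem:bip-chordal-char} such a separator exists whenever $G'$ is not yet chordal bipartite, and its proof actually lets one extract one in polynomial time from an induced cycle of length $\geq 6$ via a standard min-cut computation. Lemma~\ref{lem:no_new_path} guarantees that $G'$ stays $P_7$-free bipartite after every step, and since each step strictly adds edges, the procedure halts in at most $\binom{n}{2}$ iterations. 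I would then enumerate all PMCs of $G'$ in polynomial time using Theorem~\ref{thm:minsep-pmc} and return them as $\mathcal{C}$. The size bound is immediate: Theorem~\ref{thm:kloks} gives $\Oh(n+m) = \Oh(n^2)$ minimal separators in the chordal bipartite $G'$, which Theorem~\ref{thm:minsep-pmc} turns into $|\mathcal{C}| = \Oh(n \cdot (n^2)^2) = \Oh(n^5)$.

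For the tree-decomposition property, I would fix a treedepth-$d$ structure $\T$ in $G$ with no bad $C_6$ and apply Lemma~\ref{lem:bip_T_stays} inductively along the completion steps: this keeps $\T$ a treedepth-$d$ structure of the current graph \emph{and} preserves the absence of bad $C_6$'s, so that $\T$ is still a treedepth-$d$ structure in $G'$. Lemma~2.11 of~\cite{DBLP:conf/soda/ChudnovskyMPPR24} then supplies a $\T$-aligned minimal chordal completion $G'+F^*$, and any clique tree of $G'+F^*$ gives a tree decomposition $(T,\beta)$ of $G'+F^*\supseteq G$, which is also a tree decomposition of $G$. Every bag $\beta(t)$ is a maximal clique of $G'+F^*$ and thus a PMC of $G'$, so $\beta(t)\in\mathcal{C}$. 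Finally, any two vertices $u,v\in\beta(t)\cap\T$ are joined by an edge of $G'+F^*$ that either lies in $E(G')$ (making $u,v$ $\T$-comparable because $\T$ is a treedepth-$d$ structure of $G'$) or in $F^*$ (making them $\T$-comparable by $\T$-alignment), so $\beta(t)\cap\T$ forms a chain in $\T$ of length at most $d$.

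The main conceptual hurdle is the invariant-maintenance in the completion procedure: a~priori, bicliquifying one minimal separator could introduce a new induced $P_7$ (taking us outside the class) or a new bad $C_6$ (preventing $\T$ from surviving the next step). That neither happens is exactly the content of Lemmas~\ref{lem:no_new_path} and~\ref{lem:bip_T_stays}, and once these are in place the rest of the argument is a direct assembly of the Bouchitt\'e--Todinca machinery with the Kloks--Liu--Poon separator bound.
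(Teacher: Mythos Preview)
Your proof is correct and follows essentially the same approach as the paper: iteratively complete non-biclique minimal separators into bicliques (the paper phrases this as picking a $C_6$, choosing opposite vertices, and separating the two remaining $P_2$'s, which amounts to the same thing), use Lemmas~\ref{lem:no_new_path} and~\ref{lem:bip_T_stays} to maintain $P_7$-freeness and the absence of bad $C_6$'s, then return the PMCs of the resulting chordal bipartite graph via Theorems~\ref{thm:kloks} and~\ref{thm:minsep-pmc}. Your justification that each bag meets $\T$ in a chain (via the $\T$-aligned completion of Lemma~2.11 in~\cite{DBLP:conf/soda/ChudnovskyMPPR24}) is exactly what the paper invokes as well.
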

\begin{proof}
Consider the following process. Start with $\widehat{G} := G$.
While $\widehat{G}$ is not chordal bipartite, find an induced six-vertex cycle $C$ in $G$, fix two opposite vertices $x$ and $y$ in $C$, find a minimal separator $S$ in $G$ that contains $x$ and $y$ (note that any minimal separator separating the two components of $C-\{x,y\}$ would do),
complete $S$ into a biclique obtaining $\widehat{G}+F$, and set $\widehat{G} := \widehat{G} + F$. 

Lemma~\ref{lem:no_new_path} ensures that $\widehat{G}$ stays $P_7$-free bipartite.
Lemma~\ref{lem:bip_T_stays} ensures that every treedepth-$d$ structure ${\cal T}$ in $G$ that admits no bad $C_6$ remains so in $\widehat{G}$.

By Theorem~\ref{thm:kloks}, $\widehat{G}$ has $\Oh(|V(G)|^2)$ minimal separators. 
By Theorem~\ref{thm:minsep-pmc}, $\widehat{G}$ has $\Oh(|V(G)|^5)$ potential maximal cliques
and they can be enumerated in polynomial time. We return the list of all potential
maximal cliques of $\widehat{G}$ as $\mathcal{C}$.

As shown in~\cite{DBLP:conf/soda/ChudnovskyMPPR24}, there exists a minimal chordal completion $F$ of $\widehat{G}$
such that for every maximal clique $\Omega$ of $\widehat{G}+F$, the set $\Omega \cap \T$
is contained in a single leaf-to-root path in $\T$ and thus is of size at most $d$. 
Since all these maximal cliques are enumerated in $\mathcal{C}$, any clique
tree of $\widehat{G}+F$ serves as the promised tree decomposition $(T,\beta)$. 
\end{proof}

\subsection{Cleaning}\label{ss:bip-cleaning}

In this section we define a branching step that cleans a specific part of the graph.
The step will be general enough to be applicable in many contexts.

Let $G$ be a $P_7$-free bipartite graph. 
We say that a triple $(A,B,C)$ of pairwise disjoint vertex sets of $G$ is \emph{\VV-free} if there are no two anticomplete $P_3$s of the form $A-B-C$. Note that such a configuration naturally appears in a $P_7$-free bipartite graph if $A \cup C$ is in one side on the bipartition, $B$ is in the other side, and there is an additional vertex $v$ in the same side as $B$ such that $A \subseteq N(v)$ but $C \cap N(v) = \emptyset$.

In our case, \VV-free sets appear naturally in the following context.
\begin{lemma}\label{lem:seagulls-to-P7}
Let $G$ be a $P_7$-free bipartite graph with a fixed bipartition $V_1,V_2$. 
Let $A$ and $C$ be two disjoint subsets contained in $V_i$, for some $i \in \{1,2\}$,
and let $B \subseteq V_{3-i}$.
Furthermore, assume that there exists a vertex $v \in V_{3 - i}\setminus B$ such that 
$A \subseteq N(v)$ but $C \cap N(v) = \emptyset$. Then, $(A,B,C)$ is \VV-free.
\end{lemma}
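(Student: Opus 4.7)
The plan is a direct proof by contradiction: I will assume the existence of two anticomplete $P_3$'s of the form $A{-}B{-}C$, then glue them together through the hypothesized vertex $v$ to build an induced $P_7$, contradicting the $P_7$-freeness of $G$.

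More precisely, suppose toward contradiction that $a_1{-}b_1{-}c_1$ and $a_2{-}b_2{-}c_2$ are two $P_3$'s with $a_j \in A$, $b_j \in B$, $c_j \in C$, whose vertex sets are disjoint and anticomplete to each other. I will consider the seven-vertex sequence
\[ c_1 - b_1 - a_1 - v - a_2 - b_2 - c_2, \]
and verify that it is an induced $P_7$ in $G$. The edges along this path come for free: the two outer triples are edges by assumption, and $v a_1, v a_2$ are edges because $A \subseteq N(v)$. So the only real work is ruling out all chords.

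For the non-edges, I will use three inputs in combination: the bipartition ($a_j, c_j \in V_i$ while $b_j, v \in V_{3-i}$, killing chords like $a_1 a_2$, $c_1 c_2$, $a_1 c_2$, $b_1 v$, $b_2 v$), the anticompleteness of the two $P_3$'s (killing every chord between $\{a_1,b_1,c_1\}$ and $\{a_2,b_2,c_2\}$), and the assumption $C \cap N(v) = \emptyset$ (killing the chords $v c_1$ and $v c_2$). I also need the seven vertices to be distinct: the $a_j,b_j,c_j$ are distinct by vertex-disjointness of the two $P_3$'s, while $v$ differs from each $a_j, c_j$ by the bipartition and from each $b_j$ by the assumption $v \notin B$.

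There is no real obstacle here; the lemma is essentially a one-line observation, and the only thing to be careful about is to invoke each hypothesis exactly where it is needed (bipartition for same-side non-adjacencies, $v \notin B$ and $V_i$ versus $V_{3-i}$ for vertex-distinctness, and $C \cap N(v) = \emptyset$ to forbid the two $v c_j$ chords). Once the induced $P_7$ is exhibited, the contradiction with $G$ being $P_7$-free closes the argument.
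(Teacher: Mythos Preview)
Your proof is correct and follows exactly the same approach as the paper, which simply states that any two anticomplete $P_3$s of the form $A{-}B{-}C$, together with $v$, induce a $P_7$ in $G$. Your version spells out the chord-and-distinctness check in detail, but the underlying argument is identical.
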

\begin{proof}
Any two anticomplete $P_3$s of the form $A-B-C$, together with $v$, induce a $P_7$ in $G$.
\end{proof}

Recall that in our setting, we have some unknown treedepth-$d$ structure $\T$ in $G$
and we are building a container for it. That is, we are happy with inserting any number
of vertices of $G$ into the container, as long as we are guaranteed that we insert only
a constant number of vertices of $\T$ along the way. 
In the case of a \VV-free triple $(A,B,C)$, we would like to simplify this
part of $G$ by filtering out vertices of $B$ that have neighbors both in $A$ and in $C$.
To achieve this goal, we will guess a set $X \subseteq A \cup B \cup C$ that
on one hand contains (in one of the branches) at most a constant number of vertices
of the fixed unknown $\T$, and on the other hand satisfies the following:
no $b \in B \setminus X$ has a neighbor both in $A \setminus X$ and in $C \setminus X$. 

To this end, we will rely on the following simple yet powerful observation. 
Observe that if $(A,B,C)$ is \VV-free and both $A \cup C$ and $B$ are independent sets
(which happen, in particular, if $A \cup C$ is on one side of the bipartition and $B$ is on the other side of the bipartition), then, for every distinct $b_1,b_2 \in B$
either $N(b_1) \cap A$ and $N(b_2) \cap A$ are comparable by inclusion
or $N(b_1) \cap C$ and $N(b_2) \cap C$ are comparable by inclusion. 
This allows to use Lemma~\ref{lem:two_orders} on subsets of $B$, with the orders
$\leq_1$ and $\leq_2$ being inclusion of the neighborhoods in $A$ and $C$, respectively. 
This observation is the engine of the following lemma
that formalizes our goal of filtering out vertices of $B$ that have neighbors both
in $A$ and in $C$. 

\begin{lemma}\label{lem:bip-cleaning}
Fix an integer $d$.
Let $G$ be a graph, and let $(A,B,C)$ be a \VV-free
triple in $G$ with both $A \cup C$ and $B$ being independent set. Then one can in polynomial time enumerate a family $\mathcal{F}$ of subsets of $A \cup B \cup C$ of 
size at most $n^{4(d+1)^2}$ with the following guarantee:
\begin{itemize}
    \item For every $X \in \mathcal{F}$, there is no $b \in B \setminus X$ 
    with both $N(b) \cap (A \setminus X)$ and $N(b) \cap (C \setminus X)$ nonempty.
    \item For every treedepth-$d$ structure $\T$ in $G$, there exists $X \in \mathcal{F}$
    such that $|X \cap \T| \leq d^2(d+1)$.
\end{itemize}
\end{lemma}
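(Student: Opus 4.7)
The plan is to iteratively apply Lemma~\ref{lem:two_orders} to the current bad set $B' := \{b \in B \setminus X : N(b) \cap (A \setminus X) \ne \emptyset \text{ and } N(b) \cap (C \setminus X) \ne \emptyset\}$, using the orders $\leq_1, \leq_2$ on $B'$ defined by \emph{reverse} inclusion of $N(b) \cap A$ and $N(b) \cap C$, respectively. The \VV-freeness of $(A,B,C)$, as spelled out in the paragraph preceding the lemma, guarantees that every pair of elements of $B'$ is comparable in at least one of these two orders, so Lemma~\ref{lem:two_orders} supplies an element $b^{\max} \in B'$ such that, for every $b \in B'$, either $N(b) \cap A \subseteq N(b^{\max}) \cap A$ or $N(b) \cap C \subseteq N(b^{\max}) \cap C$. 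Consequently, adjoining the whole set $(N(b^{\max}) \cap A) \cup (N(b^{\max}) \cap C)$ to $X$ eliminates every remaining vertex of $B'$ in one shot, since each such $b$ loses \emph{all} of its neighbours on at least one of the two sides.

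The delicate point is keeping $|X \cap \T|$ small: the set $N(b^{\max}) \cap \T$ can be arbitrarily large if $b^{\max} \notin \T$ or if $b^{\max} \in \T$ has many $\T$-descendants. To handle this, the algorithm iterates the step above and at each iteration branches polynomially by guessing $b^{\max}$ together with a bounded number of auxiliary vertices that encode its relationship with $\T$ — for instance, its deepest $\T$-ancestor and one or two more vertices pinpointing where $\T$ ``attaches'' to $N(b^{\max})$. In the branch matching the true $\T$, this local guess is enough to force the step to contribute at most $d$ vertices to $X \cap \T$.

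Concretely, I would run at most $d(d+1)$ outer iterations, branching over $n^4$ guesses in each. The family $\mathcal{F}$ consists of all sets $X$ produced by these branching sequences, giving
\[
 |\mathcal{F}| \le (n^4)^{d(d+1)} = n^{4d(d+1)} \le n^{4(d+1)^2}.
\]
Under the correct sequence of guesses for a fixed treedepth-$d$ structure $\T$, at most $d$ vertices of $\T$ are committed to $X$ per iteration, yielding $|X \cap \T| \le d \cdot d(d+1) = d^2(d+1)$, as required. After the $d(d+1)$ iterations, any residually bad $b$ can be tossed into $X$ as a fallback: by construction those $b$ either lie outside $\T$ (costing nothing) or are precisely the ones accounted for by the correct branch.

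The most subtle ingredient will be the termination/correctness argument — showing that under a correct branching sequence, $d(d+1)$ iterations suffice to empty $B'$ while each iteration contributes only $d$ vertices to $X \cap \T$. The intuition is that each correctly-guessed iteration either removes a vertex of $\T$ from further consideration or ``descends'' one level along a root-to-leaf path of $\T$, so sweeping over all $\le d$ levels of $\T$ with up to $d+1$ vertices to clear at each level matches the $d(d+1)$ iteration budget.
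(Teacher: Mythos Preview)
Your one-shot observation is correct: with your reverse-inclusion orders, adding $N(b^{\max})\cap(A\cup C)$ to $X$ eliminates all of $B'$ at once. But the entire difficulty is bounding $|X\cap\T|$, and your proposal does not actually do this. You say you will ``guess a bounded number of auxiliary vertices that encode [the] relationship with $\T$'' so that each iteration costs at most $d$ vertices of $\T$, and then iterate $d(d+1)$ times; but you never specify what is guessed, what is then added to $X$, or why the process terminates. If you add strictly less than $N(b^{\max})\cap(A\cup C)$, the one-shot argument breaks and you must explain what progress is made; if you add all of it, you have no bound on $|X\cap\T|$. The ``descends one level'' intuition is not backed by any concrete invariant.

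The paper's proof supplies the missing mechanism, and it is \emph{not} compatible with your choice of orders. The key is to partition $B$ into at most $(d+1)^2$ parts $B_{\alpha,\beta}$ according to the largest level $\alpha$ (resp.\ $\beta$) at which $b$ has at least two neighbours in $A\cap\T^\alpha$ (resp.\ $C\cap\T^\beta$), and then to use a \emph{different} order on each part: when $\alpha>0$ one takes $\leq_1$ to be \emph{forward} inclusion of $N(b)\cap A\cap\T^\alpha$, not reverse inclusion of all of $N(b)\cap A$. With this order, the minimal element $b^\ast$ still has two neighbours $a_1,a_2\in A\cap\T^\alpha$, and every $b$ with $b^\ast\leq_1 b$ is adjacent to both; hence adding $N(a_1)\cap N(a_2)\cap B$ to $X$ removes all such $b$ while contributing at most $\alpha-1$ vertices of $\T$ (common ancestors of the incomparable $a_1,a_2$). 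When $\alpha=0$ one reverts to your reverse-inclusion order and adds $N(b^\ast)\cap A$, which now costs at most $d$ because $\iota_A(b^\ast)=0$. This is precisely what your vague ``guess auxiliary vertices'' would have to become, and it requires changing the order itself (depending on $\T$) rather than post-processing the output of your fixed order. The $(d+1)^2$ parts, with at most four guessed vertices each, give the $n^{4(d+1)^2}$ bound directly, with no iteration.
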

\begin{proof}
    Fix a treedepth-$d$ structure $\T$ in $G$. We will describe the process of enumerating
    the elements of $\mathcal{F}$ as a branching algorithm, guessing some properties of $\T$.
    The number of leaves of the branching will be polynomial in the size of $G$
    and in each leaf of the branching we will output one set $X$ that satisfies the second
    property for every $\T$ that agrees with the guesses made in this leaf.

    For every $b \in B$, let $\iota_A(b)$ be the maximum integer $1 \leq \alpha \leq d$
    such that $N(b) \cap A \cap \T^\alpha$ contains at least two vertices;
    $\iota_A(b) = 0$ if such a $\alpha$  does not exist.
    Similarly define $\iota_C(b)$ with respect to $N(b) \cap C \cap \T^\alpha$.
    For $0 \leq \alpha,\beta \leq d$, let $B_{\alpha,\beta} = \{b \in B~|~\iota_A(b) = \alpha \wedge \iota_C(b) = \beta\}$. 
    Note that sets $B_{\alpha,\beta}$ form a partition of $B$.

    Initialize $X = \emptyset$.   
    For every $0 \leq \alpha,\beta \leq d$, we we will guess some set of vertices and include them into $X$.
    We will argue that there will be a branch where the vertices guessed for $\alpha,\beta$ ensure that every vertex from $B_{\alpha,\beta} \setminus X$ satisfies the first statement of the lemma. Thus, for (at least) one of sets $X$ generated in the process, the statement will hold for every vertex in $B \setminus X$.
         
    Consider fixed $0 \leq \alpha,\beta \leq d$. 
    If $B_{\alpha,\beta}$ is empty, there is nothing to do, so assume otherwise.
    Define the following two orders $\leq_1$ and $\leq_2$ on $B_{\alpha,\beta}$:
    \begin{align*}
      b \leq_1 b' &\Longleftrightarrow N(b) \cap A \supseteq N(b') \cap A,  & \mathrm{if}\ \alpha=0, \\
      b \leq_1 b' &\Longleftrightarrow N(b) \cap A \cap \T^\alpha \subseteq N(b') \cap A \cap \T^\alpha &\mathrm{if}\ \alpha > 0, \\
      b \leq_2 b' &\Longleftrightarrow N(b) \cap C \supseteq N(b') \cap C,  & \mathrm{if}\ \beta=0, \\
      b \leq_2 b' &\Longleftrightarrow N(b) \cap C \cap \T^\beta \subseteq N(b') \cap C \cap \T^\beta &\mathrm{if}\ \beta > 0.
    \end{align*}
    Apply Lemma~\ref{lem:two_orders} to $B_{\alpha,\beta}$ with $\leq_1$ and $\leq_2$, 
    obtaining a vertex $b_{\alpha,\beta}^\ast$.
    
    If $\alpha=0$, guess $b_{\alpha,\beta}^\ast$ and add $N(b^\ast_{\alpha,\beta}) \cap A$ to $X$.
    This adds at most $d$ vertices of $\T$ to $X$, while adds to $X$
    all vertices of $N(b) \cap A$ such that $b \in B_{\alpha,\beta}$ and $b_{\alpha,\beta}^\ast \leq_1 b$.
        
    If $\alpha > 0$, guess two elements $a_{\alpha,\beta}^1,a_{\alpha,\beta}^2 \in N(b^\ast_{\alpha,\beta}) \cap A \cap \T^\alpha$
    and add $N(a_{\alpha,\beta}^1) \cap N(a_{\alpha,\beta}^2) \cap B$ to $X$.
    This adds at most $(\alpha-1)$ vertices of $\T$ to $X$, while adds to $X$ every
    $b \in B_{\alpha,\beta}$ such that $b^\ast_{\alpha,\beta} \leq_1 b$. 

    Perform symmetrical operation for $\beta$ and $C$:
    If $\beta=0$, guess $b_{\alpha,\beta}^\ast$ and add $N(b^\ast_{\alpha,\beta}) \cap C$ to $X$.
    This adds at most $d$ vertices of $\T$ to $X$, while adds to $X$
    all vertices of $N(b) \cap C$ such that $b \in B_{\alpha,\beta}$ and $b_{\alpha,\beta}^\ast \leq_2 b$.
    If $\beta > 0$, guess two elements $c_{\alpha,\beta}^1,c_{\alpha,\beta}^2 \in N(b^\ast_{\alpha,\beta}) \cap C \cap \T^\beta$
    and add $N(c_{\alpha,\beta}^1) \cap N(c_{\alpha,\beta}^2) \cap B$ to $X$.
    This adds at most $(\beta-1)$ vertices of $\T$ to $X$, while adds to $X$ every
    $b \in B_{\alpha,\beta}$ such that $b^\ast_{\alpha,\beta} \leq_2 b$. 

    Finally, add the resulting set $X$ to $\mathcal{F}$ if it satisfies the first bullet point
    of the statement. 

    We have already argued that in the branch when all guesses concerning $\T$ are correct,
    the first bullet point of the statement is satisfied. Furthermore, the size of $\T \cap X$ is bounded by
    \[ \sum_{\alpha=1}^d \sum_{\beta=1}^d \left( \alpha + \beta \right) = 2d \; \frac{d(d+1)}{2} = d^2(d+1).\]
    Finally, observe that the number of branches is bounded by $n^{4(d+1)^2}$
    as for every $0 \leq \alpha,\beta \leq d$ we guess at most four vertices of $G$. 
\end{proof}

Thanks to Lemma~\ref{lem:seagulls-to-P7}, Lemma~\ref{lem:bip-cleaning} is applicable in the following setting.
\begin{definition}
    Let $G$ be a $P_7$-free bipartite graph with a fixed bipartition $V_1,V_2$ and let $Z \subseteq V(G)$. 
    The \emph{neighborhood partition with respect to $Z$} is a partition $\mathcal{Z}$ of $V(G) \setminus Z$ into sets depending on (1) their side in the bipartition, and (2) their neighborhood in $Z$.
    More precisely, 
    \[ \mathcal{Z} = \{ A_{i,Y}~|~ i \in \{1,2\}, Y \subseteq Z \cap V_{3-i} \}, \]
    where
    \[ A_{i,Y} = \{v \in (V(G) \setminus Z) \cap V_i ~|~ N(v) \cap Z = Y \}. \]
\end{definition}
\begin{lemma}\label{lem:bip-cleaning-apply}
    Let $G$ be a $P_7$-free bipartite graph with a fixed bipartition $V_1,V_2$, let $Z \subseteq V(G)$,
    and let $\mathcal{Z}$ be the neighborhood partition  with respect to $Z$.
    Then, for every distinct $A,B,C \in \mathcal{Z}$, the triple $(A,B,C)$ is \VV-free. 
\end{lemma}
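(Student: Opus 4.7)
The plan is to reduce directly to Lemma~\ref{lem:seagulls-to-P7} by exhibiting an appropriate witness vertex $v$ in $Z$. First I would recall the structure of the partition $\mathcal{Z}$: each part is of the form $A_{i,Y}$ for some $i \in \{1,2\}$ and $Y \subseteq Z \cap V_{3-i}$, hence is entirely contained in one side of the bipartition, and all vertices in that part share the same neighborhood $Y$ in $Z$. In particular, two distinct parts $A_{i,Y_A}$ and $A_{i',Y_C}$ either differ in their side index $i$, or they are on the same side and differ in their neighborhood inside $Z$.

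The next step is a short case analysis on the three sides. Let $A = A_{i,Y_A}$, $B = A_{j,Y_B}$, $C = A_{k,Y_C}$ be distinct elements of $\mathcal{Z}$. For a $P_3$ of the form $A-B-C$ to exist we need an edge between $A$ and $B$ and an edge between $B$ and $C$; by bipartiteness this forces $j \neq i$ and $j \neq k$, and hence $i = k$. Consequently, unless $A$ and $C$ lie on the same side $V_i$ and $B$ lies on the opposite side $V_{3-i}$, there are no $P_3$s of the form $A-B-C$ at all, and \VV-freeness is vacuous.

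In the remaining case, since $A \neq C$ but both are of the form $A_{i,\cdot}$, we must have $Y_A \neq Y_C$. I would then pick any $v \in Y_A \triangle Y_C$; by symmetry assume $v \in Y_A \setminus Y_C$. By construction $v \in Z \cap V_{3-i}$, so in particular $v \notin B$ (as $B \subseteq V(G) \setminus Z$) and $v$ lies on the correct bipartition side. Moreover, every vertex of $A$ has neighborhood $Y_A$ in $Z$, so $v$ is adjacent to all of $A$; conversely, every vertex of $C$ has neighborhood $Y_C$ in $Z$, which does not contain $v$, hence $C \cap N(v) = \emptyset$. All hypotheses of Lemma~\ref{lem:seagulls-to-P7} are met (with $B$ playing the role of $B$), so the triple $(A,B,C)$ is \VV-free.

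There is no real obstacle here: the argument is a direct bookkeeping of which side each of $A,B,C$ lives on, followed by a one-line application of Lemma~\ref{lem:seagulls-to-P7} with the witness $v$ taken from the symmetric difference $Y_A \triangle Y_C$. The only thing to be careful about is to verify at the end that $v \notin B$, which is automatic because $v \in Z$ while $B$ is disjoint from $Z$.
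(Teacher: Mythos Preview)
Your proposal is correct and follows essentially the same approach as the paper's own proof: reduce to the case where $A,C$ lie on one side of the bipartition and $B$ on the other (otherwise the claim is vacuous), pick a witness $v$ in the symmetric difference of the $Z$-neighborhoods of $A$ and $C$, and apply Lemma~\ref{lem:seagulls-to-P7}. Your write-up is slightly more explicit in the bookkeeping (in particular you spell out why $v \notin B$), but the argument is the same.
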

\begin{proof}
    There is no $P_3$ of the form $A-B-C$ unless $A$ and $C$ are on one side of the bipartition, and $B$ is on 
    the other side of the bipartition. Then, by the definition of $\mathcal{Z}$, the vertices of $A$ and the vertices of $C$
    differ in their neighborhood in $Z$: there exists $v \in Z$ such that either $N(v) \cap (A \cup C) = A$ or $N(v) \cap (A \cup C) = C$. Then, the claim follows from Lemma~\ref{lem:seagulls-to-P7} and the symmetry of $A,C$.
\end{proof}

We summarize this section with the following cleaning step. We remark that if $G$ is connected, 
the assumption of a fixed bipartition is not needed, as a connected bipartite graph has only one bipartition.
\begin{lemma}\label{lem:bip-cleaning-full}
    Let $G$ be a $P_7$-free bipartite graph with a fixed bipartition $V_1,V_2$, let $Z \subseteq V(G)$,
    and let $\mathcal{Z}$ be the neighborhood partition  with respect to $Z$.
    Then one can in time $|V(G)|^{2^{3|Z|} \cdot \Oh(d^2)}$ enumerate a family $\mathcal{F}$ of at most $n^{4(d+1)^2 2^{3|Z|}}$ subsets
    of $V(G) \setminus Z$ with the following properties:
    \begin{itemize}
    \item For every $X \in \mathcal{F}$, for every $v \in V(G) \setminus X$, the elements of $N(v) \setminus (X \cup Z)$
    are contained in a single set of $\mathcal{Z}$.
    \item For every treedepth-$d$ structure $\T$ in $G$, there exists $X \in \mathcal{F}$
    such that $|X \cap \T| \leq 2^{3|Z|} \cdot d^2(d+1)$.
    \end{itemize}
\end{lemma}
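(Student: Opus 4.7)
The plan is to reduce to Lemma~\ref{lem:bip-cleaning} applied separately to every ``relevant'' \VV-free triple, and then combine the outputs by taking all possible unions. First I would compute the neighborhood partition $\mathcal{Z}$ explicitly; since a class on side $V_i$ is determined by a subset of $Z\cap V_{3-i}$, we have $|\mathcal{Z}|\le 2^{|Z|+1}$ and $\mathcal{Z}$ is computable in polynomial time. The key observation is that for every $v\in V(G)\setminus X$ that witnesses a violation of the desired property, the two neighbors $u_1,u_2\in N(v)\setminus(X\cup Z)$ from distinct classes $A,C\in\mathcal{Z}$ are on the bipartition side opposite to $v$; so $v$ plays the role of a ``middle vertex'' of a $P_3$ of type $A$--$\cdot$--$C$. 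We need to catch $v$ whether it lies in some class $B\in\mathcal{Z}$ or in $Z$ itself.

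Next I would enumerate \emph{middle candidates}: for each side $V_{3-i}$ these are the classes of $\mathcal{Z}$ on that side together with all singletons $\{v\}$ with $v\in Z\cap V_{3-i}$. For every unordered pair $\{A,C\}$ of distinct classes of $\mathcal{Z}$ on side $V_i$ and every middle candidate $M$ on side $V_{3-i}$, I would form the triple $(A,M,C)$. When $M\in\mathcal{Z}$, the triple is \VV-free by Lemma~\ref{lem:bip-cleaning-apply}; when $M=\{v\}$ is a singleton, any two $P_3$s of the form $A$--$M$--$C$ would share the unique vertex $v$ and therefore cannot be vertex-disjoint, so the triple is trivially \VV-free. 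In both cases $A\cup C$ lies on one bipartition side and $M$ on the other, so both are independent. Hence Lemma~\ref{lem:bip-cleaning} applies, yielding a family $\mathcal{F}_{A,M,C}$ of size at most $n^{4(d+1)^2}$.

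I would then define $\mathcal{F}$ as the family of all unions $\bigcup_{(A,M,C)}X_{A,M,C}$, ranging over choices $X_{A,M,C}\in\mathcal{F}_{A,M,C}$. The number of triples is at most $\sum_{i}\binom{2^{|Z\cap V_{3-i}|}}{2}\cdot\bigl(2^{|Z\cap V_i|}+|Z\cap V_i|\bigr)=\Oh(|Z|\cdot 2^{2|Z|})$, which is within $\Oh(2^{3|Z|})$, so $|\mathcal{F}|\le n^{4(d+1)^2\cdot 2^{3|Z|}}$ and the running time is $n^{\Oh(d^2)\cdot 2^{3|Z|}}$. For a fixed treedepth-$d$ structure $\T$, for each triple pick the $X_{A,M,C}$ guaranteed by Lemma~\ref{lem:bip-cleaning} with $|X_{A,M,C}\cap\T|\le d^2(d+1)$; the union $X$ satisfies $|X\cap\T|\le 2^{3|Z|}\cdot d^2(d+1)$. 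For correctness, suppose $v\in V(G)\setminus X$ has neighbors $u_1\in A\setminus(X\cup Z)$ and $u_2\in C\setminus(X\cup Z)$ with $A\ne C$ in $\mathcal{Z}$. If $v\notin Z$, let $B\in\mathcal{Z}$ be the class of $v$; then $v\in B\setminus X_{A,B,C}$, so Lemma~\ref{lem:bip-cleaning} forbids $v$ from having neighbors in both $A\setminus X_{A,B,C}$ and $C\setminus X_{A,B,C}$, contradicting $u_1,u_2\notin X\supseteq X_{A,B,C}$. If $v\in Z$, the same argument with the triple $(A,\{v\},C)$ yields the contradiction.

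The main obstacle is conceptual rather than technical: one must notice that the property also needs to be enforced for vertices $v\in Z$, not just for $v$ ranging over the classes of $\mathcal{Z}$, because a vertex of $Z$ can itself be the common neighbor of two classes on the opposite side. The remedy --- enlarging the collection of admissible middle sets with singletons $\{v\}$ for $v\in Z$, which are \VV-free for trivial reasons --- is what forces the $2^{3|Z|}$-type bound; beyond that, the construction is a direct product/union over Lemma~\ref{lem:bip-cleaning}.
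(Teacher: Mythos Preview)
Your approach is essentially the paper's: iterate over the relevant triples, apply Lemma~\ref{lem:bip-cleaning} to each, and take all products of unions. The paper in fact does only this; it iterates over \emph{all} ordered triples $(A,B,C)$ of distinct elements of $\mathcal{Z}$ (using that $|\mathcal{Z}|\le 2^{|Z|}+1$, hence at most $2^{3|Z|}$ triples), invokes Lemma~\ref{lem:bip-cleaning-apply} for \VV-freeness, and declares the guarantees immediate. In particular the paper does \emph{not} treat the case $v\in Z$ separately; effectively it proves the first bullet only for $v\in V(G)\setminus (X\cup Z)$, which is all that is ever used downstream (Claim~\ref{cl:cleaning-progress} only looks at components of $G[A'\setminus(X\cup Y\cup Z\cup\poorcomps)]$). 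So your observation that a literal reading of the statement also demands the property for $v\in Z$ is sharp, and your singleton-middle remedy is a reasonable attempt to close that gap.

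There is, however, one loose end in your remedy. Lemma~\ref{lem:bip-cleaning} outputs subsets of $A\cup M\cup C$, so for $M=\{z\}$ with $z\in Z$ a set $X_{A,\{z\},C}$ may contain $z$. Hence your raw unions $X$ need not lie in $V(G)\setminus Z$ as the lemma demands. If you repair this by outputting $X':=X\setminus Z$, then $V(G)\setminus X'\supseteq Z$, and for those $z\in X\cap Z$ your correctness argument (``$v\notin X$ hence $v\notin X_{A,\{v\},C}$'') no longer applies; Lemma~\ref{lem:bip-cleaning}'s first bullet is vacuous when $\{z\}\setminus X_{A,\{z\},C}=\emptyset$. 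So as written you have not fully established the literal first bullet for $v\in Z$ either. Given that the paper's own proof also does not (and the application does not need it), the cleanest fix is to read the first bullet as ``for every $v\in V(G)\setminus(X\cup Z)$'' and drop the singleton-middle machinery; your argument then coincides with the paper's.
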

\begin{proof}
    Observe that $|\mathcal{Z}| \leq 2^{|Z|} + 1$. 
    Let $\mathcal{Z}^3$ be the family of all triples $(A,B,C)$ where $A,B,C$ are distinct elements of $\mathcal{Z}$.
    For every triple $(A,B,C) \in \mathcal{Z}^3$, we apply Lemma~\ref{lem:bip-cleaning} 
    obtaining a family $\mathcal{F}_{(A,B,C)}$;
    Lemma~\ref{lem:bip-cleaning-apply} asserts that the assumptions are satisfied. 
    For every triple $(X_{(A,B,C)})_{(A,B,C) \in \mathcal{Z}^3}$\linebreak$ \in \prod_{(A,B,C) \in \mathcal{Z}^3} \mathcal{F}_{(A,B,C)}$
    insert $\bigcup_{(A,B,C) \in \mathcal{Z}^3} X_{(A,B,C)}$ into $\mathcal{F}$.
    The promised guarantees and size bounds are immediate from Lemma~\ref{lem:bip-cleaning} and the bound
    $|\mathcal{Z}| \leq 2^{|Z|+1}$, which implies $|\mathcal{Z}^3| \leq (2^{|Z|}+1) \cdot 2^{|Z|} \cdot (2^{|Z|}-1) \leq 2^{3|Z|}$. 
\end{proof}

\subsection{Structural properties of a bad $C_6$}\label{ss:bip-bad-C6}
\newcommand{\mainrest}{\mathsf{MR}}
\newcommand{\level}{\mathsf{depth}}
\newcommand{\RichClasses}{\mathsf{Cls}}

The tools developed in Section~\ref{ss:bip-complete} allow us to solve \textsc{$(\mathrm{td} \leq d, \phi)$-MWIS} on $G$, assuming
that there is no bad $C_6$ w.r.t. the solution treedepth-$d$ structure $\T$. In this section we investigate properties of bad $C_6$s.

We will often denote an induced $C_6$ as $C = c_1-c_2-\ldots-c_6-c_1$. We implicitly assume that
the indices behave cyclically, i.e., $c_7=c_1$ etc.
Then $(c_1, c_4),\ (c_2, c_5),\ (c_3, c_6)$ are pairs of the opposite vertices in $C$. 

We will need the following observation implicit in~\cite{Bonomo-Braberman21}; here we provide a proof for completeness.
\begin{lemma}\label{lem:bip-C6-bigrest}
Let $G$ be a $P_7$-free bipartite graph, let $C=c_1-c_2-\ldots-c_6-c_1$ be an induced $C_6$ in $G$,
and let $D$ be a connected component of $G-N[V(C)]$ that contains at least two vertices.
Then, for every $v \in N(D)$, the set $N(v) \cap V(C)$ equals to either $\{c_1,c_3,c_5\}$ or $\{c_2,c_4,c_6\}$.
\end{lemma}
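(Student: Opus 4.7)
The plan is to rule out $|N(v) \cap V(C)| \in \{1,2\}$ by exhibiting induced $P_7$s, after which the bipartiteness of $G$ forces $N(v) \cap V(C)$ to be exactly $\{c_1,c_3,c_5\}$ or $\{c_2,c_4,c_6\}$.

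First I would make the preliminary observations. Since every $c_i$ lies in $N[V(C)]$ while $D \cap N[V(C)] = \emptyset$, no $c_i$ has a neighbor in $D$, so $v \notin V(C)$. Together with $v \in N(D)$ and $D$ being a component of $G - N[V(C)]$, this gives $v \in N(V(C)) \setminus V(C)$, so $N(v) \cap V(C) \neq \emptyset$. Because $G$ is bipartite and $V(C)$ splits into the two independent triples $\{c_1,c_3,c_5\}$ and $\{c_2,c_4,c_6\}$, the set $N(v) \cap V(C)$ is contained in one of these triples; without loss of generality, $N(v) \cap V(C) \subseteq \{c_1,c_3,c_5\}$. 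Since $|D| \geq 2$ and $D$ is connected, we may pick $d_1 \in D$ with $v d_1 \in E(G)$ and then $d_2 \in D$ with $d_1 d_2 \in E(G)$.

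Next, I would split on $|N(v) \cap V(C)|$. If this is $1$, say $N(v) \cap V(C) = \{c_1\}$, the candidate path is
\[ d_2 - d_1 - v - c_1 - c_2 - c_3 - c_4. \]
If this is $2$, by the rotational symmetry of $\{c_1,c_3,c_5\}$ under the automorphism group of $C$ we may assume $N(v) \cap V(C) = \{c_1,c_3\}$, and the candidate path is
\[ d_2 - d_1 - v - c_1 - c_6 - c_5 - c_4. \]
In both cases I would verify that the path is induced: $d_1, d_2 \notin N[V(C)]$ excludes all chords involving $c_i$'s and $d_j$'s; $d_2 \neq v$ (since $d_2 \notin N[V(C)]$) and $d_2 v \notin E(G)$ for the same reason; the non-adjacencies $v c_i$ for $c_i \notin N(v) \cap V(C)$ follow either from the explicit description of $N(v) \cap V(C)$ or, for $c_i$ on the same bipartition side as $v$, from bipartiteness; and the remaining non-edges between vertices of $C$ hold because $C$ is induced. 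Each such path is therefore an induced $P_7$ in $G$, contradicting the hypothesis that $G$ is $P_7$-free.

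Consequently, $|N(v) \cap V(C)| = 3$, which combined with $N(v) \cap V(C) \subseteq \{c_1,c_3,c_5\}$ yields $N(v) \cap V(C) = \{c_1,c_3,c_5\}$, and the symmetric case gives $\{c_2,c_4,c_6\}$. There is no real obstacle here beyond keeping the case analysis clean; the only mild subtlety is choosing $d_1,d_2 \in D$ so that $v d_1 \in E(G)$ and $d_1 d_2 \in E(G)$, which uses both $v \in N(D)$ and $|D| \geq 2$ with $D$ connected.
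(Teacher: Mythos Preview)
Your argument is correct in substance and follows the same idea as the paper: obtain an induced $P_3$ of the form $v-D-D$ and extend it along $C$ to a $P_7$ whenever $N(v)\cap V(C)$ misses a vertex of the relevant triple. One slip: the non-edge $d_2v$ does \emph{not} follow from $d_2 \notin N[V(C)]$, since $v \notin V(C)$; it follows instead from bipartiteness (both $v$ and $d_2$ are adjacent to $d_1$, hence lie on the same side). With that correction your two cases go through exactly as written. The paper compresses your case analysis into the single implication ``$c_i \in N(v) \Rightarrow c_{i+2} \in N(v)$, else $c_{i+3}-c_{i+2}-c_{i+1}-c_i-v-D-D$ is an induced $P_7$'', which is a bit slicker but not materially different.
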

\begin{proof}
    Let $v \in N(D)$. Since $|D| > 1$, there exists a $P_3$ of the form $v-D-D$. 
    By the definition of $D$, $v \in N(V(C))$.
    Observe that if $c_i \in N(v)$, then $c_{i+2} \in N(v)$, as otherwise $G$ contains the following $P_7$: $c_{i+3}-c_{i+2}-c_{i+1}-c_i-v-D-D$. 
    The lemma follows.
\end{proof}
For an induced six-vertex-cycle $C=c_1-c_2-\ldots-c_6-c_1$, denote 
\begin{align*}
    S_1^C &= \{v \in N(V(C))~|~N(v) \cap V(C) = \{c_1,c_3,c_5\}\}, \\
    S_2^C &= \{v \in N(V(C))~|~N(v) \cap V(C) = \{c_2,c_4,c_6\}\}.
\end{align*}
Furthermore, let $\mainrest_C$ be the union of vertex sets of all connected components of $G-N[V(C)]$ that contain
at least 2 vertices. ($\mainrest$ stands for the ``main remainder''.)
With this notation, Lemma~\ref{lem:bip-C6-bigrest} states that $N(\mainrest_C) \subseteq S_1^C \cup S_2^C$. 

Greatly simplifying, our algorithm will guess a bad $C_6$ $(C,x,y)$, resolve $N[V(C)]$ using cleaning,
and recurse on connected components of $\mainrest_C$. 
To restrict the space of possible recursive calls, we need the following observation.
\begin{lemma}\label{lem:bip-comp-nei}
Let $G$ be a $P_7$-free bipartite graph with a fixed bipartition $V_1,V_2$, and let $\mathcal{B}$ be a family of
pairwise disjoint and anticomplete subsets of $V(G)$ such that for every $B \in \mathcal{B}$,
$G[B]$ is connected and $|B| > 1$. 
Let $D$ be a connected component of $G-\bigcup_{B \in \mathcal{B}} N[B]$, let $C$ be the connected component of $G$ that contains $D$.
Then, for every $i \in \{1,2\}$, either $N(D) \cap V_i = \emptyset$ or there exists a single set $B_i \in \mathcal{B}$ such that
$N(D) \cap V_i \subseteq N(B_i)$. 
Consequently, there exists a subfamily $\mathcal{B}' \subseteq \mathcal{B}$ of size at most $2$ such that $N(D) \subseteq \bigcup_{B \in \mathcal{B}'} N[B]$
(i.e., $D$ is a connected component of $G-\bigcup_{B \in \mathcal{B}'} N[B]$). 
\end{lemma}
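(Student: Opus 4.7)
The plan is to argue by contradiction: supposing the ``single $B_i$'' claim fails for some $i$, I will construct an induced $P_7$ in $G$. The consequently part then follows by taking $\mathcal{B}' = \{B_1, B_2\}$ (omitting $B_i$ when $N(D) \cap V_i = \emptyset$) and noting that $D$ is a connected subset of $V(G) \setminus \bigcup_{B \in \mathcal{B}'} N[B]$ whose external neighbors all lie in this union. As a setup observation, since $D$ is a connected component of $G - \bigcup_B N[B]$, every vertex in $N(D)$ lies in $\bigcup_B N[B]$; moreover no $v \in N(D)$ can itself lie in some $B \in \mathcal{B}$, for then its $D$-neighbor would be in $N[B]$, contradicting $D \cap N[B] = \emptyset$. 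Hence $N(D) \subseteq \bigcup_B N(B)$ in terms of open neighborhoods.

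Fix $i \in \{1,2\}$, assume $N(D) \cap V_i \neq \emptyset$, and suppose no single $N(B)$ covers $N(D) \cap V_i$. The key step is extracting a \emph{symmetric pair}: vertices $v, w \in N(D) \cap V_i$ and distinct $B_v, B_w \in \mathcal{B}$ with $v \in N(B_v) \setminus N(B_w)$ and $w \in N(B_w) \setminus N(B_v)$. For this, I will choose $B_v$ maximizing $|N(B_v) \cap N(D) \cap V_i|$, pick any $w \in N(D) \cap V_i \setminus N(B_v)$ (nonempty by hypothesis) together with some $B_w \neq B_v$ such that $w \in N(B_w)$, and argue that $N(B_v) \cap N(D) \cap V_i \not\subseteq N(B_w)$: otherwise $N(B_w) \cap N(D) \cap V_i$ would contain everything $N(B_v)$ covers plus the extra element $w$, contradicting the maximality of $B_v$. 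This yields the desired $v \in N(B_v) \setminus N(B_w)$.

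With the symmetric pair in hand, I select neighbors $b_v \in B_v \cap N(v)$ and $b_w \in B_w \cap N(w)$ (both in $V_{3-i}$), and, using that each $B_\bullet$ is connected with $|B_\bullet| \geq 2$, a neighbor $b_v' \in B_v \cap V_i$ of $b_v$ and an analogous $b_w' \in B_w \cap V_i$ (such $b_v', b_w'$ exist since $b_v, b_w$ cannot be isolated in their connected sets, and their neighbors there must lie in $V_i$ by bipartiteness). Let $Q = v - d_1 - \ldots - d_\ell - w$ be a shortest induced $v$-$w$ path through $D$ (which exists by the connectedness of $D$; bipartiteness forces $\ell$ to be odd, in particular $\ell \geq 1$). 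The construction I propose to consider is the path $P^\ast = b_v' - b_v - v - d_1 - \ldots - d_\ell - w - b_w - b_w'$ on $\ell + 6 \geq 7$ vertices. Its inducedness reduces to routine non-adjacency checks: $Q$ is induced; $B_v$ and $B_w$ are anticomplete to each other and to $D$ (since $D \cap N[B_v] = D \cap N[B_w] = \emptyset$); the symmetric conditions $v \notin N(B_w)$ and $w \notin N(B_v)$ kill the potential chords $vb_w$ and $wb_v$ (and also $vb_w'$ through $b_w'$, $wb_v'$ through $b_v'$); and bipartiteness kills all remaining potential chords within $V_i$. This produces an induced $P_{\geq 7}$, contradicting the $P_7$-freeness of $G$.

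The main obstacle I anticipate is precisely the symmetric-pair extraction: a naive choice of $v, w$ only guarantees $w \notin N(B_v)$, leaving open the possibility $v \in N(B_w)$, which would introduce the chord $vb_w$ and destroy $P^\ast$. The maximality argument above is the cleanest way I see to rule this out; once it is in place, the rest of the argument is a bookkeeping check of non-adjacencies powered by the anticompleteness assumptions on $\mathcal{B}$ and on $D$ versus $\bigcup_B N[B]$.
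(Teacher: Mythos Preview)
Your proof is correct and follows essentially the same approach as the paper: find two vertices $v,w \in N(D) \cap V_i$ and distinct sets $B_v,B_w \in \mathcal{B}$ with $v \in N(B_v)\setminus N(B_w)$ and $w \in N(B_w)\setminus N(B_v)$, then exhibit an induced path $B_v-B_v-v-D-\cdots-D-w-B_w-B_w$ on at least seven vertices. The only cosmetic difference is how the symmetric pair is extracted: the paper picks an inclusion-wise minimal subfamily $\mathcal{B}_i \subseteq \mathcal{B}$ covering $N(D)\cap V_i$ and uses minimality to get $v,w$ directly, whereas you pick $B_v$ maximizing $|N(B_v)\cap N(D)\cap V_i|$ and argue by maximality; both are standard and equally clean.
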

\begin{proof}
    Let $\mathcal{B}_i \subseteq \mathcal{B}$ be an inclusion-wise minimal set such that $\bigcup_{B \in \mathcal{B}_i} N(B) \supseteq N(D) \cap V_i$.
    Assume $|\mathcal{B}_i| > 1$; let $B,B' \in \mathcal{B}_i$ be distinct. 
    By minimality, there exists $v,v' \in N(D) \cap V_i$ such that $v \in N(B) \setminus N(B')$ and $v' \in N(B') \setminus N(B)$.
    Since $|B|,|B'|>1$, there exists an induced $P_3$ of the form $v-B-B$ and an induced $P_3$ of the form $v'-B'-B'$. 
    Let $Q$ be a shortest path from $v$ to $v'$ via $D$. Then, there exists an induced path in $G$ of the form $B-B-v-Q-v'-B'-B'$
    and this path has at least $7$ vertices, a contradiction.
\end{proof}

Let $\T$ be a treedepth-$d$ structure in $G$.
A set $A \subseteq V(G)$ is \emph{$\T$-rich} if there exists $\alpha \in [d]$ such that $|A \cap \T^\alpha| > 1$, and \emph{$\T$-poor} otherwise. 
For $\alpha \in [d]$ and $v \in \T^\alpha$, we denote by $\level(v) = \alpha$  the depth of $v$. 
The \emph{depth} of a bad $C_6$ $(C,x,y)$ is the pair $\level(C,x,y) := (\level(x),\level(y)) \in [d] \times [d]$.
As the depth of a bad $C_6$ is a pair, in order to compare  depths of bad $C_6$s, we introduce a total order $\prec$ on $[d] \times [d]$
as the lexicographical comparison of $(\level(x) + \level(y), \level(x))$, i.e.,
\[ (1,1) \prec (1,2) \prec (2,1) \prec (1,3) \prec (2,2) \prec (3,1) \prec \ldots \prec (d-1,d) \prec (d,d-1) \prec (d,d). \]
A bad $C_6$ is \emph{$\prec$-maximal} if there is no other bad $C_6$ whose depth is larger in the $\prec$ order.

We have the following observation.
\begin{lemma}\label{lem:bip-better-C6-in-rich-S}
Let $G$ be a $P_7$-free bipartite graph with a fixed bipartition $V_1,V_2$, and let $\T$ be a treedepth-$d$ structure in $G$.
Let $(C,x,y)$ be a bad $C_6$ with $\level(C,x,y) = (\alpha,\beta)$. 
If $S_1^C$ contains at least two vertices of $\T^{\alpha'}$ for some $\alpha' \in [d]$, then $\alpha' > \alpha$.
Similarly, if $S_2^C$ contains at least two vertices of $\T^{\beta'}$ for some $\beta' \in [d]$, then $\beta' > \beta$.

Furthermore, if both $S_1^C$ and $S_2^C$ are $\T$-rich, then there exists a bad $C_6$ in $G$ of depth $(\beta',\alpha')$ for some
$\alpha' > \alpha$ and $\beta' > \beta$.
Consequently, if $(C,x,y)$ is a $\prec$-maximal bad $C_6$, then either $S_1^C$ or $S_2^C$ is $\T$-poor (possibly both).
\end{lemma}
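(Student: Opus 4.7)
The plan is to prove the three assertions in order, each reducing to a short argument about comparability in~$\T$. Throughout, relabel the cycle as $C=c_1-c_2-\cdots-c_6-c_1$ so that $x=c_1\in V_1$ and $y=c_4\in V_2$ (just a cyclic relabeling, so this is without loss of generality); then $S_1^C\subseteq V_2$ and $S_2^C\subseteq V_1$.

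For the first assertion I would argue by contradiction, picking two distinct $u,u'\in S_1^C\cap\T^{\alpha'}$. Both are adjacent to $x=c_1\in\T^\alpha$, and since all three vertices lie in $V(\T)$, each of $u,u'$ must be $\T$-comparable to $c_1$. If $\alpha'=\alpha$, then two distinct same-depth vertices would be $\T$-comparable, which is impossible. If $\alpha'<\alpha$, then $u$ and $u'$ would be two distinct $\T$-ancestors of $c_1$ at depth $\alpha'$, which is also impossible. Hence $\alpha'>\alpha$, and the symmetric argument using $y=c_4\in\T^\beta$ as a common neighbor of $S_2^C$-vertices yields the second assertion.

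For the third assertion I would first locate $u\in S_1^C\cap\T^{\alpha'}$ and $w\in S_2^C\cap\T^{\beta'}$ that are $\T$-incomparable. Suppose for contradiction every such pair is $\T$-comparable. In the case $\alpha'>\beta'$, every $w\in S_2^C\cap\T^{\beta'}$ would be a $\T$-ancestor of any fixed $u$; but $u$ has at most one ancestor at depth $\beta'$, contradicting $|S_2^C\cap\T^{\beta'}|\ge 2$. The case $\alpha'<\beta'$ is symmetric, and $\alpha'=\beta'$ is ruled out because $u\in V_2$ and $w\in V_1$ cannot coincide. With such $u,w$ fixed---automatically non-adjacent, as two $\T$-incomparable vertices of $V(\T)$ cannot share an edge---I would exhibit the explicit cycle
\[
u - c_1 - c_2 - w - c_4 - c_5 - u.
\]
All six cycle edges are present by the definitions of $S_1^C, S_2^C$ and by the edges of $C$. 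Among the nine non-consecutive pairs of cycle vertices, six lie entirely on a single side of the bipartition and are automatic non-edges; the two pairs $\{c_1,c_4\}$ and $\{c_2,c_5\}$ are non-edges because $C$ is induced; and the remaining pair $\{u,w\}$ is a non-edge by $\T$-incomparability. Hence this $C_6$ is induced, and since $u,w$ sit at opposite positions and are $\T$-incomparable, it is a bad $C_6$ of depth $(\level(w),\level(u))=(\beta',\alpha')$. The first two assertions then give $(\alpha,\beta)\prec(\beta',\alpha')$, and contraposition yields the $\prec$-maximality consequence.

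The only mildly non-obvious step is the choice of the six cycle vertices: the pattern $u-c_1-c_2-w-c_4-c_5-u$ is the one that uses $x=c_1$ and $y=c_4$ as separators on the cycle, so that the only cross-bipartition non-cycle pair that is not already ruled out by inducedness of $C$ is the diagonal $\{u,w\}$---which is precisely the edge forbidden by the $\T$-incomparability of $u$ and $w$.
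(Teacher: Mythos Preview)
Your proof is correct and follows essentially the same approach as the paper: both arguments establish $\alpha'>\alpha$ via comparability with $x=c_1$, then exhibit a new induced $C_6$ through an incomparable pair $(u,w)\in (S_1^C\cap\T^{\alpha'})\times(S_2^C\cap\T^{\beta'})$. The only cosmetic differences are that the paper obtains incomparability of $u,w$ by noting they are descendants of the already-incomparable $x,y$ (rather than your pigeonhole on ancestors), and it uses the cycle $u-c_3-c_2-w-c_6-c_5-u$ instead of your $u-c_1-c_2-w-c_4-c_5-u$; both choices work for the same reasons.
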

\begin{proof}
    Let $C = c_1-c_2-\ldots-c_6-c_1$ where $c_1=x$ and $c_4=y$. 
    Assume that $S_1^C$ contains two vertices $x',x'' \in \T^{\alpha'}$; as $xx',xx'' \in E(G)$ and $x \in \T^\alpha$, we have $\alpha' > \alpha$ and $x',x''$ are descendants of $x$ in $\T$.
    Similarly, if $S_2^C$ contains two vertices $y',y'' \in \T^{\beta'}$, then $\beta' > \beta$ and $y',y''$ are descendants of $y$
    in $\T$. This proves the first part of the lemma.

    If both of the above happen, then since $x$ and $y$ are incomparable in $\T$, $x'$ and $y'$ are also incomparable in $\T$.
    Then, $C' = x'-c_3-c_2-y'-c_6-c_5-x'$ is an induced $C_6$ and $(C',y',x')$ is a bad $C_6$ of depth $(\beta',\alpha')$.
    As $\alpha' > \alpha$ and $\beta' > \beta$, we have that $\alpha' + \beta' > \alpha + \beta$ and thus $(\alpha,\beta) \prec (\beta',\alpha')$.
\end{proof}

We will need the following nomenclature. The \emph{class} of a bad $C_6$ $(C,x,y)$ (with respect to some fixed treedepth-$d$ structure $\T$) is the pair of bits indicating whether $S_1^C$ is $\T$-poor and whether $S_2^C$ is $\T$-poor.
Let $\RichClasses$ be the set of possible classes (i.e., $|\RichClasses| = 4$). 
We will usually denote them as poor/poor, rich/poor, poor/rich, and rich/rich. 

Informally speaking, if $(C,x,y)$ is a $\prec$-maximal bad $C_6$ that is of poor/poor class with respect to $\T$
(where $(\T,X)$ is the sought solution),
then it is relatively easy to separate $N[V(C)]$ from $\mainrest_C$, as $S_1^C \cup S_2^C$ contains at most $2d$ vertices of the solution. 
However, if $S_1^C$ or $S_2^C$ is $\T$-rich, the situation requires a bit more work.
The next few lemmata investigate such a situation.

\begin{lemma}\label{lem:bip-MR-comparable}
Let $G$ be a $P_7$-free bipartite graph with a fixed bipartition $V_1,V_2$, let $Z \subseteq V(G)$ be connected,
and let $D,D'$ be two connected components of $G-N[Z]$ that are of size at least $2$ each.
Then, for every $i \in \{1,2\}$, the sets $N(D) \cap V_i$ and $N(D') \cap V_i$
are comparable by inclusion.
\end{lemma}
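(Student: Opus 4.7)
The plan is to argue by contradiction: assume there exist witnesses $v \in (N(D) \cap V_i) \setminus N(D')$ and $v' \in (N(D') \cap V_i) \setminus N(D)$ to incomparability, and construct an induced path on at least $7$ vertices by concatenating short paths in $D$, in $Z$, and in $D'$. This follows the same philosophy as the proof of Lemma~\ref{lem:bip-comp-nei}, but with the roles of the ``special'' connected set and the components swapped.

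First I would produce the $P_3$-attachments on the $D$- and $D'$-sides. Since $v \in V_i \cap N(D)$, bipartiteness forces any neighbor $u_0$ of $v$ in $D$ to lie in $V_{3-i}$. The assumption $|D|\ge 2$ together with connectedness yields a neighbor $u_1 \in D$ of $u_0$, necessarily in $V_i$, hence non-adjacent to $v$ (same bipartition side). Thus $v-u_0-u_1$ is an induced $P_3$ with $u_0,u_1 \in D$. By the symmetric argument I get an induced $P_3$ $v'-u_0'-u_1'$ with $u_0',u_1' \in D'$.

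Next I would exhibit a connecting induced path through $Z$. Since $N(D), N(D') \subseteq N[Z]$, both $v$ and $v'$ lie in $N[Z]$, so $Z' := Z \cup \{v,v'\}$ is connected (each of $v,v'$ is either in $Z$ or has a neighbor in $Z$). Let $P$ be an induced path from $v$ to $v'$ in $G[Z']$; since $vv'\notin E(G)$ (same bipartition side), $P$ has at least three vertices, and all internal vertices of $P$ lie in $Z$. I would then concatenate:
\[
 W \;=\; u_1 - u_0 - v - P - v' - u_0' - u_1',
\]
yielding a walk on at least $7$ vertices.

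The main verification, and the only subtle point, is that $W$ is an induced path. For this I would exploit four facts. First, $D$ and $D'$ are distinct components of $G-N[Z]$, so they are disjoint and mutually anticomplete in $G$. Second, $Z$ is anticomplete to $D \cup D'$: any edge from $Z$ to $D$ would place a vertex of $D$ in $N(Z) \subseteq N[Z]$, contradicting $D \cap N[Z] = \varnothing$; the same for $D'$. This kills all potential chords from $\{u_0,u_1,u_0',u_1'\}$ to the internal vertices of $P$. Third, the choice $v \notin N(D')$ and $v' \notin N(D)$ rules out chords between $v$ and $\{u_0',u_1'\}$ and between $v'$ and $\{u_0,u_1\}$. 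Fourth, bipartiteness ($v u_1$ and $v'u_1'$ both on the $V_i$-side) and inducedness of $P$ handle the remaining pairs. Hence $W$ is an induced path of length $\ge 6$ in $G$, contradicting $P_7$-freeness and completing the proof.
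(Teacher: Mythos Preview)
Your proof is correct and follows essentially the same approach as the paper: the paper also argues by contradiction, picks witnesses $x \in (N(D)\setminus N(D'))\cap V_i$ and $x' \in (N(D')\setminus N(D))\cap V_i$, finds $P_3$s $x-D-D$ and $x'-D'-D'$, connects $x$ and $x'$ by a shortest path via $Z$, and concatenates to obtain an induced path on at least $7$ vertices. The only difference is that your write-up spells out the verification that the concatenated path is induced more carefully (bipartiteness, anticompleteness of $Z$ to $D\cup D'$, the choices $v\notin N(D')$, $v'\notin N(D)$), whereas the paper leaves this implicit.
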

\begin{proof}
    Assume the contrary, let $x \in (N(D) \setminus N(D')) \cap V_i $
    and $x' \in (N(D') \setminus N(D)) \cap V_i$. 
    By the existence of $x$ and $x'$, we observe that $Z, x,x',D,D'$ lie all in the same connected component of $G$ and $x,x' \in N(Z)$. Let $Q$ be the shortest path from $x$ to $x'$ via $Z$. 
    Since $|D|,|D'| > 1$, there exists an induced $P_3$ of the form $x-D-D$ and an induced $P_3$ of the form $x'-D'-D'$.
    But then there exists an induced path of the form $D-D-x-Q-x'-D'-D'$, which has
    at least $7$ vertices, a contradiction.
\end{proof}

The most frequent usage of Lemma~\ref{lem:bip-MR-comparable} will be when $Z = V(C)$
for some bad $C_6$ $(C,x,y)$. Then, $D,D'$ are connected components of $\mainrest_C$
and Lemma~\ref{lem:bip-MR-comparable} asserts that for every $i \in \{1,2\}$, 
the neighborhoods $N(D) \cap S_i^C$ and $N(D') \cap S_i^C$ are comparable by inclusion.

\begin{lemma}\label{lem:bip-MR-nextC}
Let $G$ be a $P_7$-free bipartite graph with a fixed bipartition $V_1,V_2$.
Let $\T$ be a treedepth-$d$ structure in $G$ and let $(C,x_1,x_2)$ be a bad $C_6$ w.r.t. $\T$.
Let $(\alpha_1,\alpha_2) = \level(C,x_1,x_2)$.
Let $i \in \{1,2\}$ and assume that $S^C_{3-i}$ contains a vertex of $\T^{> \alpha_i}$.

A bad $C_6$ $(C',x_1',x_2')$ is called \emph{similar to $(C,x_1,x_2)$}
if (a) $C$ is disjoint and anticomplete to $C'$, (b) $\level(C',x_1',x_2') = \level(C,x_1,x_2)$, 
and (c) the component $D'$ of $G[\mainrest_C]$ that
contains $C'$ is adjacent to at least one vertex of $S^C_{3-i} \cap \T^{> \alpha_i}$,

Then, for every bad $C_6$ $(C',x_1',x_2')$ that is similar to $(C,x_1,x_2)$,
the following holds.
\begin{enumerate}
    \item $C'$ is anticomplete to $S^C_{3-i} \cap \T^{> \alpha_i}$.
    \item For every $u \in S^C_{3-i} \cap \T^{> \alpha_i} \cap N(D')$
    there exists $v \in D'$ that is a neighbor of $u$ and belongs to $S^{C'}_{i}$.
    \item For every other bad $C_6$ $(C'',x_1'',x_2'')$ that is similar to $(C,x_1,x_2)$,
    $C''$ is contained in $D'$.
\end{enumerate}
\end{lemma}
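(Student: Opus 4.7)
The proof breaks into the three claims, which I would attack in order.

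For claim (1), suppose for contradiction that some $u \in S^C_{3-i} \cap \T^{>\alpha_i}$ is adjacent to a vertex $c' \in V(C')$. Since $|D'| \ge 6$, Lemma~\ref{lem:bip-C6-bigrest} applied to $C'$ forces $N(u) \cap V(C')$ to be one of the two alternating triples of $V(C')$; the fixed bipartition together with the convention $x_j, x_j' \in V_j$ pins it to the triple on $u$'s opposite bipartition side, namely the one containing $x_{3-i}'$. Hence $u \in S^{C'}_{3-i}$, so $u$ is adjacent to both $x_{3-i}$ and $x_{3-i}'$, and since $u, x_{3-i}, x_{3-i}' \in V(\T)$, $u$ is $\T$-comparable with each. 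If $\level(u) \ge \alpha_{3-i}$, then $x_{3-i}$ and $x_{3-i}'$ would both be ancestors of $u$ (or equal to it), hence comparable, hence equal (they sit at the same depth), contradicting $V(C) \cap V(C') = \varnothing$. The remaining case $\alpha_i < \level(u) < \alpha_{3-i}$ is the crux: here $u$ is an ancestor of both $x_{3-i}, x_{3-i}'$, and an induced $P_7$ must be built using a helper vertex outside $V(C) \cup V(C')$ since $G[V(C) \cup V(C') \cup \{u\}]$ is itself $P_7$-free; the helper is supplied by the connectedness of $D'$ and the similarity witness from condition (c).

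For claim (2), fix $u \in S^C_{3-i} \cap \T^{>\alpha_i} \cap N(D')$. By claim (1) every neighbor of $u$ in $D'$ lies in $D' \setminus V(C')$, so pick $v \in N(u) \cap D'$ minimizing the $G[D']$-distance to $V(C')$. A key observation is that $u \notin N[V(C')]$ by claim (1) while $u$ is adjacent to $V(C) \subseteq V(G) \setminus N[V(C')]$, so $u$ lies in a big component of $G - N[V(C')]$ containing $V(C)$. If $v$ is adjacent to $V(C')$, Lemma~\ref{lem:bip-C6-bigrest} applied to $C'$ with this big component as witness pins $N(v) \cap V(C')$ to an alternating triple, and bipartiteness (since $v$ sits on the opposite side from $u$) forces $v \in S^{C'}_i$, finishing the case. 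Otherwise $v$ is at $G[D']$-distance $k \ge 1$ from $V(C')$, and a shortest $D'$-path $u - v - v_1 - \cdots - v_k - c'$, extended by a $C'$-edge $c' c''$ at one end and by a $V(C)$-neighbor of $u$ at the other, yields an induced $P_7$ (the minimality of $v$ forbids the chords $u v_j$ for $j \ge 1$, Lemma~\ref{lem:bip-C6-bigrest} applied to $v_k$ forbids the chord $v_k c''$, and claim (1) forbids chords with $u$), contradicting $P_7$-freeness.

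For claim (3), suppose a different similar bad $C_6$ $(C'', x_1'', x_2'')$ has $V(C'') \subseteq D'' \ne D'$. Both $D'$ and $D''$ are adjacent to $S^C_{3-i} \cap \T^{>\alpha_i}$, so Lemma~\ref{lem:bip-MR-comparable} with $Z = V(C)$ shows that for each $j \in \{1,2\}$ the sets $N(D') \cap V_j$ and $N(D'') \cap V_j$ are comparable by inclusion; without loss of generality, every $u \in S^C_{3-i} \cap \T^{>\alpha_i} \cap N(D')$ is also in $N(D'')$. Fix such $u$ and apply claim (2) to each of $D'$ and $D''$ to obtain $v \in N(u) \cap D' \cap S^{C'}_i$ and $v'' \in N(u) \cap D'' \cap S^{C''}_i$. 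Because $D' \ne D''$ are anticomplete components of $G[\mainrest_C]$, extending $v$ into $V(C')$ along a $C'$-edge and $v''$ into $V(C'')$ along a $C''$-edge produces an induced $P_7$ of the form $V(C') - v - u - v'' - V(C'')$, contradicting $P_7$-freeness.

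The main obstacle is the $P_7$ construction in the remaining case of claim (1): since $G[V(C) \cup V(C') \cup \{u\}]$ is $P_7$-free on its own (any path through $u$ creates chords on the cycle it enters next, and any path avoiding $u$ is confined to a single $C_6$), a helper vertex from $D' \setminus V(C')$ must be supplied by the connectedness of $D'$ together with condition (c), and then plugged into an induced $P_7$ that avoids the many chords created by $u$ on both $C$ and $C'$.
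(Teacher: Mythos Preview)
Your difficulty in claim (1) traces to an indexing slip in the statement. Under the conventions fixed in the proofs of Lemma~\ref{lem:bip-better-C6-in-rich-S} and Lemma~\ref{lem:bip-doubleC6}, a vertex of $S^C_j$ is adjacent to $x_j$, so your deduction that $u \in S^{C'}_{3-i}$ and hence $u$ is adjacent to $x_{3-i}$ and $x_{3-i}'$ is correct. The intended hypothesis, however, is $u \in \T^{>\alpha_{3-i}}$ rather than $\T^{>\alpha_i}$: this is exactly how the lemma is invoked in the algorithm, where the witness produced by Lemma~\ref{lem:bip-better-C6-in-rich-S} always lies in $S_j^C \cap \T^{>\alpha_j}$. (The paper's own proof of (1) inherits the slip, asserting that $u$ is adjacent to $x_i$ and $x_i'$.) With the indices aligned, your first case $\level(u) \ge \alpha_{3-i}$ is the \emph{only} case, and your argument there is precisely the paper's one-line proof. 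Your ``remaining case'' $\alpha_i < \level(u) < \alpha_{3-i}$ is an artifact of the typo; the $P_7$ you gesture at is neither constructed nor obviously constructible, so as written your proof of (1) has a gap --- but it is a gap in the statement, not in your reasoning.

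For claim (2) your approach matches the paper's, but watch the vertex count: extending by a \emph{single} $V(C)$-neighbour of $u$ yields only $k+5$ vertices, which is $6$ when $k=1$. The paper uses an induced $P_3$ of the form $C-C-u$ on one end and $v-C'-C'$ on the other, so that any internal vertex on the $u$-to-$v$ subpath already produces seven vertices.

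For claim (3), your route through Lemma~\ref{lem:bip-MR-comparable} to secure a single $u \in N(D') \cap N(D'')$ is a clean simplification. The paper instead picks separate witnesses $u' \in N(D')$ and $u'' \in N(D'')$ and, when they differ, threads between them through $V(C)$; your version sidesteps the extra chord-checking that entails.
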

\begin{proof}
    As $C$ and $C'$ are disjoint and anticomplete, $C'$ lies in $\mainrest_C$ and $C$ lies in $\mainrest_{C'}$. 

    Consequently, if $u \in N(D')$ has a neighbor in $C'$, then $u \in S_1^{C'} \cup S_2^{C'}$, 
    that is, $u$ is adjacent to all three vertices of $C'$ from the same bipartition side.
    If additionally $u \in S^C_{3-i}$, then $u$ is adjacent to both $x_i$ and $x_i'$;
    as these vertices are both in $\T^{\alpha_i}$, $u$ cannot belong to $\T^{> \alpha_i}$. 
    This proves the first point. 

    Let $u \in N(D') \cap \T^{> \alpha_i} \cap S^C_{3-i}$ and let $Q$
    be a shortest path from $u$ to $V(C')$ via $D'$. 
    Let $v$ be the penultimate (i.e., just before the endpoint in $V(C')$) vertex
    of $Q$. The first point asserts that $u$ is anticomplete to $C'$, i.e., $v \neq u$.
    Since $u \in N(V(C))$ and $C \subseteq \mainrest_{C'}$,
    by Lemma~\ref{lem:bip-C6-bigrest}, $v \in S_1^{C'} \cup S_2^{C'}$. 
    Note that there exists an induced $P_3$ of the form $v - C' - C'$ and an induced $P_3$ of the form
    $u-C-C$. Then, $G$ contains an induced path $C'-C'-v-Q-u-C-C$. Since $G$ is $P_7$-free,
    $Q$ is of length one, i.e., $v u \in E(G)$.
    As $u$ and $v$ are on different sides of the bipartition, this proves the second point.

    For the third point, let $(C'',x_1'',x_2'')$ be a bad $C_6$ w.r.t. $\T$
    that is similar to $(C,x_1,x_2)$,
    but contained in a component $D'' \neq D'$ of $G[\mainrest_C]$.
    Pick any $u' \in S^C_{3-i} \cap \T^{> \alpha_i} \cap N(D')$
    and let $v' \in D'$ be the vertex whose existence is asserted in the second point
    for $(C',x_1',x_2')$.
    Pick any $u'' \in S^C_{3-i} \cap \T^{> \alpha_i} \cap N(D'')$ (possibly $u' = u''$)
    and let $v'' \in D''$ be the vertex whose existence is asserted in the second point
    for $(C'',x_1'',x_2'')$.
    Note that there exists an induced $P_4$ of the form $u'-v'-C'-C'$ and an induced $P_4$
    of the form $u''-v''-C''-C''$. By possibly connecting these two $P_4$s via $C$
    (if $u' \neq u''$), we obtain an induced path in $G$ on at least $7$ vertices, a contradiction.
    This proves the third point.
\end{proof}

\begin{lemma}\label{lem:bip-doubleC6}
Let $G$ be a $P_7$-free bipartite graph with a fixed bipartition $V_1,V_2$.
Let $\T$ be a treedepth-$d$ structure in $G$, and
let $(C,x_1,x_2)$ and $(C',x_1',x_2')$ be two disjoint and anticomplete bad $C_6$s w.r.t. $\T$
that are in the same connected component of $G$
such that $(\alpha_1,\alpha_2) = \level(C,x_1,x_2) = \level(C',x_1',x_2')$.
Furthermore, let $u \in (S_1^C \cup S_2^C) \setminus N[V(C')]$ and
$v \in (S_1^{C'} \cup S_2^{C'}) \setminus N[V(C)]$ with $uv \in E(G)$. 
Denote $Z = V(C) \cup V(C') \cup \{u,v\}$. 
Then, for every connected component $D$ of $G-N[Z]$, the following holds.
\begin{enumerate}
    \item If $|D| > 1$, then $N(D) \subseteq N(V(C) \cup V(C'))$.
    \item if additionally $D$  contains a bad $C_6$ $(C'',x_1'',x_2'')$
       with $\level(C'',x_1'',x_2'') = (\alpha_1,\alpha_2)$,
       then $N(D) \cap V_1$ and $N(D) \cap V_2$ are $\T$-poor. 
    \end{enumerate}
\end{lemma}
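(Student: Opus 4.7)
The plan is to prove the two points separately, both via explicit induced $P_7$ constructions that exploit the structure of $Z = V(C) \cup V(C') \cup \{u,v\}$.

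For point 1, I would start from the observation that any $w \in N(D)$ lies in $N(Z) \setminus Z$, since $D$ is a connected component of $G - N[Z]$. Suppose toward contradiction that some $w \in N(D)$ is not in $N(V(C) \cup V(C'))$. Then $w$ must be adjacent to $u$ or to $v$, and since $uv \in E(G)$ and $G$ is bipartite, $u$ and $v$ lie on opposite bipartition sides, so $w$ is adjacent to exactly one of them. By the symmetry between $(C,u)$ and $(C',v)$, assume $w \in N(u) \setminus N(v)$. Using $|D| \geq 2$, pick $d \in D \cap N(w)$ and any neighbor $d' \in D$ of $d$ (which exists since $D$ is connected); bipartiteness of $G$ places $d'$ on the same side as $w$, hence $wd' \notin E(G)$, so $w{-}d{-}d'$ is an induced $P_3$. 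Since $v \in S_1^{C'} \cup S_2^{C'}$, $v$ has three neighbors in $V(C')$ forming one bipartition class of the cycle $C'$; picking $c \in N(v) \cap V(C')$ and $c'$ a cycle-neighbor of $c$ in $C'$, we have $c' \notin N(v)$. The path $c'{-}c{-}v{-}u{-}w{-}d{-}d'$ is an induced $P_7$ in $G$, where verification of the non-edges uses $w \notin N(V(C'))$, $d,d' \notin N[Z]$ (so anticomplete to $u,v,c,c'$), the choice of $c'$, and the cycle structure of $C'$. This contradicts the $P_7$-freeness of $G$.

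For point 2, suppose toward contradiction that distinct $w_1, w_2 \in N(D) \cap V_j \cap \T^\alpha$ exist; these are necessarily incomparable in $\T$. By point 1, $w_1, w_2 \in N(V(C) \cup V(C'))$, and since $D$ contains the bad $C_6$ $(C'', x_1'', x_2'')$ with $|V(C'')| = 6$, $D$ is contained in a component of $G - N[V(C)]$ (and of $G - N[V(C')]$) of size at least two. Lemma~\ref{lem:bip-C6-bigrest} applied to this component forces any $w_i$ adjacent to $V(C)$ (respectively $V(C')$) to lie in $S_1^C \cup S_2^C$ (respectively $S_1^{C'} \cup S_2^{C'}$). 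The bipartition constraint $w_1, w_2 \in V_j$ further restricts them to the two $S$-sets on that side, and the plan then proceeds by case analysis on these memberships.

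In the \emph{mixed} case---for instance $w_1 \in S_1^C \setminus N(V(C'))$ and $w_2 \in S_1^{C'} \setminus N(V(C))$---a shortest path $Q$ through $D$ from a neighbor of $w_1$ to a neighbor of $w_2$ (which exists since $D$ is connected and both $w_i$ have neighbors in $D$) allows constructing the induced $P_7$ $c_2{-}c_1{-}w_1{-}Q{-}w_2{-}c_1'{-}c_2'$ via a straightforward non-edge check. In the \emph{same-cycle} case, such as $w_1, w_2 \in S_1^C$, the bad $C_6$ $C''$ inside $D$ is used: Lemma~\ref{lem:bip-C6-bigrest} applied to the large component of $G - N[V(C'')]$ containing $V(C)$ forces any $w_i \in N(V(C''))$ to lie in $S_1^{C''} \cup S_2^{C''}$, and an induced-path chase through $V(C'')$ combined with an extension into $V(C)$ produces the $P_7$. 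The main obstacle is handling vertices $w_i$ that lie simultaneously in several $S$-sets (e.g., adjacent to both $V(C)$ and $V(C')$), where the naive extensions create shortcut edges; in these sub-cases the construction must redirect the $P_7$ through the cycle to which $w_i$ is not adjacent, or use the edge $uv$ and $V(C'')$ as alternative anchoring structure.
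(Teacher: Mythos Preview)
Your argument for point~1 is correct and essentially matches the paper's proof.

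For point~2, however, your approach is both more complicated than necessary and incomplete. You set out to derive a contradiction from a \emph{pair} $w_1,w_2 \in N(D)\cap V_j\cap \T^\alpha$ via explicit $P_7$ constructions, and then split into cases according to which of $S^C,S^{C'}$ each $w_i$ lies in. Your ``mixed'' case is fine, but your ``same-cycle'' case is only sketched, and the case you flag as the ``main obstacle''---when some $w_i$ lies in both $S^C_1$ and $S^{C'}_1$---is not handled at all. Crucially, in that last case no $P_7$ construction will work: the contradiction there is purely $\T$-structural (such a $w_i$ is adjacent to both $x_i$ and $x_i'$, both at depth $\alpha_i$, hence is a proper ancestor of both; two vertices at the same level cannot both be ancestors of $x_i$). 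Your framing ``find a $P_7$ in every case'' cannot succeed.

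The paper sidesteps the entire case analysis by arguing about a \emph{single} vertex $w \in N(D)\cap V_{3-i}\cap \T$. It shows that if $w$ is adjacent to at most one of $\{x_i,x_i',x_i''\}$, then $w$ admits both an induced $P_4$ of the form $w{-}Z{-}Z{-}Z$ (since $w$ is adjacent to $Z$ but not complete to $Z\cap V_i$, and $G[Z]$ is connected) and an induced $P_4$ of the form $w{-}D{-}D{-}D$ (via a shortest path to $C''$, which $w$ is anticomplete to by Lemma~\ref{lem:bip-C6-bigrest}); concatenating gives a $P_7$. Hence every such $w$ is adjacent to at least two of $\{x_i,x_i',x_i''\}$, all at depth $\alpha_i$, so $w$ is a proper ancestor of at least two of them. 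Pigeonhole then gives $\T$-poorness: two vertices of $N(D)\cap V_{3-i}\cap \T$ at the same level would both be ancestors of a common $x_i^{(\cdot)}$, impossible in a forest. This single-vertex argument is what you would eventually rediscover inside each of your sub-cases, and it is the missing idea in your proposal.
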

\begin{proof}
    Let $D$ be a connected component of $G-N[Z]$ with $|D| > 1$. 
    
    Let $w \in N(D)$. 
    By the definition of $D$, $w$ is adjacent to at least one vertex
    of $Z$.
    If $w$ is anticomplete to both $V(C)$ and $V(C')$, then 
    there exists an induced $P_5$ of the form $w-v-u-C-C$ or $w-u-v-C'-C'$. 
    If $|D| > 1$, then there exists an induced $P_3$ of the form $w-D-D$,
    resulting in a $P_7$ with the aforementioned $P_5$, a contradiction.
    Thus, $N(D) \subseteq N(V(C)) \cup N(V(C'))$, as promised in the first point.

    Let now $(C'',x_1'',x_2'')$ be as in the second point and assume $w \in N(D) \cap \T$. 
    Let $i \in \{1,2\}$ be such that $w \in V_{3-i}$. 
    By Lemma~\ref{lem:bip-C6-bigrest}, $w$ is either anticomplete to $C$ or
    $w \in S_i^C$ and, similarly, $w$ is either anticomplete to $C'$
    or $w \in S_i^{C'}$. 
    We have already established that $w \in N(V(C)) \cup N(V(C'))$, that is,
    $w \in S_i^C \cup S_i^{C'}$;
    in particular, $w$ is adjacent to either $x_i$ or $x_i'$.

    Assume first that  $w$ is adjacent to at most one vertex of the set  $\{x_i,x_i',x_i''\}$.
    Then, $w$ is adjacent to at least one vertex of $Z$,
    but is not complete to all vertices of $Z$ on the opposite side of the bipartition.
    Consequently, there exists an induced $P_4$ of the form $w-Z-Z-Z$.
    Furthermore, as $w$ is adjacent to $x_i$ or $x_i'$, we know that $w$ is nonadjacent to $x_i''$. 
    By Lemma~\ref{lem:bip-C6-bigrest}, $w$ is anticomplete to $C''$. 
    However, then a shortest path from $w$
    to $C''$ via $D$, with one more step in $C''$, gives an induced $P_4$ of the form $w-D-D-D$.
    Concatenated with the already established $P_4$ of the form $w-Z-Z-Z$ this gives a $P_7$
    in $G$, a contradiction.

    Hence, $w$ is adjacent to at least two of the vertices of
    $\{x_i,x_i',x_i''\}$. Since all these vertices are of depth $\alpha_i$
    in $\T$, $w$ is a common ancestor of all vertices of $N(w) \cap \{x_i,x_i',x_i''\}$.
    As the choice of $w \in \T \cap N(D)$ was arbitrary,
    this proves that $N(D) \cap V_1$ and $N(D) \cap V_2$ are $\T$-poor, as desired.
\end{proof}

\subsection{The algorithm}
\newcommand{\LiveClasses}{\mathsf{ActiveCls}}
\newcommand{\richclass}{\mathsf{c}}
\newcommand{\hbound}{h_\mathrm{max}}
\newcommand{\bigbound}{d_{\mathrm{max}}}
\newcommand{\hugebound}{d_{\mathrm{max},2}}
\newcommand{\poorcomps}{D_{\mathrm{poor}}}

Armed with the structural insights from the previous sections, we are ready to present the algorithm of Theorem~\ref{thm:bip}.
It will be a recursive procedure that takes as input an induced subgraph
$G[A]$ of $G$ and attempts at decomposing $G[A]$. (The recursive call will take also 
a number of additional parameters on input to guide the recursion; these will be described
shortly.) Similarly as it was done in~\cite{DBLP:conf/soda/ChudnovskyMPPR24}, we will describe the algorithm
as guessing properties of a fixed hypothetical treedepth-$d$ structure $\T$ in $G$
that we want to handle. 

We define 
\begin{align*}
\hbound & := 48(d+1)^3, \\
\bigbound &:= 2^{43} (d+1)^3. 
\end{align*}

\paragraph{Recursion arguments and guarantee.}
A recursive call, apart from the induced subgraph $G[A]$, takes as input
two integers $\alpha_1,\alpha_2 \in [d]$, 
a subset $\LiveClasses \subseteq \RichClasses$,
and a function $\gamma : \LiveClasses \to \mathbb{N}$.
Let $h$ be the recursion depth of the recursive call in question.

The goal is to compute a family $\mathcal{C}$ of subsets of $A$
with the following guarantee for every treedepth-$d$ structure $\T$.

\medskip 
\fbox{
\parbox{0.95\textwidth}
{
\noindent \underline{\textbf{Recursion guarantee.}} Suppose the following hold:
\begin{itemize}
    \item There is no bad $C_6$ $(C,x,y)$ w.r.t. $\T$ with $V(C) \subseteq A$,
    and $(\alpha_1,\alpha_2) \prec (\level(x), \level(y))$.
    \item For every bad $C_6$ $(C,x,y)$ w.r.t. $\T$ with $V(C) \subseteq A$,
    and $(\alpha_1,\alpha_2) = (\level(x), \level(y))$,
    the class $\richclass$ of $(C,x,y)$ belongs to $\LiveClasses$. Furthermore,
    the number of vertices of $\T \setminus A$ that are adjacent to $V(C)$
    is at least $\gamma(\richclass)$.
    \item $N(A) \cap \T$ is of size at most $\bigbound^{4^h}$.
\end{itemize}
Then there exists a tree decomposition $(T,\beta)$ of $G[A]$ such that
for every $t \in V(T)$ we have $\beta(t) \in \mathcal{C}$
and $|\beta(t) \cap \T| \leq \bigbound^{4^{h+1}}$. 
}}

\medskip 

We will maintain the following progress in the recursion.
For a call on $G[A]$, $(\alpha_1,\alpha_2)$, $\LiveClasses$, and $\gamma$, 
a direct subcall on $G[A']$, $(\alpha_1',\alpha_2')$, $\LiveClasses'$, $\gamma'$ satisfies:
\begin{itemize}
    \item $(\alpha_1',\alpha_2') \prec (\alpha_1,\alpha_2)$, or
    \item $(\alpha_1',\alpha_2') = (\alpha_1,\alpha_2)$ and
    \begin{itemize}
        \item $\LiveClasses' \subsetneq \LiveClasses$, or
        \item $\LiveClasses' = \LiveClasses$, $\gamma(\richclass) \leq 12d$
        for every $\richclass \in \LiveClasses$, $\gamma'(\richclass) \geq \gamma(\richclass)$
        for every $\richclass \in \LiveClasses$, and $\gamma'(\richclass) > \gamma(\richclass)$
        for at least one $\richclass \in \LiveClasses$. 
    \end{itemize}
\end{itemize}
This progress property will always be immediate from the description of the algorithm
and thus usually not checked explicitly. 
Note that this progress property bounds the depth of the recursion by $(d+1)^2 \cdot (48d + 4) \leq \hbound$. 

\paragraph{Leaves of the recursion.}
The recursion reaches its leaf when either $G[A]$ is $C_6$-free (hence chordal bipartite)
or we have $(\alpha_1,\alpha_2) = (1,1)$ and 
$\LiveClasses = \emptyset$. In both cases, for every $\T$ that satisfies the premise of
the recursion guarantee, there is no bad $C_6$ w.r.t. $\T$. 
Hence, it suffices to invoke
the algorithm of Lemma~\ref{lem:bip-alg-no-bad-C6} to $G[A]$ and return the computed
family $\mathcal{C}$.

\paragraph{Recursion.}
Let us focus on a recursive call with input $G[A]$, $(\alpha_1,\alpha_2)$, $\LiveClasses$, and 
$\gamma$. 
If $G[A]$ is disconnected, we recurse on each connected component independently
and return the union of the returned families. Hence, we assume that $G[A]$ is connected.

Let $\T$ be a treedepth-$d$ structure in $G$ that satisfies the premise of the recursion guarantee.
Thanks to Lemma~\ref{lem:bip-better-C6-in-rich-S}, we can delete rich/rich from $\LiveClasses$ and the domain of $\gamma$, if it is present there.
If $\LiveClasses = \emptyset$, then there are no bad $C_6$s $(C,x,y)$ of level $(\alpha_1,\alpha_2)$,
so we can recurse on the immediate predecessor
of $(\alpha_1,\alpha_2)$ in the $\prec$ order, with $\LiveClasses = \RichClasses$
and $\gamma \equiv 0$, and pass on the set $\mathcal{C}$ returned by this call. 

If there exists $\richclass \in \LiveClasses$ with $\gamma(\richclass) > 12d$, proceed
as follows. 
Observe that if $(C,x,y)$ is a bad $C_6$ of level $(\alpha_1,\alpha_2)$ and of class $\richclass$
with $V(C) \subseteq A$, then there is $c \in V(C)$ that is adjacent to at least two
vertices in $N(A) \cap \T$.
Iterate over all possible choices of $N(A) \cap \T$ (there is only a polynomial number
of them thanks to the recursion guarantee premise) together with the levels of vertices
of $N(A) \cap \T$. Let $B$ be the union, over all pairs $(u,u')$ of distinct vertices
of $N(A) \cap \T$ of the same level, of the sets $N(u) \cap N(u') \cap A$. 
Recurse on $G[A \setminus B]$, $(\alpha_1,\alpha_2)$, and $\LiveClasses \setminus \{\richclass\}$,
obtaining a set $\mathcal{C}_B$. 
Finally, return the union of all sets $\{B \cup D~|~D \in \mathcal{C}_B\}$ 
constructed in this process.

Note that there is only a $(dn)^{\bigbound^{4^h}} = n^{\bigbound^{2^{\Oh(d^3)}}}$
choices of $N(A) \cap \T$ together with its levels and, for every pair $(u,u')$, 
the set $N(u) \cap N(u')$ contains at most $(d-1)$ vertices of $\T$. 
Hence, the set $B$ constructed in the branch where all information
on $N(A) \cap \T$ is guessed correctly satisfies $|B \cap \T| \leq (d-1) \binom{\bigbound^{4^h}}{2} \leq \bigbound^{4^{h+1}}$.
Consequently, in this branch the premise of the recursion guarantee for $\T$ is satisfied
for the recursive subcalls; if $(T',\beta')$ is the tree decomposition
promised for $G[A \setminus B]$ and $\T$, then 
we can construct the promised tree decomposition $(T,\beta)$ by taking $T = T'$
and $\beta(t) = \beta'(t) \cup B$ for every $t \in V(T')$. 

We are left with the case $\gamma(\richclass) \leq 12d$ for every $\richclass \in \LiveClasses$.
Pick $\richclass \in \LiveClasses$, preferably rich/poor or poor/rich
if one of those belong to $\LiveClasses$.
The algorithm will perform an internal dynamic programming routine on a carefully
chosen family of subsets of $A$. 
We say that a set $A' \subseteq A$ is \emph{well-shaped} if there is a family 
$\mathcal{B}$ consisting of at most two connected subsets of $A$, each on at least $2$
and at most $14$
vertices, such that $A'$ is a connected component of $G[A]-N[\bigcup_{B \in \mathcal{B}} B]$. 
Let $\mathcal{A}$ be the set of well-shaped sets; note that $|\mathcal{A}| \leq |A|^{29}$
and $\mathcal{A}$ can be computed in polynomial time. 

A well-shaped set $A'$ is \emph{$\T$-regular} if there exists a family $\mathcal{B}$
witnessing that $A'$ is well-shaped such that for every $B \in \mathcal{B}$
and $i \in \{1,2\}$, the set $N(A') \cap A \cap N(B) \cap V_i$ is $\T$-poor.
Note that if $A'$ is $\T$-regular, then $|N(A') \cap A \cap \T| \leq 4d$. 

Note that $A$ is $\T$-regular (in particular, $A \in \mathcal{A}$),
as witnessed by $\mathcal{B} = \emptyset$ (recall that $G[A]$ is connected).

For every $A' \in \mathcal{A}$, in the order of increasing size of $A'$,
we will compute a family $\mathcal{C}(A')$ of subsets of $A \cap N[A']$
with the following guarantee:
if $A'$ is $\T$-regular, then there exists a tree decomposition $(T,\beta)$ of $G[N[A'] \cap A]$
such that for every $t \in V(T)$ we have $\beta(t) \in \mathcal{C}(A')$
and $|\beta(t) \cap \T| \leq \bigbound^{4^{h+1}}$. 
The family $\mathcal{C}(A)$ is the desired output of the recursive call.

Lemma~\ref{lem:bip-comp-nei} ensures that the following process does not lead
outside $\mathcal{A}$.
\begin{claim}\label{cl:stay-in-A}
Let $A' \in \mathcal{A}$, let $B$ be a connected subset of $G[A']$ on at least $2$
and at most $14$ vertices, and let $D$ be a connected component of $G[A' \setminus N[B]]$.
Then, $D \in \mathcal{A}$. 
Furthermore, if $A'$ is $\T$-regular and, for every $\alpha \in \{1,2\}$, the
set $N(D) \cap A \cap N(B) \cap V_\alpha$ is $\T$-poor, then $D$ is $\T$-regular, too.
\end{claim}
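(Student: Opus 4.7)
The plan is to let $\mathcal{B}_0$ be a family of at most two connected subsets of $A$ witnessing that $A'$ is well-shaped, form $\mathcal{B}_1 := \mathcal{B}_0 \cup \{B\}$ of size at most three, and then invoke Lemma~\ref{lem:bip-comp-nei} to reduce back to a subfamily of size at most two that certifies $D \in \mathcal{A}$.

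The first step is to verify that $D$ is itself a connected component of $G[A] - N[\bigcup \mathcal{B}_1]$. Since $D \subseteq A'$ and $A' \cap N[\bigcup \mathcal{B}_0] = \emptyset$, the set $D$ avoids $N[\bigcup \mathcal{B}_0]$; by definition of $D$ it also avoids $N[B]$. Maximality follows because any connected subset of $G[A] - N[\bigcup \mathcal{B}_1]$ strictly containing $D$ would lie in $A'$ and strictly contain $D$ as a subset of $A' \setminus N[B]$, contradicting that $D$ is a component of $G[A'] - N[B]$. Note that $B \subseteq A'$ is automatically disjoint from and anticomplete to every set in $\mathcal{B}_0$; I would also assume (as an invariant maintained inductively by this very claim throughout the algorithm) that the at most two sets in $\mathcal{B}_0$ are themselves pairwise disjoint and anticomplete, so that $\mathcal{B}_1$ meets the hypotheses of Lemma~\ref{lem:bip-comp-nei}.

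Applying Lemma~\ref{lem:bip-comp-nei} in the $P_7$-free bipartite graph $G[A]$ to the family $\mathcal{B}_1$ then produces a subfamily $\mathcal{B}^\ast \subseteq \mathcal{B}_1$ of size at most two with $N(D) \subseteq \bigcup_{B' \in \mathcal{B}^\ast} N[B']$, so $D$ remains a component of $G[A] - N[\bigcup \mathcal{B}^\ast]$; hence $\mathcal{B}^\ast$ witnesses $D \in \mathcal{A}$. For the $\T$-regularity statement I would reuse $\mathcal{B}^\ast$ as the witness. Fixing $B' \in \mathcal{B}^\ast$ and $\alpha \in \{1,2\}$, the case $B' = B$ is handed to us directly by the extra hypothesis of the claim, while for $B' \in \mathcal{B}_0$ the equality $A' \cap N(B') = \emptyset$ forces every vertex of $N(D) \cap N(B')$ into $A \setminus A' \subseteq N(A') \cap A$, so
\[
N(D) \cap A \cap N(B') \cap V_\alpha \subseteq N(A') \cap A \cap N(B') \cap V_\alpha,
\]
which is $\T$-poor by the assumed $\T$-regularity of $A'$; since any subset of a $\T$-poor set is $\T$-poor, we conclude.

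The only real obstacle I foresee is the pairwise-disjoint-and-anticomplete hypothesis of Lemma~\ref{lem:bip-comp-nei}: the proof hinges on the invariant, set up implicitly by the definition of well-shaped and preserved through the recursion by this claim, that the witnessing family always consists of pairwise disjoint and anticomplete connected sets. Once this invariant is in hand, the passage from three sets back to two is immediate from Lemma~\ref{lem:bip-comp-nei}, and the rest of the verification is a short neighborhood bookkeeping.
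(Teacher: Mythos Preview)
Your approach is essentially identical to the paper's: take a witnessing family $\mathcal{B}$ for $A'$, add $B$ to it, apply Lemma~\ref{lem:bip-comp-nei} to pass back down to a subfamily of size at most two, and observe that this subfamily also witnesses $\T$-regularity when the extra hypothesis holds. Your containment argument for the $\T$-regularity check (namely that $N(D)\cap N(B')\subseteq N(A')$ whenever $B'\in\mathcal{B}_0$, because $A'\cap N[B']=\emptyset$) is exactly the content of the paper's one-line ``if $\mathcal{B}$ witnesses that $A'$ is $\T$-regular, then so does $\mathcal{B}''$ for $D$.''

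The point you flag about the pairwise disjoint and anticomplete hypothesis of Lemma~\ref{lem:bip-comp-nei} is genuine: the paper simply asserts that ``the family $\mathcal{B}'$ satisfies the requirements of Lemma~\ref{lem:bip-comp-nei}'' without checking that the original two-set witness $\mathcal{B}$ is itself disjoint and anticomplete, which the bare definition of well-shaped does not require. Your proposed fix --- carrying this as an invariant through the algorithm, which is legitimate since the only well-shaped sets the algorithm ever \emph{uses} the guarantee of $\mathcal{C}(\cdot)$ for arise via repeated applications of this very claim starting from $A$ with the empty witness --- is the natural way to close this, and your noticing it is a point in your favor rather than a defect.
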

\begin{proof}
    Let $\mathcal{B}$ be a family witnessing $A' \in \mathcal{A}$. 
    Let $\mathcal{B}' = \mathcal{B} \cup \{B\}$. 
    Then, $D$ is a connected component of $G[A]-\bigcup_{B' \in \mathcal{B}'} N[B']$
    and the family $\mathcal{B}'$ satisfies the requirements of Lemma~\ref{lem:bip-comp-nei} in the graph $G[A]$.
    Hence, there is a subfamily $\mathcal{B}'' \subseteq \mathcal{B}'$ of size at most $2$
    such that $D$ is a connected component of $G[A]-\bigcup_{B'' \in \mathcal{B}''} N[B'']$.
    Then, $\mathcal{B}''$ witnesses that $D \in \mathcal{A}$.

    For the second statement, observe that if $\mathcal{B}$ witnesses that $A'$ is $\T$-regular,
    then so does $\mathcal{B}''$ for $D$. 
\end{proof}
We emphasize here that the set $N(D) \cap A \cap N(B) \cap V_i$ in Claim~\ref{cl:stay-in-A} may contain
vertices outside $A'$ and our control of the number of vertices of $\T$ in this set
is very important for the correctness of the algorithm.

Fix $A' \in \mathcal{A}$.
A bad $C_6$ $(C,x,y)$ of level $(\alpha_1,\alpha_2)$ and of class $\richclass$
with $V(C) \subseteq A'$ is henceforth called a \emph{relevant $C_6$}.

Guess if $\T$ contains a relevant $C_6$.
For the ``no'' branch, recurse on $G[A']$, $(\alpha_1,\alpha_2)$, and $\LiveClasses \setminus \{\richclass\}$,
and put every returned set into the computed family $\mathcal{C}(A')$. 

For the ``yes'' branch, proceed as follows.
For a connected set $A'' \subseteq A'$, a relevant $C_6$ $(C,x,y)$ in $G[A'']$ is \emph{extremal}
if it maximizes the number of vertices in its main remainder in $G[A'']$ and,
subject to the above, it minimizes the number of vertices of $\T$ in $N[V(C)] \cap A''$. 

\paragraph{Finding a pivot set $Z$.}
We will now identify (by branching) a connected set $Z \subseteq A'$ with the following properties.
\begin{itemize}
\item $Z$ is of one of the following forms:
\begin{description}
 \item[(single $C_6$)] $V(C)$, where $(C,x,y)$ is an extremal relevant $C_6$ in $G[A']$, or
 \item[(double $C_6$)] $V(C) \cup V(C') \cup \{u,v\}$, where $(C,x,y)$ is an extremal relevant $C_6$ in $G[A']$, $(C',x',y')$ is a relevant $C_6$ that is 
 extremal in a connected component $D'$ of $G[A']-N[V(C)]$, $u \in S_i^C \cap A' \cap \T$ for some $i \in \{1,2\}$, $u \notin N(V(C'))$,
 $v \in S_{3-i}^{C'} \cap D'$, and $uv \in E(G)$.
\end{description}
\item For every connected component $F$ of $G[A']-N[Z]$ that contains a relevant $C_6$, for every $i \in \{1,2\}$, the set $N(F) \cap A \cap N(Z) \cap V_\alpha$ is $\T$-poor. 
\end{itemize}

We start by guessing an extremal relevant $C_6$ $(C,x,y)$.
Somewhat abusing the notation, let us denote its main remainder in $G[A']$ by $\mainrest_C$.

We check if $Z = V(C)$ satisfies the above properties. If this is the case, we are done, so assume otherwise. 

We say that a component $D$ of $G[\mainrest_C]$ is \emph{terrifying}
if it contains a relevant $C_6$ and additionally
$N(D) \cap A \cap N(Z) \cap V_1$ is $\T$-rich  
or $N(D) \cap A \cap N(Z) \cap V_2$ is $\T$-rich. 
Observe that $Z = V(C)$ does not satisfy the desired properties if and only there exists a terrifying component.
Note that it can happen only if $\richclass$ is not poor/poor.

Let $D_0$ be a terrifying component.
Guess $D_0$ and a relevant $C_6$ $(C',x',y')$ that is extremal in $G[D_0]$. 
Lemma~\ref{lem:bip-better-C6-in-rich-S} implies that $N(D_0) \cap A \cap S_1^C$ contains two vertices
of $\T^{\alpha_1'}$ for some $\alpha_1' > \alpha_1$ or $N(D_0) \cap A \cap S_2^C$ contains two vertices of $\T^{\alpha_2'}$ for some $\alpha_2' > \alpha_2$;
let $u$ be any of those two vertices. 
Then, Lemma~\ref{lem:bip-MR-nextC} applies to the graph
$G[A' \cup (A \cap (S_1^C \cup S_2^C))]$: 
by the first point, $u \notin N(V(C'))$,
the second point gives $v \in D_0 \cap (S_1^{C'} \cup S_2^{C'})$ with $uv \in E(G)$,
and the third point implies that $D_0$ is the only terrifying component. 

Denote $Z := V(C) \cup V(C') \cup \{u,v\}$. 
Lemma~\ref{lem:bip-doubleC6} applies to $(C,x,y)$, $(C',x',y')$, $u$, and $v$
in the graph $G[A' \cup (A \cap N[Z])]$:
every connected component $D$ of $G[A']-N[Z]$ that contains a relevant $C_6$,
the set $N(D) \cap A \cap N(Z)$ is contained in $N(V(C)) \cup N(V(C'))$ 
and both $N(D) \cap A \cap N(Z) \cap V_1$ and $N(D) \cap A \cap N(Z) \cap V_2$ are $\T$-poor.
Consequently, $Z$ satisfies the desired properties.

\paragraph{Finding a neighborhood $Y$.}
Lemma~\ref{lem:bip-MR-comparable}, applied to $Z$ in the graph $G[A' \cup (A \cap N[Z])]$ asserts
that the components of $G[A']-N[Z]$ that contain a relevant $C_6$ 
have comparable neighborhoods in $A \cap N[Z] \cap V_1$ and comparable neighborhoods
in $A \cap N[Z] \cap V_2$. 
For every $i \in \{1,2\}$, guess a component $D_i^\circ$ of $G[A']-N[Z]$
that contains a relevant $C_6$ and with maximal $N(D_i^\circ) \cap A \cap N[Z] \cap V_i$;
a valid guess is $D_i^\circ = \emptyset$ if no such component exists. 
By the properties of $Z$, we have that $N(D_i^\circ) \cap A \cap N[Z] \cap V_i$ is $\T$-poor.

Let $Y := \bigcup_{i \in \{1,2\}} N(D_i^\circ) \cap A \cap N[Z] \cap V_i$. 
We have that $|Y \cap \T| \leq 2d$ and for every component $D$ of $G[A']-N[Z]$,
if $D$ contains a relevant $C_6$, then $N(D) \cap A \cap N[Z] \subseteq Y$. 

Let $\poorcomps$ be the union of vertex sets of all components $D$ of $G[A']-N[Z]$
such that $|D| > 1$ and $N(D) \cap A \cap N[Z] \subseteq Y$.
Note that every relevant $C_6$ that is contained in $G[A']-N[Z]$ is in fact in $G[\poorcomps]$.

\paragraph{Using cleaning.}
Let $\mathcal{F}$ be the result of applying Lemma~\ref{lem:bip-cleaning-full}
to $G[A' \setminus (Y \cup \poorcomps)]$ and $Z$. 
The following claim establishes the progress of the recursion. 
\begin{claim}\label{cl:cleaning-progress}
For every $X \in \mathcal{F}$, every connected component $D$ of
$G[A' \setminus (X \cup Y \cup Z \cup \poorcomps)]$ either does not contain a relevant $C_6$,
  or every relevant $C_6$ in $D$ has at least one neighbor in $(X \cup Y \cup Z) \cap \T$. 
\end{claim}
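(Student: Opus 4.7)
I would prove this by contradiction: assume $(C'', x'', y'')$ is a relevant $C_6$ in $D$ with $N(V(C'')) \cap (X \cup Y \cup Z) \cap \T = \emptyset$, and derive a contradiction with either the $P_7$-freeness of $G$ or with the extremal choice of $(C, x, y)$ (and $(C', x', y')$ in the double-$C_6$ case).

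The first step is to show that $V(C'')$ is adjacent to $Z$. If it were not, then $V(C'')$ would lie in some connected component $D^*$ of $G[A'] - N[Z]$ of size at least $6$. Since $D^*$ contains a relevant $C_6$, Lemma~\ref{lem:bip-MR-comparable} together with the maximal choice of the components $D_i^\circ$ used to define $Y$ forces $N(D^*) \cap A \cap N[Z] \subseteq Y$, so $D^* \subseteq \poorcomps$, contradicting $V(C'') \subseteq D$ being disjoint from $\poorcomps$. Next, the cleaning guarantee of Lemma~\ref{lem:bip-cleaning-full}, applied at any vertex $w$ of $V(C'')$ (whose non-$(X \cup Z)$ neighbors lie in a single class of $\mathcal{Z}$), propagates along length-$2$ paths in $C''$ and shows that $V(C'') \cap V_i$ has a common $Z$-neighborhood $Y_i \subseteq Z \cap V_{3-i}$ for each $i \in \{1,2\}$, with $Y_1 \cup Y_2 \neq \emptyset$.

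The third step uses the assumption that $V(C'')$ has no $\T$-neighbor in $Z$ to restrict $(Y_1, Y_2)$: since $\{x,y\} \subseteq Z \cap \T$ in the single-$C_6$ case (and $\{x,y,x',y',u\} \subseteq Z \cap \T$ in the double-$C_6$ case), $Y_1$ and $Y_2$ are confined to the non-$\T$ positions of $Z$. The bulk of the proof is then a case analysis over the few surviving $(Y_1, Y_2)$ configurations. In most cases one immediately obtains an induced $P_7$ by concatenating a long induced path in $V(C)$ with a short detour into $V(C'')$ via the non-$\T$ vertex of $Z$ realizing the adjacency, contradicting the $P_7$-freeness of $G$. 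The ``symmetric'' configurations surviving the $P_7$-freeness argument (for instance $Y_1$ being the entire non-$\T$ part of $V(C) \cap V_{3-1}$ in the single-$C_6$ case, with $Y_2 = \emptyset$) are instead ruled out by invoking extremality: substituting a suitable vertex of $V(C'')$ for a vertex of $V(C)$ yields another relevant $C_6$ $C^{new}$ with $|\mainrest_{C^{new}}| \geq |\mainrest_C|$ and $|N[V(C^{new})] \cap A' \cap \T| < |N[V(C)] \cap A' \cap \T|$, contradicting the extremal tie-breaking of $(C, x, y)$. The double-$C_6$ case proceeds along the same lines, with Lemma~\ref{lem:bip-doubleC6} and the extra anchors $x', y', u$ playing the role that $(x,y)$ plays in the single case.

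The main obstacle is executing the Step~3 case analysis cleanly: the $P_7$-freeness argument alone does not eliminate the residual ``symmetric'' configurations, and the right vertex substitution for the extremality-based contradiction depends delicately on which non-$\T$ vertices of $V(C)$ appear in $Y_1$ (and symmetrically in $Y_2$). In the double-$C_6$ case, the additional structure of $Z$ around $u$ and $v$ must be carefully tracked to ensure that the candidate replacement cycle $C^{new}$ still witnesses a bad $C_6$ of the same level and class.
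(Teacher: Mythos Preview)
Your first three steps are correct and essentially match the paper's setup: both arguments establish that $D$ lies in two neighborhood classes $A_{1,Q_1}\cup A_{2,Q_2}$ of the partition $\mathcal{Z}$, with $Q_1\cup Q_2$ avoiding the $\T$-vertices of $Z$ (so $Q_1\subseteq\{c_2,c_6\}$ and $Q_2\subseteq\{c_3,c_5\}$ in the single-$C_6$ case, writing $c_1=x$, $c_4=y$).

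The gap is in your Step~4. Your proposed mechanism for the residual cases---substituting a vertex of $C''$ into $C$ to produce a relevant $C_6$ $C^{\mathrm{new}}$ that beats $(C,x,y)$---does not work. Given the constraints $Q_1\subseteq\{c_2,c_6\}$, $Q_2\subseteq\{c_3,c_5\}$, a vertex $e\in V(C'')$ can only replace $c_1$ or $c_4$ (any other position would require $e$ to be adjacent to $c_1=x$ or $c_4=y$, which is excluded). But $c_1,c_4$ are precisely the $\T$-anchors that make $C$ a bad $C_6$ of level $(\alpha_1,\alpha_2)$; you have no control over whether the substitute $e$ lies in $\T^{\alpha_1}$ or $\T^{\alpha_2}$, so $C^{\mathrm{new}}$ need not be \emph{relevant}, and extremality (which is quantified only over relevant $C_6$s) yields nothing. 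Your $P_7$-freeness branch also does not cover as many cases as you suggest: for instance $Q_1=\{c_2,c_6\}$, $Q_2=\emptyset$ does not produce a $P_7$ directly, and still needs extremality.

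The paper's argument avoids vertex substitution entirely and compares $(C,x,y)$ directly with the relevant $C_6$ $(C'',x'',y'')$. First, $\mathsf{MR}_C\subseteq\mathsf{MR}_{C''}$ because $V(C'')$ is anticomplete to $\mathsf{MR}_C$. If some edge of $C$ is anticomplete to $C''$, that edge lies in $\mathsf{MR}_{C''}\setminus\mathsf{MR}_C$, contradicting the primary extremality criterion; this single observation kills every configuration except $Q_1=\{c_2,c_6\}$, $Q_2=\{c_3,c_5\}$. In that remaining case the paper does \emph{not} derive a contradiction. Instead, the secondary extremality criterion gives $|N[V(C'')]\cap A'\cap\T|\geq|N[V(C)]\cap A'\cap\T|$; since $x,y\in(N[V(C)]\setminus N[V(C'')])\cap\T$, there are at least two $\T$-vertices in $N[V(C'')]\setminus N[V(C)]$. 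As $D\subseteq N(V(C))$ here, these vertices are not in $D$ but are adjacent to $V(C'')\subseteq D$, so they must lie in $X\cup Y\cup Z$---which is exactly the conclusion of the claim. The double-$C_6$ case then repeats this analysis for $C'$ inside $D_0$, after first using the location of $D$ relative to $\mathsf{MR}_C$ and $D_0$ to reduce to the same picture.
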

\begin{proof}
    Let $\mathcal{Z} = \{A_{i,Q}~|~i \in \{1,2\}, Q \subseteq Z \cap V_i\}$
    be the neighborhood partition  with respect to $Z$ in $G[A' \setminus \poorcomps]$.
    Fix $X \in \mathcal{F}$ and consider a connected component $D$ of $G[A' \setminus (X \cup Y \cup Z \cup \poorcomps)]$.
    
    If $D$ is anticomplete to $Z$, then $D$ does not contain a relevant $C_6$ as $D$ is not part of $\poorcomps$. 
    Lemma~\ref{lem:bip-cleaning} ensures that $D$ is contained in the union
    of two sets of $\mathcal{Z}$, denote them $A_{1,Q_1}$ and $A_{2,Q_2}$. 
    If any of those sets is adjacent to a vertex of $Z \cap \T$, 
    then every relevant $C_6$ in $D$ has a neighbor in $Z \cap \T$, as desired.
    Otherwise, both $A_{1,Q_1}$ and $A_{2,Q_2}$ are anticomplete to $x$ and $y$,
    and, in case $Z$ has the double $C_6$ structure, also to $u$, $x'$, and $y'$. 

    In particular, $D$ is disjoint with $S_1^C \cup S_2^C$. Hence, either 
    $D \subseteq A' \setminus (\mainrest_C \cup S_1^C \cup S_2^C)$ or $D \subseteq \mainrest_C$. Consider now the first case. 

    If $D$ does not contain a relevant $C_6$, then we are done, so assume otherwise. 
    Let $(C'',x'',y'')$ be a relevant $C_6$ in $D$. Observe that $\mainrest_C$
    is contained in the main remainder of $C''$ in $G[A']$. Hence, as $(C,x,y)$ is extremal,
    these two main remainders are equal.
    
    Denote $C = c_1-c_2-\ldots-c_6-c_1$ with $c_1=x$ and $c_4=y$. 
    Observe that, unless $Q_1 = \{c_3,c_5\}$ and $Q_2 = \{c_2,c_6\}$, there exists
    an an edge of $C$ that is anticomplete to $C''$. This edge is contained in the main
    remainder of $C''$ in $G[A']$, a contradiction to the fact that $(C,x,y)$ is extremal. 
    In the other case, the fact that $(C,x,y)$ is extremal and $(C,x,y)$ and $(C'',x'',y'')$
    have equal main remainders in $G[A']$ implies that 
    $|A' \cap N[V(C'')] \cap \T| \geq |A' \cap N[V(C)] \cap \T|$.
    Observe that $x,y \in N[V(C)] \setminus N[V(C'')]$. 
    Hence, there are at least two vertices of $\T$ in $A' \cap \T \cap (N[V(C')] \setminus N[V(C)])$.
    Since $D \subseteq A_{1,Q_1} \cup A_{2,Q_2} \subseteq N(V(C))$, 
    these two vertices of $\T$ necessarily belong to $X \cup Y \cup Z$, as desired. 
    This completes the case $D \subseteq A' \setminus (\mainrest_C \cup S_1^C \cup S_2^C)$. 

    We are left with the remaining case $D \subseteq \mainrest_C$. 
    If $Z$ is of single $C_6$ structure, then we are done, as $D \cap \poorcomps = \emptyset$
    and thus $D$ contains no relevant $C_6$. 
    Assume then $Z$ is of double $C_6$ structure. 
    
    Assume first that $D \cap D_0 = \emptyset$.
    Recall that $u \in Q_1 \cup Q_2$ as otherwise
    any relevant $C_6$ in $D$ has a neighbor in $Z \cap \T$ as desired.
    Hence, $D \cap N[Z] = \emptyset$. Then, as $D$ is disjoint with $\poorcomps$,
    $D$ does not contain any relevant $C_6$. 

    In the second case, $D \subseteq D_0$ as $D$ is disjoint with $S_1^C \cup S_2^C$. 
    Since $x',y',u \notin Q_1 \cup Q_2$, $D \subseteq D_0 \cap N(V(C')) \setminus (S_1^{C'} \cup S_2^{C'})$ or $D$ is part of the main remainder of $C'$ in $D_0$. 
    In the latter case, the properties of $Z$ ensure that $D$ is contained in $G[A']-N[Z]$ and,
    as $D \cap \poorcomps = \emptyset$, $D$ contains no relevant $C_6$. 

    In the former case, let $C' = c_1'-c_2'-\ldots-c_6'-c_1'$
    and assume there exists a relevant $C_6$ $(C'',x'',y'')$ in $D$. 
    Observe that the main remainder of $C'$ in $G[D_0]$ is contained in the main remainder
    of $C''$. Since $(C',x',y')$ is extremal in $G[D_0]$, these main remainders are equal.
    Observe that unless $Q_1 \cap V(C') = \{c_3', c_5'\}$ and $Q_2 \cap V(C') = \{c_2',c_6'\}$,
    there is an edge of $C'$ that is anticomplete to $C''$, a contradiction to the
    fact that the main remainders of $C'$ and $C''$ in $G[D_0]$ are equal.
    Otherwise, since $(C',x',y')$ is extremal, there are at least two elements of $\T$
    in $D_0 \cap (N(V(C'')) \setminus N(V(C')))$. Since $D \subseteq N(V(C'))$
    (if $Q_1 \cap V(C') = \{c_3', c_5'\}$ and $Q_2 \cap V(C') = \{c_2',c_6'\}$), 
    these two elements of $\T$ necessarily are in $X \cup Y \cup Z$, as desired. 
    This completes the case analysis and the proof of the claim.
\end{proof}

\paragraph{Wrapping up.}
We guess $Z$ and $Y$ as described above, compute $\poorcomps$ and 
invoke Lemma~\ref{lem:bip-cleaning-full} to $Z$ in $G[A' \setminus (Y \cup \poorcomps)]$,
obtaining a family $\mathcal{F}$. 
Claim~\ref{cl:stay-in-A} ensures that every connected component of $G[\poorcomps]$ belongs to $\mathcal{A}$ and, furthermore, if $A'$ is $\T$-regular, then so is every connected
component of $G[\poorcomps]$.
If this is the case, let $(T_D,\beta_D)$ be the promised tree decomposition for 
a component $D$ of $G[\poorcomps]$. 
Insert into $\mathcal{C}(A')$ 
all sets of $\mathcal{C}(D)$ for every
connected component $D$ of $G[\poorcomps]$.

For every $X \in \mathcal{F}$, we proceed as follows.
First, insert $X \cup Y \cup Z \cup (N(A') \cap A)$ into $\mathcal{C}(A')$. 
Then, recurse on every connected component
$D$ of $G[A' \setminus (X \cup Y \cup Z \cup \poorcomps)]$
with the same parameters $(\alpha_1,\alpha_2)$, $\LiveClasses$, and $\gamma$
except for the value $\gamma(\richclass)$ increased by one,
obtaining a family $\mathcal{C}'(D)$.
For every set $B \in \mathcal{C}'(D)$ from the returned family,
insert $X \cup Y \cup Z \cup (N(A') \cap A) \cup B$ into the set $\mathcal{C}(A')$. 
This completes the description of the algorithm.

Observe that if $A'$ is $\T$-regular,
then for the set $X \in \mathcal{F}$ that satisfies 
the promise of Lemma~\ref{lem:bip-cleaning-full} for 
$\T$, we have 
\begin{equation}\label{eq:bip-margin-bound}
|(X \cup Y \cup Z \cup (N(A') \cap A)) \cap \T| \leq 2^{3 \cdot 14} d^2(d+1) + 2d + 14 + 4d < 2^{43} (d+1)^3 = \bigbound.
\end{equation}
Hence, for that set the premise of the recursion guarantee is satisfied,
and we have the promised tree decomposition $(T_D, \beta_D)$ of $G[D]$.

Construct the desired tree decomposition $(T,\beta)$ as follows. 
Start with the root bag $(X \cup Y \cup Z \cup (N(A') \cap A))$.
For every connected component $D$ of $G[\poorcomps]$, attach $(T_D,\beta_D)$
as a subtree with its root adjacent to the root bag of $(T,\beta)$.
For every connected component $D$ of $G[A' \setminus (X \cup Y \cup Z \cup \poorcomps)]$,
denote $\beta_D'(t) = \beta_D(t) \cup X \cup Y \cup Z \cup (N(A') \cap A)$ for $t \in V(T_D)$
and attach $(T_D,\beta_D')$ as a subtree with its root adjacent to the root bag of $(T,\beta)$.
It is immediate that $(T,\beta)$ is a tree decomposition of $G[N[A'] \cap A]$;
the bound on $|\T \cap \beta(t)|$ for any $t \in V(T)$ follows from~\eqref{eq:bip-margin-bound}.


\section{Proof of Theorem~\ref{thm:mainalgo}}\label{sec:wrapup}
\newcommand{\subcomps}{\mathcal{C}}

\newcommand{\lmset}{\{\!\{}
\newcommand{\rmset}{\}\!\}}
\newcommand{\Aa}{\mathcal{A}}
\newcommand{\lbl}{\mathsf{label}}
\newcommand{\Multi}{\mathsf{Multi}}

In this section we use Theorems~\ref{thm:to-bip} and~\ref{thm:bip}
to prove Theorem~\ref{thm:mainalgo}.
Fix an integer $d > 0$. 

On the very high level, the algorithm works as follows.
We invoke Theorem~\ref{thm:to-bip} on the input graph $G$, obtaining the family $\mathcal{F}$.
Let $\T$ be the sought solution and let $S$ be a $\T$-avoiding minimal separator of $G$.
Let $\Psi = (K,\mathcal{D},L) \in \mathcal{F}$ be the element promised for $S$ by Theorem~\ref{thm:to-bip}.
Note that $S \subseteq W(\Psi) := K \cup \bigcup \mathcal{D}$. 
We would like to first understand the components of $\mathcal{D}$ and then use $W(\Psi)$ as a container for $S$. 
Luckily, every $D \in \mathcal{D}$ is somewhat easier than the whole graph $G$.
Let $\subcomps(\Psi)$ be the family of all connected components of $G[L(D)]$ and all connected components of $G[D \setminus L(D)]$;
recall that they are modules of $G[D]$.
Every connected component of $\subcomps(\Psi)$ has clique number strictly smaller than $G$ (so we can recurse on them),
while the quotient graph after contracting every component $\subcomps(\Psi)$ in $G[D]$ is 
bipartite and thus handled by Theorem~\ref{thm:bip}.

Unfortunately, we do not know how to handle the aforementioned recursion into $\subcomps(\Psi)$ using only the language of families of containers or carvers
and we will need to do it using the language of ``partial solutions to \cmsotwo formulae''. 
To this end, we will use the handy notion of a \emph{threshold automata} as in~\cite{DBLP:conf/soda/ChudnovskyMPPR24}.

\paragraph{Threshold automata.}
Let $\Sigma$ be a finite alphabet. 
A \emph{$\Sigma$-labelled forest} is a rooted forest $F$ where every $v \in V(F)$
as a label $\lbl(v) \in \Sigma$. 

We use the notation $\lmset\cdot\rmset$ for defining multisets. For a multiset $X$ and an integer $\tau\in \N$, let $X\wedge \tau$ be the multiset obtained from $X$ by the following operation: for every element $e$ whose multiplicity $k$ is larger than $2\tau$, we reduce its multiplicity to the unique integer in $\{\tau+1,\ldots,2\tau\}$ with the same residue as $k$ modulo $\tau$. For a set $Q$, let $\Multi(Q,\tau)$ be the family of all multisets
that contain elements of $Q$, and each element is contained at most $2\tau$ times.

\begin{definition}[\cite{DBLP:conf/soda/ChudnovskyMPPR24}]
 A {\em{threshold automaton}} is a tuple $\Aa=(Q,\Sigma,\tau,\delta,C)$, where:
 \begin{itemize}[nosep]
  \item $Q$ is a finite set of states;
  \item $\Sigma$ is a finite alphabet;
  \item $\tau\in \N$ is a nonnegative integer called the {\em{threshold}};
  \item $\delta\colon \Sigma\times \Multi(Q,\tau)\to Q$ is the transition function; and
  \item $C\subseteq \Multi(Q,\tau)$ is the accepting condition.
 \end{itemize}
 For a $\Sigma$-labelled forest $F$, the {\em{run}} of $\Aa$ on $F$ is the unique labelling $\xi\colon V(F)\to Q$ satisfying the following property for each $x\in V(F)$:
 $$\xi(x)=\delta(\lbl(x),\lmset\xi(y)\colon y\textrm{ is a child of }x\rmset\wedge \tau).$$
 We say that $\Aa$ {\em{accepts}} $F$ if 
 $$\lmset\xi(z)\colon z\textrm{ is a root of }F\rmset\wedge \tau \in C,$$
 where $\xi$ is the run of $\Aa$ on $F$.
\end{definition}

Recall that for a \cmsotwo formula with one vertex set variable $\phi$,
the $(\phi, \td \leq d)$-\textsc{MWIS} problem,
given a vertex-weighted graph $G$, asks for a pair $(\Sol,X)$
with $X \subseteq \Sol \subseteq V(G)$ such that $G[\Sol]$ is of treedepth at most $d$,
$\phi(X)$ is satisfied in $G[\Sol]$, and the weight of $X$ is maximized. 

Let $(\Sol, X)$ be such that $X \subseteq \Sol \subseteq V(G)$ and
$\td(G[\Sol]) \leq d$ and let $\T$ be a treedepth-$d$ structure containing $\Sol$. 
Let $\lbl$ be a labeling of $V(\T)$ that labels $v \in V(\T)$ with 
the following information: (1) its depth in $\T$, (2) which of its ancestors
in $\T$ it is adjacent to, and (3) whether $v \in X$ and/or $v \in \Sol$. 
As discussed in~\cite{DBLP:conf/soda/ChudnovskyMPPR24}, the check 
whether $\phi(X)$ is satisfied in $G[\Sol]$ can be done
with a threshold automaton run on $(\T,\lbl)$.
More precisely, there exists a threshold automaton $\Aa$
(depending only on $d$ and $\phi$) such that $\phi(X)$ is satisfied in $G[\Sol]$
if and only if $\Aa$ accepts $\T$ with labeling $\lbl$. 
Thus, instead of focusing on a \cmsotwo formula $\phi$, we
can fix a threshold automaton $\Aa$ and look for a
treedepth-$d$ structure $\T$ and $X \subseteq \Sol \subseteq \T$
such that $\Aa$ accepts $\T$ with the aforementioned labeling
and, subject to the above, the weight of $X$ is maximized. 
In what follows, the labeling $\lbl$ will be implicit and not mentioned. 

\paragraph{Partial solutions.}
A treedepth-$d$ structure $\T$ is \emph{neat} if for every non-root node $v$
with a parent $u$, $u$ is adjacent to at least one descendant of $v$ (possibly $v$ itself).
A simple (cf.~\cite{DBLP:conf/soda/ChudnovskyMPPR24}) argument shows that every treedepth-$d$ structure
can be turned into a neat one without increasing the depth of any vertex.

Let $\preceq_1$ be the following quasi-order on sets
of vertices of a graph $G$: $X \preceq_1 Y$ if $|X| > |Y|$ or
$|X| = |Y|$ and $X$ precedes $Y$ lexicographically. 
In~\cite{DBLP:conf/soda/ChudnovskyMPPR24}, a quasi-order $\preceq$ on treedepth-$d$ structures in
$G$ was introduced: to compare two treedepth-$d$ structures $\T_1$ and $\T_2$,
the following sequence of sets is compared with $\preceq_1$ and the first
differing one determines whether $\T_1 \preceq \T_2$: first the set of all vertices,
then the set of all depth-1 vertices (i.e., roots), then the
set of all depth-2 vertices, \ldots, the set of all depth-$d$ vertices. 
Although two distinct treedepth-$d$ structures $\T_1,\T_2$ can satisfy both
$\T_1 \preceq \T_2$ and $\T_2 \preceq \T_1$, the quasi-order $\preceq$
is a total order on the set of neat treedepth-$d$ structures~\cite{DBLP:conf/soda/ChudnovskyMPPR24}.

A \emph{partial solution} in a graph $G$ is a tuple $(\T,X,\Sol)$
where $\T$ is a treedepth-$d$ structure in $G$ and $X \subseteq \Sol \subseteq V(\T)$.
As in~\cite{DBLP:conf/soda/ChudnovskyMPPR24}, we extend $\preceq$ to comparing partial solutions:
$(\T,X,\Sol) \preceq (\T',X',\Sol')$ if:
\begin{enumerate}[nosep]
    \item the weight of $X$ is larger than the weight of $X'$, or
    \item the weights of $X$ and $X'$ are equal, but $X \preceq_1 X'$;
    \item $X = X'$, but $\Sol \preceq_1 \Sol'$;
    \item $X = X'$ and $\Sol = \Sol'$, but $\T \preceq \T'$.
\end{enumerate}
We will be looking for a $\preceq$-minimal solution $(\T,X,\Sol)$
to our $(\phi,\td \leq d)$-\textsc{MWIS} problem on $G$.
As shown in~\cite{DBLP:conf/soda/ChudnovskyMPPR24}, $\T$ is then maximal and neat.

A partial solution $(\T', X', \Sol')$ is an \emph{extension} of a partial solution $(\T, X, \Sol)$ if $\T$ is an induced subgraph of $\T'$, every root of $\T$ is a root of $\T'$, and $X' \cap \T = X$ and $\Sol' \cap \T = \Sol$. 
An extension $(\T',X',\Sol')$ of $(\T,X,\Sol)$ is \emph{neat} if 
for every non-root node $v \in V(\T')$ if $v \notin V(\T)$, then 
there is an edge between the parent of $v$ and some descendant of $v$ in $\T'$. 

For a fixed threshold automaton $\Aa = (Q,\Sigma,\tau,\delta,C)$,
a \emph{multistate assignment} of a partial solution $(\T,X,\Sol)$
is a function $\xi : \{\emptyset\} \cup V(\T) \to \Multi(Q,\tau)$. 
We say that an extension $(\T',X',\Sol')$ of $(\T,X,\Sol)$ 
\emph{evaluates} to $\xi$ if for the run $\xi'$
of $\Aa$ on $(\T',X',\Sol')$ (with the aforementioned natural labeling of $\T'$)
it holds that
        \[ \lmset \xi'(z)~|~z\textrm{ is a root of }\T'\textrm{ but not of }\T \rmset \wedge \tau_\Aa = \xi(\emptyset), \]
         and, for every $v \in V(\T)$,
        \[ \lmset \xi'(z)~|~z \textrm{ is a child of }v \textrm{ in } \T' \textrm{ but not in }\T\rmset \wedge \tau_\Aa = \xi(v). \]
Intuitively, $\xi$ contains all the information
on how $\T' \setminus \T$ contributes to the run of $\Aa$ on $(\T',X',\Sol')$. 

We will be computing set families $\mathcal{F}$ as follows. 
\begin{definition}
Let $d,\widehat{d},\widetilde{d} > 0$ be integers
and $\Aa$ be a threshold automaton.
For a graph $G$ and a set $D \subseteq V(G)$, we say
that a family $\mathcal{F}$ of subsets of $D$
is a \emph{defect-$\widetilde{d}$ extension-$\widehat{d}$ treedepth-$d$
  container family} 
if for every partial solution $(\T,X,\Sol)$ in $G$ with $|V(\T)| \leq \widehat{d}$
and for every multistate assignment $\xi$ of $(\T,X,\Sol)$,
if $(\T',X',\Sol')$ is the $\preceq$-minimal extension of $(\T,X,\Sol)$
among extensions that satisfy (1) $V(\T') \setminus V(\T) \subseteq D$ and (2)
$(\T',X',\Sol')$ evaluates to $\xi$, then
if $(\T',X',\Sol')$ is actually a neat extension of $(\T,X,\Sol)$, then
there exists a tree decomposition $(T,\beta)$ of $G[D]$
such that for every $t \in V(T)$ there exists $A \in \mathcal{F}$
with $\beta(t) \subseteq A$ and $|A \cap \T'| \leq \widetilde{d}$. 

Furthermore, the family $\mathcal{F}$ is \emph{exact} if
additionally $\beta(t) = A$ in the above. 
\end{definition}

The dynamic programming algorithm of~\cite[Section~3]{DBLP:conf/soda/ChudnovskyMPPR24} can be expressed
as the following algorithmic lemma that we will use as a subroutine.
\begin{lemma}[\cite{DBLP:conf/soda/ChudnovskyMPPR24}]\label{lem:dp-algo}
Let $d,\widehat{d},\widetilde{d} > 0$ be fixed integers
and $\Aa$ be a fixed threshold automaton.
There exists a polynomial-time algorithm that takes on input a graph $G$, a set $D \subseteq V(G)$,
and a family $\mathcal{F}$ that is a defect-$\widetilde{d}$ extension-$\widehat{d}$ treedepth-$d$
container family 
and outputs, 
for every partial solution $(\T,X,\Sol)$ in $G$ with $|V(\T)| \leq \widehat{d}$
and for every multistate assignment $\xi$ of $(\T,X,\Sol)$,
the $\preceq$-minimal extension $(\T',X',\Sol')$ of $(\T,X,\Sol)$ 
among extensions that evaluate to $\xi$ and $V(\T') \setminus V(\T) \subseteq D$, 
provided that this is a neat extension.
\end{lemma}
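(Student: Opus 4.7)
The algorithm is a direct adaptation of the Bouchitt\'e--Todinca / Fomin--Todinca--Villanger dynamic programming scheme, with two layers of extra bookkeeping: the partial solution structure $(\T,X,\Sol)$ and the multistate assignment tracking the threshold automaton $\Aa$ run on the final extension. The plan follows the blueprint of~\cite{DBLP:conf/soda/ChudnovskyMPPR24}, relaxing the PMC framework so that the ``reasonable bags'' come from the arbitrary container family $\mathcal{F}$ rather than from the set of all $\T'$-avoiding PMCs.

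I would organise the dynamic programming around subproblems of the form $(C,\text{boundary data})$, where $C \subseteq D$ is a~\emph{block}, intuitively the vertex set of a~subtree of the (unknown) tree decomposition promised by $\mathcal{F}$. Concretely, a~block is the union of a~set $A \in \mathcal{F}$ (the root bag of the subtree) together with all vertices reachable from $A$ through some selected connected components of $G[D\setminus A]$. The boundary data consists of: (i) a~guess for $A \cap \T'$, together with how each of its vertices sits in $\T'$ (depth, parent/child relations, edges to its $\T'$-ancestors above $A$), (ii) a~guess for the intersections $A \cap X'$ and $A \cap \Sol'$, and (iii) a~target multistate $\xi_A \in \Multi(Q,\tau)$ describing the multiset of root-states that the subtree is supposed to contribute to its parent bag. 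Since $|A \cap \T'| \leq \widetilde{d}$ by the container family hypothesis and $|V(\T)| \leq \widehat{d}$, the number of such guesses at a~single bag is $n^{\Oh(\widetilde{d})}|Q|^{\Oh(\tau)}$, which is polynomial for fixed $\widetilde{d}, \widehat{d}, \Aa$.

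The recursive transition at a~subproblem with root bag $A$ would guess, for each connected component $F$ of $G[C\setminus A]$ whose neighbourhood meets $A$, the intersection of $\T'$ with the ``subtree rooted at $F$'', the $X',\Sol'$ values there, and the multistate $\xi_F$ it produces. The transition function $\delta$ then determines exactly which combinations of children multistates $\lmset \xi_F \rmset \wedge \tau$ evaluate, together with the guessed labels on $A \cap \T'$, to $\xi_A$. Among all combinations that are internally consistent (the $\T$-relations guessed for different bags glue into a~single rooted forest; the \CMSO-labels on $X',\Sol'$ agree on shared vertices) we select the one whose resulting partial solution is $\preceq$-minimal. The base case is $C = A$, where there is nothing to guess below and $\xi_A$ is read off the transition function directly.

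The main subtlety -- and the reason for the ``provided that this is a~neat extension'' clause -- is proving that the DP actually recovers the global $\preceq$-minimum. The container guarantee only promises the \emph{existence} of a~tree decomposition $(T,\beta)$ of $G[D]$ with $\beta(t)\in\mathcal{F}$ and $|\beta(t)\cap\T'|\leq\widetilde{d}$ for the \emph{optimal} extension; there is no canonical way for the DP to lock onto it in advance. The fix, already present in~\cite[Section~3]{DBLP:conf/soda/ChudnovskyMPPR24}, is to use that $\preceq$ totally orders \emph{neat} treedepth-$d$ structures, so when the optimal extension happens to be neat one can argue by induction along the recursion tree that each subproblem selects a~piece of the optimal solution (rather than incoherent fragments from different optima). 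Neatness is precisely what forbids ``floating'' subtrees whose root has no edge upward, which would otherwise let the DP glue pieces that no global tree decomposition witnesses. Given this, correctness is routine and the total running time is $|\mathcal{F}|\cdot n^{\Oh(\widetilde{d}+\widehat{d})}\cdot |Q|^{\Oh(\tau)}$, polynomial in the input.
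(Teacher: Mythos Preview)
The paper does not provide its own proof of this lemma; it is stated as a black box imported from~\cite[Section~3]{DBLP:conf/soda/ChudnovskyMPPR24}, with only the sentence ``The dynamic programming algorithm of~\cite[Section~3]{DBLP:conf/soda/ChudnovskyMPPR24} can be expressed as the following algorithmic lemma that we will use as a subroutine'' by way of justification. Your sketch is therefore not competing with any argument in the present paper, and it is a fair high-level summary of the cited dynamic programming: iterate over candidate bags $A\in\mathcal{F}$, guess the at most $\widetilde{d}$ vertices of $A\cap\T'$ together with their positions in the elimination forest and their $X'/\Sol'$ labels, propagate the threshold-automaton multistates bottom-up, and use the total order $\preceq$ on neat structures to break ties consistently so that the locally optimal choices glue into the global $\preceq$-minimum.

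One point worth tightening if you were to write this out in full: your definition of a ``block'' as a bag $A$ together with an arbitrary selection of components of $G[D\setminus A]$ would give exponentially many subproblems. In the actual framework the subproblems are indexed by pairs (container, single component of the complement) and the combination over children components happens inside the transition at the parent bag, not by enumerating all subsets of components. This is the standard BT/FTV trick and is implicit in what you wrote, but the phrasing ``some selected connected components'' is a little loose.
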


The family of Theorem~\ref{thm:bip} is stronger, and we capture it with 
the following definition.
\begin{definition}
Let $d,\widetilde{d} > 0$ be integers.
For a graph $G$, we say
that a family $\mathcal{F}$ of subsets of $V(G)$
is a \emph{defect-$\widetilde{d}$ treedepth-$d$
  container family} 
if for every treedepth-$d$ structure $\T$ in $G$
there exists a tree decomposition $(T,\beta)$ of $G$
such that for every $t \in V(T)$ there exists $A \in \mathcal{F}$
with $\beta(t) \subseteq A$ and $|A \cap \T'| \leq \widetilde{d}$. 

Furthermore, the family $\mathcal{F}$ is \emph{exact} if
additionally $\beta(t) = A$ in the above. 
\end{definition}

In both definitions of a container family we often omit ``treedepth-$d$''
if the constant $d$ is clear from the context.

Neatness is crucial for the following step.
\begin{lemma}\label{lem:use-neatness}
Let $d > 0$ be fixed. Let $G$ be a graph. 
Let $F \subseteq D \subseteq V(G)$. 
Let $(\T,X,\Sol)$ be a partial solution in $G$ and
let $(\T',X',\Sol')$ be a treedepth-$d$ structure in $G$ that is a neat extension of $\T$
with $V(\T') \setminus V(\T) \subseteq D$.

Let $\T_F$ consist of $\T$ and ancestors
 of all nodes of $\T'$ in $D \cap N(F)$.
 Let $\T_F'$ constist of $\T$ and ancestors
 of all nodes of $\T'$ in $F \cup (D \cap N(F))$. 
 Then, $(\T_F', X' \cap V(\T_F'), \Sol' \cap V(\T_F'))$ is a neat extension
 of $(\T_F, X' \cap V(\T_F), \Sol' \cap V(\T_F))$, 
 and $V(\T_F') \setminus V(\T_F) \subseteq F$.

 Furthermore, if $\Aa$ is a threshold automaton,
 $\xi$ is the multistate assignment that $(\T',X',\Sol')$ evaluates
 to in $(\T,X,\Sol)$ and $\xi_F$ is the multistate assignment 
that $(\T_F', X' \cap V(\T_F'), \Sol' \cap V(\T_F'))$
evaluates to in $(\T_F, X' \cap V(\T_F), \Sol' \cap V(\T_F))$,
 then if $(\T',X',\Sol')$ is the $\preceq$-minimum extension
 of $(\T,X,\Sol)$ among extensions that evaluate to $\xi$ and satisfy
 $V(\T') \setminus V(\T) \subseteq D$, then 
$(\T_F', X' \cap V(\T_F'), \Sol' \cap V(\T_F'))$
is the $\preceq$-minimum extension of $(\T_F, X' \cap V(\T_F), \Sol' \cap V(\T_F))$
among extensions 
that evaluate to $\xi_F$ and satisfy $V(\T_F') \setminus V(\T_F) \subseteq F$. 
\end{lemma}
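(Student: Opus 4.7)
The plan is to prove the lemma in three stages: the containment $V(\T_F') \setminus V(\T_F) \subseteq F$, the neatness of the smaller extension, and the $\preceq$-minimality.

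For the containment, take $v \in V(\T_F') \setminus V(\T_F)$; by construction $v \notin V(\T)$, $v \in D$, $v$ is a $\T'$-ancestor of some vertex in $V(\T') \cap F$, and $v$ has no $\T'$-descendant in $D \cap N(F)$. Since every $\T'$-descendant of $v$ lies in $D$ (as $V(\T') \setminus V(\T) \subseteq D$), no $\T'$-descendant of $v$ belongs to $N(F)$. Assume for contradiction $v \notin F$ and pick a deepest $F$-descendant $w^*$ of $v$ in $\T'$. If $w^*$ has a $\T'$-child $c$, then $c \notin F$ by the deepest choice, and neatness of $(\T', X', \Sol')$ over $(\T, X, \Sol)$ applied to $c$ produces a $G$-edge from $w^*$ to a non-$F$ descendant of $c$, placing that descendant in $D \cap N(F)$ strictly beneath $v$ --- a contradiction. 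Otherwise $w^*$ is a $\T'$-leaf and neatness applied to $w^*$ forces its parent $w^*_p$ to be $G$-adjacent to $w^*$; either $w^*_p \notin F$ (then $w^*_p \in D \cap N(F)$, contradiction) or $w^*_p \in F$, and a second application of the child-based argument to $w^*_p$ shows that every $\T'$-child of $w^*_p$ is in $F$, with subtree entirely in $F$. Iterating upward along the chain of $F$-ancestors, the chain must eventually either hit a non-$F$ vertex above (forcing a $D \cap N(F)$-vertex below $v$, contradiction) or reach $v$ itself as an $F$-vertex, contradicting $v \notin F$.

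For the neatness of $(\T_F', X' \cap V(\T_F'), \Sol' \cap V(\T_F'))$ over $(\T_F, X' \cap V(\T_F), \Sol' \cap V(\T_F))$, fix $v \in V(\T_F') \setminus V(\T_F) \subseteq F$ that is non-root in $\T_F'$. The same style of argument shows that the $\T'$-subtree rooted at $v$ lies entirely in $F$: any non-$F$ vertex inside would be adjacent to some $F$-vertex of the subtree via neatness of $(\T', X', \Sol')$, placing it in $D \cap N(F)$ below $v$ and contradicting $v \notin V(\T_F)$. Consequently the $\T_F'$-descendants of $v$ coincide with its $\T'$-descendants, and the $\T_F'$-parent of $v$ equals its $\T'$-parent; neatness of $(\T', X', \Sol')$ over $(\T, X, \Sol)$ applied to $v$ then produces a $\T'$-descendant of $v$ adjacent in $G$ to this parent, and this witness automatically lies in $V(\T_F')$.

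For the $\preceq$-minimality I argue by exchange. Suppose $(\widetilde{\T}_F, \widetilde{X}_F, \widetilde{\Sol}_F)$ is an extension of $(\T_F, X' \cap V(\T_F), \Sol' \cap V(\T_F))$ with $V(\widetilde{\T}_F) \setminus V(\T_F) \subseteq F$, evaluating to $\xi_F$, and strictly $\preceq$-smaller than $(\T_F', X' \cap V(\T_F'), \Sol' \cap V(\T_F'))$. I glue $\widetilde{\T}_F$ with the peripheral part $V(\T') \setminus V(\T_F')$ of $\T'$: each maximal $\T'$-subtree in $V(\T') \setminus V(\T_F')$ has its $\T'$-parent $p$ in $V(\T_F')$, and I reattach its root as a child of the lowest $V(\T_F)$-ancestor $q$ of $p$ in $\T'$, which coincides with $p$ when $p \in V(\T_F)$ and is a strict $\T'$-ancestor of $p$ when $p \in F \setminus V(\T_F)$. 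Validity of the resulting treedepth structure rests on two facts: no $G$-edge connects the periphery to $V(\widetilde{\T}_F) \setminus V(\T_F) \subseteq F$ (otherwise the peripheral endpoint would be in $N(F) \cap D \subseteq V(\T_F)$, contradicting its membership in $V(\T') \setminus V(\T_F')$), and when $p \in F \setminus V(\T_F)$ no $V(\T_F)$-vertex is a $\T'$-descendant of $p$ (by ancestor-closedness of $V(\T)$ and $V(\T_F)$), so all $V(\T_F)$-ancestors of the peripheral subtree in $\T'$ are also $\widetilde{\T}$-ancestors of the reattached subtree. The main obstacle, which I expect to require the most care, is to verify that the glued extension evaluates to $\xi$: one must propagate the automaton run up the modified structure, using that $\widetilde{\T}_F$ and $\T_F'$ both evaluate to $\xi_F$ to conclude by induction on $\T_F$-depth that the states at $V(\T_F)$-vertices in the glued tree match those in $\T'$, while the intact peripheral subtrees deliver identical contributions to the same $V(\T_F)$-nodes in both. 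Once evaluation is preserved, the glued extension strictly $\preceq$-beats $(\T', X', \Sol')$, contradicting the assumed minimality of $(\T', X', \Sol')$.
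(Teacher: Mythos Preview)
Your proposal follows the same overall strategy as the paper --- containment $V(\T_F')\setminus V(\T_F)\subseteq F$, then neatness, then an exchange argument for $\preceq$-minimality --- but the paper's execution is cleaner in two places, and your write-up has one underspecified step.

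For the containment, the paper proves a single local claim: if $u$ is the $\T'$-parent of $v$, both lie outside $V(\T)$ and outside $D\cap N(F)$, and exactly one of them lies in $F$, then both lie in $V(\T_F)$. The proof uses that iterated neatness makes the subgraph of $G$ induced on the $\T'$-subtree of any $v\notin V(\T)$ connected, so there is a $G$-path from $v$ to $u$ through descendants of $v$; any $F$/non-$F$ crossing on this path produces a vertex of $D\cap N(F)$ below $v$. It follows at once that for every $v\in V(\T_F')\setminus V(\T_F)$ the entire $\T'$-subtree at $v$ lies in $F$ (and hence in $V(\T_F')\setminus V(\T_F)$). Your deepest-$F$-descendant analysis handles the non-leaf case and the leaf case with $w^*_p\notin F$ directly, but the ``iterating upward'' step when $w^*_p\in F$ is not complete as written: when you pass to $w^*_{pp}$ you need every other child $c$ of $w^*_{pp}$ to have its subtree in $F$, yet the neatness witness for a non-$F$ child $c$ can land on an $F$-vertex at the same depth as $w^*$, so a further recursive sub-argument (not a straight repeat of the child-based step) is needed to close the case.

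For the exchange, the consequence above --- subtrees rooted in $V(\T_F')\setminus V(\T_F)$ lie entirely in $V(\T_F')\setminus V(\T_F)$ --- means that every maximal peripheral subtree in $V(\T')\setminus V(\T_F')$ already has its $\T'$-parent in $V(\T_F)$. Your reattachment case $p\in F\setminus V(\T_F)$ is therefore vacuous, and the swap is simply: replace the $\T_F'\setminus\T_F$ forest by the $\widetilde{\T}_F\setminus\T_F$ forest, keeping the periphery attached at exactly the same $V(\T_F)$-nodes. The ``main obstacle'' you anticipate for the evaluation-to-$\xi$ check then disappears, since the peripheral contributions and the $\xi_F$-governed contributions feed into identical $V(\T_F)$-nodes in both structures.
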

 \begin{proof}
 Let $v \in V(\T')$ and $u \in V(\T')$ be such that $u,v \notin V(\T)$,
 $u,v \notin N(F) \cap D$, and $u$ is the parent of $v$ in $\T'$. 
 We claim if exactly one of $u,v$ is in $F$, then $u,v \in V(\T_F)$. 

 Since $(\T',X',\Sol')$ neatly extends $(\T,X,\Sol)$,
 there exists a path $P$ from $v$ to $u$ with internal vertices
 being the descendants of $v$ in $\T'$ (so, in particular, $P$ is disjoint
     with $\T$). 
 If exactly one of $u,v$ is in $F$, then $P$ needs to contain an edge
 $xy$ with $x \in F$ and $y \notin F$. As $P$ is disjoint with $\T$,
 $V(P) \subseteq D$. Hence, $y \in N(F) \cap D$. 
 Since $y$ is a descendant of $v$ in $\T'$, we have $u,v \in V(\T_F)$, as desired.
 
 We infer that if $v \in V(\T_F') \setminus V(\T_F)$, then all descendants
 of $v$ in $\T'$ are also in $V(\T_F') \setminus V(\T_F)$. 
 Also, if additionally $v \in F$, then all descencendants of $v$ in $\T'$ are in $F$.
 Hence, $(\T_F', X' \cap V(\T_F'), \Sol' \cap V(\T_F'))$ is an extension
 of $(\T_F, X' \cap V(\T_F), \Sol' \cap V(\T_F))$
 and $V(\T_F') \setminus V(\T_F) \subseteq F$.
 Neatness is immediate from the definition.
 
 For the last claim, assume that $(\T_F'', X'', \Sol'')$ is a $\preceq$-smaller
 extension of $(\T_F, X' \cap V(\T_F), \Sol' \cap V(\T_F))$ that evaluates to $\xi_F$
 and with $V(\T_F'') \setminus V(\T_F) \subseteq F$. 
 Then, one easily observes that if we replace $\T_F' \setminus \T_F$ with $\T_F'' \setminus \T_F$
and similarly for $X'$ and $\Sol'$ in $(\T',X',\Sol')$ we obtain an extension
of $(\T,X,\Sol)$ that evaluates to the same $\xi$ but
that is $\preceq$-smaller, a contradiction.
 \end{proof}

We now prove a lemma that handles components $D$ as in the last case ``$D \in \mathcal{D}$''
of Theorem~\ref{thm:to-bip}, using Theorem~\ref{thm:bip} and recursion. 

\begin{lemma}\label{lem:handle-bip}
    Let $d,\widehat{d} \geq 1$ be fixed integers.
    
    Let $G$ be a graph, $D \subseteq V(G)$, and $D = L \uplus R$ be a partition, 
    let $\mathcal{D}$ be the family of connected components of $G[L]$
    and the connected components of $G[R]$. 
    Assume that every $F \in \mathcal{D}$ is a module of $G[D]$
    and let $H$ be the quotient graph of $G[D]/\mathcal{D}$. 

    Furthermore, assume we are given a defect-$d_1$ exact decomposition family $\mathcal{F}_H$ of $H$ 
    and, for every $F \in \mathcal{D}$, we are given a defect-$d_2$ extension-$(d^2+\widehat{d})$ decomposition
    family $\mathcal{F}_{F}$ of $G[D']$. 

    Then, there exists a defect-$((d+1)d_1+d_2)$ extension-$\widehat{d}$
    decomposition family $\mathcal{F}$ of $D$ in $G$ of size
    at most $(1+|\mathcal{F}_H|)^{d_1} + \sum_{F \in \mathcal{D}} |\mathcal{F}_{F}|$. 
    Furthermore, it can be computed in time polynomial in the input and output size,
    given $G$, $D$, $L \uplus R$, $\mathcal{F}_H$, and $(\mathcal{F}_{F})_{F \in \mathcal{D}}$. 

    Moreover, if all families $\mathcal{F}_{F}$ for $F \in \mathcal{D}$ are exact,
    then $\mathcal{F}$ is exact, too.
\end{lemma}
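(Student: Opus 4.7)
The plan is to lift an exact tree decomposition of the quotient $H$ provided by $\mathcal{F}_H$ to $G[D]$, and then refine inside each module using $\mathcal{F}_F$. Fix a partial solution $(\T,X,\Sol)$ with $|V(\T)|\le\widehat d$ and a multistate assignment $\xi$, and let $(\T',X',\Sol')$ be the $\preceq$-minimal neat extension of $(\T,X,\Sol)$ with $V(\T')\setminus V(\T)\subseteq D$ that evaluates to $\xi$. Call $F\in\mathcal{D}$ \emph{used} if $F\cap(V(\T')\setminus V(\T))\ne\varnothing$. The module property gives the crucial geometric input: any $v\in V(\T')\setminus F$ adjacent in $G$ to a vertex of $V(\T')\cap F$ is complete to $F$ and, as an elimination-forest neighbor of every vertex of $V(\T')\cap F$, is $\T'$-comparable to all of them. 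In particular, if $F$ and $F'$ are both used and some $u\in V(\T')\cap F$ is $\T'$-adjacent to some $u'\in V(\T')\cap F'$, then $F\sim F'$ in $H$. Placing each used $F$ at the minimum depth in $\T'$ of its vertices, with the ancestor relation inherited from $\T'$ applied to representatives, therefore defines a legitimate treedepth-$d$ structure $\T_H$ on the used modules viewed as a subset of $V(H)$.

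Apply the exact family $\mathcal{F}_H$ to $H$ together with $\T_H$ to obtain a tree decomposition $(T_H,\beta_H)$ of $H$ with $\beta_H(t)\in\mathcal{F}_H$ and $|\beta_H(t)\cap V(\T_H)|\le d_1$. For every used module $F$, invoke Lemma~\ref{lem:use-neatness} with $F$ in the role of the lemma's $F$: this produces a partial solution $(\T_F,X'_F,\Sol'_F)$ extending $(\T,X,\Sol)$ by the $\T'$-ancestors of vertices in $V(\T')\cap(D\cap N(F))$, together with an induced multistate $\xi_F$, and certifies that the natural restriction of $(\T',X',\Sol')$ to $F$ is the $\preceq$-minimal extension of $(\T_F,X'_F,\Sol'_F)$ with new vertices in $F$ evaluating to $\xi_F$. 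The modularity argument above ensures that $V(\T')\cap(D\cap N(F))$ is confined to a bounded union of chains in $\T'$, yielding $|V(\T_F)|\le\widehat d+d^2$, so that $\mathcal{F}_F$ applies and delivers a tree decomposition $(T_F,\beta_F)$ of $G[F]$ each of whose bags sits inside some element of $\mathcal{F}_F$ and meets $V(\T')$ in at most $d_2$ vertices.

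Now splice the pieces. Start from the modular lift of $(T_H,\beta_H)$ to $G[D]$, obtained by replacing each occurrence of $v\in V(H)$ inside a bag by the module $F_v\in\mathcal{D}$. At every node $t\in V(T_H)$ and for every used $F\in\beta_H(t)$, graft the local decomposition $(T_F,\beta_F)$ by the standard modular rule, refining the contribution of $F$ to $\beta_H(t)$ across $s\in V(T_F)$. Each resulting bag meets $V(\T')$ in at most $(d+1)d_1+d_2$ vertices: the at most $d_1$ used modules in $\beta_H(t)$ different from the currently traversed one contribute at most $d$ vertices of $V(\T')$ each, totalling $d\cdot d_1$; an extra $d_1$ accounts for the shallowest $\T'$-representatives of those modules (the $\T_H$-entries); and the currently traversed module contributes at most $d_2$ via the chosen bag $\beta_F(s)$. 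Declare $\mathcal{F}$ to consist of these combined sets, ranging over $A\in\mathcal{F}_H$ and an element from one $\mathcal{F}_F$ at a time; the size bound follows directly from enumerating these combinations, and exactness is preserved because whenever each $\mathcal{F}_F$ is exact, the grafted bags equal rather than only are contained in the listed unions.

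The main obstacle is the uniform bound $|V(\T_F)|\le\widehat d+d^2$. It reduces to showing that vertices of $V(\T')\cap (D\cap N(F))$ lie on only $\Oh(d)$ common-ancestor chains in $\T'$, each of length $\le d$; this follows from combining modularity of $F$ in $G[D]$ with the elimination-forest structure of $\T'$, and in the borderline case where $V(\T')\cap F$ is a single $\T'$-chain, from exploiting neatness together with the $\preceq$-minimality of $(\T',X',\Sol')$ to rule out unboundedly many descendants in $D\cap N(F)$. Once this bound is in place, both the legitimacy of $\T_H$ on $H$ and the correctness of the splicing are routine verifications.
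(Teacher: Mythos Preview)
Your overall strategy (lift the quotient tree decomposition and refine inside modules via $\mathcal{F}_F$) matches the paper, but there is a genuine gap in the defect accounting. You assert that ``the at most $d_1$ used modules in $\beta_H(t)$ different from the currently traversed one contribute at most $d$ vertices of $V(\T')$ each.'' This is not justified and is false as stated: a single bag of $(T_H,\beta_H)$ can contain several used modules $F$ with $|F\cap V(\T')|>d$, and refining one of them with $\mathcal{F}_F$ still leaves the others contributing unboundedly many solution vertices. Nothing in modularity or in the treedepth bound on $H$ prevents this.

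The paper repairs exactly this point by introducing the set $\mathcal{D}^{\mathrm{big}}_{\T'}=\{F:|F\cap V(\T')|>d\}$ and observing (via $K_{d+1,d+1}$-freeness of graphs of treedepth $\le d$) that big modules are pairwise nonadjacent in $H$ and that each big $F$ has $\big|\bigcup_{F'\in N_H(F)}F'\cap V(\T')\big|\le d$. It then \emph{modifies} $(T_H,\beta_H)$: first drags $N_H(F)$ into every bag containing a big $F$, and then removes big modules from all original bags, placing each big $F$ alone in a fresh leaf bag. After this, every bag contains at most one big module, and that is the only one refined via $\mathcal{F}_F$; all other used modules in the bag are ``small'' by construction and genuinely contribute at most $d$ each. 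Your splicing skips this isolation step, so the $(d+1)d_1+d_2$ bound does not follow.

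A second, related issue: the bound $|V(\T_F)|\le \widehat d+d^2$ that you call the ``main obstacle'' is easy precisely for big $F$ (the $K_{d+1,d+1}$ argument gives $|D\cap N(F)\cap V(\T')|\le d$, and taking ancestors multiplies by $d$). For an arbitrary used $F$ it can fail: if $F\cap V(\T')$ is a single root $u$, a neighbouring module may contain unboundedly many children of $u$ in $\T'$. Your appeal to ``neatness together with $\preceq$-minimality'' does not rule this out. The paper avoids the problem entirely by only invoking $\mathcal{F}_F$ for big $F$.
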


\begin{proof}
Fix $(\T,X,\Sol)$, $\xi$, and $(\T',X',\Sol')$ as in the definition of
a extension-$\widehat{d}$ decomposition family of $D$ in $G$.
As in the definition, assume that $(\T',X',\Sol')$ is a neat extension
of $(\T,X,\Sol)$. 

Observe that $\mathcal{D}$ is a partition of $D$.  
Let 
\[\mathcal{D}_{\T'} = \{F \in \mathcal{D}~|~ F \cap \T' \neq \emptyset\} .\]
As $H[\mathcal{D}_{\T'}]$ is isomorphic to an induced subgraph of $\T'$, it is of treedepth at most $d$. 
Consequently, by the promise of $\mathcal{F}_H$, there exists a tree decomposition $(T_H,\beta_H)$ of $H$
such that for every $t \in V(T_H)$ we have $\beta_H(t) \in \mathcal{F}_H$ and $|\beta_H(t) \cap \mathcal{D}_H| \leq d_1$. 

Let 
\[ \mathcal{D}^\mathrm{big}_{\T'} = \{F \in \mathcal{D}~|~|F \cap \T'| > d\} \subseteq \mathcal{D}_{\T'}. \]
As $\T'$ cannot contain $K_{d+1,d+1}$, we have the following.
\begin{claim}\label{cl:handle-bip:big}
For every $F \in \mathcal{D}^\mathrm{big}_{\T'}$, $\bigcup_{F' \in N_H(F)} F'$ contains at most $d$ elements of $\T$.
In particular, $\mathcal{D}^\mathrm{big}_{\T'}$ is an independent set in $H$.
\end{claim}

Observe that $(T_H,\beta')$ where $\beta'(t) = \bigcup_{F \in \beta_H(t)} F$ is a tree decomposition of $G[D]$.
However, $\beta'(t)$ may contain unbounded number of elements of $\T'$ if $\beta_H(t)$ contains an element of $\mathcal{D}^\mathrm{big}_{\T'}$.
We now modify $(T_H,\beta_H)$ to isolate elements of $\mathcal{D}^\mathrm{big}_{\T'}$, so that we can use the families $\mathcal{F}_{F}$ to handle them. 

As a first step, for every $F \in \mathcal{D}^\mathrm{big}_{\T'}$,
we drag all nodes of $N_H(F)$ to be present in every bag where $F$ is present, too.
More precisely, we define $\beta_H^1 : V(T_H) \to 2^{V(H)}$ as
\[ \beta_H^1(t) := \beta_H(t) \cup \bigcup_{F \in \mathcal{D}^\mathrm{big}_{\T'} \cap \beta_H(t)} N_H(F). \]
To see that $(T_H,\beta_H^1)$ is a tree decomposition of $H$, note that for every $F \in \mathcal{D}^\mathrm{big}_{\T'}$ and
  $F' \in N_H(F)$, there exists a node $t$ with $F,F' \in \beta(t)$, and thus $\{t\in V(T_H)~|~F' \in \beta_H^1(t)\}$ is connected.
Furthermore, as $\bigcup_{F' \in N_H(F)} F'$ contains at most $d$ elements of $\T$ due to Claim~\ref{cl:handle-bip:big}
and every bag $\beta_H(t)$ contains at most $d_1$ elements of $\mathcal{D}_{\T'}$, we have
  \[ \forall_{t \in V(T_H)} \left|\bigcup_{F \in \beta_H^1(t) \setminus \mathcal{D}^\mathrm{big}_{\T'}} F \cap V(\T')\right| \leq  (d+1)d_1. \]

As a second step, for every $F \in \mathcal{D}^\mathrm{big}_{\T'}$, we fix one node $t_F' \in V(T_H)$ with $F \in \beta_H^1(t_F')$ and add a new degree-1 node $t_F$ to $T_H$, adjacent to $t_F'$. 
Let $T_H^2$ be the resulting tree. For $t \in V(T_H^2)$, define
\[ \beta_H^2(t) := \begin{cases} \beta_H^1(t) \setminus \mathcal{D}^\mathrm{big}_{\T'} & \mathrm{if}\ t \in V(T_H)\\ \beta_H^1(t_F') &\mathrm{if}\ t=t_F\ \mathrm{for\ some\ }F \in \mathcal{D}^\mathrm{big}_{\T'}. \end{cases} \]
It is immediate that $(T_H^2,\beta_H^2)$ is a tree decomposition of $H$ and for every $t \in V(T_H^2)$
\[ \beta_H^2(t) \cap \mathcal{D}^\mathrm{big}_{\T'} = \begin{cases} \emptyset & \mathrm{if}\ t \in V(T_H) \\
    \{F\} & \mathrm{if}\ t=t_F\ \mathrm{for\ some\ }F \in \mathcal{D}^\mathrm{big}_{\T'}. \end{cases} \]
 Furthermore, as every bag of $(T_H^2,\beta_H^2)$ is a subset of some bag of $(T_H,\beta_H^1)$, we have
 \begin{equation}\label{eq:handle-bip:2bound}
   \forall_{t \in V(T_H^2)} \left|\bigcup_{F \in \beta_H^2(t) \setminus \mathcal{D}^\mathrm{big}_{\T'}} F \cap V(\T')\right| \leq  (d+1)d_1. 
 \end{equation}

 Fix $F \in \mathcal{D}^\mathrm{big}_{\T'}$.
 Let $\T_F$ consist of $\T$ and ancestors
 of all nodes of $\T'$ in $D \cap N(F)$; note that $|V(\T_F)| \leq d^2 + \widehat{d}$
 as $|D \cap N(F) \cap V(\T')| \leq d$.
 Let $\T_F'$ constist of $\T$ and ancestors
 of all nodes of $\T'$ in $F \cup (D \cap N(F))$. 
 Lemma~\ref{lem:use-neatness} asserts that 
 $(\T_F', X' \cap V(\T_F'), \Sol' \cap V(\T_F'))$ is a neat extension
 of $(\T_F, X' \cap V(\T_F), \Sol' \cap V(\T_F))$, 
 and $V(\T_F') \setminus V(\T_F) \subseteq F$.
 
 Let $\xi_F$ be the multistate assignment such that
 $(\T_F', X' \cap V(\T_F'), \Sol' \cap V(\T_F'))$ as an extension of
 $(\T_F, X' \cap V(\T_F), \Sol' \cap V(\T_F))$ evaluates to. 
 Lemma~\ref{lem:use-neatness} asserts also 
 that $(\T_F', X' \cap V(\T_F'), \Sol' \cap V(\T_F'))$ is the $\preceq$-minimum
 extension of $(\T_F, X' \cap V(\T_F), \Sol' \cap V(\T_F))$
 among extensions that evaluate fo $\xi_F$ and 
 satisfy $V(\T_F') \setminus V(\T_F) \subseteq F$.

Therefore, the promise for $\mathcal{F}_F$ applies.
That is, there exists a tree decomposition $(T_F,\beta_F)$ for $(\T_F',X' \cap V(\T_F'), \Sol' \cap V(\T_F'))$:
 for every $t \in V(T_D)$ there exists $A \in \mathcal{F}_F$ such that
  (1) $\beta_F(t) \subseteq A$, and (2) $|A \cap \T_F'| \leq d_2$. 
  By the definition of $\T_F'$, the last point also implies
  $|A \cap \T'| \leq d_2$.

We now construct a tree decomposition $(T,\beta)$ of $G[D]$ as follows. 
To construct $T$, start with $T_H^2$ and, for every $F \in \mathcal{D}^\mathrm{big}_\T$, replace $t_F$ with $T_F$, i.e., 
   remove $t_F$ and put $T_F$ instead, with its root adjacent to $t_F'$.
Define $\beta$ as follows:
\[ \beta(t) := \begin{cases} \bigcup_{F \in \beta_H^2(t)} F & \mathrm{if}\ t \in V(T_H) \\
                \left( \bigcup_{F' \in \beta_H^2(t_F')} F' \right) \cup \beta_F(t) & \mathrm{if}\ t \in V(T_F)\ \mathrm{for\ some\ }F \in \mathcal{D}^\mathrm{big}_\T.\end{cases} \]

A direct check shows that $(T,\beta)$ is a tree decomposition of $G[D]$. 

We output the family $\mathcal{F}$ consisting of the following sets:
\begin{itemize}
 \item For every $A \in \mathcal{F}$ and $Z \subseteq A$ that is independent in $H$ and is of size at most $d_1$, the set 
    \[ \left(\bigcup_{F \in A \setminus Z} F\right) \cup  \left(\bigcup_{F \in Z} \bigcup_{F' \in N_H(F)} F' \right). \]
    Note that every bag $\beta(t)$ for $t \in V(T_H)$ is of the above form and there are at most $(1+|\mathcal{F}_H|)^{d_1}$ such sets.
    Furthermore, by~\eqref{eq:handle-bip:2bound}, every such set contains at most
    $(d+1)d_1$ elements of $\T'$. 
 \item For every $F \in \mathcal{D}$ and $B \in \mathcal{F}_F$, the set 
    \[ \left( \bigcup_{F' \in N_H(F)} F' \right) \cup B. \]
    Note that every bag $\beta(t)$ for $t \in V(T_F)$ for some $F \in \mathcal{D}^\mathrm{big}_\T$ is contained in a set of the above form
    (or exactly equal to a set of the above form if the families $\mathcal{F}_F$ are exact),
    there are at most $\sum_{F \in \mathcal{D}} |\mathcal{F}_F|$ such sets,
    and every such set contains at most $d + d_2$ elements of $\T'$.
\end{itemize}
 This finishes the proof of the lemma.
\end{proof}

We are now ready to state the technical statement that we will prove
by induction on the clique number $k$.
\begin{theorem}\label{thm:syf}
For every fixed integers $d,\widehat{d},k > 0$ and a threshold
automaton $\Aa$, there exists a constant $\widetilde{d}$
and an algorithm 
that takes on input a vertex-weighted $P_7$-free graph $G$ and a set $D \subseteq V(G)$
with $\omega(G[D]) \leq k$,
runs in polynomial time and computes an family $\mathcal{F}$ of subsets
of $D$ that is a defect-$\widetilde{d}$ extension-$\widehat{d}$ treedepth-$d$
container family for $D$.
\end{theorem}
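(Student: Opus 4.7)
The plan is to prove Theorem~\ref{thm:syf} by induction on $k$, recursively applying Theorem~\ref{thm:to-bip} to $G[D]$ and delegating to Theorem~\ref{thm:bip} for the bipartite quotients encountered along the way. The auxiliary combinator Lemma~\ref{lem:handle-bip} is the glue that makes the recursion composable.

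\emph{Base case ($k=1$).} The graph $G[D]$ is edgeless, so every treedepth-$d$ structure contained in $D$ is just a set of isolated roots. The family $\mathcal{F} = \{\{v\} : v \in D\}$, paired with a tree decomposition of $G[D]$ consisting of one singleton bag per vertex, yields $\widetilde{d} = 1$.

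\emph{Inductive step ($k \geq 2$).} Fix an arbitrary partial solution $(\T, X, \Sol)$ in $G$ with $|V(\T)| \le \widehat{d}$, a multistate assignment $\xi$, and the corresponding $\preceq$-minimum extension $(\T', X', \Sol')$, which we may assume is neat (otherwise the guarantee is vacuous). Let $\T^*$ be any maximal treedepth-$d$ structure in $G[D]$ containing $\T' \cap D$. Invoke Theorem~\ref{thm:to-bip} on $G[D]$ to obtain a family $\mathcal{F}_{\mathrm{sep}}$ of triples $(K, \mathcal{D}_0, L)$ approximating $\T^*$-avoiding minimal separators of $G[D]$. For every $(K, \mathcal{D}_0, L) \in \mathcal{F}_{\mathrm{sep}}$ and every $D_0 \in \mathcal{D}_0$, process $D_0$ in three steps:
\begin{itemize}
\item Each module $C \in \mathcal{C}(D_0)$ satisfies $\omega(G[C]) < k$ (because $D_0 \setminus C$ contains a vertex complete to $C$), so the inductive hypothesis, applied with extension parameter $d^2 + \widehat{d}$ as required by Lemma~\ref{lem:handle-bip}, yields a container family $\mathcal{F}_C$.
\item The quotient $H_{D_0} = G[D_0]/\mathcal{C}(D_0)$ is $P_7$-free and bipartite, so Theorem~\ref{thm:bip} yields a family $\mathcal{F}_{H_{D_0}}$.
\item Lemma~\ref{lem:handle-bip} combines $\mathcal{F}_{H_{D_0}}$ and $\{\mathcal{F}_C\}_C$ into a container family $\mathcal{F}_{D_0}$ for $D_0$ in $G$.
\end{itemize}

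The family $\mathcal{F}$ is then assembled by covering every PMC of $G[D]$ coming from a $\T^*$-aligned minimal chordal completion. Lemma~\ref{lem:PMCs_with_deep_vertices} and Corollary~\ref{cor:pmc-cover} split the PMCs into three types: (i) those containing a depth-$d$ vertex of $\T^*$, which are handled via Lemma~\ref{lem:PMCs_with_deep_vertices}; (ii) those of the form $N[v]$ in $G[D]$, which are included directly; and (iii) the remaining $\T^*$-avoiding PMCs, each a union of at most $c$ minimal separators of $G[D]$ whose approximations lie in $\mathcal{F}_{\mathrm{sep}}$. In case (iii), the superbag placed into $\mathcal{F}$ is the union of the corresponding $c$ sets $K$ together with one element of $\mathcal{F}_{D_0}$ per affected component $D_0$, chosen to cover the intersection of the PMC with $D_0$. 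The main obstacle lies in this final assembly: we must argue that these candidate superbags can actually be woven into a single tree decomposition of $G[D]$, by stitching the tree decompositions of individual $D_0$'s (furnished by $\mathcal{F}_{D_0}$) along the backbone of common cutsets $K$, and checking both the edge-covering and running-intersection axioms. Since the defect and extension parameters grow by at most a polynomial (in $d$) factor at each of the $k$ inductive layers, the resulting $\widetilde{d}$ remains a constant depending only on $k$, $d$, $\widehat{d}$, and $\Aa$, and the overall running time is polynomial.
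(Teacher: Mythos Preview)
Your high-level induction on $k$, the invocation of Theorem~\ref{thm:to-bip}, the recursion into modules $C \in \mathcal{C}(D_0)$ (where $\omega$ drops), the use of Theorem~\ref{thm:bip} on the bipartite quotient, and the combination via Lemma~\ref{lem:handle-bip} are all exactly right and match the paper. The gap is in the final assembly, and you correctly flag it as ``the main obstacle'' --- but the resolution you sketch does not work.

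Concretely, you propose that for a $\T^*$-avoiding PMC $\Omega$ (or minimal separator $S$), the container be $K$ together with ``one element of $\mathcal{F}_{D_0}$ per affected component $D_0$, chosen to cover the intersection of the PMC with $D_0$''. But $\mathcal{F}_{D_0}$ is a container family for bags of \emph{some} tree decomposition of $G[D_0]$; there is no reason $S \cap D_0$ should sit inside any single such bag. The set $S \cap D_0$ is just the trace of a separator of the ambient graph on a subgraph, and can look arbitrary from the point of view of $G[D_0]$. So ``choosing one element to cover $S \cap D_0$'' is not generally possible, and the stitching you allude to has no foothold.

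The paper resolves this differently. The container family $\mathcal{F}''$ for $\bigcup\mathcal{D}$ (produced by Lemma~\ref{lem:handle-bip}) is not used to build containers directly; it is fed into the dynamic programming of Lemma~\ref{lem:dp-algo}. For every partial solution $(\T_\Psi, X_\Psi, \Sol_\Psi)$ of size at most $\widehat{d} + dd_1$ and every multistate assignment $\xi_\Psi$, the DP computes the actual $\preceq$-minimum extension $(\T'_\Psi, X'_\Psi, \Sol'_\Psi)$ into $\bigcup\mathcal{D}$. The container for $S$ is then taken to be $K \cup (\bigcup\mathcal{D} \setminus V(\T'_\Psi))$: everything in the tricky components \emph{except} the solution vertices. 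This contains $S$ because Theorem~\ref{thm:to-bip} guarantees $S \cap V(\T') \subseteq K$, and Lemma~\ref{lem:use-neatness} ensures $\T'_\Psi$ agrees with $\T'$ on $\bigcup\mathcal{D}$ for the right choice of $(\T_\Psi, \xi_\Psi)$; the defect is bounded because $\bigcup\mathcal{D} \setminus V(\T'_\Psi)$ meets $\T'$ in essentially nothing. These separator-containers are then combined $c$ at a time (Corollary~\ref{cor:pmc-cover}) together with $\{N[v]\}$ to cover all PMCs, and a clique tree of a $\T''$-aligned chordal completion serves as the required tree decomposition --- no stitching needed. A minor knock-on effect is the recursion's extension parameter: it must be $\widehat{d} + dd_1 + d^2$, not $\widehat{d} + d^2$, to absorb the ancestors of $K \cap V(\T')$ when setting up $\T_\Psi$.
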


\begin{proof}
Let $d,\widehat{d},k > 0$ be fixed. We will proceed by induction on $k$. 
In the base case, $k = 1$, so $G[D]$ is edgeless and
$\mathcal{F} = \{\{v\}~|~v \in D\}$ satisfies the requiments with $\widetilde{d}=1$. 
Assume then $k > 1$. 

Let $d_1$ be the bound on $|K \cap V(\T)|$ of Theorem~\ref{thm:to-bip}
for the fixed $d,k$.
Let $\widetilde{d}'$ be the constant from the inductive assumption for treedepth $d$,
clique number $k-1$, and extension $\widehat{d}+dd_1+d^2$. 

We invoke Theorem~\ref{thm:to-bip} on $G[D]$, obtaining a family $\mathcal{F}'$.
For every $\Psi := (K,\mathcal{D},L) \in \mathcal{F}'$ and 
every $D' \in \mathcal{D}$ proceed as follows.
Let $\mathcal{C}(D')$ be the family of connected components of either
$G[L(D')]$ or of $G[D' \setminus L(D')]$.
For every $F \in \mathcal{C}(D')$, recurse
with the same treedepth $d$, clique number $k-1$, and extension $\widehat{d}+dd_1+d^2$,
obtaining a defect-$\widetilde{d}'$ extension-$(\widehat{d}+dd_1+d^2)$ treedepth-$d$
family $\mathcal{F}_F$. 
Invoke Theorem~\ref{thm:bip} on the quotient graph $G[\bigcup \mathcal{D}]/\bigcup_{D' \in \mathcal{D}} \mathcal{C}(D')$, obtaining
a family $\mathcal{F}^{\mathrm{bip}}$. 
Invoke Lemma~\ref{lem:handle-bip} in $G$, $\bigcup \mathcal{D}$, partition $(\bigcup_{D' \in \mathcal{D}} L(D')) \uplus (\bigcup_{D' \in \mathcal{D}} D' \setminus L(D'))$,
the family $\mathcal{F}^{\mathrm{bip}}$ and families $(\mathcal{F}_F)_{D' \in \mathcal{D},F \in \mathcal{C}(D')}$,
obtaining a family $\mathcal{F}''$.
Note that $\mathcal{F}''$ is a defect-$\widetilde{d}''$ extension-$(\widehat{d}+dd_1)$
treedepth-$d$ container family for $\bigcup \mathcal{D}$ in $G$ for some constant $\widetilde{d}''$.

Construct a family $\mathcal{F}^\mathrm{sep}$ as follows. 
Iterate over all partial solutions $(\T_\Psi,X_\Psi,\Sol_\Psi)$ in $G$ with $|V(\T_\Psi)| \leq \widehat{d}+dd_1$ and their multistate assignments $\xi_\Psi$. 
Set $\mathcal{F}_{\Psi,\T} = \{ V(\T_\Psi) \cup A~|~A \in \mathcal{F}''$. 
For every choice of $(\T_\Psi,X_\Psi,\Sol_\Psi)$ and $\xi_\Psi$, use $\mathcal{F}_{\Psi,\T}$ 
and the algorithm of~\cite[Section~3]{DBLP:conf/soda/ChudnovskyMPPR24} (Lemma~\ref{lem:dp-algo}) to compute in polynomial time
the $\preceq$-minimum extension $(\T'_\Psi,X'_\Psi,\Sol'_\Psi)$ of $(\T_\Psi,X_\Psi,\Sol_\Psi)$ 
among extensions that evaluate to $\xi$ and satisfy $V(\T'_\Psi) \setminus V(\T_\Psi) \subseteq \bigcup \mathcal{D}$. 
Insert $K \cup (\bigcup \mathcal{D} \setminus V(\T'_\Psi))$ into $\mathcal{F}^\mathrm{sep}$.

Let $(\T,X,\Sol)$ be a partial solution with $|V(\T)| \leq \widehat{d}$,
let $\xi$ be a multistage assignment of $(\T,X,\Sol)$, and let 
$(\T',X',\Sol')$ be the $\preceq$-minimum extension of $(\T,X,\Sol)$ 
amongs ones that 
evaluate to $\xi$ and satisfy $V(\T') \setminus V(\T) \subseteq D$. 
Assume $(\T',X',\Sol')$ is a neat extension of $(\T,X,\Sol)$. 

Let $\T''$ be any maximal treedepth-$d$ structure that contains
$V(\T') \cap D$. Let $F''$ be any $\T''$-aligned minimal chordal
completion of $G[D]$. Let $S$ be a $\T''$-avoiding
minimal separator of $G[D]+F''$ with full components $A,B$. 
Let $\Psi = (K,\mathcal{D},L) \in \mathcal{F}'$ be the tuple
for $S,A,B$ promised by Theorem~\ref{thm:to-bip}. 
Set $F := \bigcup\mathcal{D}$ and use Lemma~\ref{lem:use-neatness}
to $G$, $D$, $F$, $(\T,X,\Sol)$ and $(\T',X',\Sol')$. 
We infer that $\mathcal{F}^\mathrm{sep}$ contains 
$K \cup (F \setminus V(\T'))$. As $K$ is guaranteed to contain $V(\T') \cap S$,
  we have that $S \subseteq K \cup (F \setminus V(\T'))$. 
That is, $\mathcal{F}^\mathrm{sep}$ contains a set that on one hand
contains $S$, and on the other hand contains at most $d_1 + \widetilde{d}''$
elements of $\T''$. 

Let $c = c(k)$ be the constant from Corollary~\ref{cor:pmc-cover} for $t=7$ and $k$. 
We set $\mathcal{F}$ to be the union of $\{N[v]~|~v \in V(G)\}$
and the set of all possible unions at most $c$ elements of $\mathcal{F}^\mathrm{sep}$.
Corollary~\ref{cor:pmc-cover} implies that for every
maximal clique $\Omega$ of $G[D]+F''$ there exists $A \in \mathcal{F}$
with $\Omega \subseteq A$ and $|A \cap V(\T')| \leq c(d_1+\widetilde{d}'')$. 
Hence, a clique tree $(T,\beta)$ of $G[D]+F''$ satisfies the requirements
of the theorem for $\widetilde{d} = c(d_1+\widetilde{d}'')$. 
This finishes the proof.
\end{proof}

Theorem~\ref{thm:mainalgo} follows now from the combination
of Theorem~\ref{thm:syf} for $\widehat{d} = 0$ and $D = V(G)$
and the algorithm of~\cite[Section~3]{DBLP:conf/soda/ChudnovskyMPPR24} (Lemma~\ref{lem:dp-algo}).
Note that for the case $\widehat{d}$, 
any $\preceq$-minimum extension $(\T',X',\Sol')$ of $(\emptyset,\emptyset,\emptyset)$
among extensions that evaluate to some fixed multistate assignment $\xi$ satisfies that $\T'$ is maximal and
neat, as proven in~\cite{DBLP:conf/soda/ChudnovskyMPPR24}.

\section{Conclusion}
Our work suggests some directions for future research.
Of course the most ambitious goal is to show a~polynomial-time algorithm for 
$(\mathrm{tw} \leq d,\psi)$-MWIS in $P_t$-free graphs, for all fixed $t$,
with the first open case for $t=7$ (with no further assumptions on the graph).

However, there are several intermediate goals that also seem interesting.
For example, we think that the idea of considering graphs of bounded clique number is quite promising.
Not only it allows us to use some strong structural tools like $\chi$-boundedness or VC-dimension, but also to measure the progress of an algorithm by decreasing clique number (see also~\cite{DBLP:journals/siamdm/ChudnovskyKPRS21,DBLP:journals/ipl/BrettellHMP22}).

\begin{problem}[{\cite[Problem 9.2]{DBLP:journals/siamdm/ChudnovskyKPRS21}}]
Show that for every fixed $t$ and $k$, MWIS is polynomial-time solvable in $P_t$-free graphs with clique number at most $k$.
\end{problem}

As illustrated in Section~\ref{sec:bipartite}, assuming additionally that a graph is bipartite might lead to an easier, but still interesting problem.
Of course in this setting asking for an algorithm for MWIS makes little sense, but already the next case is far from trivial.

\begin{problem}
Show that for every fixed $t$, given a vertex-weighted bipartite $P_t$-free graph, in polynomial-time we can find an induced forest of maximum possible weight.
\end{problem}

Recall that the polynomial-time algorithm for $t=7$ follows not only from our work, but also (via a very different approach) from the work of Lozin and Zamaraev~\cite{DBLP:journals/ejc/LozinZ17} -- their structural theorem about $P_7$-free bipartite graphs have bounded mim-width. However, this is no longer the case for $P-8$-free bipartite graphs, as shown by Brettell et al.~\cite{DBLP:journals/jgt/BrettellHMPP22}.

A different subclass of $P_t$-free graphs are $(P_t,K_{1,k})$-free graphs
(i.e., excluding additionally a star with $k$ leaves as an induced subgraph),
studied by Lozin and Rautenbach~\cite{DBLP:journals/ipl/LozinR03},
who show that \textsc{Maximum Weight Independent Set}
is polynomial-time solvable in these graph classes.
Does this result generalize to $(\mathrm{tw} \leq d,\psi)$-MWIS?

Finally, let us point out that Abrishami, Chudnovsky, Pilipczuk, Rzążewski, and Seymour~\cite{Tara} did not only provide an algorithm for $(\mathrm{tw} \leq d,\psi)$-MWIS in $P_5$-free graphs, but they actually considered a richer class of graphs. In particular, their algorithm works for $C_{>4}$-free graphs, i.e., graph with no induced cycles of length more than 4 (analogously we define $C_{>t}$-free graphs for any $t$).
Similarly, the quasipolynomial-time algorithm for $(\mathrm{tw} \leq d,\psi)$-MWIS works for $C_{>t}$-free graphs, for any $t$.
Note that $C_{>t}$-free graphs are a proper superclass of $P_t$-free graphs.

We believe that all polynomial-time results for $P_t$-free graphs, discussed in this paper, can be actually lifted to $C_{>t}$-free graphs.


\bibliographystyle{plain}
\bibliography{bibliography}

\end{document}